\newtheorem{theorem}{Theorem}[section]
\newtheorem{corollary}[theorem]{Corollary}
\newtheorem{lemma}[theorem]{Lemma}
\newtheorem{definition}[theorem]{Definition}
\newtheorem{remark}[theorem]{Remark}
\newtheorem{claim}[theorem]{Claim}
\newcommand{\FF}{\mathbb{F}}
\newcommand{\NN}{\mathbb{N}}
\newcommand{\RR}{\mathbb{R}}
\newcommand{\ZZ}{\mathbb{Z}}
\newcommand{\cC}{\mathcal{C}}
\newcommand{\cE}{\mathcal{E}}
\newcommand{\cG}{\mathcal{G}}
\newcommand{\cM}{\mathcal{M}}
\newcommand{\cP}{\mathcal{P}}
\newcommand{\cV}{\mathcal{V}}
\newcommand{\set}[1]{\left\{ #1 \right\}}
\DeclarePairedDelimiter{\abs}{\lvert}{\rvert}
\DeclarePairedDelimiter{\norm}{\lVert}{\rVert}
\newcommand{\interior}[1]{%
  {\kern0pt#1}^{\mathrm{o}}%
}
\DeclareMathOperator{\Ker}{ker}
\DeclareMathOperator{\Ima}{im}
\DeclareMathOperator{\supp}{supp}
\DeclareMathOperator{\rk}{rk}
\DeclareMathOperator{\Cay}{Cay}
\def\tensorcode{\varSigma}
\def\tannercode{\mathcal{T}}
\def\F2{\mathbb{F}_2} 
\newcommand{\nonl}{\renewcommand{\nl}{\let\nl\oldnl}}
\newcommand\remove[1]{{}}
\title{Good Quantum LDPC Codes with Linear Time Decoders}
\author{Irit Dinur\thanks{Department of Applied Math and Computer Science, The Weizmann Institute of Science. Email: \texttt{irit.dinur@weizmann.ac.il}.} \and Min-Hsiu Hsieh\thanks{Hong Hai Research Institute, Taipei. Email: \texttt{min-hsiu.hsieh@foxconn.com}.} \and Ting-Chun Lin\thanks{Department of Physics, University of California San Diego, CA, and Hong Hai Research Institute, Taipei. Email: \texttt{til022@ucsd.edu}.} \and Thomas Vidick\thanks{Department of Computing and Mathematical Sciences, California Institute of Technology, CA. Email: \texttt{vidick@caltech.edu}.}}
\begin{document}
\nocite{dinur2021locally}

\maketitle

\begin{abstract}
  We construct a new explicit family of good quantum low-density parity-check codes which additionally have linear time decoders.
  Our codes are based on a three-term chain
  $(\F2^{m\times m})^V \quad \xrightarrow{\delta^0}\quad (\F2^{m})^{E} \quad\xrightarrow{\delta^1} \quad\F2^F$
  where $V$ ($X$-checks) are the vertices, $E$ (qubits) are the edges, and $F$ ($Z$-checks) are the squares of a left-right Cayley complex, and where the maps are defined based on a pair of constant-size random codes $C_A,C_B:\F2^m\to\F2^\Delta$ where $\Delta$ is the regularity of the underlying Cayley graphs.
  
  One of the main ingredients in the analysis is a proof of an essentially-optimal robustness property for the tensor product of two random codes.
\end{abstract}

\section{Introduction}

Quantum error correction is an essential ingredient to achieve fault-tolerant quantum computation.
An important class of quantum codes relevant to fault-tolerance are quantum low-density parity-check (qLDPC) codes \cite{gottesman2013fault}.
These are codes whose checks act only on a constant number of qubits, and further each qubit is acted on only by a constant number of checks. 
This low connectivity is desirable because it reduces the chance for errors to spread  when checks are being measured for error correction.

Several families of qLDPC codes have been studied starting from Kitaev's toric code \cite{kitaev2003fault}, with increasing rate and distance \cite{tillich2013quantum, freedman2002z2, evra2020decodable, kaufman2020new, hastings2021fiber, panteleev2021quantum, breuckmann2021balanced}. Recently Panteleev and Kalachev \cite{panteleev2021asymptotically} gave the first construction of good qLDPC codes, i.e.\ qLDPC codes with constant rate and constant relative distance. A subsequent variation on their construction was given in~\cite{leverrier2022quantum}.\footnote{We recently learned that a similar result has been obtained independently by several other groups. A later revision of the paper will address similarities and differences with these concurrent works.} 

A natural question left open by the recent constructions of good qLDPC codes is the existence of efficient decoders for them. In this work, we give a new construction of qLDPC, which borrows many of the ingredients from~\cite{panteleev2021asymptotically} as well as ideas from the recent classical locally testable codes by Dinur, Evra, Livne, Lubotzky, and Mozes~\cite{dinur2021locally}, and show that our codes have linear time decoders. 

\begin{theorem} 
  For every $r \in (0, 1/2)$, there exist constants $\delta > 0$, $w \in \NN$
  and an explicit infinite family of quantum LDPC codes with maximum weight $w$, rate $r$, and relative distance $\delta$. Furthermore these codes are equipped with a linear time decoder that decodes up to linear distance.
\end{theorem}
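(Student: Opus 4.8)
The plan is to follow the cosystolic/coboundary expansion strategy pioneered by Panteleev--Kalachev and Leverrier--Zémor, but instrumented with the left-right Cayley complex of Dinur--Evra--Livne--Lubotzky--Mozes so that the local structure is amenable to a linear-time flip-style decoder. First I would set up the three-term chain $(\F2^{m\times m})^V \xrightarrow{\delta^0} (\F2^{m})^{E} \xrightarrow{\delta^1} \F2^F$ explicitly: fix the two $\Delta$-regular Cayley graphs $\Cay(G,A)$ and $\Cay(G,B)$, form the square complex whose vertices are $V = G$, edges $E$ come in two colors indexed by $A$ and $B$, and faces $F$ are the $4$-cycles; then use the pair of small random codes $C_A, C_B : \F2^m \to \F2^\Delta$ to define the local views, exactly so that a qubit assignment in $(\F2^m)^E$ is a $\delta^1$-cocycle iff around every face it lies in the appropriate tensor code $C_A \otimes C_B$, and a $Z$-check at a face reads off the $C_A\otimes C_B$-parity-check of the local $m\times m$ block. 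One then checks $\delta^1\circ\delta^0 = 0$, computes the rate as $\dim\Ker\delta^1/\dim\Ima\delta^0$ and matches it to $r$ by tuning $\rate(C_A),\rate(C_B)$, and verifies the LDPC/bounded-weight property from $\Delta,m = O(1)$.

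Second, I would establish the distance bound, which is the linear-algebraic heart of the argument: show that both the code $\Ker\delta^1 / \Ima\delta^0$ and its dual (the transpose chain) have linear distance, i.e.\ that the chain complex is both cosystolically and "co-cosystolically" expanding. This is where the essentially-optimal robustness of the tensor product of two random codes (advertised in the abstract) enters. Concretely: a nonzero homology class has some representative supported on a set of edges; using the local-minimality trick one may assume the representative is locally minimal, then one runs the DELLM-style expansion argument on the left-right Cayley complex — the two graphs being good spectral expanders, plus robustness of $C_A\otimes C_B$ at every face, forces the support of any small locally-minimal cocycle to be empty after removing a coboundary. Symmetrically for the $Z$-side using the transpose chain. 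The random-code ingredient is invoked to guarantee simultaneously (with high probability over $C_A, C_B$) good distance and good dual distance of $C_A$ and $C_B$ individually and the $\kappa$-robustness of the product; a union bound over the constantly many choices, or an explicit derandomization to make the family explicit, completes this.

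Third, the decoder. Here I would design a small-set flip / sequential decoder: given a syndrome $\sigma \in \F2^F$ (the measured $Z$-checks, and dually $\F2^V$ for $X$-checks), repeatedly find a local region — a vertex or edge neighborhood of constant size — where flipping some qubits strictly decreases the syndrome weight (or a suitable potential combining syndrome weight and a proxy for the error), and flip. The correctness and linear-time bound both follow from a potential/energy argument: the robustness of $C_A\otimes C_B$ and the expansion of the Cayley graphs guarantee that as long as the current error is below the decoding radius, such an improving local move exists (no bad local minima), so the potential strictly decreases each step and the algorithm halts in $O(n)$ steps; each step costs $O(1)$ since the complex is LDPC, and a careful data-structure bookkeeping of which regions are currently "frustrated" keeps the total work linear. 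One then argues that the output error differs from the true error by a coboundary of weight below the distance, hence is a valid correction.

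The main obstacle I expect is the decoder's no-bad-local-minima guarantee together with its interaction with the tensor-robustness parameter: proving that \emph{every} sub-threshold syndrome admits a constant-size weight-decreasing flip requires that the robustness of $C_A \otimes C_B$ be strong enough — essentially optimal, not merely positive — so that the DELLM expansion inequalities close with a decoding radius that is a genuine constant fraction of $n$; getting the constants to line up (robustness parameter vs.\ Cayley expansion vs.\ the flip potential) is the delicate quantitative core, and is presumably exactly why the paper isolates the optimal-robustness statement for random tensor codes as a standalone ingredient. A secondary but nontrivial point is ensuring the whole construction is \emph{explicit}: either one derandomizes the choice of $C_A, C_B$ (a constant-size search, since $m,\Delta = O(1)$) or one cites an explicit family with the required distance/dual-distance/robustness profile.
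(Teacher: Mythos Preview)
Your high-level plan is right for one of the two directions but has a genuine gap in the other: the construction is \emph{not} symmetric, and the ``symmetrically for the $Z$-side using the transpose chain'' step does not go through.

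Concretely, the chain is ordered geometrically as $\F2^F \xrightarrow{\partial_2} (\F2^{m})^E \xrightarrow{\partial_1} (\F2^{m\times m})^V$, so the coboundaries $B^1$ you quotient by on the $X$-side are generated by \emph{vertex} flips (each touching $2\Delta$ edges), while the boundaries $B_1$ on the $Z$-side are generated by \emph{face} flips (each touching only $4$ edges). Your DELLM/robustness argument is exactly what proves co-locally-minimal expansion: a $c^1$ that cannot be shortened by any vertex flip has, at every vertex, a local view lying in $\tensorcode(C_A^\perp,C_B^\perp)$, and robustness of that pair then feeds the expander-mixing counting. But the transposed statement --- that a $c_1$ minimal under face flips expands --- is \emph{false} for this complex (the paper states this explicitly): a single face flip is too weak to invoke tensor robustness, so there exist small locally-minimal $c_1$ with zero boundary that are nonetheless nontrivial. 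Your small-set-flip decoder for $Z$-errors would stall at such configurations.

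The missing idea is a \emph{reduction} from expansion to co-expansion (and from decoder to co-decoder) of the \emph{dual} local codes. Given a small $c_1$ with $\partial_1 c_1 = 0$, at each vertex $v$ the local chain $\F2^{F(v)}\to(\F2^m)^{E(v)}\to\F2^{m\times m}$ is exact, so one can choose a local preimage $s_2(v)\in\F2^{F(v)}$ with $\partial_2 s_2(v)=c_1|_{E(v)}$. These local guesses disagree across edges; the edge-indexed disagreement $t_2$ turns out to be a small co-cycle in the \emph{dual} complex $X(\cG_2,C_A^\perp,C_B^\perp)$, and applying co-expansion (resp.\ the co-decoder) there yields a vertex-indexed correction $w_2$ that reconciles the guesses into a global $c_2$ with $\partial_2 c_2=c_1$. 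The $Z$-decoder is thus not a flip algorithm at all but this reconstruction procedure, with the small-set-flip co-decoder of the dual complex called as a subroutine. Everything else in your outline --- rate computation, LDPC from bounded degree, robustness of random $(C_A,C_B)$ and simultaneously $(C_A^\perp,C_B^\perp)$, explicitness via brute-force over the constant-size local codes, linear-time bookkeeping --- matches the paper.
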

Our codes are based on a three-term chain
\begin{equation}\label{eq:chain}
    (\F2^{m\times m})^V \quad \xrightarrow{\delta^0}\quad (\F2^{m})^{E} \quad\xrightarrow{\delta^1} \quad\F2^F \;.
  \end{equation} 
The chain is ordered ``geometrically'' by dimension, so that $V$ are the vertices, $E$ are the edges, and $F$ are the faces (squares) of a left-right Cayley complex. Informally, this complex has vertices labeled by elements $g$ of a finite group $G$, edges labeled by two sets of generators $A,B$ as $(g,ag)$ and $(g,gb)$ for $g\in G$, $a\in A$ and $b\in B$, and squares $(g,ag,gb,agb)$ labeled by pairs $(a,b)\in A\times B$. 
  The maps $\delta^0,\delta^1$ in~\eqref{eq:chain} are defined via a pair of base codes $C_A,C_B:\F2^m\to\F2^\Delta$ where $\Delta=|A|=|B|$. 
  \footnote{This current simplification has $0$ rate. To get positive rate, the base codes have different dimensions in the actual construction, $C_A\colon \F2^{m_a} \to \F2^{\Delta}$ and $C_B\colon \F2^{m_b} \to \F2^{\Delta}$ with $m_a \ne m_b$.}
  An advantage of the geometric ordering is that it may facilitate extending the chain to having more than three terms by going to higher dimensional geometric complexes. On the other hand, this kind of chain is asymmetric and therefore separate arguments are required for the analysis of the chain and co-chain.

Let us give an informal description of the chain map. Given a $0$-chain $c^0\in (\F2^{m\times m})^V$, such that $c(v)$ is an $m\times m $ bit matrix for each $v \in V$, let us compute $\delta^0(c^0)$ assuming that $c^0$ is supported on a single vertex $v$ (and this is extended linearly).  We first apply the encoding $C_A$ to each row of $c^0(v)$ separately to get a rectangular $m \times \Delta$ matrix, whose columns are now distributed among the $A$-edges neighboring $v$. Next, we apply the code $C_B$ to each column of $c(v)$ separately to get a rectangular $\Delta\times m$ matrix whose rows we distribute among the $B$-edges neighboring $v$. This is naturally interpreted as an element $c^1 = \delta^0(c^0) \in (\F2^m)^E$. 

Now given an arbitrary $1$-chain $c^1\in (\F2^{m})^E$, such that $c^1(e)$ is an $m$-bit vector for each edge $e\in E$, let us compute $\delta^1(c^1)$ assuming $c^1$ is supported on a single edge $e$  (and this is extended linearly). If $e$ is an $A$-edge then we compute the $C_B$ encoding of $c^1(e)$, getting a vector of $|B|=\Delta$ bits, which we distribute one per square containing $e$.  If $e$ is a $B$-edge then we compute the $C_A$ encoding of $c^1(e)$ and proceed similarly, adding the bits distributed to the same face modulo $2$.
The actual construction uses a $4$-fold cover of a left-right complex, so we have four types of vertices and edges, see details in Section \ref{sec:construction}. 

The algorithm of linear time decoding is based on local bit-flips or small-set flips.
The analysis of the distance of the code as well as of the decoding algorithm has two main components: expansion and robustness. The expansion arguments resemble previous works $\cite{panteleev2021asymptotically,dinur2021locally,leverrier2022towards}$. Technically, the key is analyzing the expansion of chains with a certain ``local minimality" condition. 
The second ingredient is a  {\em robustness} property for the pair of base codes $(C_A,C_B)$ (and their duals
$(C_A^\perp, C_B^\perp)$). 
Our second contribution in this work is a proof that two random codes are optimally robust.

A pair $(C_A, C_B)$ of codes, $C_A, C_B\subseteq \F2^n$, is said to be $d_2$-robust if for every pair of $n\times n$ matrices $M_A,M_B$  such that the rows of $M_A$ are in $C_A$ and the columns of $M_B$ are in $C_B$, if the matrix $M = M_A+M_B$ has low weight, then it can be decomposed into a sum of only a few rows in $C_A$ and a few columns in $C_B$, such that the number of rows and columns required is at most the weight of $M$ divided by the robustness parameter $d_2$. (See \Cref{sec:robust-tensor-codes} for formal definitions.)
Whereas previous works \cite{panteleev2021asymptotically} showed that random codes have robustness that is $d_2 = n^{\frac 1 2 + \epsilon}$, we show robustness with $d_2=\Theta(n)$ which is clearly best possible (up to multiplicative constants) since the weight of $M$ is quadratic in $n$ and the number of rows/columns is linear in $n$.
 
\begin{theorem} [Random Tensor Codes are Robust (Informal \Cref{thm:random-tensor-codes-are-robust})]
  For every $\rho_a, \rho_b \in (0, 1)$, there exist constants $\delta_1, \delta_2$
  such that 
  for $C_A, C_B$ sampled from the uniform distribution of linear codes of length $n$ and dimensions $\rho_a n, \rho_b n$,
  for large $n$,
  with high probability, 
  $C_A, C_B$ have distance $\delta_1 n$ and
  $(C_A, C_B)$ is $\delta_2 n$ robust.
\end{theorem}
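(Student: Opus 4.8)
The plan is to prove the two claims separately. The distance bound is the Gilbert--Varshamov bound: a uniformly random linear code of length $n$ and rate $\rho$ fails to have minimum distance $\geq \delta_1 n$ with probability at most $\binom{n}{\leq \delta_1 n}\,2^{-(1-\rho)n}\leq 2^{-\Omega(n)}$ whenever $H(\delta_1)<1-\rho$; taking $\delta_1$ below the Gilbert--Varshamov thresholds for both $\rho_a$ and $\rho_b$ handles $C_A$ and $C_B$ simultaneously, and from now on I condition on this event. For robustness it suffices, by definition, to show that every $M=M_A+M_B$ (rows of $M_A$ in $C_A$, columns of $M_B$ in $C_B$) admits a decomposition $M=M_A'+M_B'$ of the same type with (number of nonzero rows of $M_A'$)\,$+$\,(number of nonzero columns of $M_B'$)\,$\leq \abs M/\delta_2$; and it is enough to prove the lower bound $\abs M\geq\delta_2(s+t)$ when $(M_A,M_B)$ is already a \emph{minimal} decomposition, i.e.\ minimizes $s+t$ among all decompositions of $M$, where $s,t$ are the numbers of nonzero rows of $M_A$ and nonzero columns of $M_B$. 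Let $S,T$ be the corresponding supports, $\abs S=s$, $\abs T=t$.

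\textbf{Easy regime: $\min(s,t)\le\theta n$ for a fixed $\theta<\delta_1$ (say $\theta=\delta_1/2$).} This is deterministic. Say $t\le\theta n$. On the block $S\times T^c$ we have $M=M_A$, and each of its $s$ rows is a nonzero codeword of $C_A$, hence has at least $\delta_1 n-t\ge(\delta_1-\theta)n$ nonzero coordinates in $T^c$; on the block $S^c\times T$ we have $M=M_B$, and when $s<\delta_1 n$ each of its $t$ columns is a nonzero codeword of $C_B$ with at least $\delta_1 n-s$ nonzero coordinates in $S^c$. A short case analysis on whether $s$ is above or below $\delta_1 n$ (and below $\delta_1 n/2$) gives $\abs M\ge c'(\delta_1)\,n\,(s+t)$ in all cases; the case $s\le\theta n$ is symmetric.

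\textbf{Hard regime: $s,t>\theta n$.} Here I claim that for a minimal decomposition $\abs M\ge\varepsilon n^2$ for a constant $\varepsilon>0$, which suffices since then $\abs M\ge\tfrac{\varepsilon}{2}n(s+t)$. The key input is an \emph{optimal robust-testability property of the tensor code of two random codes}: there is a constant $C=C(\rho_a,\rho_b)$ so that, with high probability over $C_A,C_B$,
\[
  \operatorname{dist}(N,\,C_A\otimes C_B)\ \le\ C\bigl(\operatorname{dist}(N,\cM_A)+\operatorname{dist}(N,\cM_B)\bigr)\qquad\text{for all }N\in\F2^{n\times n},
\]
where $\cM_A$ (resp.\ $\cM_B$) is the set of matrices all of whose rows (resp.\ columns) lie in $C_A$ (resp.\ $C_B$), and $C_A\otimes C_B=\cM_A\cap\cM_B$. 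Granting this, suppose $\abs M<\varepsilon n^2$. Since $M_A\in\cM_A$ and $M_B\in\cM_B$, $\operatorname{dist}(M_A,\cM_A)=0$ and $\operatorname{dist}(M_A,\cM_B)\le\abs{M_A+M_B}=\abs M$, so $\operatorname{dist}(M_A,C_A\otimes C_B)\le C\abs M$. Write $M_A=P+E$ with $P\in C_A\otimes C_B$ and $\abs E\le C\abs M$; the rows of $E=M_A+P$ lie in $C_A$, hence $E$ has at most $\abs E/(\delta_1 n)\le C\varepsilon n/\delta_1$ nonzero rows. Then $M=E+(P+M_B)$, the matrix $P+M_B$ has all columns in $C_B$, and $\abs{P+M_B}\le\abs M+\abs E\le(1+C)\varepsilon n^2$, so it has at most $(1+C)\varepsilon n/\delta_1$ nonzero columns. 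This exhibits a decomposition of $M$ with at most $(1+2C)\varepsilon n/\delta_1$ nonzero rows and columns in total, contradicting minimality ($s+t>2\theta n$) once $\varepsilon<2\theta\delta_1/(1+2C)$. Setting $\delta_2=\min\bigl(c'(\delta_1)n,\tfrac{\varepsilon}{2}n\bigr)=\Theta(n)$ completes robustness.

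\textbf{The main obstacle} is the robust-testability statement displayed above---exactly the ``essentially-optimal robustness for the tensor product of two random codes'' promised in the introduction; \cite{panteleev2021asymptotically} proved a version with the factor $C$ replaced by $n^{1/2+o(1)}$, and the improvement to $C=\Theta(1)$ is the crux. I would prove it by a union bound over the combinatorial type of a minimal-weight witness: given $N$ with $\operatorname{dist}(N,\cM_A)+\operatorname{dist}(N,\cM_B)=d$, let $R$ be the set of rows and $J$ the set of columns on which $N$ is not already a codeword (so the row- and column-corrections of total weight $d$ are supported in $(R\times[n])\cup([n]\times J)$, with $\abs R,\abs J\le d$), correct the ``clean'' columns to their unique nearest codewords, and propagate a global correction from the clean part to the rest using the distance of $C_A$; the error incurred is controlled because a random code has at most $2^{(\rho+o(1))n}$ vectors within any fixed small ball of it, so only few corrections are ever possible. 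The delicate point---where the naive argument loses a factor $\sqrt n$---is when the row/column errors are concentrated on few coordinates, so that the corrected columns repeat and carry little entropy; there one must argue separately, e.g.\ observing that many equal columns force the involved codewords of $C_A$ to be constant on a linear-sized set, which for a random $C_A$ already bounds the available dimension, or by a peeling argument that strips heavy rows and columns one at a time. Carrying this out with constants independent of $n$ is the technical heart of the theorem.
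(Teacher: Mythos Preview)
Your distance argument and the ``easy regime'' ($\min(s,t)\le\theta n$) are fine and close in spirit to the paper's Lemma on structured codewords. The gap is entirely in the hard regime.

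You reduce the hard regime to a robust-testability statement
\[
\operatorname{dist}(N,\,C_A\otimes C_B)\ \le\ C\bigl(\operatorname{dist}(N,\cM_A)+\operatorname{dist}(N,\cM_B)\bigr),
\]
but this statement is \emph{equivalent} (up to constants, given the distance bound) to the robustness you are trying to prove. Indeed, robustness with parameter $\delta_2 n$ immediately gives your inequality with $C=1/\delta_2$: given $M_A\in\cM_A$ and $M_B\in\cM_B$ with $|M_A+M_B|=d$, apply robustness to $M=M_A+M_B$ to get $M=c_a+c_b$ with $\|c_a\|_{[n_b]}+\|c_b\|_{[n_a]}\le d/(\delta_2 n)$; then $P:=M_A+c_a=M_B+c_b\in C_A\otimes C_B$ and $|M_A-P|=|c_a|\le n\cdot d/(\delta_2 n)$. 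Your reduction shows the converse. So you have not made progress on the actual difficulty---you have relocated it, and your sketch for the displayed inequality (``many equal columns force constancy'', ``peeling'') does not contain a workable mechanism for beating the $\sqrt{n}$ loss.

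The paper's proof takes a different route that avoids Hamming-distance testability altogether. The key lemma is: with high probability, for \emph{every} puncturing of $C_A,C_B$ by at most $\sigma n$ coordinates each, every nonzero element of $\tensorcode(C_A',C_B')$ has some row or column of weight $\ge\tau n$. This is proved by a union bound \emph{organized by rank}: one counts rank-$r$ matrices with all rows/columns light (at most $2^{2rn\,h(\tau)+O(n)}$ of them, since a rank-$r$ matrix is determined by $r$ rows and $r$ columns), and separately bounds $\Pr[c\in\tensorcode(C_A,C_B)]$ for a fixed rank-$r$ matrix by $2^{-\Theta(rn)}$ via Gaussian binomial coefficients. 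The exponent in the probability beats the exponent in the count for every $r$, so the union bound closes. Once this ``every light matrix is zero after puncturing'' lemma is in hand, any $c\in\tensorcode(C_A,C_B)$ with $|c|<\Theta(n^2)$ must be supported on $O(1)\cdot n$-many specific rows and columns (strip heavy rows/columns until none remain; you can strip at most $|c|/(\tau n)$ of them), which lands you in your easy regime. The rank stratification is the idea you are missing.
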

Since the theorem is about random linear codes, it follows directly that  robustness holds simultaneously for both $(C_A,C_B)$ as well as $(C_A^\perp,C_B^\perp)$, with high probability.

The proof follows a counting argument similar to the proof of the Gilbert--Varshamov bound.
One defines certain words as `bad' and then shows that with high probability none of these `bad' words is a codeword through a union bound.
The additional twist is that one organizes the words by rank. See \Cref{sec:optimal-robust-tensor-codes} for details.

\remove{
\paragraph{Construction of the Chain Complex:}

Finally, we construct the chain complex using the Tanner codes of the left-right Cayley complex and the robust tensor codes.
Let $\cG_2 = \cG_2(G, A, B)$ be the 4-fold left-right Cayley complex with $|A|=|B|=\Delta$ and $C_A, C_B$ be the local codes of length $\Delta$.
The idea is to construct four Tanner codes and then combine them into a chain complex.
The four Tanner codes are
\begin{equation*}
  \tannercode(\cG(E^-, F), C_A)\colon \F2^F \to (\F2^{m_a})^{E^-},
\end{equation*}
\begin{equation*}
  \tannercode(\cG(E^|, F), C_B)\colon \F2^F \to (\F2^{m_b})^{E^|},
\end{equation*}
\begin{equation*}
  \tannercode(\cG(V, E^-), C_B)\colon (\F2^{m_a})^{E^-} \to (\F2^{m_a \times m_b})^V,
\end{equation*}
\begin{equation*}
  \tannercode(\cG(V, E^|), C_A)\colon (\F2^{m_b})^{E^|} \to (\F2^{m_a \times m_b})^V.
\end{equation*}
And the resulting chain complex is
\begin{equation*}
  X(\cG_2, C_A, C_B)\colon \F2^F \xrightarrow{\partial_2} (\F2^{m_a})^{E^-} \oplus (\F2^{m_b})^{E^|} \xrightarrow{\partial_1} (\F2^{m_a \times m_b})^V,
\end{equation*}
where 
\begin{equation*}
  \partial_2(c_2) = (\tannercode(\cG(E^-, F), H_A)(c_2), \tannercode(\cG(E^|, F), H_B)(c_2))
\end{equation*}
and
\begin{equation*}
  \partial_1(c^-_1, c^|_1) = \tannercode(\cG(V, E^-), H_B)(c^-_1) + \tannercode(\cG(V, E^|), H_A)(c^|_1).
\end{equation*}

\subsection{Proof Overview of Good Quantum LDPC Codes and Linear Time Decoder}

Given the construction of the 3-term chain complex, we now discuss how to show its expansion properties.
Because our construction is not symmetric, meaning that the chain complex and the co-chain complex have different structures, we have to show the two expansions separately.
Informally, we refer to the expansion of the chain complex as expansion and the expansion of the co-chain complex as co-expansion.
Under this point of view, the result of Dinur et al. \cite{dinur2021locally} can be understood as showing the expansion.
What is missing in their argument is the co-expansion.
In fact, the co-expansion is hidden in their proof.
What we will do is to split their argument into two parts, where one of them shows co-expansion, and the other shows expansion from co-expansion.
Note that in contrast to our constructions, the chain complexes in \cite{panteleev2021asymptotically} and \cite{leverrier2022quantum} are symmetric, so they only need one argument to show expansion in both directions.

To show expansion and co-expansion we rely on two theorems.
The first theorem shows co-expansion and the second theorem reduce the problem of expansion to co-expansion.
\begin{theorem} [Co-Expansion (Informal \Cref{thm:co-expansion})]
  Given $\Delta$-regular $\lambda$-spectral expander graphs $\Cay(G, A)$, $\Cay(G, B)$
  and linear codes $C_A^\perp, C_B^\perp$ of length $\Delta$ with distance $d_1$ and $(C_A^\perp, C_B^\perp)$ with robustness $d_2$.
  If $d_1 d_2 - \lambda d_2 - 8 \lambda \Delta > 0$,
  then the co-chain complex $X(\cG_2, C_A, C_B)$ has constant co-systolic distance.
\end{theorem}

\begin{theorem} [Co-Expansion $\to$ Expansion (Informal \Cref{thm:co-expansion-implies-expansion})]
  If $X(\cG_2, C_A^\perp, C_B^\perp)$ has constant co-systolic distance,
  then $X(\cG_2, C_A, C_B)$ has constant systolic distance.
\end{theorem}

A similar strategy is used to show decoder and co-decoder.

\begin{theorem} [Co-decoder (Informal \Cref{thm:co-decoder})]
  If $d_1 d_2/4 - \lambda d_2/2 - 8 \lambda \Delta > 0$,
  then $X(\cG_2, C_A, C_B)$ has a linear time co-decoder up to linear distance.
\end{theorem}

\begin{theorem} [Co-Decoder $\to$ Decoder (Informal \Cref{thm:co-expansion-implies-expansion2})]
  If $X(\cG_2, C_A^\perp, C_B^\perp)$ has a linear time co-decoder up to linear distance,
  then $X(\cG_2, C_A, C_B)$ has a linear time decoder up to linear distance.
\end{theorem}

Overall, the method of showing co-expansion and co-decoder is similar; and the method of showing expansion from co-expansion and decoder from co-decoder is similar.
We will describe these two cases separately
  and focus only on co-expansion and expansion.

First, we show co-systolic distance. 
Recall the statement for co-systolic distance,
given $c^1 \in Z^1 - B^1 \subset (\F2^{m_a})^{E^-} \times (\F2^{m_b})^{E^|}$, the goal is to show $\norm{c^1} = \Theta(|X(1)|)$.
Recall $\norm{c^1} = |\cE|$ where $\cE = \set{e \in E: c^1(e) \ne 0}$ is the non-zero edges.
The proof strategy is to count the number of ``neighbors'' between $\cE$.
We will define ``neighbors'' precisely in the main text, but for now, we only need to know that it is a local relation between the edges.
The expansion of the graph gives the upper bound on the number of ``neighbors'' between $\cE$
and the distance and the robustness of the local code gives the lower bound on the number of ``neighbors'' between $\cE$.
By comparing the two bounds, we obtain a lower bound $|\cE| = \Omega(|X(1)|)$.

In more detail, the upper bound is obtained from the expander mixing lemma.
Recall for a $\lambda$-spectral expander graph, the expander mixing lemma bounds the number of neighbors between the vertex sets $S, T\subset V$, $|E(S, T)| \le \frac{\Delta}{|V|} |S||T| + \lambda \sqrt{|S||T|}$ where $E(S, T)$ are the edges that have one end point in $S$ and one in $T$.
The lower bound is obtained by noticing if $c^1(e_{*0})$ is non-zero, then there are at least $d_1$ non-zero edges among 
its ``neighbors'' $E_{*1}(e_{*0}) \cup E_{0*}(e_{*0}) \cup E_{1*}(e_{*0})$
where for $e_{*0}=((g, ag), *0)$, we define
$E_{*1}(e_{*0}) = \{((gb, agb), *1): b \in B\}$, 
$E_{0*}(e_{*0}) = \{((g, gb), 0*): b \in B\}$, 
$E_{1*}(e_{*0}) = \{((ag, agb), 1*): b \in B\}$.
These are the edges that share a face with $e_{*0}$.
The reason that there are at least $d_1$ non-zero edges is because of the distance property of the local codes.
This already gives a lower bound on the number of ``neighbors'', but it is not strong enough to give useful result when combining with the upper bound.
To obtain a better lower bound, we additionally use robustness and local minimality.
This gives an additional relation which says 
if there are $x$ non-zero edges among $E_{*0}(v_{00}) \cup E_{0*}(v_{00})$,
then there are at least $d_2 x$ non-zero edges among
its ``neighbors'' $E_{*1}(v_{00}) \cup E_{1*}(v_{00})$
where for $v_{00}=(g, 00)$, we define
$E_{*0}(v_{00}) = \{((g, ag), *0): a \in A\}$,
$E_{*1}(v_{00}) = \{((gb, agb), *1): a \in A, b \in B\}$,
$E_{0*}(v_{00}) = \{((g, gb), 0*): b \in B\}$,
$E_{1*}(v_{00}) = \{((ag, agb), 1*): a \in A, b \in B\}$.
These are the edges that share a face with $v_{00}$.
Essentially, this provides another layer of lower bound on the number of ``neighbors''.
Finally, the complete lower bound is obtained by combining the two bound from distance and robustness.
This can be summarized into the following pictures.

\begin{figure}[H]
  \centering
  \begin{tikzpicture}
    \begin{scope} [scale=0.9]
      \draw (0, 0) -- (0, -3) -- (3.3, -3.6) -- (3.3, -0.6) -- cycle;
      \draw (0, 0) -- (0, -3) -- (3, -3) -- (3, 0) -- cycle;
      \draw (0, 0) -- (0, -3) -- (2.7, -2.4) -- (2.7, 0.6) -- cycle;

      \draw (0, 0) node [anchor=south east] {$v_{00}$};
      \draw (0, -3) node [anchor=north east] {$v_{10}$};
      \path (0, -1.5) node [anchor=east] {$e_{*0}$};
      \draw (3.4, -1.5) node [anchor=west] {$E_{*1}(e_{*0})$};
      \draw (1.5, 0.5) node [anchor=south] {$E_{0*}(e_{*0})$};
      \draw (1.5, -3.5) node [anchor=north] {$E_{1*}(e_{*0})$};
      \draw (1.5, -1.5) node {$F(e_{*0})$};
    \end{scope}

    \begin{scope} [scale=1.1, shift={(7,0)}]
      \draw (0, 0) -- (-0.2, -2.6) -- (2.4, -2.4) -- (2.6, 0.2) -- cycle;
      \draw (0, 0) -- (0, -3) -- (2.6, -2.8) -- (2.6, 0.2) -- cycle;
      \draw (0, 0) -- (-0.2, -2.6) -- (2.8, -2.6) -- (3, 0) -- cycle;
      \draw (0, 0) -- (0, -3) -- (3, -3) -- (3, 0) -- cycle;

      \draw (0, 0.0) node [anchor=south east] {$v_{00}$};
      \draw (0, -3) node [anchor=north east] {$V_{10}(v_{00})$};
      \draw (3, 0) node [anchor=south west] {$V_{01}(v_{00})$};
      \draw (3, -3) node [anchor=north west] {$V_{11}(v_{00})$};
      \path (-0.2, -1.5) node [anchor=east] {$E_{*0}(v_{00})$};
      \draw (1.5, 0.2) node [anchor=south] {$E_{0*}(v_{00})$};
      \draw (3, -1.5) node [anchor=west] {$E_{*1}(v_{00})$};
      \draw (1.5, -3) node [anchor=north] {$E_{1*}(v_{00})$};
      \draw (1.5, -1.5) node {$F(v_{00})$};
    \end{scope}
  \end{tikzpicture}


  \caption{Neighbors of an edge and a vertex.}
  \label{fig:neighbor-intro}
\end{figure}

\begin{figure}[H]
  \centering
  
  \begin{tikzpicture}
    \def\colorfrom{magenta}
    \def\colorto{cyan}
    \def\colorarrow{black}
    \def\colororig{red}
    \begin{scope} [scale=1]
      \def\lone{0.4}
      \def\ltwo{0.08}

      \draw[\colorfrom] (0, -\lone) -- (0, \lone-3);

      \draw[\colorto] (\lone, -\ltwo) -- (3-\lone, -\ltwo);
      \draw[\colorto] (\lone, 0) -- (3-\lone, 0);
      \draw[\colorto] (\lone, \ltwo) -- (3-\lone, \ltwo);

      \draw[\colorto] (3-\ltwo, -\lone) -- (3-\ltwo, -3+\lone);
      \draw[\colorto] (3, -\lone) -- (3, -3+\lone);
      \draw[\colorto] (3+\ltwo, -\lone) -- (3+\ltwo, -3+\lone);

      \draw[\colorto] (\lone, -3-\ltwo) -- (3-\lone, -3-\ltwo);
      \draw[\colorto] (\lone, -3) -- (3-\lone, -3);
      \draw[\colorto] (\lone, -3+\ltwo) -- (3-\lone, -3+\ltwo);

      \draw[\colorarrow] [->] (0,-1.5) -- (3,-1.5);
      \draw[\colorarrow] [->] (0,-1.5) to [bend right=40] (1.5,0);
      \draw[\colorarrow] [->] (0,-1.5) to [bend left=40] (1.5,-3);

      \fill[\colororig] (0,-1.5) circle[radius=0.1] node {};
    \end{scope}

    \begin{scope} [scale=1, shift={(7,0)}]
      \def\lone{0.4}
      \def\ltwo{0.04}

      \draw[\colorfrom] (-\ltwo, -\lone) -- (-\ltwo, \lone-3);
      \draw[\colorfrom] (\ltwo, -\lone) -- (\ltwo, \lone-3);

      \draw[\colorfrom] (\lone, -\ltwo) -- (3-\lone, -\ltwo);
      \draw[\colorfrom] (\lone, \ltwo) -- (3-\lone, \ltwo);

      \draw[\colorto] (3-3*\ltwo, -\lone) -- (3-3*\ltwo, -3+\lone);
      \draw[\colorto] (3-\ltwo, -\lone) -- (3-\ltwo, -3+\lone);
      \draw[\colorto] (3+\ltwo, -\lone) -- (3+\ltwo, -3+\lone);
      \draw[\colorto] (3+3*\ltwo, -\lone) -- (3+3*\ltwo, -3+\lone);

      \draw[\colorto] (\lone, -3-3*\ltwo) -- (3-\lone, -3-3*\ltwo);
      \draw[\colorto] (\lone, -3-\ltwo) -- (3-\lone, -3-\ltwo);
      \draw[\colorto] (\lone, -3+\ltwo) -- (3-\lone, -3+\ltwo);
      \draw[\colorto] (\lone, -3+3*\ltwo) -- (3-\lone, -3+3*\ltwo);

      \draw[rotate around={45:(0.75,-0.75)},\colorfrom] (0.75,-0.75) ellipse (1.5cm and 0.5cm);
      \draw[rotate around={45:(2.25,-2.25)},\colorto] (2.25,-2.25) ellipse (1.5cm and 0.5cm);

      \fill[\colororig] (0,0) circle[radius=0.1] node {};
      \draw[\colorarrow] [->] (0.8,-0.8) -- (2.2,-2.2);
    \end{scope}
  \end{tikzpicture}


  \caption{Lower bounds from distance and robustness.
    The edges appearing in the figures are the neighbors of an edges or a vertex as shown in \Cref{fig:neighbor-intro}.
    The black arrows indicate the two sides of the inequality.
    The number of non-zero edges in the blue region is greater than the number of non-zero edges in the pink region times the distance or robustness parameter.
  }
  \label{fig:co-expansion}
\end{figure}

Second, we show systolic distance from co-systolic distance.
Recall the statement for systolic distance,
given $c_1 \in Z_1 - B_1 \subset (\F2^{m_a})^{E^-} \times (\F2^{m_b})^{E^|}$, the goal is to show $\norm{c_1} = \Theta(|X(1)|)$.
Essentially, what we want to show is that if $c_1$ has small weight, then we show $c_1 \in B_1$ by finding $c_2$ such that $c_1 = \partial_2 c_2$.
This is done by utilizing the local exact chain complex around $v$, 
$(\F2)^{F(v)} \xrightarrow{\partial_2} (\F2^{m_b})^{E^|(v)} \oplus (\F2^{m_a})^{E^-(v)} \xrightarrow{\partial_1} \F2^{m_a \times m_b}$,
where $F(v)$ are the faces incident to $v$, $E^|(v)$ and $E^-(v)$ are the vertical and horizontal edges incident to $v$.
To find $c_2$, one make a guess $s_2(v) \in (\F2)^{F(v)}$ for each vertex such that $c_1 = \partial_2 c_2$ is satisfied locally, i.e. $c_1(E(v)) = \partial_2 s_2(v)$.
So by construction, $c_2$ and $s_2$ have similar properties locally.
The main difference between $c_2$ and $s_2$ is that for a face $F$, $c_2$ has a unique assignment on $F$, but for $s_2$ it depends on which vertex is viewed from.
To ``correct'' the difference between different views of $s_2$, we define the difference as $t_2$.
Note that $t_2$ has domain in $E$, because $t_2$ is comparing the view between two neighboring vertices.
What is interesting is the next step.
If one studies the property of $t_2$, one realizes that it resembles $c^1$ for the co-chain complex $X(\cG, C_A^\perp, C_B^\perp)$.
(For now we just note that $c^1$ also has domain in $E$.)
This allows us to use the co-systolic result and write $c^1 = \delta^0 c^0$.
Finally, using $c^0$, one can then ``correct'' the difference between the views of $s_2$ and recover $c_2$.
}
\subsection{Related work}

\paragraph{Quantum LDPC Codes and LTCs.}

Our work fits into a line of recent works on quantum LDPC codes and LTCs \cite{dinur2021locally,panteleev2021asymptotically,lin2022c, leverrier2022quantum}.
The constructions for qLDPC codes and LTCs turn out to be quite similar because both problems utilize 3-term chain complexes with expansion properties.
We focus on the history of quantum LDPC codes.
The historical development of LTCs can be found in \cite{goldreich2010short} and a more recent development can be found in \cite{dinur2021locally}.
More discussion of qLDPC can be found in \cite{breuckmann2021quantum}.

The earliest family of qLDPC codes are Kitaev's toric codes and surface codes \cite{kitaev2003fault} with dimension $k=\Theta(1)$ and distance $d=\Theta(\sqrt{n})$.
Over time, better codes with increasing rate \cite{tillich2013quantum} $k=\Theta(n)$ and distance \cite{freedman2002z2, evra2020decodable, kaufman2020new} $d=\Theta(\textnormal{polylog}(n) \sqrt{n})$ have been discovered.
Only recently did \cite{hastings2021fiber} and following works \cite{panteleev2021quantum, breuckmann2021balanced} significantly break the square root barrier and achieve $d=\Theta(n/\log n)$.
Finally, \cite{panteleev2021asymptotically} showed the existence of good quantum LDPC codes with $k=\Theta(n)$ and $d=\Theta(n)$.
More recently, \cite{leverrier2022quantum} provide another construction of good quantum LDPC codes.

\paragraph{Decoders for Quantum LDPC Codes.}

Finding an efficient decoder is often the next question after knowing the distance of a code.
If one does not worry about the efficiency, 
  in exponential time,
  it is known that one can decode up to $(d-1)/2$ errors by finding the closest codeword.
Practically, it is more desirable to have a polynomial time or even a linear time decoder.

Existing decoders can be broadly separated into two families that focus on different type of qLDPC codes. 
The first family mainly decodes the surface codes, 
  while the second family mainly decodes expander codes.
Because the code structure is different, the corresponding decoding strategy is also very different.
The first family includes minimum-weight perfect matching \cite{dennis2002topological}, union-find \cite{delfosse2021almost}, and variants of belief propagation decoders \cite{duclos2010fast, panteleev2019degenerate}.
A more complete discussion can be found in \cite{breuckmann2021quantum}.

We now focus on the second family, which the decoder of this paper belongs to.
When the underlying graph has good expansion properties, often the greedy algorithm that flips the bits locally will work.
This includes the classical expander codes \cite{sipser1996expander}
and the corresponding small set flip decoder in \cite{leverrier2015quantum} for quantum codes.
The same decoder was also applied to \cite{evra2020decodable, lin2022good}.
In this work, we use the small set flip decoder to decode the direction of the co-chain complex (i.e. decode $Z$ errors), and additionally use a ``reconstruction'' procedure to decode the direction of the chain complex (i.e. decode $X$ errors).

\paragraph{High Dimensional Expanders.}
Our work can be case as a study of notions of expansion in chain complexes. 
This relates to the study of high dimensional expanders (HDX), which is about notions of expansion for high-dimensional objects.
The study of the HDX was introduced by Linial and Meshulam \cite{linial2006homological} to study random simplicial complexes and independently by Gromov \cite{gromov2010singularities} to study the topological overlapping principle.
These natural questions have led to impressive results across areas including coding theory \cite{jeronimo2021near,dinur2021locally,panteleev2021asymptotically,lin2022c}, approximate sampling \cite{kaufman2020high,anari2019log,alev2020improved,anari2020spectral}, analysis of Boolean functions \cite{dikstein2018boolean,bafna2021hypercontractivity,gur2021hypercontractivity}, agreement testing \cite{dinur2017high,dikstein2019agreement}, and sum-of-square lower bounds \cite{dinur2020explicit,hopkins2022explicit}.

The most studied type of HDX are called simplicial complexes.
On the other hand, the recent development of qLDPC codes is more related to the cubical complexes.
It would be interesting to see if one can translate the results from one to the other.
One recent success is the application of qLDPC codes to sum-of-square lower bounds \cite{hopkins2022explicit}.

\subsection{Further directions}

\paragraph{Decoders for Other Quantum LDPC Codes.}
Our code construction has rate up to $1/2$,
whereas the code construction in \cite{panteleev2021asymptotically,leverrier2022quantum} have rate up to $1$.
It would be interesting to see if these constructions with rate up to $1$ have linear time decoders,
potentially using the robust code with better parameter constructed in this paper.

\paragraph{Towards Quantum LTCs.}
Related to qLDPC codes and LTCs are quantum locally testable codes (qLTCs) \cite{aharonov2015quantum}. Our proof technique may extend to the analysis of higher-length chain complexes, on which such qLTCs could be built. The main challenge towards this seems to be the absence of higher dimensional robust codes. On the topic of robustness, even for two-dimensional robustness we note that the current proofs only provide existence of robust codes via a probabilistic argument. 
It could be useful to have a direct, explicit construction as this may generalize more easily to higher dimensions than the probabilistic argument. 

\paragraph{PCPs and Quantum PCPs.}
Probabilistic checkable proofs (PCPs) and locally testable codes are closely, though not formally, related. (See \cite{goldreich2010short} for a survey.) In the quantum complexity literature there is a quantum version of PCPs \cite{aharonov2013guest}, the existence of which remains open. 
It would be interesting to see if one can make progress on this question by leveraging the recent works on qLDPCs.

\section{Preliminaries} \label{sec:prelim}

\subsection{Chain complexes} \label{sec:chain-complexes}

Chain complexes provide a way to connect the study of quantum codes with high dimensional expanders.

\begin{definition}[Chain complex]
  A chain complex $X$ is a sequence of vector spaces $\F2^{X(i)}$ generated by sets $X(i)$ together with linear maps $\partial_i\colon \F2^{X(i)} \to \F2^{X(i-1)}$ called the boundary operators.
  These boundary operators satisfy
  \begin{equation*}
    \partial_{i-1} \partial_i  = 0\;.
  \end{equation*}
\end{definition}

Because $\F2^{X(i)}$ has a canonical choice of basis corresponding to the elements of $X(i)$,
one can define the associated co-boundary operators $\delta^i\coloneqq \partial_{i-1}^T \colon \F2^{X(i)} \to \F2^{X(i+1)}$, where $(\cdot)^T$ denotes the matrix transpose.
The co-boundary operators automatically satisfy 
\begin{equation*}
  \delta^{i+1} \delta^i  = 0\;.
\end{equation*}
We introduce some standard terminology.
Elements of the kernel of the (co)-boundary operators are called (co)-cycles
\begin{equation*}
  Z_i \coloneqq \Ker \partial_i = \{c_i \in \F2^{X(i)} : \partial_i c_i = 0\}\;, \qquad 
  Z^i \coloneqq \Ker \delta^i = \{c^i \in \F2^{X(i)} : \delta^i c^i = 0\}\;.
\end{equation*}
Elements of the image of the (co)-boundary operators are called (co)-boundaries
\begin{equation*}
  B_i \coloneqq \Ima \partial_{i+1} = \{\partial_{i+1} c_{i+1} : c_{i+1} \in \F2^{X(i+1)}\}\;, \qquad
  B^i \coloneqq \Ima \delta^{i-1} = \{\delta^{i-1} c^{i-1} : c^{i-1} \in \F2^{X(i-1)}\}\;.
\end{equation*}
Since $\partial_i \partial_{i+1} = 0$ it follows that $B_i \subset Z_i$.
When $B_i = Z_i$ the chain complex is said to be \emph{exact} at $i$.

\subsection{Classical and quantum error correcting codes}
\label{sec:ecc}

A classical linear code is specified by a $k$-dimensional linear subspace $C \subset \F2^n$.
Here, $n$ is called the length, $k$ is called the dimension, and $d \coloneqq \min_{c \in C} \abs{c}$ is called the distance, where $\abs{\cdot}$ is the Hamming weight, i.e. the number of non-zero entries.
We call $r=k/n$ the rate and $\delta=d/n$ the relative distance of the code.
A more explicit way of describing a classical linear code is by specifying a parity-check matrix $H\colon \F2^n \to \F2^m$ where $m=n-k$ and $C = \Ker H$ is the kernel of the matrix.


A quantum CSS code is specified by two classical codes $C_z = \Ker H_z \subset \F2^n$ and $C_x = \Ker H_x \subset \F2^n$ such that $C_x^\perp \subset C_z$, i.e.\ $H_x H_z^T = 0$.
This condition allows us to associate a 3-term chain complex to the quantum code,
\begin{equation*}
  X\colon \F2^{m_z} \xrightarrow{H_z^T} \F2^n \xrightarrow{H_x} \F2^{m_x}\;.
\end{equation*}
Here are the relevant quantities associated with the quantum code.
Elements of $C_x=Z^1$ (resp.\ $C_z=Z_1$) are called $X$ (resp.\ $Z$)-logical operators. Elements of $C_x^\perp=B_1$ (resp.\ $C_z^\perp=B^1$) are called $Z$ (resp.\ $X$)-stabilizers.
The dimension of the code is $k = \dim Z_1 - \dim B_1$.
The distance is $d = \min(d_x, d_z)$ where 
\begin{equation*}
  d_x = \min_{c^1 \in Z^1 - B^1} |c^1|\;, \qquad
  d_z = \min_{c_1 \in Z_1 - B_1} |c_1|
\end{equation*}
and $d_x, d_z$ are called the $X$-distance and $Z$-distance of the code respectively. 
The code is called a \emph{quantum low-density parity-check code} (qLDPC) if $H_x$ and $H_z$ have a bounded number of nonzero entries in each column and row.

Having defined a quantum code, we now describe the task of decoding.
The goal of the decoder is to recover the error pattern from the syndrome.
Under the stabilizer formalism, one can express any error pattern as a pair $(c^1, c_1)$ 
  where $c^1 \in \F2^n$ indicates coordinates with an $X$-error
  and $c_1 \in \F2^n$ indicates coordinates with a $Z$-error.
The decoder is given the syndrome $(\delta^1 c^1, \partial_1 c_1)$
  and is required to return a correction $(\tilde c^1, \tilde c_1)$
  such that the difference from the actual error is a stabilizer,
  i.e.\ $\tilde c^1 - c^1 \in B^1$ and $\tilde c_1 - c_1 \in B_1$.
This task can be divided into two independent tasks
  where one recovers $\tilde c^1$ from $\delta^1 c^1$ ($X$-error decoding)
  and the other recovers $\tilde c_1$ from $\partial_1 c_1$ ($Z$-error decoding).


\subsection{Expander graphs} \label{sec:expander-graphs}

Expander graphs are used to obtain various important results in theoretical computer science. 
The most important one in our context is the expander codes \cite{sipser1996expander}.
We refer the reader to \cite{hoory2006expander} for other applications of expander graphs.

\begin{definition} [Spectral Expander Graphs and Ramanujan Graphs]
Let $\cG = (V, E)$ be an undirected, $\Delta$-regular graph on $n$ vertices, and define $\lambda(\cG) \coloneqq \max\{|\lambda_2|, |\lambda_n|\}$ where $\Delta = \lambda_1 \ge \lambda_2 \ge ... \ge \lambda_n$ are the eigenvalues of the adjacency matrix of $\cG$. We say that $\cG$ is a $\lambda$-spectral expander if $\lambda(\cG) \le \lambda$.
\end{definition}

We use spectral expanders for two reasons. 
First, there are known explicit infinite families of spectral expanders \cite{philips1988ramanujan}.
Second, spectral expansion implies edge expansion which is a key ingredient to obtain our results. This property is captured in the following expander mixing lemma which first appeared in \cite{alon1988explicit}.

\begin{lemma} [Expander Mixing Lemma]
  Let $\cG$ be a $\Delta$-regular graph with $\lambda$-spectral expansion. Then for any subset $S, T \subset V$, we have
  \begin{equation*}
    |E(S,T)| \le \frac{\Delta}{|V|} |S| |T| + \lambda \sqrt{|S||T|}\;.
  \end{equation*}

  Moreover, for any vectors $x,y \in \RR^V$ we have
  \begin{equation*}
    x^T M y \le \frac{\Delta}{|V|} \norm{x}_1 \norm{y}_1 + \lambda \norm{x}_2 \norm{y}_2\;,
  \end{equation*}
	 where $M$ is the adjacency matrix of $\cG$ and for $z\in\RR^n$, $\norm{z}_1 = \sum_i |z_i|$ and $\norm{z}_2 = (\sum_i |z_i|^2)^{1/2}$ denote the $L_1$ and $L_2$ norm respectively.
\end{lemma}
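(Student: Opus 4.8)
\emph{Proof sketch.} The plan is to prove the vector inequality first, since the set inequality is the special case of indicator vectors. Write $n = |V|$ and let $\mathbbm{1}\in\RR^V$ be the all-ones vector; since $\cG$ is $\Delta$-regular and undirected, the adjacency matrix $M$ is symmetric and satisfies $M\mathbbm{1} = \Delta\mathbbm{1}$, so $\mathbbm{1}$ spans the top eigenspace. Given $x\in\RR^V$ I would split it along $\mathbbm{1}$ and its orthogonal complement: $x = \alpha\mathbbm{1} + x^\perp$ with $\alpha = \tfrac1n\sum_v x_v$ and $\mathbbm{1}^T x^\perp = 0$, and similarly $y = \beta\mathbbm{1} + y^\perp$ with $\beta = \tfrac1n\sum_v y_v$. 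Expanding $x^T M y$ and using $M\mathbbm{1}=\Delta\mathbbm{1}$ together with $\mathbbm{1}^T x^\perp = \mathbbm{1}^T y^\perp = 0$, the two cross terms vanish and
\[
  x^T M y \;=\; \alpha\beta\,\mathbbm{1}^T M\mathbbm{1} + (x^\perp)^T M y^\perp \;=\; \Delta n\,\alpha\beta + (x^\perp)^T M y^\perp .
\]

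For the first summand, $\Delta n\,\alpha\beta = \tfrac{\Delta}{n}\bigl(\sum_v x_v\bigr)\bigl(\sum_v y_v\bigr) \le \tfrac{\Delta}{n}\norm{x}_1\norm{y}_1$, using $\lvert\sum_v x_v\rvert \le \norm{x}_1$ and the analogous bound for $y$. For the second summand, the spectral theorem diagonalizes $M$ in an orthonormal basis, and by hypothesis every eigenvalue other than $\Delta$ — that is, every eigenvalue of the restriction of $M$ to $\mathbbm{1}^\perp$ — lies in $[-\lambda,\lambda]$; hence $(x^\perp)^T M y^\perp \le \lambda\norm{x^\perp}_2\norm{y^\perp}_2 \le \lambda\norm{x}_2\norm{y}_2$, the last step by the Pythagorean identity $\norm{x^\perp}_2^2 \le \norm{x}_2^2$. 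Adding the two bounds gives the ``moreover'' inequality.

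To deduce the set statement I would instantiate the vector inequality at $x = \mathbbm{1}_S$ and $y = \mathbbm{1}_T$, the indicator vectors of $S$ and $T$. For each edge $e=\{u,v\}$ the quantity $\mathbbm{1}_S(u)\mathbbm{1}_T(v)+\mathbbm{1}_S(v)\mathbbm{1}_T(u)$ is at least $1$ exactly when $e$ has one endpoint in $S$ and one in $T$, so $|E(S,T)| \le \mathbbm{1}_S^T M\, \mathbbm{1}_T$; combining this with $\norm{\mathbbm{1}_S}_1 = |S| = \norm{\mathbbm{1}_S}_2^2$ and the corresponding identities for $T$ yields $|E(S,T)| \le \tfrac{\Delta}{|V|}|S||T| + \lambda\sqrt{|S||T|}$, as required.

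I do not anticipate a genuine obstacle: this is the textbook spectral argument, and every result invoked (regularity forcing $\mathbbm{1}$ to be a $\Delta$-eigenvector, the spectral theorem, Pythagoras) is elementary. The only two spots that deserve a moment's care are the signed version, where $\sum_v x_v$ must be bounded by $\norm{x}_1$ rather than equated to it, and the observation that no connectedness assumption is needed — the hypothesis $\lambda(\cG)\le\lambda$ already controls every eigenvalue except the Perron value $\Delta$, which is precisely what the decomposition along $\mathbbm{1}$ exploits.
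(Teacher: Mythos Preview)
The paper does not actually prove this lemma: it is stated in the preliminaries with a citation to Alon--Chung \cite{alon1988explicit} and then used as a black box, so there is no proof in the paper to compare against. Your argument is the standard spectral decomposition proof and is correct; it would serve perfectly well as the omitted proof.
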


\subsection{Left-right Cayley complexes} \label{sec:left-right-cayley-complexes}


Our code construction is based on the left-right Cayley complex introduced in \cite{dinur2021locally}. A similar structure also appeared in \cite{breuckmann2021balanced, panteleev2021asymptotically}.
The \emph{4-fold left-right Cayley complex} $\cG_2(G, A, B)$ is specified by a finite group $G$ and two sets of generators $A$ and $B$ which are closed under inverse. 
The complex is illustrated in \Cref{fig:left-right-cayley-complex}. It consists of vertices, edges, and faces as follows: 
\begin{itemize}
  \item The vertices are $V = V_{00} \cup V_{10} \cup V_{01} \cup V_{11}$ where $V_{00} \cong V_{10} \cong V_{01} \cong V_{11} \cong G$.
  \item The edges are $E = E^| \cup E^- = (E_{*0} \cup E_{*1}) \cup (E_{0*} \cup E_{1*})$ where
    \begin{align*}
      E_{*0} &= \{(g, ag) : g \in G, a \in A\} \subset V_{00} \times V_{10}\;, \\
      E_{*1} &= \{(gb, agb) : gb \in G, a \in A\} \subset V_{01} \times V_{11}\;, \\
      E_{0*} &= \{(g, gb) : g \in G, b \in B\} \subset V_{00} \times V_{01}\;, \\
      E_{1*} &= \{(ag, agb) : ag \in G, b \in B\} \subset V_{10} \times V_{11}\;.
    \end{align*}
  \item The faces are $F = \{(g, ag, gb, agb): g \in G, a \in A, b \in B\} \subset V_{00} \times V_{10} \times V_{01} \times V_{11}$\;.
\end{itemize}

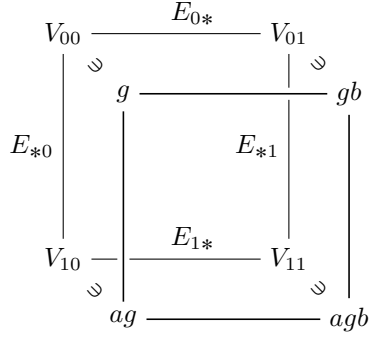
\begin{figure}
  \centering
  \begin{tikzpicture}
    \node (V00) at (0,0) {$V_{00}$};
    \node (V10) at (0,-3) {$V_{10}$};
    \node (V01) at (3,0) {$V_{01}$};
    \node (V11) at (3,-3) {$V_{11}$};

    \draw (V00) to node[auto, swap] {$E_{*0}$} (V10);
    \draw (V00) to node[auto, pos=0.6] {$E_{0*}$} (V01);
    \draw (V01) to node[auto, swap] {$E_{*1}$} (V11);
    \draw (V10) to node[auto, pos=0.6] {$E_{1*}$} (V11);

    \draw (0,0)+(0.8,-0.8) node (g) {$g$};
    \draw (0,-3)+(0.8,-0.8) node (ag) {$ag$};
    \draw (3,0)+(0.8,-0.8) node (gb) {$gb$};
    \draw (3,-3)+(0.8,-0.8) node (agb) {$agb$};

    \path (g) -- node [sloped] {$\ni$} (V00);
    \path (ag) -- node [sloped] {$\ni$} (V10);
    \path (gb) -- node [sloped] {$\ni$} (V01);
    \path (agb) -- node [sloped] {$\ni$} (V11);

    \draw [white, double=black, line width=2pt] (g) to (ag);
    \draw [white, double=black, line width=2pt] (g) to (gb);
    \draw [white, double=black, line width=2pt] (gb) to (agb);
    \draw [white, double=black, line width=2pt] (ag) to (agb);
  \end{tikzpicture}
  \caption{4-fold left-right Cayley complex.}
  \label{fig:left-right-cayley-complex}
\end{figure}

To clarify which vertex set, $V_{00}$, $V_{01}$, etc.\, a given vertex $g$ belongs to, we sometimes write the vertex as $(g,00)$ or $(g,01)$, etc. The same convention applies to edges. For example, $((g, ag), *0)$ is an edge in $E_{*0}$. 
Note that the edges and faces are labeled by ordered tuples instead of sets. Elements of $E^|$ are referred to as \emph{vertical edges}, and elements of $E^-$ as \emph{horizontal edges}. 
The appearance of faces crucially relies on the fact that the left action commutes with the right action, e.g.\ $a(gb)=(ag)b$.

We introduce the following important notation to describe the neighborhood relation between the vertices, edges and faces. For $v_{00}\in V_{00}$
we define $V_{10}(v_{00})$ as the set of vertices in $V_{10}$ neighbor to $v_{00}$
  and $V_{11}(v_{00})$ as the set of vertices in $V_{11}$ ``neighbor'' to $v_{00}$ by going through a horizontal edge and a vertical edge.
Similarly we define $E_{*0}(v_{00})$ as the set of edges in $E_{*0}$ incident to $v_{00}$ and $E_{1*}(v_{00})$ as the set of edges accessible by $v_{00}$ by first going through a vertical edge then choosing an adjacent horizontal edge.

More precisely, given $v_{00} = (g, 00)$ we define the following neighborhoods.
\begin{itemize}
  \item $V_{10}(v_{00}) = \{(ag, 10): a \in A\}$, $V_{01}(v_{00}) = \{(gb, 01): b \in B\}$, $V_{11}(v_{00}) = \{(agb, 11): a \in A, b \in B\}$,
  \item $E_{*0}(v_{00}) = \{((g, ag), *0): a \in A\}$, $E_{0*}(v_{00}) = \{((g, gb), 0*): b \in B\}$,
  \item $E_{*1}(v_{00}) = \{((gb, agb), *1): a \in A, b \in B\}$, $E_{1*}(v_{00}) = \{((ag, agb), 1*): a \in A, b \in B\}$,
  \item $E^|(v_{00}) = E_{*0}(v_{00})$, $E^-(v_{00}) = E_{0*}(v_{00})$, $E(v_{00}) = E^|(v_{00}) \cup E^-(v_{00})$,
  \item $F(v_{00}) = \{(g, ag, gb, agb): a \in A, b \in B\}$.
\end{itemize}
Given $e_{*0} = ((g, ag), *0)$, we define the following neighborhoods.
\begin{itemize}
  \item $E_{*1}(e_{*0}) = \{((gb, agb), *1): b \in B\}$,
  \item $E_{0*}(e_{*0}) = \{((g, gb), 0*): b \in B\}$, $E_{1*}(e_{*0}) = \{((ag, agb), 1*): b \in B\}$,
  \item $F(e_{*0}) = \{(g, ag, gb, agb): b \in B\}$.
\end{itemize}

\begin{figure}[H]
  \centering
  \begin{tikzpicture}
    \begin{scope} [scale=1.1]
      \draw (0, 0) -- (-0.2, -2.6) -- (2.4, -2.4) -- (2.6, 0.2) -- cycle;
      \draw (0, 0) -- (0, -3) -- (2.6, -2.8) -- (2.6, 0.2) -- cycle;
      \draw (0, 0) -- (-0.2, -2.6) -- (2.8, -2.6) -- (3, 0) -- cycle;
      \draw (0, 0) -- (0, -3) -- (3, -3) -- (3, 0) -- cycle;

      \draw (0, 0.0) node [anchor=south east] {$v_{00}$};
      \draw (0, -3) node [anchor=north east] {$V_{10}(v_{00})$};
      \draw (3, 0) node [anchor=south west] {$V_{01}(v_{00})$};
      \draw (3, -3) node [anchor=north west] {$V_{11}(v_{00})$};
      \path (-0.2, -1.5) node [anchor=east] {$E_{*0}(v_{00})$};
      \draw (1.5, 0.2) node [anchor=south] {$E_{0*}(v_{00})$};
      \draw (3, -1.5) node [anchor=west] {$E_{*1}(v_{00})$};
      \draw (1.5, -3) node [anchor=north] {$E_{1*}(v_{00})$};
      \draw (1.5, -1.5) node {$F(v_{00})$};
    \end{scope}

    \begin{scope} [scale=0.9, shift={(8.5,0)}]
      \draw (0, 0) -- (0, -3) -- (3.3, -3.6) -- (3.3, -0.6) -- cycle;
      \draw (0, 0) -- (0, -3) -- (3, -3) -- (3, 0) -- cycle;
      \draw (0, 0) -- (0, -3) -- (2.7, -2.4) -- (2.7, 0.6) -- cycle;

      \draw (0, 0) node [anchor=south east] {$v_{00}$};
      \draw (0, -3) node [anchor=north east] {$v_{10}$};
      \path (0, -1.5) node [anchor=east] {$e_{*0}$};
      \draw (3.4, -1.5) node [anchor=west] {$E_{*1}(e_{*0})$};
      \draw (1.5, 0.5) node [anchor=south] {$E_{0*}(e_{*0})$};
      \draw (1.5, -3.5) node [anchor=north] {$E_{1*}(e_{*0})$};
      \draw (1.5, -1.5) node {$F(e_{*0})$};
    \end{scope}
  \end{tikzpicture}
  \caption{(Left) The neighboring sets of a vertex $v_{00}$.
    (Right) The neighboring sets of an edge $e_{*0}$.
  }
  \label{fig:neighbor}
\end{figure}
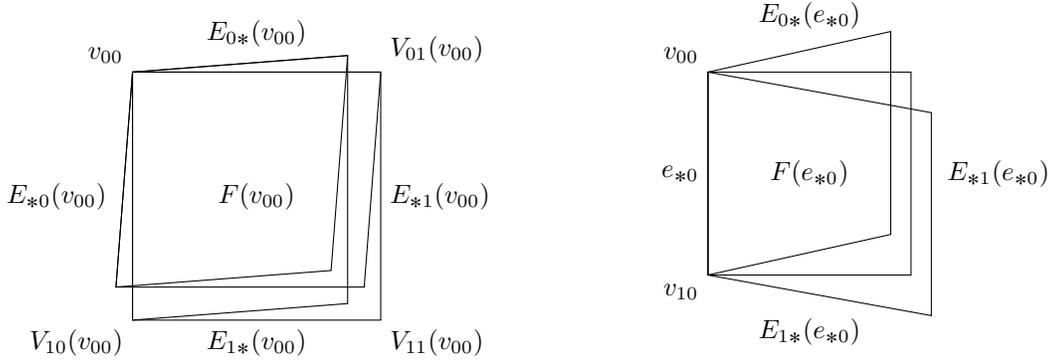

Finally we introduce subgraphs of the complex that will be used to define  Tanner codes in Section~\ref{sec:robust-tensor-codes}.
$\cG(E^|, F)$ is the bipartite graph that has $E^| = E_{*0} \cup E_{*1}$ as vertices and $F$ as the edges between them. 
More precisely, the edges are $F \cong \{((g, ag), (gb, agb)) : g \in G, a \in A, b \in B\} \subset E_{*0} \times E_{*1}$. The bipartite graph $\cG(E^-, F)$ is defined similarly. 
$\cG(V, E^|)$ is the bipartite graph that has $V = (V_{00} \cup V_{01}) \cup (V_{10} \cup V_{11})$ as vertices and $E^|$ as the edges between them. 
More precisely, the edges are $E^| = E_{*0} \cup E_{*1}$ where $E_{*0} \cong \{(g, ag) : g \in G, a \in A\} \subset V_{00} \times V_{10}$ and $E_{*1} \cong \{(g, ag) : g \in G, a \in A\} \subset V_{01} \times V_{11}$.
One defines the bipartite graph $\cG(V, E^-)$ similarly.

We conclude by discussing an explicit instance that is used in our construction. We use the Ramanujan graph constructed in \cite{philips1988ramanujan}.
Let $p$ and $q$ be unequal primes $\equiv 1 \mod 4$ and $\big(\frac{q}{p}\big) = 1$ where $\big(\frac{q}{p}\big)$ is the Legendre symbol.
Let $G = \textnormal{PSL}(2, \ZZ/q\ZZ)$ and $S=S^{-1}$ be the set of size $\Delta = p+1$ as defined in the paper.
The paper above shows that the Cayley graph $Cay(G, S)$ with vertex set $G$ and edge set $\{\{g, ag\}: g \in G, a \in S\}$ is a Ramanujan graph.
Finally, the 4-fold left-right Cayley complex we consider is $\cG_2(G, A=S, B=S)$.

\subsection{Expansion properties of left-right Cayley complexes}

We give three lemma that state expansion properties of operators defined on graphs obtained from the left-right Cayley complex. The first two lemma show expansion properties of two different random walks on the edges of $\cG_2(G, A, B)$.

\begin{lemma}\label{lem:M_1}
Let $M_1 \in \RR^{E\times E}$ be the adjacency matrix between opposing edges of the same face in $\cG_2(G, A, B)$, i.e.\ the adjacency matrix of the graph
    \begin{equation*}
      ((g, ag), *0) \sim ((gb, agb), *1)\;,\quad ((g, gb), 0*) \sim ((ag, agb), 1*)\qquad \forall g \in G, a \in A, b \in B\;.
    \end{equation*}
		Suppose that $\Cay(G,A)$ and $\Cay(G,B)$ are $\lambda$-spectral expanders. 
		Then for any subset $S\subseteq E$ it holds that 
		\begin{equation}\label{eq:M_1}
		1_S^T M_1 1_S \,\leq\, \lambda |S| + \frac{\Delta}{2|G|} |S|^2\;.
				\end{equation}
\end{lemma}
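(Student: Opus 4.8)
The plan is to decompose the bilinear form $1_S^T M_1 1_S$ according to the two types of edges and then apply the Expander Mixing Lemma to the two Cayley graphs $\Cay(G,A)$ and $\Cay(G,B)$. The matrix $M_1$ connects a horizontal edge to a horizontal edge and a vertical edge to a vertical edge, so it is block diagonal with respect to the partition $E = E^| \sqcup E^-$, and in fact it further splits according to the four edge classes $E_{*0}, E_{*1}, E_{0*}, E_{1*}$. Writing $S_{*0} = S \cap E_{*0}$ and similarly $S_{*1}, S_{0*}, S_{1*}$, the form becomes
\begin{equation*}
  1_S^T M_1 1_S \;=\; 2\,\big(1_{S_{*0}}^T N 1_{S_{*1}}\big) \;+\; 2\,\big(1_{S_{0*}}^T N' 1_{S_{1*}}\big)\;,
\end{equation*}
where $N$ records the face-incidence $((g,ag),*0)\sim((gb,agb),*1)$ and $N'$ the relation $((g,gb),0*)\sim((ag,agb),1*)$, and the factor $2$ comes from counting each face-edge symmetrically in the quadratic form. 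The key observation is that each of $N, N'$ is (a relabeling of) the adjacency operator of $\Cay(G,B)$ acting by right multiplication, respectively $\Cay(G,A)$ acting by left multiplication: identifying $E_{*0}$ and $E_{*1}$ both with $G\times A$ via $((g,ag),*0)\leftrightarrow(g,a)$ and $((gb,agb),*1)\leftrightarrow(gb,a)$, the relation becomes $(g,a)\sim(gb,a)$, which is $|A|$ disjoint copies of $\Cay(G,B)$.

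The next step is to apply the second form of the Expander Mixing Lemma to each disjoint copy. Since each copy is $\Delta$-regular on $|G|$ vertices with $\lambda$-spectral expansion, and the sets in question within each copy are (the slices of) $S_{*0}$ and $S_{*1}$, I would either apply the lemma copy by copy and sum, or directly note that the disjoint union of $|A|$ copies of a $\lambda$-expander on $|G|$ vertices is itself $\Delta$-regular with the same $\lambda$ on $|A|\cdot|G|$ vertices. Either way one gets
\begin{equation*}
  1_{S_{*0}}^T N 1_{S_{*1}} \;\leq\; \frac{\Delta}{|G|}\,|S_{*0}|\,|S_{*1}| \;+\; \lambda\,\sqrt{|S_{*0}|\,|S_{*1}|}\;,
\end{equation*}
and the analogous bound for $N'$ with $S_{0*}, S_{1*}$.

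Finally I would combine the four pieces. Using $\sqrt{xy}\le \tfrac12(x+y)$ on each square-root term, the $\lambda$-terms sum to at most $\lambda(|S_{*0}|+|S_{*1}|+|S_{0*}|+|S_{1*}|) = \lambda|S|$. For the quadratic terms, $|S_{*0}||S_{*1}| + |S_{0*}||S_{1*}| \le \tfrac14\big((|S_{*0}|+|S_{*1}|)^2 + (|S_{0*}|+|S_{1*}|)^2\big) \le \tfrac14(|S|)^2$, but I must be careful to absorb the factor $2$ out front: $2\cdot\frac{\Delta}{|G|}\cdot\big(|S_{*0}||S_{*1}|+|S_{0*}||S_{1*}|\big) \le \frac{2\Delta}{|G|}\cdot\frac{|S|^2}{4} = \frac{\Delta}{2|G|}|S|^2$, matching the claimed bound exactly. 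I expect the main subtlety to be purely bookkeeping: getting the identification of $N$ and $N'$ with copies of the Cayley graphs right (in particular checking that right-multiplication by $B$ really does give a graph isomorphic to $\Cay(G,B)$, which uses that $B=B^{-1}$), and tracking the factor of $2$ so that the Cauchy–Schwarz / AM–GM steps land on $\frac{\Delta}{2|G|}|S|^2$ rather than something weaker. There is no deep obstacle; the content is entirely in the reduction to the Expander Mixing Lemma.
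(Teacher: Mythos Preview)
Your proof is correct and follows the paper's approach: recognize the blocks of $M_1$ as disjoint unions of bipartite double covers of $\Cay(G,A)$ or $\Cay(G,B)$, apply the Expander Mixing Lemma to each copy, and sum using AM--GM. One correction to your aside: the claim that ``the disjoint union of $|A|$ copies of a $\lambda$-expander on $|G|$ vertices is itself $\Delta$-regular with the same $\lambda$'' is false, since the eigenvalue $\Delta$ acquires multiplicity $|A|$ and hence $\lambda_2=\Delta$. Your displayed bound $1_{S_{*0}}^T N\, 1_{S_{*1}} \le \frac{\Delta}{|G|}|S_{*0}||S_{*1}| + \lambda\sqrt{|S_{*0}||S_{*1}|}$ is nevertheless correct, but it must be obtained via your first option (apply the lemma to each copy separately and sum, using Cauchy--Schwarz on the square-root terms), which is exactly what the paper does.
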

				
\begin{proof}
  $M_1$ is the disjoint union of $|G|$ copies of $\Cay^b(G, A)$ and $|G|$ copies of $\Cay^b(G, B)$ where $\Cay^b(G, A)$ and $\Cay^b(G, B)$ are the double covers of the $\lambda$-spectral expander graphs $\Cay(G, A)$ and $\Cay(G, B)$. 
  Let $S = \cup_i (S_i^0 \cup S_i^1)$ be a partition of $S$ according to each disjoint graph and their two vertex sets.
  Each disjoint graph satisfies,
  \begin{equation*}
    1_{S^0_i}^T M_1  1_{S^1_i} \le \lambda \sqrt{| {S^0_i}|| {S^1_i}|} + \frac{\Delta}{|G|} | {S^0_i}|| {S^1_i}|.
  \end{equation*}
  So
  \begin{align*} 
    1_S^T M_1 1_S
    &= 2 \sum_i 1_{S^0_i}^T M_1  1_{S^1_i} \nonumber \\
    &\le \lambda |S| + \frac{\Delta}{2|G|} |S|^2\;.
  \end{align*}
\end{proof}

\begin{lemma}\label{lem:M_0}
Let $M_0 \in \RR^{E\times E}$ be the adjacency matrix where two edges of $\cG_2(G, A, B)$ are connected if one of their endpoints are connected through an edge, i.e.\ 
    $M_0 = U M_0' D$ where $D\in \RR^{V\times E}$ and $U\in\RR^{E\times V}$ are the incidence matrices between the edges and the vertices
    and $M_0'$ is the adjacency matrix of the graph
    \begin{equation*}
      (g, 00) \sim (ag, 10),\, (g, 00) \sim (gb, 01),\, (ag, 10) \sim (agb, 11),\, (gb, 01) \sim (agb, 11)\quad \forall g \in G, a \in A, b \in B\;.
    \end{equation*}
			Suppose that $\Cay(G,A)$ and $\Cay(G,B)$ are $\lambda$-spectral expanders. 
		Then for any subset $S\subseteq E$ it holds that 
		  \begin{equation} \label{eq:M_0}
   1_S^T M_01_S  \,\leq\,  8\lambda \Delta |S| + \frac{2\Delta}{|G|} |S|^2\;.
	\end{equation}
		\end{lemma}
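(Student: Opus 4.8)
The plan is to follow the template of the proof of \Cref{lem:M_1}: strip off the two incidence maps so that $1_S^T M_0 1_S$ becomes a quadratic form of $M_0'$ evaluated on the $S$-degree vector, decompose that form into four bilinear forms each supported on a single copy of a $\lambda$-spectral Cayley expander, apply the vector Expander Mixing Lemma to each, and combine. Concretely, set $y \coloneqq D 1_S \in \RR^V$, so that $y_v$ is the number of edges of $S$ incident to $v$. Since $D\in\RR^{V\times E}$ and $U\in\RR^{E\times V}$ are the two incidence matrices between $V$ and $E$ (so $U = D^T$),
\[
  1_S^T M_0 1_S \;=\; 1_S^T U M_0' D 1_S \;=\; (D1_S)^T M_0'(D1_S) \;=\; y^T M_0' y \;.
\]

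Next I would unpack $M_0'$. Its vertex set is $V = V_{00}\cup V_{10}\cup V_{01}\cup V_{11}$, each part a copy of $G$, and it has edges only between the four ``consecutive'' pairs $V_{00}$--$V_{10}$, $V_{10}$--$V_{11}$, $V_{11}$--$V_{01}$, $V_{01}$--$V_{00}$; each of these four bipartite pieces is (isomorphic to) a copy of $\Cay(G,A)$ or $\Cay(G,B)$, where the $B$-pieces use right multiplication but the right Cayley graph is isomorphic to $\Cay(G,B)$ via $g\mapsto g^{-1}$ (as $B=B^{-1}$) and hence is also $\lambda$-spectral. Writing $y_{00},y_{10},y_{01},y_{11}\in\RR^G$ for the blocks of $y$, and letting $N_A,N_B$ be the adjacency matrices of the relevant Cayley expanders, this gives $y^T M_0' y = 2\big(y_{00}^T N_A y_{10} + y_{10}^T N_B y_{11} + y_{11}^T N_A y_{01} + y_{01}^T N_B y_{00}\big)$, the factor $2$ accounting for ordered versus unordered pairs. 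Applying the vector form of the Expander Mixing Lemma to each of the four terms, with $|V|=|G|$, bounds $y^T M_0' y$ by
\[
  \tfrac{2\Delta}{|G|}\big(\norm{y_{00}}_1\norm{y_{10}}_1 + \norm{y_{10}}_1\norm{y_{11}}_1 + \norm{y_{11}}_1\norm{y_{01}}_1 + \norm{y_{01}}_1\norm{y_{00}}_1\big) + 2\lambda\big(\norm{y_{00}}_2\norm{y_{10}}_2 + \cdots\big)\;.
\]

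To finish, I would exploit that the four index pairs form a $4$-cycle $00$--$10$--$11$--$01$, so for either norm the corresponding sum factors as $\big(\norm{y_{00}}_\bullet + \norm{y_{11}}_\bullet\big)\big(\norm{y_{10}}_\bullet + \norm{y_{01}}_\bullet\big) \le \tfrac14\big(\sum_{ij}\norm{y_{ij}}_\bullet\big)^2$ by AM--GM. For the $L_1$ part this is $\tfrac14\norm{y}_1^2 = \tfrac14(2|S|)^2 = |S|^2$, since every edge has two endpoints; for the $L_2$ part it is at most $\norm{y}_2^2$ by Cauchy--Schwarz, and $\norm{y}_2^2 = \sum_v y_v^2 \le 2\Delta\sum_v y_v = 4\Delta|S|$ because $\cG_2(G,A,B)$ is $2\Delta$-regular ($\Delta$ edges of each of the two types at every vertex). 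Substituting yields $1_S^T M_0 1_S \le \tfrac{2\Delta}{|G|}|S|^2 + 8\lambda\Delta|S|$, which is exactly the claim. I expect the only delicate points to be purely bookkeeping — getting the factor $2$ in the decomposition of $y^T M_0' y$ right, and correctly matching each of the four bipartite blocks of $M_0'$ to a $\lambda$-spectral Cayley graph — rather than anything conceptually beyond what is already in \Cref{lem:M_1}.
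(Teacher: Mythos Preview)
Your proposal is correct and follows essentially the same approach as the paper: pass to the degree vector $y=D1_S$, decompose $M_0'$ into four bipartite Cayley pieces, apply the vector Expander Mixing Lemma to each, and combine using $\norm{y}_1=2|S|$ and $\norm{y}_2^2\le 2\Delta\norm{y}_1$. The only cosmetic difference is that the paper bounds each $L_2$ cross term via AM--GM ($\norm{y_i}_2\norm{y_j}_2\le\tfrac12(\norm{y_i}_2^2+\norm{y_j}_2^2)$) and then sums, whereas you use the $4$-cycle factorization followed by Cauchy--Schwarz; both yield the same $2\lambda\norm{y}_2^2\le 8\lambda\Delta|S|$.
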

		
  Note that we allow multi-edges, so some entries of $M_0$ could be greater than $1$ when there are degeneracies.

\begin{proof}
  $M_0'$ is the union of two copies of $\Cay^b(G, A)$ and two copies of $\Cay^b(G, B)$. Let $\cV_{00} \in V_{00}$, $\cV_{10}\in V_{10}$, $\cV_{10}\in V_{10}$ and $\cV_{11}\in V_{11}$ be the vertices incident on $\cE$. 
  Because each edge is connected to two vertices,
  \begin{equation*}
    \norm{\cV_{00}}_1 + \norm{\cV_{10}}_1 + \norm{\cV_{01}}_1 + \norm{\cV_{11}}_1 \le 2 |\cE|.
  \end{equation*}
  Because each vertex is connected by at most $2\Delta$ edges,
    $\norm{\cV_{00}}_\infty \le 2\Delta$, so
  \begin{equation*}
    \norm{\cV_{00}}_2^2 + \norm{\cV_{10}}_2^2 + \norm{\cV_{01}}_2^2 + \norm{\cV_{11}}_2^2 \le 2|\cE| \cdot 2\Delta.
  \end{equation*}
  The expander subgraph $\Cay^b(G, A)$ between $\cV_{00}$ and $\cV_{10}$ gives
  \begin{align*}
    1_{\cV_{00}}^T M_0' 1_{\cV_{10}} 
    &\le \lambda \norm{\cV_{00}}_2 \norm{\cV_{10}}_2 + \frac{\Delta}{|G|} \norm{\cV_{00}}_1 \norm{\cV_{10}}_1 \\
    &\le \lambda \frac{\norm{\cV_{00}}_2^2 + \norm{\cV_{10}}_2^2}{2} + \frac{\Delta}{|G|} \norm{\cV_{00}}_1 \norm{\cV_{10}}_1.
  \end{align*}
  By combining with other expander subgraphs, we have
  \begin{align*} \label{eq:M_0}
    1_{\cE}^T M_0 1_\cE &= 2 (1_{\cV_{00}}^T  M_0' 1_{\cV_{10}} + 1_{\cV_{00}}^T M_0' 1_{\cV_{01}} + 1_{\cV_{10}}^T M_0' 1_{\cV_{11}} + 1_{\cV_{01}}^T M_0' 1_{\cV_{11}}) \nonumber \\
    &\le 2 \lambda (\norm{\cV_{00}}_2^2 + \norm{\cV_{10}}_2^2 + \norm{\cV_{01}}_2^2 + \norm{\cV_{11}}_2^2) \nonumber \\
    &+ \frac{2\Delta}{|G|} (\norm{\cV_{00}}_1 \norm{\cV_{10}}_1 + \norm{\cV_{00}}_1 \norm{\cV_{01}}_1 + \norm{\cV_{10}}_1 \norm{\cV_{11}}_1 + \norm{\cV_{01}}_1 \norm{\cV_{11}}_1) \nonumber \\
    &\le 8\lambda \Delta |\cE| + \frac{2\Delta}{|G|} |\cE|^2\;.
  \end{align*}
\end{proof}

The third lemma shows co-expansion of an associated graph. 

\begin{lemma} [Co-Expansion $\F2^{X(1)} \gets \F2^{X(0)}$] \label{lem:co-expansion-1d}
    Given $\Delta$-regular $\lambda$-spectral expander graphs $\Cay(G, A)$, $\Cay(G, B)$
    and linear codes $C_A^\perp, C_B^\perp$ of length $\Delta$ with distance $d_1$.
    
    Then the map
    \begin{equation*}
      (\F2^{m_a})^{E^-} \times (\F2^{m_b})^{E^|} \xleftarrow{\delta^0} (\F2^{m_a \times m_b})^V
    \end{equation*}
    satisfies
    \begin{equation*}
      \norm{\delta^0 c^0}_E \ge 2 (d_1-\lambda) \norm{c^0}_V - \frac{\Delta}{2} \frac{\norm{c^0}_V^2}{|G|}\;.
    \end{equation*}
  \end{lemma}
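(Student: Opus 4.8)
The plan is a two‑sided counting argument on the set of \emph{local incidences} at which $\delta^0 c^0$ is produced. Write $S=\{v\in V: c^0(v)\ne 0\}$ and $s=\norm{c^0}_V=|S|$, and $T=\{e\in E:(\delta^0 c^0)(e)\ne 0\}$, so that $\norm{\delta^0 c^0}_E=|T|$. Since $\delta^0=\partial_1^T$ is linear and $\partial_1$ is assembled from the vertex Tanner codes, $\delta^0$ is \emph{local}: the value of $\delta^0 c^0$ on an edge $e$ depends only on $c^0$ restricted to the (at most two) endpoints of $e$, and the contribution coming from a single vertex $v$ lives on the edges incident to $v$. I will (i) lower bound the total number of (vertex, incident‑edge) pairs at which this local contribution is nonzero, using the distances of $C_A^\perp,C_B^\perp$, and (ii) upper bound the number of those incidences that can be ``wasted'' by cancellation, using the expander mixing lemma; comparing the two yields the claim.

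\textbf{Step 1: local lower bound from the dual distances.} Fix $v\in S$. Unwinding the transpose of the Tanner maps $\tannercode(\cG(V,E^-),H_B)$ and $\tannercode(\cG(V,E^|),H_A)$, the restriction of $\delta^0 c^0$ to the $\Delta$ edges of $E^|(v)$ is obtained from the matrix $c^0(v)$ by multiplying on one side by $H_A$; hence, viewing this local contribution as a $m_b\times\Delta$ matrix, each of its $m_b$ rows is a codeword of $C_A^\perp=\mathrm{rowspace}(H_A)$. Likewise the restriction to the $\Delta$ edges of $E^-(v)$ is a $m_a\times\Delta$ matrix whose rows lie in $C_B^\perp$. (For $v$ in $V_{10},V_{01},V_{11}$ the roles of $A,B$ and of $E^|,E^-$ are permuted, but the structure is identical.) Because $c^0(v)\ne 0$ and $H_A,H_B$ have full row rank, at least one row on each side is a \emph{nonzero} dual codeword, hence is supported on at least $d(C_A^\perp)\ge d_1$ (resp.\ $d(C_B^\perp)\ge d_1$) edges. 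Therefore $v$ sends a nonzero local contribution to at least $d_1$ edges of $E^|(v)$ and at least $d_1$ edges of $E^-(v)$. Summing over $v\in S$, the total number $I$ of incidences carrying a nonzero local contribution satisfies $I\ge 2d_1 s$.

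\textbf{Step 2: few cancellations, and conclusion.} Each edge $e$ receives contributions only from its at most two endpoints, so $I=|T_1|+2|T_2|$, where $T_1$ (resp.\ $T_2$) is the set of edges receiving a nonzero local contribution from exactly one (resp.\ both) of its endpoints. Every edge of $T_1$ automatically lies in $T$, so $|T|\ge|T_1|=I-2|T_2|\ge 2d_1 s-2|T_2|$. It remains to bound $|T_2|$, which is at most the number of edges of $\cG_2(G,A,B)$ with both endpoints in $S$. The edge set $E$ splits into four $\Delta$-regular bipartite graphs, on the vertex pairs $(V_{00},V_{10}),(V_{10},V_{11}),(V_{11},V_{01}),(V_{01},V_{00})$, each isomorphic to the bipartite double cover of $\Cay(G,A)$ or $\Cay(G,B)$ exactly as in the proof of \Cref{lem:M_1}. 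Partitioning $S=\bigcup_{ij}S_{ij}$ accordingly and applying the expander mixing lemma to each piece together with $\sqrt{xy}\le(x+y)/2$, the number of such edges is at most $\lambda s+\frac{\Delta}{|G|}(|S_{00}|+|S_{11}|)(|S_{10}|+|S_{01}|)\le \lambda s+\frac{\Delta}{4|G|}s^2$. Combining, $\norm{\delta^0 c^0}_E=|T|\ge 2d_1 s-2\lambda s-\frac{\Delta}{2}\frac{s^2}{|G|}=2(d_1-\lambda)\norm{c^0}_V-\frac{\Delta}{2}\frac{\norm{c^0}_V^2}{|G|}$.

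\textbf{Main obstacle.} The delicate point is Step 1: one must spell out the Tanner‑code/transpose bookkeeping carefully enough to be certain that, for every vertex type, the single‑vertex contribution of a nonzero $c^0(v)$ really is a collection of codewords of $C_A^\perp$ (resp.\ $C_B^\perp$) of length $\Delta$, at least one of which is nonzero, so that the distance bound $d_1$ applies on \emph{each} of the two incident edge‑classes and produces the factor $2$. A secondary check, already implicit in \Cref{lem:M_1}, is that the expander mixing lemma may be invoked for the bipartite double covers with the same parameter $\lambda$ despite the extra eigenvalue $-\Delta$, which is harmless since the indicator test vectors respect the bipartition.
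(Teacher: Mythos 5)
Your proof is correct, and it is essentially the same argument as the paper's: a lower bound on the nonzero contributions via the distances of $C_A^\perp,C_B^\perp$, an upper bound on cancellations via the expander mixing lemma applied to the four bipartite double covers, and an AM--GM step to get the $\frac{\Delta}{2}\frac{s^2}{|G|}$ term. The only difference is organizational: the paper bounds $\norm{c^1(E_{*0})}_E$, $\norm{c^1(E_{*1})}_E$, etc.\ one edge class at a time and sums, while you do a global incidence count $I=|T_1|+2|T_2|$ and subtract off the edges with both endpoints in the support; both bookkeepings yield the identical chain of inequalities, and your Step~1 Tanner-transpose bookkeeping (rows of $c^0(v)^T H_A$ in $C_A^\perp$, rows of $c^0(v) H_B$ in $C_B^\perp$, at least one nonzero by full row rank of the parity-check matrices) is the correct local computation that the paper leaves implicit.
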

  
  \begin{proof}
    To show the expansion, one consider each component $c^1(E_{*0}) = \delta^0 c^0(V_{00}) + \delta^0 c^0(V_{10})$ separately.
    Because of code distance, each non-zero vertices in $V_{00}$ contribute to at least $d_1$ distinct non-zero edges in $\delta^0 c^0(V_{00})$.
    Same for $\delta^0 c^0(V_{10})$.
    What is left is to bound the number of cancellations in $\delta^0 c^0(V_{00}) + \delta^0 c^0(V_{10})$.
    Because $(V_{00}, V_{10}, E_{*0})$ is the double cover of the $\lambda$-spectral expander $\Cay(G, A)$,
      the number of cancellation is at most 
      $\lambda \sqrt{\norm{c^0(V_{00})}_V\norm{c^0(V_{10})}_V} + \frac{\Delta}{|G|} \norm{c^0(V_{00})}_V\norm{c^0(V_{10})}_V$.
    So
    \begin{align*}
      \norm{c^1(E_{*0})}_E 
      &\ge d_1 (\norm{c^0(V_{00})}_V + \norm{c^0(V_{10})}_V) - 2 (\lambda \sqrt{\norm{c^0(V_{00})}_V\norm{c^0(V_{10})}_V} + \frac{\Delta}{|G|} \norm{c^0(V_{00})}_V\norm{c^0(V_{10})}_V) \\
      &\ge (d_1-\lambda) (\norm{c^0(V_{00})}_V + \norm{c^0(V_{10})}_V) - \frac{2\Delta}{|G|} \norm{c^0(V_{00})}_V\norm{c^0(V_{10})}_V.
    \end{align*}

    Now we combine the four contributions and use AM-GM inequality to obtain
    \begin{align*}
      \norm{\delta^0 c^0}_E 
      &= \norm{c^1(E_{*0})}_E + \norm{c^1(E_{*0})}_E + \norm{c^1(E_{*0})}_E + \norm{c^1(E_{*0})}_E \\
      &\ge 2 (d_1-\lambda) \norm{c^0}_V - \frac{\Delta}{2} \frac{\norm{c^0}_V^2}{|G|}.
    \end{align*}
  \end{proof}

\subsection{Tensor codes and robustness} \label{sec:robust-tensor-codes}

Robust codes were first studied in \cite{ben2004robust} and \cite{dinur2006robust} in the context of locally testable codes (LTC).
Similar variants are applied to the construction LTC and qLDPC in \cite{dinur2021locally,panteleev2021asymptotically,leverrier2022quantum}.
In this paper, the definition of robustness is identical to agreement testability up to a normalization constant.
We first give the definition, then discuss its equivalence to agreement testability, and finally state our result stating robustness of the tensor product of random tensor codes.

Given 2 linear codes $C_A, C_B$ of length $n_a, n_b$ let $C_A \otimes C_B$ be the set of $n_a \times n_b$ matrices where each column vector belongs to $C_A$ and each row vector belongs to $C_B$.
Let $\tensorcode(C_A, C_B) \coloneqq C_A \otimes \F2^{n_b} + \F2^{n_a} \otimes C_B$ be the set of matrices that can be expressed as a sum of two $n_a \times n_b$ matrices, where the first has each column in $C_A$ and the second has each row in $C_B$. We introduce convenient notation for measuring different variations on the Hamming weight of a matrix: by entries, by rows, or by columns. 

\begin{definition}\label{def:hamming}
Given a matrix $c \in \F2^{n_a \times n_b}$, we let 
\begin{align*}
\norm{c}_{[n_a \times n_b]} &= \abs{\{(i, j) : f_{i,j} \ne 0\}}\;,\\
\norm{c}_{[n_b]} &= \abs{\{j : f_{\cdot,j} \ne 0\}}\;,\\
\norm{c}_{[n_a]} &= \abs{\{i : f_{i,\cdot} \ne 0\}}\;,
\end{align*}
\end{definition}

This definition allows us to introduce the notion of robustness we make use of.

\begin{definition} [Robustness of Tensor Codes] \label{def:robust}
 Let $C_A, C_B$ be linear codes of length $n_a, n_b$ respectively and $d_2 \in \RR_+$.
  We say that $(C_A, C_B)$ is $d_2$-robust if
  for all $c \in \tensorcode (C_A, C_B) \subset \F2^{n_a \times n_b}$,
    there exists $c_a \in C_A \otimes \F2^{n_b}$ and $c_b \in \F2^{n_a} \otimes C_B$
    such that $c = c_a + c_b$
    and 
  \begin{equation*}
    \norm{c}_{[n_a] \times [n_b]} \ge d_2 (\norm{c_a}_{[n_b]} + \norm{c_b}_{[n_a]})\;.
  \end{equation*}
\end{definition}

The notion of robustness can be understood as boundary expansion for a chain complex naturally associated with the pair of codes $(C_A,C_B)$. To see this define a $3$-term chain complex 
  \begin{equation}\label{lem:exact}
    Y(H_A, H_B) \colon \F2^{n_a \times n_b} \xrightarrow{\partial_2} \F2^{n_a \times m_b + m_a \times n_b} \xrightarrow{\partial_1} \F2^{m_a \times m_b}
  \end{equation}
	through the maps
	\begin{equation*}
    \partial_2(c_2) = ((I_{[n_a]} \otimes H_B) c_2, (H_A \otimes I_{[n_b]}) c_2)
  \end{equation*}
  and 
  \begin{equation*}
    \partial_1(c_1=(c_a, c_b)) = (H_A \otimes I_{[m_b]}) c_a + (I_{[m_a]} \otimes H_B) c_b\;,
  \end{equation*}
  where for an integer $k\geq 1$, $I_{[k]}$ denotes the identity map of $\F2^k$. 
  Then it follows easily from the K\"unneth formula (see e.g.~\cite[Section 3.B]{hatcher2002algebraic}) that $Y(H_A, H_B)$ is \emph{exact}, i.e.\ any element in the kernel of $\partial_1$ is  in the image of $\partial_2$. 

Now consider the co-chain
\begin{equation*}
  Y(H_A^\perp, H_B^\perp) \colon \F2^{n_a \times n_b} \xleftarrow{\delta^1} \F2^{n_a \times k_b + k_a \times n_b} \xleftarrow{\delta^0} \F2^{k_a \times k_b}\;,
\end{equation*}
where $H_A^\perp \colon \F2^{n_a} \to \F2^{k_a}$ is the parity check matrix of the dual code $C_A^\perp$. Using this complex, \Cref{def:robust} can be reformulated as saying that 
for all $c^2 \in \tensorcode (C_A, C_B) = \Ima \delta^1$,
  there exists $c^1 \in \F2^{n_a \times m_b + m_a \times n_b}$
  such that $c^2 = \delta^1 c^1$ and 
  \begin{equation*}
    \norm{c^2}_{[n_a] \times [n_b]} \ge d_2 \norm{c^1}_{[n_a] \cup [n_b]}
  \end{equation*}
where the variables $c, c_a, c_b$ from \Cref{def:robust} correspond to the new variables $c^2, ((H_A^\perp)^T \otimes I_{[n_b]}) c^1_a, (I_{[n_b]} \otimes (H_B^\perp)^T) c^1_b$ where $c^1 = (c^1_a, c^1_b) \in \F2^{m_a \times n_b} \oplus \F2^{n_a \times m_b}$.
Here we used the fact that $\norm{c^1_a}_{[n_b]} = \norm{((H_A^\perp)^T \otimes I_{[n_b]}) c^1_a}_{[n_b]}$ because $(H_A^\perp)^T$ is injective. 

The perspective through chain complexes allows us to make the connection with agreement testability. Note that the definition below differs from \cite[Definition 2.8]{dinur2021locally} by a normalization factor.

\begin{definition}[Agreement Testability] \label{def:agreement} 
Let $C_A, C_B$ be linear codes of length $n_a, n_b$ respectively and $d'_2\in\RR_+$.
  We say that $C_A \otimes C_B$ is $d'_2$-agreement testable if
  for all $c_a \in C_A \otimes \F2^{n_b}$, $c_b \in \F2^{n_a} \otimes C_B$,
    there exists $c \in C_A \otimes C_B$
    such that
  \begin{equation*}
    \norm{c_a + c_b}_{[n_a] \times [n_b]} \ge
    d'_2 (\norm{c + c_a}_{[n_b]} + \norm{c + c_b}_{[n_a]})\;.
  \end{equation*} 
\end{definition}

Using the same notation as above, \Cref{def:agreement} is saying that
for all $c^1 \in \F2^{n_a \times m_b + m_a \times n_b}$
  there exists $c^0 \in \F2^{m_a \times m_b}$
  such that 
  \begin{equation*}
    \norm{\delta^1 c^1}_{[n_a] \times [n_b]} \ge d'_2 \norm{c^1 + \delta^0 c^0}_{[n_a] \cup [n_b]}\;,
  \end{equation*}
where now $c$ in \Cref{def:agreement} corresponds to $((H_A^\perp)^T \otimes (H_B^\perp)^T) c^0$ and $c_a,c_b$ are as before. 
Because the chain complex $Y$ is exact, the two definitions are identical with $d_2 = d'_2$.

Finally, we state our result on the robustness of random tensor codes.
We consider the case when $n_a$ and $n_b$ are equal, $n_a = n_b = \Delta$.
In \cite{leverrier2022quantum} it is shown that for for arbitrary $\epsilon > 0$ and  
 $C_A$ and $C_B$ chosen uniformly at random, the pair $(C_A,C_B)$ is $\Omega(\Delta^{1/2-\epsilon})$-robust with high probability. Using a different counting argument we show that a uniformly random pair of codes is $\Theta(\Delta)$-robust with high probability.

\begin{restatable} [Random codes are robust]{theorem}{Robust} \label{thm:random-tensor-codes-are-robust}
  Fix $\rho_a, \rho_b \in (0,1)$, let $\delta_1 \in (0, 1/2), \delta_2 \in (0, \delta_1(1-\delta_1/2)/8)$ 
  satisfy
  \begin{equation} \label{eq:GV-2d}
    2 h(\delta_1/2) + 2(1-\delta_1/2) h(\frac{4 \delta_2}{\delta_1(1-\delta_1/2)}) < \frac{3}{4} \frac{(1 - \delta_1/2 - \rho_a)(1 - \delta_1/2 - \rho_b)}{1 - \delta_1/2}
  \end{equation}
  where $h(p) = -p \log_2(p) - (1-p) \log_2(1-p)$.\footnote{The allowed range for $\delta_2$ is chosen such that the argument in $h(\cdot)$ is valued between $(0, 1/2)$.}
  Let $C_A, C_B$ be random codes sampled from the uniform distribution with length $\Delta$ and dimensions $\rho_a \Delta, \rho_b \Delta$.
  Then as $\Delta$ goes to infinity, with probability tending to $1$,
    $C_A$, $C_B$ have distance $d_1 = \delta_1 \Delta$ and
    $(C_A, C_B)$ is $d_2 = \delta_2 \Delta$-robust.
\end{restatable}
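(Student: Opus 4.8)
The plan is to suppose $(C_A,C_B)$ fails to be $\delta_2\Delta$-robust, extract from this failure a single low-rank, low-weight matrix together with a pair of almost-full coordinate sets, and then rule out such a configuration by a union bound of Gilbert--Varshamov type in which the terms are grouped by the rank of that matrix.

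I would start by producing the ``bad'' object. If robustness fails then, by \Cref{def:robust}, there is a nonzero $c\in\tensorcode(C_A,C_B)$ with a decomposition $c=c_a+c_b$ (columns of $c_a$ in $C_A$, rows of $c_b$ in $C_B$) that \emph{minimises} $\ell:=\norm{c_a}_{[\Delta]}+\norm{c_b}_{[\Delta]}=s+t$ yet has $\norm{c}_{[\Delta]\times[\Delta]}<\delta_2\Delta\ell$; let $S$ and $T$ be the sets of $s$ nonzero columns of $c_a$ and $t$ nonzero rows of $c_b$. Every column of $c_a$ on $S$ is a nonzero codeword of $C_A$, hence of weight $\ge d_1=\delta_1\Delta$, and symmetrically for $c_b$; since $c$ agrees with $c_a$ off the rows of $T$ and with $c_b$ off the columns of $S$, this yields $\norm{c}\ge s\max(d_1-t,0)$ and $\norm{c}\ge t\max(d_1-s,0)$. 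Combined with $\norm{c}<\delta_2\Delta(s+t)$ and $\delta_2<\delta_1/8$, these force $\min(s,t)>\tfrac34\delta_1\Delta$, hence $\ell\le 2\Delta$ and $\norm{c}<2\delta_2\Delta^2$.

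Next I would ``clean'' $c$: at most $\tfrac{\delta_1}{2}\Delta$ rows of $c$ can have weight exceeding $\tfrac{2\norm{c}}{\delta_1\Delta}<\tfrac{4\delta_2}{\delta_1}\Delta$, and likewise for columns, so after deleting these heavy rows and columns we obtain coordinate sets $P,Q$ of size $\ge(1-\tfrac{\delta_1}{2})\Delta$. Fewer than $d_1$ coordinates are deleted, so restriction to $P$ (resp.\ $Q$) is injective on $C_A$ (resp.\ $C_B$); thus $C_A|_P$ has codimension $\mu_a\Delta:=(1-\tfrac{\delta_1}{2}-\rho_a)\Delta$ in $\F2^P$ and $C_B|_Q$ has codimension $\mu_b\Delta:=(1-\tfrac{\delta_1}{2}-\rho_b)\Delta$, and $M:=c|_{P\times Q}$ is a nonzero element of $\tensorcode(C_A|_P,C_B|_Q)$ every row and column of which has relative weight below $\theta:=\tfrac{4\delta_2}{\delta_1(1-\delta_1/2)}$, which is exactly the argument of $h(\cdot)$ in~\eqref{eq:GV-2d}. (Nonzeroness of $M$ uses $\min(s,t)>\tfrac34\delta_1\Delta$, so that sufficiently many columns of $c_a$ survive in $Q$.)

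Finally I would union-bound over the bad configuration $(P,Q,M)$, organised by $r:=\rk M$. Because $\tensorcode(C_A,C_B)^\perp=C_A^\perp\otimes C_B^\perp$, for fixed $P,Q$ and a fixed rank-$r$ matrix $M=\sum_{i\le r}u_iv_i^\top$ the event $M\in\tensorcode(C_A|_P,C_B|_Q)$ is the event that the images of the (essentially uniform, by independence and uniformity of the codes) dual spaces $(C_A|_P)^\perp$ and $(C_B|_Q)^\perp$ under $a\mapsto(\langle u_i,a\rangle)_{i\le r}$ and $b\mapsto(\langle v_i,b\rangle)_{i\le r}$ are orthogonal inside $\F2^r$; this has probability $\lesssim 2^{-\min(r,\mu_a\Delta)\cdot\min(r,\mu_b\Delta)}$. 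Against this, the number of $(1-\tfrac{\delta_1}{2})\Delta$-square matrices of rank exactly $r$ with all row and column weights below $\theta(1-\tfrac{\delta_1}{2})\Delta$ is $2^{\Theta(\Delta^2)}$ with exponent increasing in $r$ and, crucially, far below the naive weight-only count $2^{(1-\delta_1/2)^2h(\theta)\Delta^2}$ when $r$ is small, since keeping a rank-$r$ row space forces most rows to be short combinations of a short light basis. Multiplying count by probability, summing over $r$, and folding in the bookkeeping for which coordinates were deleted (only $2^{O(\Delta)}$) and for the entries of $c$ in those deleted rows and columns (a $2^{\Theta(\Delta^2)}$ factor), a direct computation bounds the resulting exponent, up to $o(\Delta^2)$, by $\bigl(2h(\delta_1/2)+2(1-\delta_1/2)h(\theta)-\tfrac34\tfrac{(1-\delta_1/2-\rho_a)(1-\delta_1/2-\rho_b)}{1-\delta_1/2}\bigr)\Delta^2$, which is negative by~\eqref{eq:GV-2d}; hence with probability tending to $1$ no bad configuration exists and $(C_A,C_B)$ is $\delta_2\Delta$-robust. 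That $C_A,C_B$ themselves have distance $\delta_1\Delta$ is the ordinary one-dimensional Gilbert--Varshamov bound, which~\eqref{eq:GV-2d} implies with room to spare. The hard part will be this last step: getting a tight enough count of low-weight matrices of each given rank and checking that its product with $2^{-\min(r,\mu_a\Delta)\min(r,\mu_b\Delta)}$ is governed by the right-hand side of~\eqref{eq:GV-2d} --- the product of the two restricted codimensions, and the constant $\tfrac34$, should both emerge from optimising over $r$ rather than being put in by hand.
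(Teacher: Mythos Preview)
Your overall strategy coincides with the paper's: reduce to showing that for every pair of $(\delta_1/2)\Delta$-punctured codes, no nonzero element of $\tensorcode(C_A',C_B')$ has all rows and columns light, and establish this by a union bound organised by the rank $r$ of the putative light matrix. The cleaning step, the two-case split according to $|c|$, and the combinatorial count of light rank-$r$ matrices are all essentially as in the paper.

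The gap is in your probability estimate. You write $\Pr[M\in\tensorcode(C_A|_P,C_B|_Q)]\lesssim 2^{-\min(r,\mu_a\Delta)\cdot\min(r,\mu_b\Delta)}$; this is what holds when the images $\phi((C_A|_P)^\perp)$ and $\psi((C_B|_Q)^\perp)$ in $\F2^r$ attain their generic dimensions, but the sub-generic events are not negligible and in fact dominate for small $r$. For $r=1$ your bound is $2^{-1}$, whereas the true probability is $\Pr[u\in C_A|_P\text{ or }v\in C_B|_Q]\approx 2^{-\Theta(\Delta)}$. With your bound the $r=1$ term of the union bound is of order $\binom{\Delta}{\delta_1\Delta/2}^{2}\cdot 2^{2\Delta'h(\theta)}\cdot 2^{-1}=2^{\Theta(\Delta)}$, and the sum diverges rather than tending to $0$. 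The paper fixes this by introducing an auxiliary variable $r'=\rk(H_AM)$, bounding $\Pr[\rk(H_AM)=r']$ and $\Pr[H_B(H_AM)=0\mid\rk(H_AM)=r']$ via Gaussian binomial coefficients, and optimising over $r'$; this yields the linear-in-$r$ estimate $\Pr[M\in\tensorcode]\lesssim 2^{-\tfrac{3}{4}\tfrac{\mu_a\mu_b}{1-\delta_1/2}\Delta\cdot r}$, so the constant $\tfrac34$ arises from the inner optimisation over $r'$, not from optimising over $r$ as you suggest. With this bound and the count $2^{2r\Delta'h(\theta)+2\Delta'h(r/\Delta')}$, the exponent is affine in $r$ with negative slope, the dominant term is $r=1$, and the whole computation lives at scale $\Delta$ (not $\Delta^2$); the condition that the $r=1$ term vanish is exactly~\eqref{eq:GV-2d}. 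Relatedly, your ``$2^{\Theta(\Delta^2)}$ factor for the entries of $c$ in the deleted rows and columns'' should be removed: the union bound is over $(P,Q,M)$, not over $c$, and those entries play no role in the event $M\in\tensorcode(C_A|_P,C_B|_Q)$.
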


The theorem is shown in \Cref{sec:optimal-robust-tensor-codes}.
When $C_A$ is sampled uniformly among codes of dimension $\rho_a \Delta$, $C_A^\perp$ is sampled uniformly among codes of dimension $(1-\rho_a) \Delta$. 
So the same theorem applies to $C_A^\perp$ and $C_B^\perp$
  and through a union bound we obtain the following corollary.

\begin{corollary} \label{cor:robust-code-exist}
  Fix $\rho_a, \rho_b \in (0,1)$. There exist constants $\delta_1$ and  $\delta_2$
  such that for large enough $\Delta$
  there exist codes $C_A$ and $C_B$ of length $\Delta$ where
  \begin{enumerate}
    \item $\dim C_A = \rho_a\Delta$ and $\dim C_B = \rho_b\Delta$,
    \item $C_A, C_B, C_A^\perp, C_B^\perp$ have distance $d_1 = \delta_1 \Delta $,
    \item $(C_A, C_B)$ and $(C_A^\perp, C_B^\perp)$ are both $d_2 = \delta_2 \Delta$-robust.
  \end{enumerate}
\end{corollary}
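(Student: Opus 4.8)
The plan is to prove \Cref{thm:random-tensor-codes-are-robust} via a Gilbert--Varshamov-style counting argument, organized according to the rank of the ``bad'' matrices. Throughout, set $n_a = n_b = \Delta$, and recall that a uniformly random linear code $C$ of dimension $\rho \Delta$ can be described by a uniformly random parity check matrix, or equivalently we may reason about the probability that a fixed word lies in $C$: for any nonzero $x \in \F2^\Delta$, $\Pr[x \in C_A] = 2^{\rho_a \Delta - \Delta + o(\Delta)}$ up to lower-order factors, and similarly for $C_B$; moreover membership events for linearly independent sets of words are essentially independent. First I would dispose of the distance claim: the probability that $C_A$ contains a nonzero word of weight $\le \delta_1 \Delta$ is at most $2^{h(\delta_1)\Delta} \cdot 2^{(\rho_a - 1)\Delta}$, which tends to $0$ since $h(\delta_1) < 1 - \rho_a$ follows from \eqref{eq:GV-2d} (indeed the right-hand side of \eqref{eq:GV-2d} is at most $\frac34(1-\rho_a)(1-\rho_b) < 1 - \rho_a$, while the left-hand side exceeds $2h(\delta_1/2) \ge h(\delta_1)$ by concavity); the same applies to $C_B$.

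For robustness, the natural move is to negate \Cref{def:robust}. The pair $(C_A, C_B)$ fails to be $d_2$-robust iff there is some $c \in \tensorcode(C_A, C_B)$ for which \emph{every} decomposition $c = c_a + c_b$ has $\norm{c_a}_{[\Delta]} + \norm{c_b}_{[\Delta]} > \norm{c}_{[\Delta\times\Delta]}/d_2$. Equivalently — this is the key reformulation — there exist a matrix $c_a$ whose rows are in $C_A$ and a matrix $c_b$ whose columns are in $C_B$ such that, writing $c = c_a + c_b$, the ``minimal'' decomposition still costs more than $\norm{c}_{[\Delta\times\Delta]}/d_2$. One should reduce to a canonical minimal witness: pick $c_a, c_b$ realizing $c$ with $\norm{c_a}_{[\Delta]} + \norm{c_b}_{[\Delta]}$ minimum; let $s$ be the number of nonzero rows of $c_a$ and $t$ the number of nonzero columns of $c_b$, so $s + t > \norm{c}_{[\Delta\times\Delta]}/d_2$, i.e. $\norm{c}_{[\Delta\times\Delta]} < d_2(s+t)$. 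The nonzero rows of $c_a$ span a subcode of $C_A$ of some dimension $r_a \le s$, and likewise the nonzero columns of $c_b$ span a subcode of $C_B$ of dimension $r_b \le t$; by minimality one can in fact arrange that $c_a$ is supported on exactly $r_a$ rows and $c_b$ on $r_b$ columns (replacing dependent rows by sums pushes weight into $c_b$ or cancels, contradicting minimality after a short argument). So $c$ lies in a rank-$(\le r_a)$ ``row-type'' space plus a rank-$(\le r_b)$ ``column-type'' space, with total weight $< d_2(s+t)$ and $s + t \ge \max$ of something like $d_1(r_a + r_b)$ from the individual code distances. This is the ``organize by rank'' twist the introduction advertises.

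The counting then runs as follows. Fix a target rank pair $(r_a, r_b)$. The number of ways to choose: (i) a set of $r_a$ basis rows and $r_b$ basis columns and the linear-combination pattern producing the $s$ nonzero rows / $t$ nonzero columns, and (ii) the actual bits of the $r_a \times \Delta$ and $\Delta \times r_b$ generating blocks — call these contribute an ``entropy'' term — must be weighed against the probability that all $r_a$ chosen rows lie in $C_A$ and all $r_b$ chosen columns lie in $C_B$, which is $2^{-(1-\rho_a)r_a \Delta} \cdot 2^{-(1-\rho_b) r_b \Delta}$ when the rows (resp.\ columns) are independent, plus the constraint that the weight of the resulting $c = c_a + c_b$ is below $d_2(s+t)$. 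The crucial saving comes from this last weight constraint: if $c$ has at most $s + t$ nonzero rows and columns combined but total weight $< d_2(s+t) = \delta_2 \Delta (s+t)$, then $c$ is very sparse within the $(s+t)\Delta$-or-so ``active'' entries, costing only about $h(\delta_2 / \text{(density)})$ bits per active row/column rather than a full $\Delta$ bits. Matching the exponents, the dominant case is $r_a + r_b$ linear in $\Delta$, where one needs
\begin{equation*}
2h(\delta_1/2) + 2(1-\delta_1/2)h\!\left(\tfrac{4\delta_2}{\delta_1(1-\delta_1/2)}\right) < \tfrac34 \cdot \tfrac{(1-\delta_1/2-\rho_a)(1-\delta_1/2-\rho_b)}{1-\delta_1/2},
\end{equation*}
which is exactly hypothesis \eqref{eq:GV-2d}; the factor $3/4$ and the $1 - \delta_1/2$ normalizations track the bookkeeping of how many of the $\Delta$ coordinates are forced to be zero once we condition on the individual codes having distance $\ge \delta_1 \Delta$ (so each nonzero row has weight $\ge \delta_1\Delta$, concentrating the ``free'' entries). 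Summing the union bound over all $O(\Delta^2)$ pairs $(r_a, r_b)$ and all relevant $(s,t)$ loses only a polynomial factor, so the total failure probability is $2^{-\Omega(\Delta)}$. Finally, the corollary follows because $C_A^\perp$ is uniform of dimension $(1-\rho_a)\Delta$ and $\delta_1, \delta_2$ can be chosen to satisfy \eqref{eq:GV-2d} simultaneously for $(\rho_a,\rho_b)$ and $(1-\rho_a, 1-\rho_b)$, then take a union bound over the four distance events and two robustness events.

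The main obstacle I anticipate is the combinatorial bookkeeping in the reduction to a canonical minimal witness and the precise accounting of ``active'' vs.\ ``forced-zero'' coordinates: one must be careful that the minimality of the decomposition genuinely forces $c_a$ to be supported on exactly $r_a = \dim(\text{row span})$ rows (and dually for $c_b$), that the distance of $C_A$ lets us lower-bound $s$ by $\delta_1 \Delta$-sized weights per active row in a way that survives the cancellations in $c_a + c_b$, and that the entropy of choosing the linear combination patterns (the $\binom{?}{?}$ and $2^{r_a s}$-type factors) is genuinely lower order — i.e.\ $O(\Delta \log \Delta)$ rather than $\Omega(\Delta^2)$ — so it is absorbed into the $o(1)$ in the exponent. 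Getting the constant $3/4$ to come out exactly right, rather than some other constant $<1$, is where the proof has to be tight rather than merely asymptotic.
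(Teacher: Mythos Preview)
Your high-level plan diverges from the paper's in a way that introduces a real gap. You organize the union bound over \emph{decompositions} $(c_a,c_b)$, stratified by the rank pair $(r_a,r_b)$ of the two pieces, and you claim the ``crucial saving'' is that the weight bound $|c_a+c_b|<d_2(s+t)$ forces the generating blocks of $c_a,c_b$ to be sparse, costing only $h(\cdot)$ bits per active coordinate. But the weight constraint is on the \emph{sum}, not on $c_a$ or $c_b$ individually: $c_a$ can have dense columns and $c_b$ dense rows that cancel almost entirely, so nothing forces the $r_a$ basis vectors of $c_a$ to be sparse. Without that, the entropy of choosing the basis vectors is a full $(r_a+r_b)\Delta$ bits, while the probability penalty that those vectors lie in $C_A,C_B$ is only $2^{-(1-\rho_a)r_a\Delta-(1-\rho_b)r_b\Delta}$; the net exponent $(\rho_a r_a+\rho_b r_b)\Delta$ is positive and your union bound blows up. Your proposed reduction to a canonical minimal witness does not repair this, because minimality of $(s,t)$ says nothing about the weights of individual columns or rows.

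The paper avoids this by never counting decompositions. It works with $c\in\tensorcode(C_A,C_B)$ directly and uses a \emph{puncturing} step you are missing: if $|c|$ is small, iteratively delete the $s=\delta_1\Delta/2$ heaviest rows and columns; what remains has every row and column of weight $<t=\tau\Delta$. One then shows (\Cref{lem:large-distance-after-puncture}) that with high probability every punctured pair $(C_A',C_B')$ has the property that any nonzero element of $\tensorcode(C_A',C_B')$ has \emph{some} heavy row or column, so the remainder must be zero and $c$ is supported on those few deleted rows and columns. The counting underlying that lemma is organized by the rank $r$ of the single matrix $c$ (not of $c_a,c_b$): \Cref{claim:number-of-bad-matrix-is-small} bounds the number of rank-$r$ matrices with all rows and columns light (a rank-$r$ matrix is determined by $r$ rows and $r$ columns, each of weight $<t$, giving the $2(1-\sigma)h(\tau/(1-\sigma))$ term), and \Cref{claim:bad-matrix-is-likely-not-codeword} bounds $\Pr[c\in\tensorcode(C_A,C_B)]$ for a fixed rank-$r$ matrix via a Gaussian-binomial computation of $\Pr[(H_A\otimes H_B)c=0]$, which is where the $\tfrac34\,\frac{(1-\sigma-\rho_a)(1-\sigma-\rho_b)}{1-\sigma}$ exponent actually comes from. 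So both the $1-\delta_1/2$ and the $\tfrac34$ in \eqref{eq:GV-2d} arise from mechanisms---puncturing and the $q$-binomial estimate---that your sketch does not contain; your explanation that they ``track forced-zero coordinates from the distance condition'' is not what is happening. (Also, a small bookkeeping slip: in the paper's convention $c_a\in C_A\otimes\F2^{n_b}$ has its \emph{columns} in $C_A$, not its rows.)
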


\subsection{Tanner codes} \label{sec:tanner-codes}


The Tanner construction \cite{tanner1981recursive} is a method to obtain `large' code by combining a `large' graph and a `small' local code.
This allows one to find an infinite family of codes by combining an infinite family of graphs with a fixed local code.
As long as the graphs are explicit, the Tanner codes are also explicit, even if finding the desired local code requires brute force search.
When the underlying graph is an expander, the Tanner code often inherits desireable properties from the small code.
Later, we will not only be interested in the code but also in the parity-check matrix that generates the code, since the LDPC property is defined on the parity-check matrix.
Therefore, we sometimes abuse language and refer to the code and the linear map (parity-check matrix) interchangeably.

We consider a $\Delta$-regular bipartite graph $\cG = (V_0, V_1, E)$ 
  with the $1-1$ identification $E \times [2] \cong V \times [\Delta]$, 
  where the additional index on the edge indicates whether it is asking for the vertex on the side of $V_0$ or $V_1$,
  and the additional index on the vertex gives an ordering to the edges incident to the vertex.
For example, 
  for the double cover of the Cayley graph with 
  $V_0 \cong V_1 \cong G$ and $E = \{(g, ag) : g \in G, a \in A\}$,
  a choice of the identification is 
  $(e=(g, ag), 0) \leftrightarrow (v_0=(g, 0), a)$
  and $(e=(g, ag), 1) \leftrightarrow (v_1=(ag, 1), a)$.

Given a $\Delta$-regular bipartite graph $\cG$ and a local code $C$ with parity-check matrix $H\colon \F2^\Delta \to \F2^m$,
  the Tanner code $\tannercode(\cG, H)\colon \F2^E \to (\F2^{m})^V$ is defined through the composition
  \begin{equation*}
    \F2^E \to (\F2^{\Delta})^V \to (\F2^{m})^V
  \end{equation*}
  where the first map copies the value on the edge to the vertices incident to the edge
  and the second map applies $H$ to $\F2^\Delta \cong \F2^{\{(v, a): a \in [\Delta]\}}$ for each vertex $v$.

Another way to think about the map is though its submatrices.
This description will be helpful to prove that the construction in \Cref{sec:linear-distance} is a chain complex.
Given an edge $e \in E$ and a vertex $v \in V$, 
  consider the submatrix $\tannercode(\cG, H)_e^v \colon \F2 \to \F2^{m}$
  which is the restriction where the input vector is supported on $e$ and the output vector is restricted to $v$.
Describing $\tannercode(\cG, H)$ is the same as  describing $\tannercode(\cG, H)_e^v$ for each $e \in E$ and $v \in V$.
When $v$ and $e$ are not incident, $\tannercode(\cG, H)_e^v$ is simply $0$.
When $v$ and $e$ are incident, and suppose $(e, i) \leftrightarrow (v, a)$,
  then $\tannercode(\cG, H)_e^v = H(\bar a) \colon \F2 \to \F2^{m}$, where $\bar a$ is the basis vector of $\F2^\Delta$ corresponding to the element $a \in [\Delta]$.

\subsection{Expansion properties of chain complexes} \label{sec:expansion-properties}


The distance of a quantum code falls into a broader category of expansion properties of chain complexes.
This includes (co)-systolic distance (the one equivalent to quantum code distance), small set (co)-boundary expansion \cite{hopkins2022explicit}, and (co)-locally minimal expansion \cite{kaufman2014ramanujan,evra2016bounded}.
We discuss them together because heuristically they are of similar difficulty, that is a proof that works for one often implies the other.
On the other hand, in certain scenario they can be distinguished.
For example, locally testable code does not follow directly from systolic distance,
  but does follow from small set boundary expansion.
This is one of the motivations for considering small set boundary expansion.
See \cite{lubotzky2014ramanujan} for the history and more discussions on the study of these expansion properties.

We first define the different notions of expansion, then we discuss relations between them and with code properties.
As we will discuss more precisely in Section~\ref{sec:linear-distance},
  we consider a weight on elements of a complex that is different from the Hamming weight, and 
  which counts the number of non-zero geometric objects instead of non-zero bits.
This weight is denoted as $\norm{\cdot}$ and differs from the usual Hamming weight by a constant factor, i.e.\ $\norm{\cdot} = \Theta(\abs{\cdot})$ (because the chain complex we consider has bounded degree).

\begin{definition} [(Co)-Systolic Distance]
  We say that $X: \F2^{X(2)} \xrightarrow{\partial_2} \F2^{X(1)} \xrightarrow{\partial_1} \F2^{X(0)}$ has systolic distance $\alpha$ if
  \begin{equation*}
    \forall c_1 \in Z_1 - B_1 : \norm{c_1} \ge \alpha |X(1)|.
  \end{equation*}

  Similarly, $X$ has co-systolic distance $\alpha$ if
  \begin{equation*}
    \forall c^1 \in Z^1 - B^1 : \norm{c^1} \ge \alpha |X(1)|.
  \end{equation*}
\end{definition}
It is not hard to see and well-known that constant (co)-systolic distance of a chain complex is equivalent to linear $X$-distance and $Z$-distance of the corresponding quantum CSS code.

\begin{definition} [Small-Set (Co)-Boundary Expansion]
  We say that $X: \F2^{X(2)} \xrightarrow{\partial_2} \F2^{X(1)} \xrightarrow{\partial_1} \F2^{X(0)}$ is a $(\alpha,\beta,\gamma)$-small-set boundary expander if
  \begin{equation*}
    \forall c_1 \in \F2^{X(1)}, \norm{c_1} < \alpha |X(1)|: 
    \exists c_2 \in \F2^{X(2)}, \norm{\partial_1 c_1} \ge \beta \norm{c_1 + \partial_2 c_2}, \norm{c_2} \le \gamma \norm{c_1}.
  \end{equation*}

  Similarly, $X$ is a $(\alpha,\beta,\gamma)$-small-set co-boundary expander if
  \begin{equation*}
    \forall c^1 \in \F2^{X(1)}, \norm{c^1} < \alpha |X(1)|: 
    \exists c^0 \in \F2^{X(0)}, \norm{\delta^1 c^1} \ge \beta \norm{c^1 + \delta^0 c^0}, \norm{c^0} \le \gamma \norm{c^1}.
  \end{equation*}
\end{definition}
We made a modification from \cite{hopkins2022explicit} by including a bound on $\norm{c_2}$ and $\norm{c^0}$. This additional bound is needed to show local testability.

\begin{definition} [(Co)-Locally Minimal]
  We say that $c_1 \in \F2^{X(1)}$ is locally minimal if
  \begin{equation*}
    \forall e_2 \in \F2^{X(2)}, \norm{e_2} = 1 : \norm{c_1} \le \norm{c_1 + \partial_2 e_2}.
  \end{equation*}

  Similarly, we say $c^1 \in \F2^{X(1)}$ is co-locally minimal if
  \begin{equation*}
    \forall e^0 \in \F2^{X(0)}, \norm{e^0} = 1 : \norm{c^1} \le \norm{c^1 + \delta^0 e^0}.
  \end{equation*}
\end{definition}

\begin{definition} [Small-Set (Co)-Locally-Minimal Expansion]
  We say that $X: \F2^{X(2)} \xrightarrow{\partial_2} \F2^{X(1)} \xrightarrow{\partial_1} \F2^{X(0)}$ is a $(\alpha,\beta)$-small-set locally-minimal expander if
  \begin{equation*}
    \forall c_1 \in \F2^{X(1)}\ \textnormal{s.t. $c_1$ is locally minimal and}\ \norm{c_1} < \alpha |X(1)|: \norm{\partial_1 c_1} \ge \beta \norm{c_1}.
  \end{equation*}

  Similarly, $X$ is an $(\alpha,\beta)$-small-set co-locally-minimal expander if
  \begin{equation*}
    \forall c^1 \in \F2^{X(1)}\ \textnormal{s.t. $c^1$ is locally minimal and}\ \norm{c^1} < \alpha |X(1)|: \norm{\delta^1 c^1} \ge \beta \norm{c^1}\;.
  \end{equation*}
\end{definition}

For our construction in Section~\ref{sec:linear-distance} we will show that the  chain complex has small-set co-locally-minimal expansion but not small-set locally-minimal expansion.
This is roughly because in our construction $X(2)$, $X(1)$, and $X(0)$ correspond to the faces, edges, and vertices.
So $e_2$ corresponds to a face and $e^0$ corresponds to a vertex.
Flipping $\partial_2 e_2$ only affects the four edges incident to the face,
whereas $\delta^0 e^0$ affects the $2 \Delta$ edges incident to the vertex.
Roughly, this means there are more freedom when flipping using $\delta^0 e^0$ than $\partial_2 e_2$.
This is the rationale for why the chain complex does not (seem to) have small-set locally-minimal expansion.

Given the definitions, we now discuss their relations.
The first lemma is between the expanders. The second and third lemma show that small-set boundary expansion implies systolic distance and local testability. 
  
\begin{lemma} [Small-Set (Co)-Locally-Minimal Expansion $\to$ Small-Set (Co)-Boundary Expansion] \label{lem:small-set-locally-minimal-implies-small-set-boundary}
  Let $c_2\in \F2^{X(2)}$ be such that $\norm{\partial_2 c_2} \le \mu \norm{c_2}$. Assume the gap between the possible values that $\norm{c_1}$ can take, for $c_1 \in \F2^{X(1)}$, is at least $\nu$ (i.e. $|\norm{c_1} - \norm{c'_1}| \ge \nu$ for any $c_1,c'_1$ such that $\norm{c_1} \ne \norm{c'_1}$.)
 
  If $X$ has $(\alpha, \beta)$-small-set locally-minimal expansion, then $X$ has $(\alpha/(1+\mu/\nu), \beta, 1/\nu)$-small-set boundary expansion.
\end{lemma}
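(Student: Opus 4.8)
The plan is to take an arbitrary $c_1 \in \F2^{X(1)}$ with $\norm{c_1} < \alpha/(1+\mu/\nu)\,|X(1)|$ and produce the required $c_2$ by running a ``greedy local-minimization'' process. Start with $c_2 = 0$ and repeatedly replace $c_2$ by $c_2 + e_2$ whenever there is some $e_2$ with $\norm{e_2}=1$ that strictly decreases $\norm{c_1 + \partial_2 c_2}$; since $\norm{\cdot}$ is a nonnegative integer-like quantity bounded below, this terminates after finitely many steps, and the resulting $c_1' \coloneqq c_1 + \partial_2 c_2$ is locally minimal in the sense of \Cref{def:robust}'s companion notion — i.e.\ $\norm{c_1'} \le \norm{c_1' + \partial_2 e_2}$ for every weight-one $e_2$. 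We also have $\norm{c_1'} \le \norm{c_1}$ because the process only decreases the weight.

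Next I would bound $\norm{c_2}$. Each step of the process adds one generator to $c_2$ and strictly decreases $\norm{c_1 + \partial_2 c_2}$; by the hypothesis that consecutive distinct values of $\norm{\cdot}$ differ by at least $\nu$, each step decreases the weight by at least $\nu$. Hence the number of steps is at most $\norm{c_1}/\nu$, so $\norm{c_2} \le \norm{c_1}/\nu$, which is the required bound $\norm{c_2} \le (1/\nu)\norm{c_1}$. This also lets me control $\norm{c_2}$ and $\norm{c_1'}$ together: using $\norm{\partial_2 c_2} \le \mu \norm{c_2} \le (\mu/\nu)\norm{c_1}$ and $\norm{c_1'} \le \norm{c_1}$, we cannot immediately conclude $c_1'$ is small, so instead I bound $\norm{c_1'}$ directly by $\norm{c_1} < \alpha/(1+\mu/\nu)\,|X(1)| < \alpha |X(1)|$ — wait, this needs the cleaner route below.

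The point where care is needed is verifying that $c_1'$ is eligible for the small-set locally-minimal expansion hypothesis, i.e.\ that $\norm{c_1'} < \alpha|X(1)|$. Here I would argue via $\norm{c_1'} \le \norm{c_1} < \alpha/(1+\mu/\nu)\,|X(1)| \le \alpha |X(1)|$, which holds since $\mu,\nu > 0$. Then applying $(\alpha,\beta)$-small-set locally-minimal expansion to the locally minimal $c_1'$ gives $\norm{\partial_1 c_1'} \ge \beta \norm{c_1'}$. Finally, since $\partial_1 \partial_2 = 0$ we have $\partial_1 c_1' = \partial_1 c_1$, so $\norm{\partial_1 c_1} \ge \beta \norm{c_1'} = \beta \norm{c_1 + \partial_2 c_2}$, which together with $\norm{c_2} \le (1/\nu)\norm{c_1}$ gives exactly the conclusion of $(\alpha/(1+\mu/\nu),\beta,1/\nu)$-small-set boundary expansion.

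The main obstacle — and the reason the factor $1/(1+\mu/\nu)$ appears — is reconciling two competing size constraints: the locally-minimal expansion statement only applies below threshold $\alpha|X(1)|$, and we must ensure $c_1'$ stays below this threshold despite being obtained from $c_1$ by adding a boundary. Since the greedy process only decreases weight, $\norm{c_1'}\le\norm{c_1}$ and this is actually painless; the genuinely delicate bookkeeping is instead the step-count bound, where one must be careful that the ``gap $\nu$'' hypothesis is used correctly to turn ``strict decrease at each step'' into ``decrease by $\ge \nu$ at each step'' and hence into a bound on $\norm{c_2}$. I would double-check that $\norm{\cdot}$ on $X(2)$ is normalized so that $\norm{e_2}=1$ for a single generator, matching \Cref{def:robust}'s convention, so that ``one step'' indeed adds weight exactly one to $c_2$ in the worst case (cancellations only help).
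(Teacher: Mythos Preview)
Your approach is essentially identical to the paper's: both run the greedy local-flip algorithm to produce $c_2$, use the gap hypothesis $\nu$ to bound the number of iterations (hence $\norm{c_2}\le \norm{c_1}/\nu$), and then apply the small-set locally-minimal expansion hypothesis to the resulting locally minimal $c_1'=c_1+\partial_2 c_2$.

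The one substantive difference is exactly the point you flag as confusing. You bound $\norm{c_1'}\le\norm{c_1}$ directly from monotonicity of the greedy process, whereas the paper instead uses the triangle inequality $\norm{c_1+\partial_2 c_2}\le \norm{c_1}+\norm{\partial_2 c_2}\le \norm{c_1}+\mu\norm{c_2}\le (1+\mu/\nu)\norm{c_1}$; this is the \emph{only} place the hypothesis $\norm{\partial_2 c_2}\le\mu\norm{c_2}$ is invoked, and it is what forces the threshold down from $\alpha$ to $\alpha/(1+\mu/\nu)$. Your monotonicity observation is correct and strictly sharper---your argument in fact establishes $(\alpha,\beta,1/\nu)$-small-set boundary expansion with no hypothesis on $\mu$ at all. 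So you have not missed anything: the $\mu$ assumption and the $(1+\mu/\nu)$ factor in the stated lemma are artifacts of the paper's looser bookkeeping, not of any genuine obstruction.
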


The assumptions in the lemma often hold when the chain complex has bounded degree.

\begin{proof}
  Given $c_1$, consider the local flipping process of the decoder of the expander code \cite{sipser1996expander} which outputs $c_2$.

  \begin{algorithm}[H]
    \SetAlgoLined
    \begin{enumerate}
      \item (Initialization) $c_1^0 \coloneqq c_1$.
      \item (Main loop) In the $i$-th iteration, if there is $e_2^i$ with $\norm{e_2^i}=1$ such that $\norm{c_1^i + \partial_2 e_2^i} < \norm{c_1^i}$,
        set $c_1^{i+1} \coloneqq c_1^i + \partial_2 e_2^i$ and repeat.
      \item (End) Output $c_2 \coloneqq \sum e_2^i$.
    \end{enumerate}
    \caption{Local flip decoder. (Input: $c_1 \in \F2^{X(1)}$)}
    \label{alg:local-flip-decoder}
  \end{algorithm}
  
  We show that $c_2$ satisfies the desired properties: $\norm{\partial_1 c_1} \ge \beta \norm{c_1 + \partial_2 c_2}$ and $\norm{c_2} \le \gamma \norm{c_1}$.

  We first show $\norm{c_2} \le \gamma \norm{c_1}$.
  Because $\norm{c_2} \le \sum \norm{e_2^i}$ is bounded by the number of iterations,
  and each iteration reduces $\norm{c_1^i}$ by at least $\nu$,
  we have $\norm{c_2} \le 1/\nu \norm{c_1}$.

  We now show $\norm{\partial_1 c_1} \ge \beta \norm{c_1 + \partial_2 c_2}$
  Because the decoder cannot find $e_2$ and stops at $c_1 + \partial_2 c_2$, that means $c_1 + \partial_2 c_2$ is locally minimal.
  To apply small set locally minimal expansion, we suffice to show $c_1 + \partial_2 c_2$ has small size.
  Because $\norm{c_1 + \partial_2 c_2} \le \norm{c_1} + \norm{\partial_2 c_2} \le \norm{c_1} + \mu \norm{c_2} \le (1 + \mu/\nu) \norm{c_1}$,
  when $\norm{c_1} < \frac{\alpha}{1 + \mu/\nu} |X(1)|$, 
  $\norm{c_1 + \partial_2 c_2}$ satisfies the small set condition.
  Therefore, $\norm{\partial_1 c_1} \ge \beta \norm{c_1 + \partial_2 c_2}$.
\end{proof}

\begin{lemma} [Small-Set (Co)-Boundary Expansion $\to$ (Co)-Systolic Distance] \label{lem:small-set-boundary-implies-distance}
  If $X$ has $(\alpha, \beta, \gamma)$-small-set boundary expansion, then $X$ has systolic distance $\alpha$.
\end{lemma}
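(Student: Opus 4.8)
The plan is to prove the contrapositive: assume $X$ does not have systolic distance $\alpha$, and derive a violation of small-set boundary expansion. So suppose there exists $c_1 \in Z_1 - B_1$ with $\norm{c_1} < \alpha |X(1)|$; I want to reach a contradiction. Since $c_1$ is a cycle, $\partial_1 c_1 = 0$, so $\norm{\partial_1 c_1} = 0$.

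Now apply the $(\alpha,\beta,\gamma)$-small-set boundary expansion hypothesis to this $c_1$: since $\norm{c_1} < \alpha|X(1)|$, there exists $c_2 \in \F2^{X(2)}$ with $\norm{\partial_1 c_1} \ge \beta \norm{c_1 + \partial_2 c_2}$ (and also $\norm{c_2} \le \gamma \norm{c_1}$, which I will not need). Combining with $\norm{\partial_1 c_1} = 0$ and $\beta > 0$, I get $\norm{c_1 + \partial_2 c_2} = 0$, i.e.\ $c_1 = \partial_2 c_2$. But then $c_1 \in \Ima \partial_2 = B_1$, contradicting the assumption that $c_1 \in Z_1 - B_1$. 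Hence no such $c_1$ exists, which is exactly the statement that $X$ has systolic distance $\alpha$. The co-systolic case is identical with $\partial$ replaced by $\delta$, $Z_1, B_1$ by $Z^1, B^1$, and $c_1, c_2$ by $c^1, c^0$.

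The argument is essentially a one-line unwinding of the definitions, so there is no real obstacle — the only thing to be mildly careful about is that the small-set boundary expansion definition quantifies over $c_1$ with $\norm{c_1} < \alpha|X(1)|$ (strict inequality), which matches the scenario we are ruling out, and that $\beta$ is a positive real so that $\norm{\partial_1 c_1} = 0$ forces $\norm{c_1 + \partial_2 c_2} = 0$. One should also note that $\norm{\cdot}$ vanishes only on the zero vector, so $\norm{c_1+\partial_2 c_2}=0$ genuinely gives $c_1 = \partial_2 c_2$ rather than merely an approximate statement. I would present this as a short direct proof by contradiction for the systolic case and remark that the co-systolic case follows by the same reasoning applied to the co-chain complex.
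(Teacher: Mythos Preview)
Your proof is correct and essentially identical to the paper's: both take a small cycle $c_1 \in Z_1$ with $\norm{c_1} < \alpha |X(1)|$, apply small-set boundary expansion to get $0 = \norm{\partial_1 c_1} \ge \beta \norm{c_1 + \partial_2 c_2}$, and conclude $c_1 = \partial_2 c_2 \in B_1$. The only cosmetic difference is that you phrase it as a contradiction while the paper phrases it as the contrapositive directly.
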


When the chain complex has bounded degree, this is equivalent to linear distance.

\begin{proof}
  Suppose $c_1 \in Z_1$ and $\norm{c_1} < \alpha |X(1)|$.
  Then by small set boundary expansion, there exists $c_2$, such that $0 = \norm{\partial_1 c_1} \ge \beta \norm{c_1 + \partial_2 c_2}$.
  This means $c_1 = \partial_2 c_2 \in B_1$.
  Therefore, for $c_1 \in Z_1 - B_1$ we have $\norm{c_1} \ge \alpha |X(1)|$.
\end{proof}

\begin{lemma} [Small-Set (Co)-Boundary Expansion $\to$ (Co)-Locally Testable Code] \label{lem:small-set-boundary-implies-local-testability}
  If $X$ has $(\alpha, \beta, \gamma)$-small-set boundary expansion, 
  then the classical code $C$ with parity check matrix $H = \partial_2\colon \F2^{X(2)} \to \F2^{X(1)}$ satisfies
  \begin{equation*}
    \norm{H v} \ge \min\Big(\frac{1}{\gamma}, \frac{\alpha |X(1)|}{|X(2)|}\Big) \min_{c \in C} \norm{v - c}.
  \end{equation*}
\end{lemma}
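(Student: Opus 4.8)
The plan is a short two-case argument, exactly mirroring the proof of \Cref{lem:small-set-boundary-implies-distance} but now tracking the quantitative bound on $\norm{c_2}$ that small-set boundary expansion supplies. Fix $v \in \F2^{X(2)}$ and set $c_1 \coloneqq Hv = \partial_2 v \in \F2^{X(1)}$, so that $c_1 \in B_1$ and in particular $\partial_1 c_1 = \partial_1 \partial_2 v = 0$. Write $t \coloneqq \min_{c \in C} \norm{v - c}$ for the distance of $v$ to the code $C = \Ker \partial_2$. The goal is to show $\norm{c_1} \ge \min\big(1/\gamma,\ \alpha |X(1)|/|X(2)|\big)\, t$, and this splits according to whether $c_1$ is small enough to feed into the small-set boundary expansion hypothesis.

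\emph{Case 1: $\norm{c_1} \ge \alpha |X(1)|$.} Here I use only the crude bound $t \le \norm{v - 0} = \norm{v} \le |X(2)|$, valid because $0 \in C$ and $\norm{\cdot}$ counts non-zero faces, of which there are at most $|X(2)|$. Then $\norm{c_1} \ge \alpha |X(1)| = \frac{\alpha |X(1)|}{|X(2)|}\, |X(2)| \ge \frac{\alpha |X(1)|}{|X(2)|}\, t$, as needed.

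\emph{Case 2: $\norm{c_1} < \alpha |X(1)|$.} Now $c_1$ meets the smallness condition of $(\alpha, \beta, \gamma)$-small-set boundary expansion, so there is $c_2 \in \F2^{X(2)}$ with $\norm{\partial_1 c_1} \ge \beta \norm{c_1 + \partial_2 c_2}$ and $\norm{c_2} \le \gamma \norm{c_1}$. Since $\partial_1 c_1 = 0$, the first inequality forces $c_1 + \partial_2 c_2 = 0$, i.e.\ $\partial_2 c_2 = c_1 = \partial_2 v$, hence $\partial_2(v - c_2) = 0$ and $v - c_2 \in C$. Therefore $t \le \norm{v - (v - c_2)} = \norm{c_2} \le \gamma \norm{c_1}$, which rearranges to $\norm{c_1} \ge \frac{1}{\gamma} t$. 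Taking the weaker of the two bounds proves the lemma.

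I expect no real obstacle in carrying this out; the only subtlety is the reason for the case split itself. Small-set boundary expansion is a hypothesis about syndromes of \emph{bounded} weight, so the large-syndrome regime must be peeled off and dispatched separately, and the factor $\alpha |X(1)|/|X(2)|$ appearing in the statement is precisely the cost of that peeling. It is also worth noting explicitly that the extra clause $\norm{c_2} \le \gamma \norm{c_1}$ in our modified definition of small-set boundary expansion (absent from \cite{hopkins2022explicit}) is exactly what produces the $1/\gamma$ factor and thereby upgrades the conclusion from systolic distance to local testability; without it one would only recover \Cref{lem:small-set-boundary-implies-distance}.
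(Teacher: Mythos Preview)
Your proof is correct and essentially identical to the paper's: the same two-case split on whether $\norm{c_1}<\alpha|X(1)|$, the same use of the $\norm{c_2}\le\gamma\norm{c_1}$ clause to extract a codeword $v-c_2\in C$ in the small case, and the same crude bound $\norm{v}\le|X(2)|$ in the large case. Your added remarks on why the case split is needed and why the $\gamma$ clause is essential are apt.
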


When the chain complex has bounded degree, this is equivalent to the condition for local testability.

\begin{proof} 
  Denote $c_2 = v$. Let $c_1 = \partial_2 c_2 \in Z_1$.
  When $\norm{c_1} < \alpha |X(1)|$, by small set boundary expansion, there exists $c'_2$, such that $0 = \norm{\partial_1 c_1} \ge \beta \norm{c_1 + \partial_2 c'_2}$ and $\norm{c'_2} \le \gamma \norm{c_1}$.
  This means $\partial_2 c'_2 = c_1$ and $\partial_2 (c_2 + c'_2) = 0$.
  That is $c \coloneq c_2 + c'_2 \in C$
  and $\gamma \norm{c_1} \ge \norm{c_2 - c}$.

  When $\norm{c_1} \ge \alpha |X(1)|$, we set $c=0$, and we have $\norm{c_1} \ge (\alpha |X(1)|/|X(2)|) \norm{c_2}$.
  Overall, we have $\norm{c_1} \ge \min(1/\gamma, \alpha |X(1)| / |X(2)|) \min_{c \in C} \norm{c_2 - c}$.
\end{proof}

\section{Linear dimension and linear distance} \label{sec:linear-distance}


We give our construction of a quantum code and show that it leads to a family of quantum LDPC codes with linear rate an distance.
Additionally, we show that the associated chain complexes have various kinds of good expansion properties.

\subsection{Construction}
\label{sec:construction}

Let $G$ be a finite group and $A$ and $B$ sets of generators for $G$ that are closed under inverse and have cardinality $|A|=n_a$, $|B|=n_B$. Throughout we assume that $n_A=n_B$ and write $\Delta = n_A = n_B$. 
The construction uses Tanner codes over 
  the 4-fold left-right Cayley complex $\cG_2(G, A, B)$ with $|A|=|B|=\Delta$ and 
  local tensor codes $C_A, C_B$ with parity-check matrices $H_A\colon \F2^\Delta \to \F2^{m_a}$, $H_B\colon \F2^\Delta \to \F2^{m_b}$.
The idea is to construct four Tanner codes and then combine them into a chain complex. 
We use the graphs $\cG(E^-, F), \cG(E^|, F), \cG(V, E^-), \cG(V, E^|)$ induced from the left-right Cayley complex defined in \Cref{sec:left-right-cayley-complexes}
and the Tanner code construction in \Cref{sec:tanner-codes}.
The four Tanner codes we make use of are
\begin{equation*}
  \tannercode(\cG(E^-, F), H_A)\colon \F2^F \to (\F2^{m_a})^{E^-},
\end{equation*}
\begin{equation*}
  \tannercode(\cG(E^|, F), H_B)\colon \F2^F \to (\F2^{m_b})^{E^|},
\end{equation*}
\begin{equation*}
  \tannercode(\cG(V, E^-), H_B)\colon (\F2^{m_a})^{E^-} \to (\F2^{m_a \times m_b})^V,
\end{equation*}
\begin{equation*}
  \tannercode(\cG(V, E^|), H_A)\colon (\F2^{m_b})^{E^|} \to (\F2^{m_a \times m_b})^V.
\end{equation*}
To clarify the notation we explicitly spell out the map $\tannercode(\cG(E^-, F), H_A)$.
By the definition of the Tanner construction, this map is the composition
  \begin{equation*}
    \F2^F \to (\F2^{\Delta})^{E^-} \to (\F2^{m_a})^{E^-}
  \end{equation*}
  where the first map copies the value on the face to the horizontal edges incident to the face (each horizontal edge is incident to $|A|=\Delta$ faces, so each horizontal edge is valued in $\F2^{\Delta}$)
  and the second map applies $H_A$ to $\F2^\Delta$ for each horizontal edge.

The resulting chain complex is
\begin{equation}\label{eq:complex-1}
  X\colon \F2^F \xrightarrow{\partial_2} (\F2^{m_a})^{E^-} \oplus (\F2^{m_b})^{E^|} \xrightarrow{\partial_1} (\F2^{m_a \times m_b})^V,
\end{equation}
where 
\begin{equation*}
  \partial_2(c_2) = (\tannercode(\cG(E^-, F), H_A)(c_2), \tannercode(\cG(E^|, F), H_B)(c_2))
\end{equation*}
and
\begin{equation*}
  \partial_1(c^-_1, c^|_1) = \tannercode(\cG(V, E^-), H_B)(c^-_1) + \tannercode(\cG(V, E^|), H_A)(c^|_1)
\end{equation*}
where $c_2 \in \FF_2^F, c^-_1 \in \FF_2^{E^-}, c^|_1 \in \FF_2^{E^|}$.

\begin{figure}
  \centering
  \begin{tikzpicture}
    \draw (0,0)node(F){$\F2^F$};
    \draw (4,0)node(Ev){$(\F2^{m_b})^{E^|}$};
    \draw (0,-2)node(Eh){$(\F2^{m_a})^{E^-}$};
    \draw (4,-2)node(V){$(\F2^{m_a \times m_b})^V$};

    \draw[->] (F) --node[auto,swap] {$\tannercode(\cG(E^-, F), H_A)$} (Eh);
    \draw[->] (F) --node[auto] {$\tannercode(\cG(E^|, F), H_B)$} (Ev);
    \draw[->] (Eh) --node[auto,swap] {$\tannercode(\cG(V, E^-), H_B)$} (V);
    \draw[->] (Ev) --node[auto] {$\tannercode(\cG(V, E^|), H_A)$} (V);
  \end{tikzpicture}
  \caption{The chain complex as a composition of the Tanner codes.}
  \label{fig:the-chain-complex}
\end{figure}

We denote this chain complex as $X(\cG_2, C_A, C_B)$, where $\cG_2$ is a shorthand for $\cG_2(G, A, B)$.
(Later in the analysis we also consider the chain complex $X(\cG_2, C_A^\perp, C_B^\perp)$ with the same graph but a different local code.) We use $\cC(\cG_2,C_A,C_B)$ to denote the associated quantum CSS code (see Section~\ref{sec:ecc}), and often write only $\cC$ for simplicity.

We end this section by commenting on the way to obtain an explicit family of groups and generating sets that satisfy all the expansion properties required for the quantum code $\cC$ to have linear distance and linear-time decoding, as shown in the following sections.
This relies on having an explicit construction of large Ramanujan graphs \cite{philips1988ramanujan} and the existence of (at least) one good local code pair \Cref{cor:robust-code-exist}.
First, we discuss the graph.
The graphs depend on the group $G$ and generators $A, B$.
The group $G$ belongs to an infinite family of groups with generators $A, B$ of fixed size $\Delta$ such that $\Cay(G, A)$, $\Cay(G, B)$ are $\lambda=2 \sqrt{\Delta-1}$-spectral expanders.
Second, we discuss the base codes. As shown in Section~\ref{sec:distance}, to show constant systolic and co-systolic distance we need $(C_A, C_B)$ and its dual code $(C_A^\perp, C_B^\perp)$ to have distance $d_1$ and robustness $d_2$ satisfying $d_1 d_2 - \lambda d_2 - 8 \lambda \Delta > 0$.
From \Cref{cor:robust-code-exist} we know that for fixed $\rho_a, \rho_b$ there exist constants $\delta_1, \delta_2$ such that for large enough $\Delta$, $C_A, C_B, C_A^\perp, C_B^\perp$ have distance $\delta_1 \Delta$ and $(C_A, C_B), (C_A^\perp, C_B^\perp)$ have robustness $\delta_2 \Delta$.
Because of the scaling $\lambda = \Theta(\Delta^{1/2}), d_1 = \Theta(\Delta), d_2 = \Theta(\Delta)$, for some large but fixed $\Delta$, there exists a good local code pair $(C_A, C_B)$.
This good code pair can be found by brute forcing all the possible code pairs.
Because $\Delta$ is fixed, the family of chain complexes remains explicit.

\subsection{Notation}

The following important notations are used for the analysis. First, we describe the notation that extracts the local structure. Given $c_2 \in \F2^F$, we denote $c_2(f) \in \F2$ as the value of $c_2$ at $f \in F$. Similarly, for $c_1 \in (\F2^{m_a})^{E^-} \oplus (\F2^{m_b})^{E^|}$ and $c_0 \in (\F2^{m_a \times m_b})^V$, one has $c_1(e^-) \in \F2^{m_a}$ for $e^- \in E^-$, $c_1(e^|) \in \F2^{m_b}$ for $e^| \in E^|$, and $c_0(v) \in \F2^{m_a \times m_b}$ for $v \in V$.
We also write $c^1(E_{*0}(V_{00})) \in \F2^{m_a \times n_a}$ to denote the entries on $E_{*0}(V_{00})$, where recall that this set is defined in Section~\ref{sec:left-right-cayley-complexes}. Notice that $E_{*0}(V_{00})$ contains $n_a$ edges and each edge gives a vector of size $m_a$. 

Second, we describe notation for measuring the size, or norm, of elements of the complex $X$. The norm is defined as the number of non-zero geometric objects, i.e. $\norm{c_2}_F = |\set{f \in F: c_2(f) \ne 0}|$, $\norm{c_1}_E = |\set{e \in E: c_1(e) \ne 0}|$, $\norm{c_0}_V = |\set{v \in V: c_0(v) \ne 0}|$.
We also write $\norm{c_1(E_{*0}(v_{00}))}_E = |\set{e \in E_{*0}(v_{00}): c_1(e) \ne 0}|$.

An element $c_2 \in \F2^F$ is usually indexed by $F$, leading to the norm $\|c_2\|_F$ as defined above, it can also naturally be indexed by $E_{*0}$ through $c_2(e_{*0}) = c_2(F(e_{*0}))$. This allows us to define $\norm{c_2}_{E_{*0}}$. The difference between the two norms is analogous to the difference between the different variants of the Hamming norm introduced in Definition~\ref{def:hamming}. Similarly, an element $s^2 \in  (\F2^{n_a})^{E^-} \times (\F2^{n_b})^{E^|}$ is indexed by $E$, but notice that for any $e_{*0}\in E_{*0}$, $s^2(e_{*0})$ can be indexed by $F(e_{*0})$.
This allows us to write $s^2(e_{*0}, f) \in \F2$ for $f \in F(e_{*0})$.
This leads to the definition $EF = E^-F \cup E^|F = \set{(e, f) \in E \times F: f \in F(e)}$, where $E^-F$ and $E^|F$ specialize to horizontal and vertical edges.
So $s^2$ can be indexed by $EF$ and this leads to the norm $\norm{s^2}_{EF} = |\set{(e, f) \in EF: s^2(e, f) \ne 0}|$. We will also write $\norm{s^2(E^{*0})}_F$ for $\norm{s^2(E^{*0})}_{EF}$; this is because when the edges are restricted to $E^{*0}$ we have $E^{*0}F \cong F$. One can similarly define $VE$, $VF$ and their corresponding norms.

Finally, the last notation we discuss is with regard to $H_A$ and $H_B$.
By thinking of $F$ as being indexed by $E_{0*}$, we have
$H_A^\uparrow \colon \F2^F \cong (\F2^{n_a})^{E_{0*}} \to (\F2^{m_a})^{E_{0*}}$.
Similarly, by thinking of $F$ as being indexed by $E_{1*}$, we have
$H_A^\downarrow \colon \F2^F \cong (\F2^{n_a})^{E_{1*}} \to (\F2^{m_a})^{E_{1*}}$.
We can also define $H_B^\leftarrow$ and $H_B^\rightarrow$.
When the context is clear, we sometime hide the arrows.

\subsection{Dimension and low density}

Before measuring the dimension of the quantum code based on $X$, we verify that $X$ is a well-defined chain complex. For this it suffices to show that for each $f \in F$ and $v \in V$, the restriction $(\partial_1 \partial_2)_f^v\colon \F2 \to \F2^{m_a \times m_b}$ is $0$.
To do so, we first recall the submatrices of the Tanner code described in \Cref{sec:tanner-codes}.

Given elements $e^- \in E^-$ and $f \in F$,
  the submatrix $\tannercode(\cG(E^-, F), H_A)_f^{e^-} \colon \F2 \to \F2^{m_a}$ 
  is $0$ when $e^-$ and $f$ are not incident.
  When $e^-$ and $f$ are incident, say $e^- = ((g, gb), 0*)$, $f = (g, ag, gb, agb)$, we have 
  \begin{equation*}
    \tannercode(\cG(E^-, F), H_A)_f^{e^-} = H_A(\bar a) \colon \F2 \to \F2^{m_a}
  \end{equation*}
  where $\bar a$ is the basis vector of $\F2^A \cong \F2^\Delta$
  corresponding to the element $a \in A$.

Similarly, given elements $v \in V$ and $e^- \in E^-$,
  the submatrix $\tannercode(\cG(V, E^-), H_B)_{e^-}^v \colon \F2^{m_a} \to \F2^{m_a \times m_b}$ is $0$ when $v$ and $e^-$ are not incident.
  When $v$ and $e^-$ are incident, say $v = (g, 00)$, $e^- = ((g, gb), 0*)$, we have
  \begin{equation*}
    \tannercode(\cG(V, E^-), H_B)_{e^-}^v = {-} \otimes H_B(\bar b) \colon \F2^{m_a} \to \F2^{m_a \times m_b}
  \end{equation*}
  where $\bar b$ is the basis vector of $\F2^B \cong \F2^\Delta$ 
  and ${-}$ is the placeholder where ${-} \otimes H_B(\bar b) \colon v \mapsto v \otimes H_B(\bar b)$.

\begin{lemma}
$X$ is a well-defined chain complex, i.e. 
\[(\partial_1 \partial_2)_f^v\colon \F2 \to \F2^{m_a \times m_b} \,=\, 0\;.\]
\end{lemma}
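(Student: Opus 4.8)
The plan is to verify the vanishing of $(\partial_1\partial_2)_f^v$ directly, by expanding the composition of Tanner maps as a sum of products of the submatrices recalled just above, and then checking that the contributions cancel in pairs over $\F2$. Fix a face $f=(g,ag,gb,agb)$ (so $f$ is labelled by $g\in G$, $a\in A$, $b\in B$) and a vertex $v\in V$. If $v$ is not one of the four corners $(g,00),(ag,10),(gb,01),(agb,11)$ of $f$, then every intermediate edge joining $f$ to $v$ is non-incident to one of the two, so every submatrix in the expansion is $0$ and there is nothing to prove. So I would assume $v$ is a corner of $f$; by the symmetry of the construction it suffices to treat $v=(g,00)$.

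Next I would write out $(\partial_1\partial_2)^v_f$ as the sum over intermediate edges $e$ incident to both $f$ and $v$. The map $\partial_2$ sends $c_2\in\F2^F$ to a pair in $(\F2^{m_a})^{E^-}\oplus(\F2^{m_b})^{E^|}$, and $\partial_1$ then sums the two Tanner maps $\tannercode(\cG(V,E^-),H_B)$ and $\tannercode(\cG(V,E^|),H_A)$. The only $E^-$-edge incident to both $f$ and $v=(g,00)$ is $e^-=((g,gb),0*)$, and the only $E^|$-edge incident to both is $e^|=((g,ag),*0)$. Hence
\begin{equation*}
  (\partial_1\partial_2)^v_f \;=\; \tannercode(\cG(V,E^-),H_B)^v_{e^-}\circ \tannercode(\cG(E^-,F),H_A)^{e^-}_f \;+\; \tannercode(\cG(V,E^|),H_A)^v_{e^|}\circ \tannercode(\cG(E^|,F),H_B)^{e^|}_f\;.
\end{equation*}
Using the submatrix formulas, the first term is $\big({-}\otimes H_B(\bar b)\big)\circ H_A(\bar a)$, which on $1\in\F2$ gives $H_A(\bar a)\otimes H_B(\bar b)\in\F2^{m_a\times m_b}$; the second term is, by the analogous computation on the vertical side, $H_A(\bar a)\otimes H_B(\bar b)$ as well (one must check that the index of the basis vector fed into $H_A$ is the same $a$ in both orderings, and likewise for $b$ — this is exactly where the commuting left/right actions $a(gb)=(ag)b$ of the left-right complex are used to make the two ``views'' of the face $f$ agree). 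Adding the two identical terms over $\F2$ yields $0$.

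The main thing to be careful about — and what I expect to be the only real obstacle — is the bookkeeping of the edge-to-vertex and face-to-edge incidence indices: one has to confirm that in $\tannercode(\cG(E^-,F),H_A)^{e^-}_f$ the relevant basis vector is $\bar a$ (not $\bar b$ or some other generator), that in $\tannercode(\cG(V,E^-),H_B)^v_{e^-}$ it is $\bar b$, and that the vertical path $f\to e^|\to v$ produces the same tensor $H_A(\bar a)\otimes H_B(\bar b)$ with the factors in the same slots. This amounts to tracking the identifications $E\times[2]\cong V\times[\Delta]$ and the analogous one for $(\cG(E^\bullet,F))$ chosen in Section~\ref{sec:tanner-codes}, together with the labelling of $f=(g,ag,gb,agb)$. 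Once those identifications are pinned down the cancellation is immediate, and by the built-in symmetry among the four corner types the same argument (with $a,b$ and the arrows permuted appropriately) covers $v=(ag,10),(gb,01),(agb,11)$, completing the proof that $\partial_1\partial_2=0$.
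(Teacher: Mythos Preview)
Your proposal is correct and follows essentially the same approach as the paper: split into the non-incident and incident cases, and in the incident case identify the unique horizontal and vertical edge joining $f$ to $v$, compute each contribution as $H_A(\bar a)\otimes H_B(\bar b)$, and observe that the two copies cancel over $\F2$. The paper's proof is slightly terser on the bookkeeping you flag, but the argument is the same.
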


\begin{proof}
Because $\partial_1 \partial_2 = \tannercode(\cG(V, E^-), H_B) \tannercode(\cG(E^-, F), H_A) + \tannercode(\cG(V, E^|), H_A) \tannercode(\cG(E^|, F), H_B)$
 it suffices to compute $(\tannercode(\cG(V, E^-), H_B) \tannercode(\cG(E^-, F), H_A))_f^v$ and  $(\tannercode(\cG(V, E^|), H_A) \tannercode(\cG(E^|, F), H_B))_f^v$.
Now, by matrix multiplication,
  $(\tannercode(\cG(V, E^-), H_B) \tannercode(\cG(E^-, F), H_A))_f^v = \sum_{e^- \in E^-} \tannercode(\cG(V, E^-), H_B)_f^{e^-} \tannercode(\cG(E^-, F), H_A)_{e^-}^v$.
We consider the following two cases.

When $v$ and $f$ are not incident, there is no $e^-$ for both $\tannercode(\cG(V, E^-), H_B)_f^{e^-}$ and $\tannercode(\cG(E^-, F), H_A)_{e^-}^v$ to be non-zero, so $(\tannercode(\cG(V, E^-), H_B) \tannercode(\cG(E^-, F), H_A))_f^v = 0$.
Similarly, $(\tannercode(\cG(V, E^|), H_A) \tannercode(\cG(E^|, F), H_B))_f^v = 0$.
So $(\partial_1 \partial_2)_f^v = 0$ in the case when $v$ and $f$ are not incident.

When $v$ and $f$ are incident, suppose $v = (g, 00)$ and $f = (g, ag, gb, agb)$.
We define $e^- = ((g, gb), 0*)$ and $e^| = ((g, ag), *0)$.
Because $e^-$ is the only edge in $E^-$ that is incident to both $v$ and $f$, we have
\begin{equation*}
  (\tannercode(\cG(V, E^-), H_B) \tannercode(\cG(E^-, F), H_A))_f^v = \tannercode(\cG(V, E^-), H_B)_{e^-}^v \tannercode(\cG(E^-, F), H_A)_f^{e^-} = H_A(\bar a) \otimes H_B(\bar b).
\end{equation*}
Similarly, 
\begin{equation*}
  (\tannercode(\cG(V, E^|), H_A) \tannercode(\cG(E^|, F), H_B))_f^v = \tannercode(\cG(V, E^|), H_A)_{e^|}^v \tannercode(\cG(E^|, F), H_B)_f^{e^|} = H_A(\bar a) \otimes H_B(\bar b).
\end{equation*}
This implies $(\partial_1 \partial_2)_f^v = 0$ and implies $X$ is a chain complex.
\end{proof}

We now check that the boundary maps $\partial_2$ and $\partial_1$ have bounded number of non-zero entries in each column and row.

\begin{lemma}
The code $\cC$ is low density, i.e.\ 
the maps $\partial_2$ and $\partial_1$ have at most $4\Delta$ nonzero entries in each row and column. 
\end{lemma}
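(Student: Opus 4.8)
The plan is to track, for each of the four Tanner maps individually, the maximum number of nonzero entries per row and per column, and then observe that $\partial_2$ and $\partial_1$ are each formed either by stacking two such maps (a direct sum in the target, $\partial_2$) or by summing two such maps (sharing a common target, $\partial_1$). In the first case the sparsity bounds are inherited directly; in the second case row-sparsity adds while column-sparsity stays the max (since the two summands have disjoint domains $E^-$ and $E^|$). So the whole lemma reduces to a sparsity count for a single generic Tanner map $\tannercode(\cG, H)\colon \F2^E \to (\F2^m)^V$ on a $\Delta$-regular bipartite graph $\cG=(V_0,V_1,E)$ with local parity-check $H\colon\F2^\Delta\to\F2^m$.

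**First I would** recall from Section~\ref{sec:tanner-codes} the submatrix description: $\tannercode(\cG,H)_e^v$ is zero unless $e$ is incident to $v$, in which case it equals $H(\bar a)$ for the appropriate $a\in[\Delta]$, i.e.\ a single column of $H$ viewed as a map $\F2\to\F2^m$. A column of the full matrix is indexed by an edge $e$; its nonzero block-rows are the (at most) two endpoints of $e$, and within each such block the number of nonzero bits is at most the weight of a column of $H$, which is at most $m$. A block-row is indexed by a vertex $v$; the contributing edges are the $\Delta$ edges incident to $v$, and for each the block is one column of $H$, so the number of nonzero bits in that block-row is at most $\Delta\cdot(\text{max column weight of }H)\le \Delta m$. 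This already shows each of the four Tanner maps has at most $2m_\bullet$ nonzeros per column and at most $\Delta m_\bullet$ per row, where $m_\bullet\in\{m_a,m_b,m_am_b\}$ and in fact one can do slightly better: the row count is really $\Delta$ times a \emph{single} column of $H$, not $\Delta\cdot m$, but we only need a bound.

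**Then I would** assemble: for $\partial_2$, the domain is $\F2^F$ and the target is $(\F2^{m_a})^{E^-}\oplus(\F2^{m_b})^{E^|}$; a column (indexed by a face $f$) is nonzero in the $E^-$-part via $\tannercode(\cG(E^-,F),H_A)$ and in the $E^|$-part via $\tannercode(\cG(E^|,F),H_B)$, each incident to exactly two edges (the two horizontal, resp.\ two vertical, edges of the square $f$), giving at most $2m_a+2m_b\le 4\Delta$ nonzeros provided $m_a,m_b\le\Delta$ — which holds since $m_\bullet=\Delta-k_\bullet<\Delta$. A row, indexed by an edge, sits in only one of the two summands, so its count is governed by a single Tanner map and is at most $\Delta$ (the edge is in $\Delta$ faces, contributing one $H$-column each; bounding crudely by $\Delta\cdot$weight gives something like $\Delta m_\bullet$, but the correct refined count per row is $\le\Delta$ since each face contributes a single bit of the local codeword check pattern — more carefully, the row weight is at most the number of incident faces, $\Delta$, times the max weight of a \emph{row} of $H_\bullet$ restricted appropriately). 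For $\partial_1$, domain $(\F2^{m_a})^{E^-}\oplus(\F2^{m_b})^{E^|}$, target $(\F2^{m_a\times m_b})^V$: a column indexed by $(e^-,\text{bit})$ is acted on only by $\tannercode(\cG(V,E^-),H_B)$, hitting the two endpoints of $e^-$, and the block is $-\otimes H_B(\bar b)$, a single column of $H_B$ tensored in, so at most $2m_b\le 2\Delta$ nonzeros; a row indexed by a vertex $v$ collects contributions from both summands, the $\Delta$ horizontal and $\Delta$ vertical edges at $v$, giving at most $2\Delta$ (times the relevant local weight) — again $\le 4\Delta$ after the crude bound. In all cases the stated $4\Delta$ holds.

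**The main obstacle** is bookkeeping: being careful about whether "nonzero entries" is counted at the level of $\F2$-bits (so one must multiply by the local dimensions $m_a,m_b,m_am_b$, which could a priori exceed $\Delta$ unless one uses the refined observation that each incident face/edge contributes only a \emph{single column} of $H$, whose relevant weight is what matters) versus at the block level. The clean resolution is to note that every incidence contributes exactly one column of a parity-check matrix $H_\bullet$ of weight at most $\Delta$ (trivially) — but for the bound $4\Delta$ to be literally correct one should count block-incidences, of which there are at most $2$ per column and at most $2\Delta$ per row across both summands, and then observe that the LDPC property only requires a \emph{constant} bound, which $4\Delta$ provides for fixed $\Delta$. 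I would state the lemma's conclusion as "at most $4\Delta$ nonzero \emph{blocks}" or equivalently absorb the constant $m_a,m_b$ factors into the fixed constant $w$ of Theorem~1.1, and remark that since $\Delta$ is a fixed constant the whole code family is LDPC regardless of the precise multiplicative constant.
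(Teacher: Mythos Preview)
Your approach is the same as the paper's: decompose $\partial_2,\partial_1$ into the four Tanner maps, use the submatrix description $\tannercode(\cG,H)_e^v = H(\bar a)$, and count incidences. But your execution of the \emph{row} count is muddled, and your fallback (``count blocks, not bits'' or ``absorb into $w$'') does not prove the lemma as stated.

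The gap is in your block-row estimate ``at most $\Delta\cdot(\text{max column weight of }H)\le \Delta m$'', and its analogue for $\partial_1$ where you write ``$2\Delta$ (times the relevant local weight)''. This is too crude. The observation you need, which the paper states explicitly, is that each submatrix block has \emph{at most one nonzero entry per bit-row}. For $\tannercode(\cG(E^-,F),H_A)_f^{e^-}=H_A(\bar a)\colon\F2\to\F2^{m_a}$ this is trivial: it is a column vector, so each of its $m_a$ rows has a single entry. For $\tannercode(\cG(V,E^-),H_B)_{e^-}^v = {-}\otimes H_B(\bar b)\colon\F2^{m_a}\to\F2^{m_a\times m_b}$, the row indexed by $(i,j)\in[m_a]\times[m_b]$ has entries $(\delta_{i,i'}\,H_B(\bar b)_j)_{i'\in[m_a]}$, again at most one nonzero. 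Consequently a bit-row of $\partial_2$, indexed by an edge bit, has at most $\Delta$ nonzeros (one per incident face), and a bit-row of $\partial_1$, indexed by a vertex bit, has at most $2\Delta$ nonzeros (one per incident edge). Combined with your correct column counts ($\le 4\max(m_a,m_b)$ for $\partial_2$, $\le 2\max(m_a,m_b)$ for $\partial_1$, both $\le 4\Delta$), this gives the stated bound at the bit level without any reinterpretation.
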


\begin{proof}
The result follows because the left-right Cayley graph has bounded degree and the non-zero entry appears only when there is an incident relation.
We call $F$, $E^- \times [m_a] \cup E^| \times [m_b]$, and $V \times [m_a] \times [m_b]$ the face bits, the edge bits, and the vertex bits.
And we say that a face bit is incident to an edge bit if the corresponding entry in the boundary map is non-zero.

We first consider $\partial_2$.
Each face is incident to $4$ edges and each edge is incident to $\Delta$ faces.
Now \\ $\tannercode(\cG(E^-, F), H_A)_f^{e^-} \colon \F2 \to \F2^{m_a}$ and $\tannercode(\cG(E^|, F), H_B)_f^{e^|} \colon \F2 \to \F2^{m_b}$ have $\le 1$ non-zero entry in each row and $\le \max(m_a, m_b)$ non-zero entries in each column.
So each face bit is incident to $\le 4 \max(m_a, m_b)$ edge bits
  and each edge bit is incident to $\le \Delta$ face bits.

We now consider $\partial_1$.
Each edge is incident to $2$ vertices and each vertex is incident to $2\Delta$ edges.
Now $\tannercode(\cG(V, E^-), H_B)_{e^-}^v \colon \F2^{m_a} \to \F2^{m_a \times m_b}$ and $\tannercode(\cG(V, E^|), H_A)_{e^|}^v \colon \F2^{m_b} \to \F2^{m_a \times m_b}$ have $\le 1$ non-zero entry in each row and $\le \max(m_a, m_b)$ non-zero entries in each column.
So each edge bit is incident to $\le 2 \max(m_a, m_b)$ vertex bits
  and each vertex bit is incident to $\le 2 \Delta$ edge bits.
	\end{proof}
	
	It is easy to check that the quantum code has linear dimension.
	
	\begin{lemma}
	The  code $\cC$ has rate at least
	\[ \frac{- (2 \rho_a - 1) (2 \rho_b - 1)}{2(2-\rho_a-\rho_b)}\;.\]
	\end{lemma}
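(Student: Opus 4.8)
The plan is to lower-bound the dimension $k$ of the quantum code $\cC$ by an Euler-characteristic inequality for the three-term complex \eqref{eq:complex-1}, and then divide by the number of qubits, which is $\dim X(1)$. Recall that $k = \dim Z_1 - \dim B_1$; since $\dim Z_1 = \dim X(1) - \rk\partial_1 \ge \dim X(1) - \dim X(0)$ and $\dim B_1 = \rk\partial_2 \le \dim X(2)$, we obtain
\begin{equation*}
  k \;\ge\; \dim X(1) - \dim X(0) - \dim X(2)\;.
\end{equation*}
Only this one-sided estimate is needed, so no surjectivity of $\partial_1$ or injectivity of $\partial_2$ has to be established.

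Next I would compute the three dimensions in terms of $n \coloneqq |G|$ and $\Delta$. Using the codimension convention $m_a = (1-\rho_a)\Delta$, $m_b = (1-\rho_b)\Delta$ for the parity-check maps $H_A,H_B$ of the base codes, and the cardinalities $|V| = 4n$, $|E^-| = |E^|| = 2n\Delta$, $|F| = n\Delta^2$ read off from Section~\ref{sec:left-right-cayley-complexes}, one gets $\dim X(2) = |F| = n\Delta^2$, $\dim X(1) = m_a|E^-| + m_b|E^|| = 2n\Delta^2(2-\rho_a-\rho_b)$, and $\dim X(0) = m_a m_b |V| = 4n\Delta^2(1-\rho_a)(1-\rho_b)$.

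Substituting and simplifying, the factor $2(2-\rho_a-\rho_b) - 4(1-\rho_a)(1-\rho_b) - 1$ equals $-(2\rho_a-1)(2\rho_b-1)$, so $k \ge n\Delta^2\paren{-(2\rho_a-1)(2\rho_b-1)}$; dividing by $\dim X(1) = 2n\Delta^2(2-\rho_a-\rho_b)$ gives the stated rate $\dfrac{-(2\rho_a-1)(2\rho_b-1)}{2(2-\rho_a-\rho_b)}$.

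I do not expect a genuine obstacle: the whole argument is bookkeeping with the sizes of the vertex, edge, and face sets of the $4$-fold left-right Cayley complex together with the codimensions of the base codes. The two points meriting care are keeping the $m_a,m_b$ convention consistent with the construction section (number of parity checks, not message length), and remembering that the displayed inequality is all that is claimed — in particular the bound is vacuous when $(2\rho_a-1)(2\rho_b-1) \ge 0$ (e.g.\ $\rho_a = \rho_b$, consistent with the ``$0$ rate'' remark in the introduction), so a positive-rate instantiation forces the choice $\rho_a < \tfrac12 < \rho_b$ or the symmetric one.
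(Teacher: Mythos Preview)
Your proof is correct and is essentially identical to the paper's: both use the Euler-characteristic bound $k \ge |X(1)| - |X(0)| - |X(2)|$ and then plug in the cardinalities $|F| = |G|\Delta^2$, $|E^-| = |E^|| = 2|G|\Delta$, $|V| = 4|G|$ together with $m_a = (1-\rho_a)\Delta$, $m_b = (1-\rho_b)\Delta$. Your version is slightly more explicit in spelling out the rank-nullity justification for that inequality, but otherwise matches the paper's one-line computation.
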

	
	\begin{proof}
The rate is at least
\[ \frac{|X(1)| - |X(2)| - |X(0)|}{|X(1)|} = \frac{- (\Delta - 2 m_a) (\Delta - 2 m_a) |G|}{2 (m_a + m_b) \Delta |G|} = \frac{- (2 \rho_a - 1) (2 \rho_b - 1)}{2(2-\rho_a-\rho_b)}\;.\]
\end{proof}

Note that one can achieve any rate in $(0, 1/2)$ by choosing corresponding $\rho_a$ and $\rho_b$.

\subsection{Distance}
\label{sec:distance}

A quantum CSS code has linear distance if and only if the chain complex $X$ has constant systolic and co-systolic distance. We start with a general theorem, Theorem~\ref{thm:co-expansion} that shows a certain co-expansion property of the complex  $X(\cG_2, C_A, C_B)$ defined in~\eqref{eq:complex-1}. The property of having linear $X$-distance for $\cC$, i.e.\ linear co-systolic distance of $X$, follows almost immediately and is shown in Corollary~\ref{cor:co-distance}. The argument for showing linear $Z$-distance for $\cC$, i.e.\ linear systolic distance for $X$, is more involved and proceeds by reduction to the co-systolic distance. This is shown in Theorem~\ref{thm:co-expansion-implies-expansion}. After having shown the distance properties, we show that the co-expansion property shown in Theorem~\ref{thm:co-expansion} also implies small-set expansion properties for $X$. This is shown in Corollary~\ref{cor:co-expansion}.

\subsubsection{Co-expansion and co-systolic distance}

We start with the main theorem on co-expansion.   

\begin{theorem} [Co-Expansion] \label{thm:co-expansion}
  Given $\Delta$-regular $\lambda$-spectral expander graphs $\Cay(G, A)$, $\Cay(G, B)$
  and linear codes $C_A^\perp, C_B^\perp$ of length $\Delta$ with distance $d_1$ and $(C_A^\perp, C_B^\perp)$ with robustness $d_2$. 
	If $c^1\in \F2^{X(1)}$ is co-locally minimal, then
  \begin{equation} \label{eq:co-locally-minimal-expansion}
    \norm{\delta^1 c^1}_F \ge \frac{d_1 d_2 - \lambda d_2 - 8 \lambda \Delta}{4 d_2 + 8 \Delta} \norm{c^1}_E - \frac{\Delta d_2/2 + 2\Delta}{4 d_2 + 8 \Delta} \frac{\norm{c^1}_E^2}{|G|}\;.
  \end{equation}
	\end{theorem}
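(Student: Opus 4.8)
The plan is to carry out a double-counting argument on the "neighbor" relation sketched in the introduction, working face by face and vertex by vertex, and to combine a graph-expansion upper bound with two code-based lower bounds (from distance and from robustness). Let $\cE = \{e \in E : c^1(e) \ne 0\}$, so $\norm{c^1}_E = |\cE|$, and let $\cF = \{f \in F : (\delta^1 c^1)(f) \ne 0\}$, so $\norm{\delta^1 c^1}_F = |\cF|$. The quantity I would count is the number of pairs $(e, e')$ of nonzero edges that are "neighbors'' in the sense of sharing a face, i.e.\ roughly $1_\cE^T M_1 1_\cE$ and $1_\cE^T M_0 1_\cE$ where $M_1, M_0$ are the two adjacency matrices from \Cref{lem:M_1} and \Cref{lem:M_0}. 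Those two lemmas give the \emph{upper} bounds $\lambda|\cE| + \tfrac{\Delta}{2|G|}|\cE|^2$ and $8\lambda\Delta|\cE| + \tfrac{2\Delta}{|G|}|\cE|^2$ respectively. The whole game is to produce a matching \emph{lower} bound on these neighbor counts of the form (something)$\cdot |\cE|$ minus (boundary contribution), so that rearranging yields \eqref{eq:co-locally-minimal-expansion}.

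First I would establish the \textbf{distance-based local lower bound}: for each nonzero horizontal edge $e_{*0} \in \cE$, look at the four faces $F(e_{*0})$ together with the three edge-families $E_{*1}(e_{*0}), E_{0*}(e_{*0}), E_{1*}(e_{*0})$; since $c^1$ restricted to these, together with the contribution that $e_{*0}$ makes, must be consistent with a codeword of $C_A^\perp$ (or $C_B^\perp$) along the appropriate direction — this is exactly how $\delta^1$ is built from the Tanner/tensor structure — the minimum distance $d_1$ of the dual code forces at least $d_1$ nonzero objects among these neighbors (or the face is already in $\cF$, which is the "good'' case we want). Second, the \textbf{robustness-based local lower bound}: fix a vertex $v_{00}$ and consider its local view, an element of $\tensorcode(C_A^\perp, C_B^\perp)$ on $F(v_{00})$, split into the $C_A^\perp\otimes$-part living on $E_{*0}(v_{00})\cup E_{*1}(v_{00})$ and the $\otimes C_B^\perp$-part on $E_{0*}(v_{00})\cup E_{1*}(v_{00})$. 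Co-local minimality of $c^1$ is what lets us assume the local decomposition is the minimal one, and then $d_2$-robustness of $(C_A^\perp, C_B^\perp)$ says: if there are $x$ nonzero edges among $E_{*0}(v_{00})\cup E_{0*}(v_{00})$ then there are at least $d_2 \cdot x$ nonzero bits (hence nonzero edges, counted with multiplicity) contributed in $E_{*1}(v_{00})\cup E_{1*}(v_{00})$ unless the relevant faces land in $\cF$. These are precisely the two inequalities pictured in \Cref{fig:co-expansion}.

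Next I would \textbf{combine the two local bounds} by summing over all edges/vertices. Summing the distance bound over $\cE$ gives a lower bound on the $M_1$-neighbor count proportional to $d_1|\cE|$, corrected by $4|\cF|$ (the faces that are already nonzero "absorb'' the forced discrepancies, contributing at most $4$ edges each since each face touches $4$ edges). Summing the robustness bound over $V$ gives a lower bound on the $M_0$-neighbor count proportional to $d_2|\cE|$, again corrected by a term linear in $|\cF|$. Then I set these lower bounds against the \Cref{lem:M_1}/\Cref{lem:M_0} upper bounds $\lambda|\cE| + \tfrac{\Delta}{2|G|}|\cE|^2$ and $8\lambda\Delta|\cE| + \tfrac{2\Delta}{|G|}|\cE|^2$. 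Taking an appropriate linear combination of the two resulting inequalities — the coefficients $4d_2 + 8\Delta$ and the shape of the final bound strongly suggest weighting the distance inequality by roughly $d_2$ and the robustness inequality by roughly $1$ (or $8$), so that the $\lambda d_2 + 8\lambda\Delta$ error terms line up — and solving for $|\cF| = \norm{\delta^1 c^1}_F$ yields exactly \eqref{eq:co-locally-minimal-expansion}, with the $\tfrac{\Delta d_2/2 + 2\Delta}{4d_2+8\Delta}\tfrac{|\cE|^2}{|G|}$ term coming from the quadratic parts of the mixing-lemma bounds.

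The \textbf{main obstacle} I expect is the bookkeeping in the robustness step: one must verify that co-local minimality of the \emph{global} $c^1$ genuinely lets one invoke \Cref{def:robust} on each \emph{local} tensor-code element (i.e.\ that the minimal local decomposition is compatible with $c^1$ itself), and one must be careful about \emph{multiplicities} — a single nonzero edge in $E_{*1}(v_{00})$ or $E_{1*}(v_{00})$ is seen from many vertices $v_{00}$, and the neighbor-counting matrices $M_0, M_1$ are exactly what convert these per-vertex counts into a global count without over- or under-counting. Getting the constants to come out as in the theorem (rather than merely up to an absolute constant) requires tracking the "$4$'' from faces-per-edge and the "$2\Delta$''/"$8\Delta$'' from edges-per-vertex precisely, and making sure the two inequalities are combined with the unique weighting that cancels the cross terms. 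Everything else — the distance bound, the mixing-lemma applications — is essentially a direct appeal to \Cref{lem:M_0}, \Cref{lem:M_1}, and the definitions.
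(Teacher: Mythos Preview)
Your overall strategy and ingredients are exactly right: you correctly identify the double-counting framework, the two expander upper bounds from \Cref{lem:M_1} and \Cref{lem:M_0}, the distance-based local inequality (this is the paper's Claim~\ref{claim:64}), and the robustness-based local inequality using co-local minimality (Claim~\ref{claim:65}). Where your plan goes wrong is in the combination step. You propose to obtain two \emph{separate} lower bounds --- one on $1_\cE^T M_1 1_\cE$ from distance, one on $1_\cE^T M_0 1_\cE$ from robustness --- and then take a linear combination. This does not work, for two reasons. First, the distance inequality at an edge $e_{*0}$ involves \emph{all three} face-neighbor families $E_{*1}(e_{*0}),E_{0*}(e_{*0}),E_{1*}(e_{*0})$, but only $E_{*1}(e_{*0})$ consists of $M_1$-neighbors (opposing edges); the other two are adjacent edges sharing a vertex with $e_{*0}$, which are \emph{not} counted by $M_1$. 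So summing the distance bound over $\cE$ does not lower-bound $1_\cE^T M_1 1_\cE$. Second, even if one could lower-bound $1_\cE^T M_0 1_\cE$ by $\Theta(d_2)\,|\cE|$ from robustness alone, this is useless against the upper bound $8\lambda\Delta|\cE|+\cdots$, since $d_2=\Theta(\Delta)$ while $\lambda\Delta=\Theta(\Delta^{3/2})$.

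The paper's fix is to \emph{chain} the two local bounds rather than add them. One works with the single matrix $M=d_2 M_1+M_0$ and, for each fixed edge $e_{*0}$ with endpoints $v_{00},v_{10}$, first applies robustness at $v_{00}$ and at $v_{10}$: this converts the $M_0$-contribution $\|c^1(E_{*1}(v_{00}))\|+\|c^1(E_{1*}(v_{00}))\|+\|c^1(E_{*1}(v_{10}))\|+\|c^1(E_{0*}(v_{10}))\|$ into $d_2$ times the counts of \emph{incident} edges at $v_{00},v_{10}$, minus face corrections. The key observation is that among those incident edges one finds exactly $E_{0*}(e_{*0})=E_{0*}(v_{00})$ and $E_{1*}(e_{*0})=E_{1*}(v_{10})$; combined with the $d_2 M_1$-term $d_2\,\|c^1(E_{*1}(e_{*0}))\|$, one gets $d_2$ times the \emph{full} face-neighbor count of $e_{*0}$. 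Only now does one apply the distance inequality, obtaining $1_\cE^T M\,1_{e_{*0}}\ge d_1 d_2 - d_2\|c^2(F(e_{*0}))\|-\|c^2(F(v_{00}))\|-\|c^2(F(v_{10}))\|$. Summing over $e_{*0}\in\cE$ gives $1_\cE^T M 1_\cE \ge d_1 d_2|\cE| - (4d_2+8\Delta)|\cF|$, and comparing with the expander upper bound on $1_\cE^T M 1_\cE$ yields~\eqref{eq:co-locally-minimal-expansion} on the nose. The weighting $d_2$ in $M=d_2 M_1+M_0$ is precisely what makes the robustness output match the $d_2 M_1$-term so that distance can be applied to their sum; this is the step your proposal was missing.
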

		
	\begin{corollary}\label{cor:co-distance}
	Under the same assumptions as Theorem~\ref{thm:co-expansion}, suppose further that  $d_1 d_2 - \lambda d_2 - 8 \lambda \Delta > 0$.
Then the co-chain complex~\eqref{eq:complex-1} has co-systolic distance at least $\frac{\eta}{2\Delta (m_a + m_b)}$, where $\eta \coloneqq \frac{d_1 d_2 - \lambda d_2 - 8 \lambda \Delta}{\Delta d_2/2 + 2\Delta}$.
\end{corollary}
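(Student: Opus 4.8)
The plan is to combine Theorem~\ref{thm:co-expansion} with the definitions of co-systolic distance and co-local minimality, using a minimality reduction to pass from an arbitrary nontrivial cocycle to a co-locally minimal one. First I would take any $c^1 \in Z^1 - B^1$ and, among all representatives of its coset $c^1 + B^1$ — more precisely, among all $c^1 + \delta^0 c^0$ for $c^0 \in \F2^{X(0)}$ — pick one of minimum norm $\norm{\cdot}_E$. Call it $\tilde c^1$. Since the coset is finite such a minimizer exists, and by construction $\tilde c^1$ is co-locally minimal: flipping any single vertex value $e^0$ with $\norm{e^0}=1$ can only keep or increase $\norm{\tilde c^1}_E$. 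Moreover $\tilde c^1$ is still in $Z^1 - B^1$ (it differs from $c^1$ by a coboundary), so $\delta^1 \tilde c^1 = 0$, and $\norm{\tilde c^1}_E \le \norm{c^1}_E$, so it suffices to lower-bound $\norm{\tilde c^1}_E$.

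Now apply Theorem~\ref{thm:co-expansion} to $\tilde c^1$. The left side is $\norm{\delta^1 \tilde c^1}_F = 0$, so the right side of~\eqref{eq:co-locally-minimal-expansion} is $\le 0$; writing $x = \norm{\tilde c^1}_E$ and using $d_1 d_2 - \lambda d_2 - 8\lambda \Delta > 0$, this rearranges to
\begin{equation*}
  0 \ge \frac{d_1 d_2 - \lambda d_2 - 8\lambda\Delta}{4 d_2 + 8\Delta}\, x - \frac{\Delta d_2/2 + 2\Delta}{4 d_2 + 8\Delta}\frac{x^2}{|G|},
\end{equation*}
hence either $x = 0$ or $x \ge |G| \cdot \frac{d_1 d_2 - \lambda d_2 - 8\lambda\Delta}{\Delta d_2/2 + 2\Delta} = \eta\, |G|$. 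The case $x=0$ would force $\tilde c^1 = 0 \in B^1$, contradicting $\tilde c^1 \in Z^1 - B^1$; so $\norm{c^1}_E \ge \norm{\tilde c^1}_E = x \ge \eta\, |G|$. Finally I would convert this to the stated form: the edge set has size $|X(1)| = |E^-|\cdot m_a + |E^|| \cdot m_b$ up to the weight-versus-count conventions, and since $|E^-| = |E^|| = \Delta|G|$ we get $|X(1)| \le 2\Delta(m_a+m_b)|G|$ (or the appropriate bound matching the $\norm{\cdot}$ normalization), so $\norm{c^1}_E \ge \eta |G| \ge \frac{\eta}{2\Delta(m_a+m_b)} |X(1)|$, which is co-systolic distance at least $\frac{\eta}{2\Delta(m_a+m_b)}$.

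The only subtle point, and the one I would be most careful about, is the minimality reduction: one must check that passing to a minimum-norm coset representative genuinely yields co-local minimality in the sense of Definition (Co-Locally Minimal) — i.e.\ that the single-generator flips $\delta^0 e^0$ with $\norm{e^0}_V = 1$ are exactly the moves that preserve the coset, which holds because $B^1 = \Ima \delta^0$ is spanned by these. Everything else is a direct substitution into Theorem~\ref{thm:co-expansion} followed by solving a quadratic inequality and a routine counting of $|X(1)|$ in terms of $\Delta$, $m_a$, $m_b$, $|G|$; I do not expect any obstacle there beyond bookkeeping the $\norm{\cdot}$-vs-Hamming normalization constants consistently.
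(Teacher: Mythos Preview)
Your proposal is correct and follows essentially the same route as the paper: set $\delta^1 c^1 = 0$ in~\eqref{eq:co-locally-minimal-expansion}, solve the resulting quadratic inequality, and convert $\eta|G|$ to a fraction of $|X(1)|$; if anything you are more careful than the paper, which applies Theorem~\ref{thm:co-expansion} to an arbitrary nonzero cocycle without explicitly passing to a co-locally minimal coset representative as you do. One minor slip: in the $4$-fold complex $|E^-| = |E_{0*}| + |E_{1*}| = 2\Delta|G|$ (not $\Delta|G|$), but this gives exactly $|X(1)| = 2\Delta(m_a+m_b)|G|$, so your stated final bound is right.
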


\begin{proof}
 When $c^1$ is a non-zero co-cycle $c^1 \in Z^1-0$,~\eqref{eq:co-locally-minimal-expansion} implies
  \begin{equation*}
    \frac{\Delta d_2/2 + 2\Delta}{4 d_2 + 8 \Delta} \frac{\norm{c^1}_E^2}{|G|} \ge \frac{d_1 d_2 - \lambda d_2 - 8 \lambda \Delta}{4 d_2 + 8 \Delta} \norm{c^1}_E\;,
  \end{equation*}
  which gives
  \begin{equation*}
    \norm{c^1}_E \ge \frac{d_1 d_2 - \lambda d_2 - 8 \lambda \Delta}{\Delta d_2/2 + 2\Delta} |G| \coloneqq \eta |G| = \frac{\eta}{2\Delta (m_a + m_b)} |X(1)|\;.
  \end{equation*}
	\end{proof}

We now move on to prove the theorem. 

\begin{proof} [Proof of \Cref{thm:co-expansion}]
Let $c^1$ be co-locally minimal and $c^2 = \delta^1 c^1$. 
  Let $\cE \subset E$ be the support of $c^1$, i.e. $\cE = \set{e \in E : c^1(e) \ne 0}$. 
  (Recall that $c^1(e_{*0}), c^1(e_{*1}) \in \F2^{m_b}$ and $c^1(e_{0*}), c^1(e_{1*}) \in \F2^{m_a}$.)
  The proof strategy is to count the number of ``neighbors'' between $\cE$, for some appropriate neighborhood structure.
  The expansion of the graph gives an upper bound on the number of ``neighbors''
  and the distance and the robustness of the local code give a lower bound.
  Comparing the two bounds gives~\Cref{eq:co-locally-minimal-expansion}.

  \paragraph{Step 1: Define ``neighbors'' $M$.}
	Recall the adjacency matrices $M_0$ and $M_1$ defined in Lemma~\ref{lem:M_0} and Lemma~\ref{lem:M_1} respectively. 
  We describe the neighborhood structure through the matrix 
	\[ M \,=\, d_2 M_1 + M_0 \in  \RR^{E\times E}\;.\]
 Let $1_{\cE} \in \F2^{E}$ be the indicator vector for $\cE$. 
  
  \paragraph{Step 2: Upper bound from expansion.}
Combining Lemma~\ref{lem:M_0} and Lemma~\ref{lem:M_1}, 
  \begin{equation}\label{eq:dist1-lb}
    1_\cE^T M 1_\cE \,\le\, \lambda (d_2 + 8 \Delta) |\cE| + \frac{\Delta}{|G|}\Big(\frac{d_2}{2}+2\Big) |\cE|^2\;.
  \end{equation}

  \paragraph{Step 3: Lower bound from distance and robustness.}
  We show a lower bound on $1_\cE^T M 1_\cE$ using the distance and the robustness of the local tensor code. We start with two claims. The first claim uses the distance property of $C_B^\perp$. 
 
	\begin{claim}\label{claim:64}
	For any edge $e_{*0} \in E_{*0}$ it holds that
  \begin{equation} \label{eq:ineq2-from-distance}
    \norm{c^2(F(e_{*0}))}_F + \norm{c^1(E_{*1}(e_{*0}))}_E + \norm{c^1(E_{0*}(e_{*0}))}_E + \norm{c^1(E_{1*}(e_{*0}))}_E \ge d_1 \norm{c^1(e_{*0})}_E\;.
  \end{equation}
	\end{claim}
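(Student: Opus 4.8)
The plan is to reduce the inequality to the distance property of the dual code $C_B^\perp$, applied to a single ``column'' of bits read along the faces incident to the edge $e_{*0}$. First I would fix the edge $e_{*0} = ((g, ag), *0) \in E_{*0}$ and recall that $c^1(e_{*0}) \in \F2^{m_b}$. The key observation is that $\delta^1$, restricted to the contribution of $e_{*0}$, applies the Tanner-code map $\tannercode(\cG(E^|, F), H_B)$; in the language of Section~\ref{sec:construction}, the bits of $c^2$ on the faces $F(e_{*0}) = \{(g, ag, gb, agb) : b \in B\}$, together with the bits of $c^1$ on the neighboring edges $E_{*1}(e_{*0}) \cup E_{0*}(e_{*0}) \cup E_{1*}(e_{*0})$, jointly determine a word whose projection onto the ``$b$-coordinates'' must, after accounting for the local check $H_B$, be consistent with $c^1(e_{*0})$ lying in the code via the definition of $\delta^1$.

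Concretely, I would argue as follows. If $c^1(e_{*0}) = 0$ the right-hand side of~\eqref{eq:ineq2-from-distance} is $0$ and there is nothing to prove, so assume $c^1(e_{*0}) \neq 0$, i.e.\ $\norm{c^1(e_{*0})}_E = 1$. The goal is then to show that the $\Delta$-indexed vector $w \in \F2^\Delta$ (indexed by $b \in B$), whose $b$-th coordinate records whether face $(g,ag,gb,agb)$ or one of the three neighboring edges $e_{*1}, e_{0*}, e_{1*}$ associated to that $b$ carries a nonzero bit, is such that the support of $w$ has size at least $d_1$. To see this, note that the local relation defining $\delta^1 = \partial_2^T$ around the edge $e_{*0}$ says precisely that the ``view'' from $e_{*0}$ — the combination of $c^2(F(e_{*0}))$ (which via $H_B$ encodes $c^1(e_{*0})$) corrected by the boundary contributions coming from the neighboring edges — is a codeword-type constraint: when $c^1(e_{*0}) \neq 0$, the corresponding word in $\F2^\Delta$ is a nonzero element of $C_B^\perp$ (up to the adjustments from the neighbors), hence has weight $\geq d_1$ by the distance hypothesis on $C_B^\perp$. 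Each coordinate $b$ in the support is then ``charged'' to at least one of: a nonzero face in $F(e_{*0})$, or a nonzero edge in $E_{*1}(e_{*0})$, or in $E_{0*}(e_{*0})$, or in $E_{1*}(e_{*0})$. Summing over $b$ and using that each such geometric object is counted on the left-hand side of~\eqref{eq:ineq2-from-distance} gives the bound; here one uses that $\norm{c^2(F(e_{*0}))}_F$ counts faces and $\norm{c^1(E_{*1}(e_{*0}))}_E$ etc.\ count edges, and a single coordinate $b$ touches exactly one face and one edge in each of the three edge-classes, so no over-counting issue helps the wrong way — we only need the left side to dominate $|\supp w| \geq d_1$.

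The main obstacle I anticipate is bookkeeping the exact identity between ``the local view from $e_{*0}$ under $\delta^1$'' and ``a word in $C_B^\perp$''. One has to carefully unwind the definition $\delta^1 = \partial_2^T$, track which of the four Tanner maps contributes to an $E_{*0}$-edge (it is $\tannercode(\cG(E^|, F), H_B)$ via transposition, together with a term from $\partial_1^T$ involving $H_B^\perp$ acting on the vertex side), and verify that the parity-check structure of $C_B^\perp$ is exactly what forces the weight-$d_1$ lower bound. A clean way to organize this is to isolate the local exact sequence around the edge, analogous to the local exact chain complex around a vertex discussed in the proof overview, and observe that $c^1(e_{*0})$ being nonzero forces the relevant projection to be a nonzero dual codeword. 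Once that identification is made, the distance bound is immediate and the rest is a union-bound-style accounting of the support across faces and the three edge neighborhoods. I would then note that the symmetric claim for $E_{*1}, E_{0*}, E_{1*}$ edges (and the analogous statement using $C_A^\perp$) follows by the same argument with the roles of $A$ and $B$, horizontal and vertical, exchanged.
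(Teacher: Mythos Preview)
Your approach is the same as the paper's, and the accounting in your final paragraph is correct. However, your intermediate object $w$ is misidentified: as you define it (an indicator recording whether \emph{any} of the face or three neighbor edges at coordinate $b$ is nonzero), $w$ is not an element of $C_B^\perp$, so you cannot invoke the distance bound on it directly. The correct codeword is $s^2(e_{*0}) \coloneqq H_B^T c^1(e_{*0})$, the contribution of $c^1(e_{*0})$ alone to $c^2$ on $F(e_{*0})$ under $\delta^1 = \partial_2^T$; this lies in $\Ima H_B^T = C_B^\perp$ and is nonzero whenever $c^1(e_{*0}) \neq 0$ (since $H_B^T$ is injective), giving $\norm{s^2(e_{*0})}_F \ge d_1$. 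The paper then writes $c^2$ restricted to $F(e_{*0})$ as $s^2(e_{*0})$ plus the contributions $s^2(e')$ from the three neighbor classes, and each of the $\ge d_1$ nonzero faces of $s^2(e_{*0})$ either survives in $c^2$ or is cancelled by some $s^2(e')$ with $c^1(e') \neq 0$; the bijection between faces in $F(e_{*0})$ and edges in each neighbor class gives the inequality with no over-counting. Finally, your anticipated obstacle involving $\partial_1^T$ and $H_B^\perp$ is a red herring: $\delta^1 = \partial_2^T$ involves only $H_A^T$ and $H_B^T$, and the identification of the dual codeword is immediate once you write down $H_B^T c^1(e_{*0})$.
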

	
	\begin{proof}
	The distance property of $C_B^\perp$ immediately implies that 
	 \begin{equation} \label{eq:ineq-from-distance}
    \norm{s^2(e_{*0})}_F \ge d_1 \norm{c^1(e_{*0})}_E\;.
  \end{equation}
	Recall that 
		 \begin{equation} \label{eq:ineq-from-distance-b}
	c^2 = s^2(E_{*0})+s^2(E_{*1})+s^2(E_{0*})+s^2(E_{1*})\;.
	\end{equation}
	Thus each non-zero entry $f=(g, ag, gb, agb)$ of $s^2(e_{*0})$, where $e_{*0}=(g,ag)$, is either a non-zero entry in $c^2$ or is canceled by a term in $s^2(E_{*1})$, $s^2(E_{0*})$, or $s^2(E_{1*})$ that contributes to the entry at $f$.
  Such a term, say $s^2(e_{*1}) \ne 0$, must have its edge $e_{*1}$ incident to the face $f$ which is incident to $e_{*0}$, i.e. $e_{*1} \in E_{*1}(e_{*0})$.
  Therefore, each cancellation contributes to $\norm{s^2(E_{*1}(e_{*0}))}_E$, $\norm{s^2(E_{0*}(e_{*0}))}_E$, or $\norm{s^2(E_{1*}(e_{*0}))}_E$.
  Notice that $s^2(e_{*1}) \ne 0 \iff c^1(e_{*1}) \ne 0$. Thus from~\eqref{eq:ineq-from-distance-b} we get that
\[	\norm{c^2(F(e_{*0}))}_F + \norm{c^1(E_{*1}(e_{*0}))}_E + \norm{c^1(E_{0*}(e_{*0}))}_E + \norm{c^1(E_{1*}(e_{*0}))}_E \ge\norm{c^1(E_{*0}(e_{*0}))}_E\;.\]
Combined with~\eqref{eq:ineq-from-distance}, this shows the claim.  
	\end{proof}
	
	The second claim uses the robustness property of $\Sigma(C_A^T,C_B^T)$. 
	
	\begin{claim}\label{claim:65}
	For any vertex $v_{00}\in V_{00}$ it holds that
  \begin{equation} \label{eq:ineq2-from-robust}
    \norm{c^2(F(v_{00}))}_F + \norm{c^1(E_{*1}(v_{00}))}_E + \norm{c^1(E_{1*}(v_{00}))}_E \ge d_2 (\norm{c^1(E_{*0}(v_{00}))}_E + \norm{c^1(E_{0*}(v_{00}))}_E)\;.
  \end{equation}
	Similarly, for any vertex $v_{10}\in V_{10}$ it holds that
	  \begin{equation} \label{eq:ineq2b-from-robust}
    \norm{c^2(F(v_{10}))}_F + \norm{c^1(E_{*1}(v_{10}))}_E + \norm{c^1(E_{0*}(v_{10}))}_E \ge d_2 (\norm{c^1(E_{*0}(v_{10}))}_E + \norm{c^1(E_{1*}(v_{10}))}_E)\;.
  \end{equation}
	\end{claim}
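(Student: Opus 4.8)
The plan is to run, at the level of the $\Delta\times\Delta$ array of faces $F(v_{00})$, the argument that \Cref{claim:64} runs along a single edge, with the distance of $C_B^\perp$ replaced by the robustness of $(C_A^\perp, C_B^\perp)$. Index $F(v_{00})$ by pairs $(a,b)\in A\times B$, placing the face $(g,ag,gb,agb)$ at position $(a,b)$; its four bordering edges are $e_{*0}^a=(g,ag)\in E_{*0}(v_{00})$, $e_{0*}^b=(g,gb)\in E_{0*}(v_{00})$, $e_{*1}^{a,b}=(gb,agb)\in E_{*1}(v_{00})$, and $e_{1*}^{a,b}=(ag,agb)\in E_{1*}(v_{00})$. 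Restricting the identity $c^2=s^2(E_{*0})+s^2(E_{*1})+s^2(E_{0*})+s^2(E_{1*})$ of~\eqref{eq:ineq-from-distance-b} to the faces of $F(v_{00})$ therefore expresses the $\Delta\times\Delta$ matrix $c^2(F(v_{00}))$ as $M_a+M_b+M'_a+M'_b$, where row $a$ of $M_a$ equals $s^2(e_{*0}^a)\in C_B^\perp$, column $b$ of $M_b$ equals $s^2(e_{0*}^b)\in C_A^\perp$, and $M'_a$ (resp.\ $M'_b$) gathers the contributions coming from $E_{*1}(v_{00})$ (resp.\ $E_{1*}(v_{00})$). Since each edge of $E_{*1}(v_{00})$, resp.\ of $E_{1*}(v_{00})$, borders exactly one face of $F(v_{00})$, we get $\norm{M'_a}_{[\Delta]\times[\Delta]}\le\norm{c^1(E_{*1}(v_{00}))}_E$ and $\norm{M'_b}_{[\Delta]\times[\Delta]}\le\norm{c^1(E_{1*}(v_{00}))}_E$; and since $H_A^T$ and $H_B^T$ are injective, the number of nonzero rows of $M_a$ equals $\norm{c^1(E_{*0}(v_{00}))}_E$ and the number of nonzero columns of $M_b$ equals $\norm{c^1(E_{0*}(v_{00}))}_E$.

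Set $M\coloneqq M_a+M_b=c^2(F(v_{00}))+M'_a+M'_b$, which lies in $\tensorcode(C_A^\perp,C_B^\perp)$. The triangle inequality and the two bounds above give
\[ \norm{M}_{[\Delta]\times[\Delta]}\,\le\,\norm{c^2(F(v_{00}))}_F+\norm{c^1(E_{*1}(v_{00}))}_E+\norm{c^1(E_{1*}(v_{00}))}_E\;, \]
which is the left-hand side of~\eqref{eq:ineq2-from-robust}. For the lower bound, robustness of $(C_A^\perp,C_B^\perp)$ furnishes a decomposition $M=\tilde M_a+\tilde M_b$ with rows of $\tilde M_a$ in $C_B^\perp$ and columns of $\tilde M_b$ in $C_A^\perp$ such that
\[ \norm{M}_{[\Delta]\times[\Delta]}\,\ge\,d_2\big(\#\{\text{nonzero rows of }\tilde M_a\}+\#\{\text{nonzero columns of }\tilde M_b\}\big)\;. \]
Hence it suffices to show that $\#\{\text{nonzero rows of }\tilde M_a\}+\#\{\text{nonzero columns of }\tilde M_b\}\ge\norm{c^1(E_{*0}(v_{00}))}_E+\norm{c^1(E_{0*}(v_{00}))}_E$ for \emph{every} such decomposition, i.e.\ that the decomposition $M=M_a+M_b$ read off from $c^1$ is already of minimum weight. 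This is exactly the point at which co-local minimality of $c^1$ is invoked.

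To establish minimality, note that for any competing decomposition the difference $D\coloneqq M_a-\tilde M_a=\tilde M_b-M_b$ has all its rows in $C_B^\perp$ and all its columns in $C_A^\perp$, hence lies in $C_A^\perp\otimes C_B^\perp$. Under the isomorphism $\F2^{m_a\times m_b}\cong C_A^\perp\otimes C_B^\perp$ induced by $H_A^T$ and $H_B^T$, pick $w$ mapping to $D$ and let $e^0\in(\F2^{m_a\times m_b})^V$ be supported on $v_{00}$ with value $w$. Unwinding the local form of $\delta^0$ at a vertex shows that $c^1+\delta^0 e^0$ (i) agrees with $c^1$ outside $E(v_{00})=E_{*0}(v_{00})\cup E_{0*}(v_{00})$, (ii) satisfies $\delta^1(c^1+\delta^0 e^0)=c^2$ because $\delta^1\delta^0=0$, and (iii) produces, in the way $M_a,M_b$ were produced from $c^1$, precisely the matrices $\tilde M_a,\tilde M_b$, so that $\norm{(c^1+\delta^0 e^0)(E_{*0}(v_{00}))}_E+\norm{(c^1+\delta^0 e^0)(E_{0*}(v_{00}))}_E=\#\{\text{nonzero rows of }\tilde M_a\}+\#\{\text{nonzero columns of }\tilde M_b\}$. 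Cancelling the coordinates outside $E(v_{00})$ in the co-local-minimality inequality $\norm{c^1}_E\le\norm{c^1+\delta^0 e^0}_E$ then yields exactly $\norm{c^1(E_{*0}(v_{00}))}_E+\norm{c^1(E_{0*}(v_{00}))}_E\le\#\{\text{nonzero rows of }\tilde M_a\}+\#\{\text{nonzero columns of }\tilde M_b\}$, as needed. Chaining the three displayed inequalities gives~\eqref{eq:ineq2-from-robust}; inequality~\eqref{eq:ineq2b-from-robust} follows from the same argument applied at $v_{10}$, with the neighborhoods $E_{*0}(v_{10}),E_{1*}(v_{10})$ playing the role that $E_{*0}(v_{00}),E_{0*}(v_{00})$ played above.

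The step I expect to be the main obstacle is the bookkeeping in the last paragraph: one must pin down the local action of $\delta^0$ — that the coboundary of a single vertex carrying a value $w$ adds to \emph{both} $M_a$ and $M_b$ the same matrix (the image of $w$ in $C_A^\perp\otimes C_B^\perp$) and affects no other coordinate — and one must verify the isomorphism $\F2^{m_a\times m_b}\cong C_A^\perp\otimes C_B^\perp$ together with the standard characterization of $C_A^\perp\otimes C_B^\perp$ as the set of $\Delta\times\Delta$ matrices with every row in $C_B^\perp$ and every column in $C_A^\perp$. Everything else is the triangle inequality and a short chain of inequalities.
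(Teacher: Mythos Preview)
Your proof is correct and follows exactly the approach the paper has in mind: the paper's own proof of \Cref{claim:65} is the single sentence ``similar to the previous claim, using the robustness property of $(C_A^\perp,C_B^\perp)$ instead of the distance property of $C_B$,'' and you have faithfully unpacked this, including the one genuinely new ingredient relative to \Cref{claim:64}---the use of co-local minimality of $c^1$ to force the decomposition $M=M_a+M_b$ read off from $c^1$ to be the minimum-weight one. Your identification of the local action of $\delta^0$ (that flipping $c^1$ by $\delta^0 e^0$ at $v_{00}$ shifts both $M_a$ and $M_b$ by the same element of $C_A^\perp\otimes C_B^\perp$ and touches nothing else, since edges in $E_{*1}(v_{00})\cup E_{1*}(v_{00})$ are not incident to $v_{00}$) and of the isomorphism $\F2^{m_a\times m_b}\cong C_A^\perp\otimes C_B^\perp$ via $H_A^T\otimes H_B^T$ are both correct.
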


\begin{proof}
The proof is similar to the previous claim, using the robustness property of $(C_A^\perp,C_B^\perp)$ instead of the distance property of $C_B$. 
\end{proof}

Let $e_{*0} \in E_{*0}$ have endpoints $v_{00} \in V_{00}$ and $v_{10} \in V_{10}$. Using the definition of $M=d_2M_1+M_0$,
  \begin{align*}
    & 1_\cE^T M 1_{e_{*0}} \\
    &= 1_\cE^T \big( d_2 M_1 + M_0 \big) 1_{e_{*0}} \\
    &= d_2 \norm{c^1(E_{*1}(e_{*0}))}_E + \norm{c^1(E_{*1}(v_{00}))}_E + \norm{c^1(E_{1*}(v_{00}))}_E + \norm{c^1(E_{*1}(v_{10}))}_E + \norm{c^1(E_{0*}(v_{10}))}_E \\
    &\ge d_2 \norm{c^1(E_{*1}(e_{*0}))}_E + d_2 \norm{c^1(E_{*0}(v_{00}))}_E + d_2 \norm{c^1(E_{0*}(v_{00}))}_E + d_2 \norm{c^1(E_{*0}(v_{10}))}_E + d_2 \norm{c^1(E_{1*}(v_{10}))}_E \\
    &- |c^2(F(v_{00}))| - |c^2(F(v_{10}))| \\
    &\ge d_2 \norm{c^1(E_{*1}(e_{*0}))}_E + d_2 \norm{c^1(E_{0*}(v_{00}))}_E + d_2 \norm{c^1(E_{1*}(v_{10}))}_E - \norm{c^2(F(v_{00}))}_F - \norm{c^2(F(v_{10}))}_F \\
    &= d_2 \norm{c^1(E_{*1}(e_{*0}))}_E + d_2 \norm{c^1(E_{0*}(e_{*0}))}_E + d_2 \norm{c^1(E_{1*}(e_{*0}))}_E - \norm{c^2(F(v_{00}))}_F - \norm{c^2(F(v_{10}))}_F \\
    &\ge d_1 d_2 \norm{c^1(e_{*0})}_E - d_2 \norm{c^2(F(e_{*0}))}_F - \norm{c^2(F(v_{00}))}_F - \norm{c^2(F(v_{10}))}_F\;.
  \end{align*}
Here, the first inequality uses~\eqref{eq:ineq2-from-robust} and~\eqref{eq:ineq2b-from-robust}, the second inequality drops non-negative terms, and the last inequality follows from~\eqref{eq:ineq2-from-distance}.
Summing over all edges $e_{*0}$ and analogous inequalities shown for edges $e_{*1}$, $e_{0*}$ and $e_{1*}$ we obtain
  \begin{equation}\label{eq:dist1-ub}
    1_\cE^T M 1_\cE \ge d_1 d_2 \norm{c^1}_E - 4 d_2 \norm{c^2}_F - 8 \Delta \norm{c^2}_F\;,
  \end{equation}
  where the factor of $4$ is because each face is counted $4$ times by the $4$ edges incident to the face,
  and the factor of $8 \Delta$ because each vertex is summed over $2 \Delta$ times by the $2 \Delta$ edges incident to the vertex and each face is incident to $4$ vertices.

  \paragraph{Step 4: Combine the upper and lower bounds.}
  Combining~\eqref{eq:dist1-lb} and~\eqref{eq:dist1-ub},
  \begin{equation*}
    d_1 d_2 \norm{c^1}_E - (4 d_2 + 8 \Delta) \norm{c^2}_F \le   1_\cE^T M 1_\cE
    \le \lambda (d_2 + 8 \Delta) \norm{c^1}_E + \frac{\Delta}{|G|}(\frac{d_2}{2}+2) \norm{c^1}_E^2\;,
  \end{equation*}
  which implies
  \begin{equation*}
    \norm{c^2}_F \ge \frac{d_1 d_2 - \lambda d_2 - 8 \lambda \Delta}{4 d_2 + 8 \Delta} \norm{c^1}_E - \frac{\Delta d_2/2 + 2\Delta}{4 d_2 + 8 \Delta} \frac{\norm{c^1}_E^2}{|G|}\;.
  \end{equation*}
  This concludes the proof of~\eqref{eq:co-locally-minimal-expansion}.
\end{proof}

\subsubsection{Expansion and systolic distance}

The second main theorem for this section shows that systolic distance follows from cp-systolic distance. In fact, we will prove a stronger statement which also shows that expansion follows from co-expansion. 

\begin{restatable} [Co-Expansion $\to$ Expansion] {theorem}{CoExpansionToExpansion} \label{thm:co-expansion-implies-expansion}
  If $X(\cG_2, C_A^\perp, C_B^\perp)$ has co-systolic distance $\frac{\eta}{2\Delta(k_a + k_b)}$,
  then $X(\cG_2, C_A, C_B)$ has systolic distance $\frac{\eta}{2\Delta(m_a + m_b)}$.

  Furthermore, if $X(\cG_2, C_A^\perp, C_B^\perp)$ is a $(\frac{\eta}{4\Delta(k_a + k_b)}, \beta, \gamma)$-small-set co-boundary expander, 
  then $X(\cG_2, C_A, C_B)$ is a \\ $(\frac{\eta}{4\Delta(m_a + m_b)}, \frac{1}{\Delta^2 + \Delta + \frac{\Delta^3}{\beta}}, \Delta + \Delta^2 \gamma)$-small-set boundary expander.
\end{restatable}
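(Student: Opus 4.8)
The plan is to prove the stronger ``furthermore'' assertion about small-set boundary expansion; the systolic-distance statement is the special case where $c_1$ is a cycle, in which all error terms below vanish. Fix $c_1\in\F2^{X(1)}$ for $X(\cG_2,C_A,C_B)$ with $\norm{c_1}_E$ below the stated threshold, write $\sigma\coloneqq\partial_1 c_1$, and aim to produce $c_2\in\F2^F$ with $\norm{\partial_1 c_1}\ge\beta'\norm{c_1+\partial_2 c_2}$ and $\norm{c_2}_F\le\gamma'\norm{c_1}_E$; when $c_1\in Z_1$ this forces $c_1=\partial_2 c_2\in B_1$, hence the systolic-distance bound. \emph{Step 1 (local resolution).} For each vertex $v$, the subcomplex of $X(\cG_2,C_A,C_B)$ carried by $F(v)$, $E(v)$ and the single vertex $v$ is, after relabelling $F(v)$ by $A\times B$ and $E(v)$ by $A\sqcup B$, exactly the tensor-code complex $Y(H_A,H_B)$ of~\eqref{lem:exact}, hence exact at its middle term by the K\"unneth formula. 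Writing $\sigma(v)$ for the $v$-component of $\sigma$, fixed linear algebra over $\F2$ gives $r_1(v)$ with $\partial_1^{(v)}r_1(v)=\sigma(v)$ and $\norm{r_1(v)}_E=O_\Delta(\norm{\sigma(v)})$, so $c_1|_{E(v)}+r_1(v)$ is a local cycle and, by exactness, equals $\partial_2^{(v)}s_2(v)$ for some $s_2(v)\in\F2^{F(v)}$. Take $s_2(v)=0$ whenever $c_1|_{E(v)}$ and $\sigma(v)$ both vanish, and more generally choose $s_2(v)$ with as few nonzero rows and columns as possible — a local tensor-code decomposition whose existence with support controlled by $\norm{c_1|_{E(v)}}_E$ uses the robustness of the base codes available in the construction. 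Then $s_2$ is nonzero only at the $O(\norm{c_1}_E)$ vertices meeting $\supp c_1$, with $\norm{s_2}_{\mathrm{Ham}}=O(\Delta\norm{c_1}_E)$.

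\emph{Step 2 (the gluing cochain).} For an edge $e$ with endpoints $v,v'$ define the view difference $t(e)\coloneqq s_2(v)|_{F(e)}+s_2(v')|_{F(e)}\in\F2^{F(e)}\cong\F2^\Delta$. Both local equations carry $c_1(e)$ on the $e$-component, so applying $H_B$ (vertical $e$) or $H_A$ (horizontal $e$) annihilates $t(e)$ up to the contributions of $r_1(v),r_1(v')$; hence $t(e)$ lies in $C_B$ (resp.\ $C_A$) up to error of weight $O(\norm{r_1(v)}+\norm{r_1(v')})$, and via $C_B=\Ima (H_B^\perp)^T$ we may regard $t$, up to this error, as a $1$-cochain of the dual complex $X(\cG_2,C_A^\perp,C_B^\perp)$. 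Moreover on each face $f$ every corner value $s_2(\cdot)(f)$ appears in exactly two of the four differences $t(e)(f)$, $e\ni f$, so $\sum_{e\ni f}t(e)(f)=0$, giving $\norm{\delta^1 t}=O_\Delta(\norm{\sigma})$: thus $t$ is a near-cocycle of $X(\cG_2,C_A^\perp,C_B^\perp)$, exact when $c_1$ is a cycle. Finally $t(e)=0$ unless $e$ meets a vertex with $s_2\ne0$, so by Step 1 $\norm{t}_E=O(\norm{c_1}_E)$, which lies below the co-systolic threshold of the dual complex once $\norm{c_1}_E$ is small.

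\emph{Step 3 (untwist and conclude).} Apply the small-set co-boundary expansion of $X(\cG_2,C_A^\perp,C_B^\perp)$ (for the first claim, its co-systolic distance) to $t$: this yields $t_0\in(\F2^{k_a\times k_b})^V$ with $t+\delta^0 t_0$ of weight $O(\norm{\delta^1 t}/\beta)$ (empty when $c_1$ is a cycle) and $\norm{t_0}_V\le\gamma\norm{t}_E$. Unwinding the dual Tanner maps at the level of the local tensor complexes shows that $\delta^0 t_0$ is, edge by edge, exactly the change produced in the view differences when each $s_2(v)$ is replaced by $\tilde s_2(v)\coloneqq s_2(v)+w(v)$ with $w(v)\coloneqq\big((H_A^\perp)^T\otimes(H_B^\perp)^T\big)t_0(v)\in C_A\otimes C_B=\Ker\partial_2^{(v)}$; this leaves every local equation unchanged but makes the four corner views agree except on the $O(\Delta\norm{t+\delta^0 t_0}_E)$ faces meeting $\supp(t+\delta^0 t_0)$. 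Set $c_2(f)$ to be the common corner value where it exists and $0$ otherwise. Then $(\partial_2 c_2)(e)=(\partial_2^{(v)}\tilde s_2(v))(e)=c_1(e)+r_1(v)(e)$ off the still-conflicted faces, so $\norm{c_1+\partial_2 c_2}_E=O_\Delta(\norm{\delta^1 t}/\beta+\norm{\sigma})=O_\Delta(\norm{\sigma})$, and $\norm{c_2}_F\le\norm{s_2}_{\mathrm{Ham}}+\norm{w}_{\mathrm{Ham}}\le O(\Delta\norm{c_1}_E)+\Delta^2\norm{t_0}_V\le(\Delta+\Delta^2\gamma)\,O(\norm{c_1}_E)$ — matching $\beta'$ and $\gamma'$ after tracking constants. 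When $c_1$ is a cycle all $r_1(v)$ and $t+\delta^0 t_0$ vanish, $c_2$ is defined on every face, $\partial_2 c_2=c_1$, and $c_1\in B_1$.

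\emph{Main obstacle.} The delicate points are, first, the identification in Step 3 that the image of the dual co-boundary operator is exactly the space of realisable per-vertex modifications of the local resolutions by $\Ker\partial_2^{(v)}=C_A\otimes C_B$, together with the passage from pairwise (edge-local) agreement to genuine (face-global) agreement of the corner views; and second, the bookkeeping of the $\Delta$-factors — arranging the local resolutions $s_2(v)$ to have support controlled by the local data so that $\norm{t}_E=O(\norm{c_1}_E)$ and $\norm{c_2}_F=O(\Delta\norm{c_1}_E)$, and choosing the input threshold small enough that $\norm{t}_E$ stays within the co-systolic range of the dual complex — which is what forces the factors $\Delta,\Delta^2,\Delta^3$ in the stated $\beta'$ and $\gamma'$.
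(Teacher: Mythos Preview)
Your outline is essentially the paper's approach: local preimages $s_2(v)$, edge-wise view differences $t$, reinterpret $t$ as a $1$-cochain of the dual complex $X(\cG_2,C_A^\perp,C_B^\perp)$, apply the co-expansion hypothesis to write $t$ as (approximately) a coboundary $\delta^0 t_0$, then correct each $s_2(v)$ by $w(v)\in C_A\otimes C_B$ and read off $c_2$.

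One organisational choice creates a gap you have not closed. By introducing $r_1(v)$ to force $c_1|_{E(v)}+r_1(v)$ into $\Ima\partial_2^{(v)}$, you get $H_B\,t(e)=r_1(v)(e)+r_1(v')(e)$ on a vertical edge $e$, so $t(e)\notin C_B$ whenever the syndrome is nonzero nearby. Your phrase ``regard $t$, up to this error, as a $1$-cochain of the dual complex'' hides a real step: one must split $t(e)=t_C(e)+t_E(e)$ with $t_C(e)\in C_B$, define $c^1$ from $t_C$ only, and then use the telescoping identity $\sum_{e\ni f}t(e)(f)=0$ to bound $\norm{\delta^1 c^1}_F=\norm{\sum_{e\ni f}t_E(e)(f)}_F=O(\Delta^2\norm{\sigma}_V)$. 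The paper sidesteps this by \emph{not} introducing $r_1$: it defines $s_2(v,e)$ edge-by-edge so that $H_B^\leftarrow s_2(v_{00},e_{*0})=c_1(e_{*0})$ exactly (only surjectivity of $H_B$ is needed), hence $t_2(e_{*0})\in C_B$ exactly. The syndrome enters instead because the vertical-edge view $s_2(v_{00},E_{*0}(v_{00}))$ and the horizontal-edge view $s_2(v_{00},E_{0*}(v_{00}))$ at the same vertex need not agree as elements of $\F2^{F(v_{00})}$; \emph{that} discrepancy, bounded by $\Delta^2\norm{c_0(v_{00})}_V$, is what becomes $\norm{c^2}_F$.

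Two smaller remarks. Your appeal to ``robustness of the base codes'' in Step~1 is unnecessary here: the paper simply sets $s_2(v_{00},e_{*0})=0$ whenever $c_1(e_{*0})=0$, giving $\norm{s_2}\le\Delta\norm{c_1}_E$ directly. And rather than ``set $c_2(f)$ to the common corner value where it exists and $0$ otherwise,'' the paper fixes one corner throughout, $c_2\coloneqq s_2(V_{00},E_{*0})+w_2(V_{00})$, and then bounds each of $\norm{c_1'(E_{*0})}_E$, $\norm{c_1'(E_{0*})}_E$, $\norm{c_1'(E_{*1})}_E$, $\norm{c_1'(E_{1*})}_E$ by explicit telescoping through the identity $t_2(E_{*0})=w_2(V_{00})+u_2(E_{*0})+w_2(V_{10})$; this is what yields the exact constants $\Delta^2+\Delta+\Delta^3/\beta$ and $\Delta+\Delta^2\gamma$ rather than $O_\Delta(\cdot)$ bounds.
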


To use the second part of this theorem we need to know that $X$ is a small-set co-boundary expander. This will be shown in Corollary~\ref{cor:co-expansion}. 

We now have all the ingredients to state the property of linear distance for our quantum code. 

\begin{corollary}\label{cor:distance}
Assume that $\Cay(G, A), \Cay(G, B)$ are $\lambda = \Theta(\sqrt{\Delta})$-spectral expanders and that $C_A, C_B$ have distance $d_1, d_2 = \Theta(\Delta)$. Then the quantum code $\cC$ has linear distance. 
\end{corollary}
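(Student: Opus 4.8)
The plan is to use the equivalence, recalled in Section~\ref{sec:expansion-properties}, between the quantum code $\cC$ having linear $X$- and $Z$-distance and the chain complex $X = X(\cG_2, C_A, C_B)$ having constant co-systolic and systolic distance, and to establish each of these two quantities separately. Both reductions are already in hand: constant co-systolic distance of $X$ is exactly the conclusion of Corollary~\ref{cor:co-distance}, and constant systolic distance of $X$ follows from Theorem~\ref{thm:co-expansion-implies-expansion} combined with a second application of Corollary~\ref{cor:co-distance}, this time to the dual complex $X(\cG_2, C_A^\perp, C_B^\perp)$. Thus the only thing left to verify is that the numerical hypothesis $d_1 d_2 - \lambda d_2 - 8\lambda\Delta > 0$ of these statements holds, and that the resulting distance bounds are bounded below by a positive constant.

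First I would instantiate the parameters. Using Corollary~\ref{cor:robust-code-exist}, fix a pair $(C_A, C_B)$ of length $\Delta$ for which all four of $C_A, C_B, C_A^\perp, C_B^\perp$ have distance $d_1 = \delta_1 \Delta$ and both $(C_A, C_B)$ and $(C_A^\perp, C_B^\perp)$ are $d_2 = \delta_2 \Delta$-robust; and take $\Cay(G,A), \Cay(G,B)$ to be Ramanujan, so $\lambda \le 2\sqrt{\Delta - 1}$. Then $d_1 d_2 = \delta_1 \delta_2 \Delta^2$, while $\lambda d_2 + 8\lambda\Delta = O(\Delta^{3/2})$, so $d_1 d_2 - \lambda d_2 - 8\lambda\Delta > 0$ for every $\Delta$ larger than some absolute threshold $\Delta_0(\delta_1, \delta_2)$; fix such a $\Delta$. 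The explicit Ramanujan construction of Section~\ref{sec:left-right-cayley-complexes} still supplies an infinite family of groups $G$ for this fixed value of $\Delta$, so the family of codes remains infinite and explicit. With $\Delta$ fixed, the quantity $\eta = \frac{d_1 d_2 - \lambda d_2 - 8\lambda\Delta}{\Delta d_2/2 + 2\Delta}$ appearing in Corollary~\ref{cor:co-distance} is a positive constant, since its numerator and denominator are both $\Theta(\Delta^2)$; and $m_a + m_b = (2-\rho_a-\rho_b)\Delta$ and $k_a + k_b = (\rho_a+\rho_b)\Delta$ are constants as well.

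Now I would conclude in two steps. For the $X$-distance: apply Corollary~\ref{cor:co-distance} directly to $X = X(\cG_2, C_A, C_B)$; its hypothesis is precisely the distance and robustness of $(C_A^\perp, C_B^\perp)$ established above, so $X$ has co-systolic distance at least $\eta/(2\Delta(m_a+m_b)) = \Omega(1)$, i.e.\ $\cC$ has linear $X$-distance. For the $Z$-distance: apply Corollary~\ref{cor:co-distance} to the complex $X(\cG_2, C_A^\perp, C_B^\perp)$, whose hypothesis is the distance and robustness of the dual pair $((C_A^\perp)^\perp, (C_B^\perp)^\perp) = (C_A, C_B)$, also established above; this shows $X(\cG_2, C_A^\perp, C_B^\perp)$ has co-systolic distance at least $\eta/(2\Delta(k_a+k_b))$. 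Feeding this into Theorem~\ref{thm:co-expansion-implies-expansion} gives that $X(\cG_2, C_A, C_B)$ has systolic distance at least $\eta/(2\Delta(m_a+m_b)) = \Omega(1)$, i.e.\ $\cC$ has linear $Z$-distance. Combining the two bounds, $\cC$ has linear distance, which is the claim.

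\textbf{On the main difficulty.} Because essentially all of the work resides in Corollary~\ref{cor:co-distance} and Theorem~\ref{thm:co-expansion-implies-expansion}, there is no genuine obstacle here; the points that require care are purely bookkeeping. The subtlest is that the co-expansion-to-expansion reduction consumes the co-systolic distance of the complex built from the \emph{dual} codes, so Corollary~\ref{cor:co-distance} must be applied twice — once to $X(\cG_2, C_A, C_B)$ and once to $X(\cG_2, C_A^\perp, C_B^\perp)$ — and one must not confuse the two complexes (in particular, the roles of $m_a, m_b$ and $k_a, k_b$). This is exactly why one invokes Corollary~\ref{cor:robust-code-exist} rather than just Theorem~\ref{thm:random-tensor-codes-are-robust}: it guarantees the dual pair is as good as the primal pair. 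The remaining care is in the scaling: $d_1 d_2 = \Theta(\Delta^2)$ must beat the $O(\Delta^{3/2})$ error terms, forcing $\Delta$ to be a large constant, while all the $1/\mathrm{poly}(\Delta)$ prefactors in the distance bounds are then absolute constants along the infinite family.
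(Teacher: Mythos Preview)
Your proposal is correct and follows essentially the same approach as the paper: verify the numerical hypothesis $d_1 d_2 - \lambda d_2 - 8\lambda\Delta > 0$ from the scalings, apply the co-expansion result (Corollary~\ref{cor:co-distance}) to both $X(\cG_2, C_A, C_B)$ and $X(\cG_2, C_A^\perp, C_B^\perp)$, and then invoke Theorem~\ref{thm:co-expansion-implies-expansion} to convert the latter's co-systolic distance into systolic distance for the former. Your bookkeeping is in fact more careful than the paper's proof, which cites \Cref{thm:co-expansion} directly rather than its corollary and does not spell out the $m_a+m_b$ versus $k_a+k_b$ distinction.
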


\begin{proof}
The assumptions made in the corollary imply that for large enough $\Delta$, $d_1 d_2 - \lambda d_2 - 8 \lambda \Delta >0$.
  By \Cref{thm:co-expansion}, $X(\cG_2, C_A, C_B)$ and $X(\cG_2, C_A^\perp, C_B^\perp)$ have linear co-systolic distance.
  By \Cref{thm:co-expansion-implies-expansion}, $X(\cG_2, C_A, C_B)$ has linear systolic distance.
  Therefore, $\cC$ has distance $\frac{d_1 d_2 - \lambda d_2 - 8 \lambda \Delta}{\Delta^2 (m_a + m_b) (d_2 + 4)} n$.
\end{proof}

Now, we move on to the proof of the second theorem.

\begin{proof} [Proof of \Cref{thm:co-expansion-implies-expansion}]
We start with an introductory discussion in which we focus on the simpler case of systolic distance, for which we have $\partial_1 c_1 = 0$. For the proof we will show the general case of expansion. 

  Recall that systolic distance is saying that given $c_1$ with small weight there exists $c_2$ such that $\partial_2 c_2 = c_1$.
  So the task is to find $c_2$.
  We will do so by first making local guesses of $c_2$ around each vertex.
  For each vertex $v_{00}$ we focus on the local exact chain complex around $v_{00}$,
  \begin{equation*}
    (\F2)^{F(v_{00})} \xrightarrow{\partial_2} (\F2^{m_b})^{E_{*0}(v_{00})} \oplus (\F2^{m_a})^{E_{0*}(v_{00})} \xrightarrow{\partial_1} \F2^{m_a \times m_b}.
  \end{equation*}
  Note that this is isomorphic to the exact chain complex $Y(H_A, H_B) \colon \F2^{n_a \times n_b} \xrightarrow{\partial_2} \F2^{n_a \times m_b + m_a \times n_b} \xrightarrow{\partial_1} \F2^{m_a \times m_b}$ defined in~\eqref{lem:exact}.
  Because $\partial_2 c_2 = c_1$, when restricted on this local region we have $\partial_2 c_2(F(v_{00})) = c_1(E_{*0}(v_{00}) \cup E_{0*}(v_{00}))$.
  At this moment we do not know what $c_2(F(v_{00}))$ is, but we can always write $c_2(F(v_{00})) = s_2(v_{00}) + w_2(v_{00})$ where $\partial_2 s_2(v_{00}) = c_1(E_{*0}(v_{00}) \cup E_{0*}(v_{00}))$ and $\partial_2 w_2(v_{00}) = 0$.
  We now fix an arbitrary $s_2(v_{00})$ then worry about finding $w_2(v_{00})$.

  We know two things about $w_2$.
  First, they are codewords $w_2(v_{00}) \in C_A \otimes C_B$.
  Second, because $s_2(V_{00}) + w_2(V_{00}) = c_2 = s_2(V_{10}) + w_2(V_{10})$,
    the sum is known $w_2(V_{00}) + w_2(V_{10}) = s_2(V_{00}) + s_2(V_{10})$.
    ($s_2(V_{00})$ denotes the concatenation of $s_2(v_{00})$.)
  Turns out this task can be interpreted as finding $c^0$ (related to $w_2$) given $c^1 = \delta^0 c^0$ (related to $w_2(V_{00}) + w_2(V_{10})$).
  Therefore, using the hypothesis on co-expansion, we can recover $w_2$ and find $c_2$.

  We now comment more on what $w_2$ is doing.
  A useful interpretation is that the task of finding $w_2$ can be interpreted as finding the ``corrections'' that make $s_2$ consistent.
  Notice that $s_2$ is somewhat like $c_2$ with the same value after $\partial_2$, $\partial_2 s_2(v_{00}) = \partial_2 c_2(F(v_{00}))$, except that $s_2$ is not defined on $F$.
  This is because the value of a face is dependent on which vertex it views from.
  That is each of $s_2(V_{00})$, $s_2(V_{10})$, $s_2(V_{01})$, $s_2(V_{11})$ defines a bit string on $F$, but they are not guaranteed to be consistent.
  In this view point, $w_2$ is the ``correction'' that makes $s_2$ consistent through $s_2(V_{00}) + w_2(V_{00}) = c_2 = s_2(V_{10}) + w_2(V_{10})$.

  We now describe the detailed construction in steps. Remember we are now showing boundary expansion and we let $c_0 = \partial_1 c_1$.

  \paragraph{Step 1: Construct $s_2$ by guessing from the vertex.}
  Let $VE^-$ and $VE^|$ be the set of vertex, edge pair 
    $VE^- = \{(v, e^-) : v \in e^-, e^- \in E^-\}$ and
    $VE^| = \{(v, e^|) : v \in e^|, e^| \in E^|\}$.
  We find $s_2 \in (\F2^{n_a})^{VE^-} \times (\F2^{n_b})^{VE^|}$ (i.e. for each $(v, e^-)$ we assign a value in $\F2^{n_a}$ and 
    for each $(v, e^|)$ we assign a value in $\F2^{n_b}$)
    such that 
  \begin{equation} \label{eq:map1}
    H_B^\leftarrow s_2(v_{00}, e_{*0}) = c_1(e_{*0}), H_A^\uparrow s_2(v_{00}, e_{0*}) = c_1(e_{0*}).
  \end{equation}
  We further require $s_2$ to satisfy two more conditions.
  When $c_1(e_{*0}) = 0$ we set $s_2(v_{00}, e_{*0}) = 0$
  \begin{equation} \label{eq:approx1-1}
    c_1(e_{*0}) = 0 \implies s_2(v_{00}, e_{*0}) = 0.
  \end{equation}
  When $c_0(v_{00}) = 0$, by exactness of~\eqref{lem:exact},
    we can require $s_2(v_{00}, E_{*0}(v_{00})) = s_2(v_{00}, E_{0*}(v_{00}))$.
  Therefore, $\norm{s_2(v_{00}, E_{*0}(v_{00})) + s_2(v_{00}, E_{0*}(v_{00}))}_F \le \Delta^2 \norm{c_0(v_{00})}_V$, i.e.
  \begin{equation} \label{eq:approx1-2}
    \norm{s_2(V_{00}, E_{*0}) + s_2(V_{00}, E_{0*})}_F \le \Delta^2 \norm{c_0(V_{00})}_V
  \end{equation}
  and $\norm{s_2(v_{00}, E_{*0}(v_{00})) + s_2(v_{00}, E_{0*}(v_{00}))}_{E_{0*}} \le \Delta \norm{c_0(v_{00})}_V$, i.e.
  \begin{equation} \label{eq:approx1-3}
    \norm{s_2(V_{00}, E_{*0}) + s_2(V_{00}, E_{0*})}_{E_{0*}} \le \Delta \norm{c_0(V_{00})}_V.
  \end{equation}
  Recall $\norm{s_2(v_{00}, E_{*0}(v_{00})) + s_2(v_{00}, E_{0*}(v_{00}))}_{E_{0*}}$ means we think of $s_2(v_{00}, E_{*0}(v_{00})) + s_2(v_{00}, E_{0*}(v_{00})) \in \F2^{F(v_{00})} \cong (\F2^{n_a})^{E_{0*}(v_{00})}$ as being indexed by $E_{0*}(v_{00})$.
  
  \begin{figure}[H]
    \centering
    \begin{tikzpicture} [scale=1.2, every node/.style={font=\footnotesize}]
      \begin{scope}
        \draw[dashed] (0.25, 0.25) rectangle node{$s_2(v_{00}, E_{0*}(v_{00}))$} ++(2, 2);
        \draw[dashed] (-0.25, -0.25) rectangle node{$s_2(v_{00}, E_{*0}(v_{00}))$} ++(-2, -2);
  
        \draw (0, 0) node[rotate=45]{$\approx$};
  
        \draw[->] (1.25, 2.5) -- (1.25, 2.75);
        \draw[->] (-2.5, -1.25) -- (-2.75, -1.25);

        \draw (0.25, 3) rectangle node{$c_1(v_{00}, E_{0*}(v_{00}))$} ++(2, 1);
        \draw (-3, -0.25) rectangle node[rotate=90]{$c_1(v_{00}, E_{*0}(v_{00}))$} ++(-1, -2);

        \draw (-4, 4) rectangle node{$\approx 0$} ++(1, -1);
        \draw[->] (0, 3.5) -- (-2.75, 3.5);
        \draw[->] (-3.5, 0) -- (-3.5, 2.75);
      \end{scope}
    \end{tikzpicture}
    \caption{Construct $s_2$ using $c_1$ that satisfies \Cref{eq:map1,eq:approx1-1,eq:approx1-2,eq:approx1-3}.}
    \label{fig:expansion-step1}
  \end{figure}
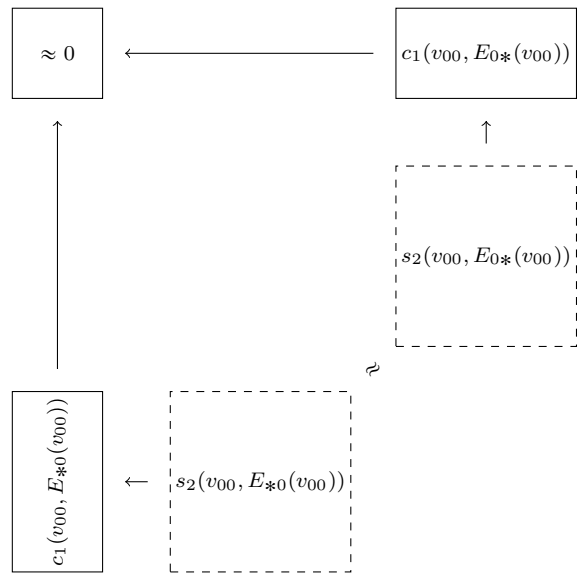

  \paragraph{Step 2: Construct $t_2$ from the sum of $s_2$.}
  $t_2 \in (\F2^{n_a})^{E^-} \times (\F2^{n_b})^{E^|}$ is defined as the ``sum'' of $s_2$ viewed from its two endpoints.
  That is
  \begin{equation*}
    t_2(e_{*0}) = s_2(v_{00}, e_{*0}) + s_2(v_{10}, e_{*0}).
  \end{equation*}
  By \Cref{eq:map1},
  \begin{equation} \label{eq:map2}
    H_B^\leftarrow t_2(e_{*0}) = c_1(e_{*0}) + c_1(e_{*0}) = 0, H_A^\uparrow t_2(e_{0*}) = 0
  \end{equation}
  which implies $t_2(e_{*0}) \in C_B$ and $t_2(e_{*0}) \in C_A$.
  By \Cref{eq:approx1-1} $c_1(e_{*0}) = 0$ implies both $s_2(v_{00}, e_{*0}) = 0$ and $s_2(v_{10}, e_{*0}) = 0$, 
  which implies $t_2(e_{*0}) = 0$.
  So
  \begin{equation} \label{eq:approx2-1}
    \norm{t_2}_E \le \norm{c_1}_E.
  \end{equation}
  By construction,
  \begin{align*}
    t_2(E_{*0}) + t_2(E_{*1}) + t_2(E_{0*}) + t_2(E_{1*})
    &= (s_2(V_{00}, E_{*0}) + s_2(V_{00}, E_{0*})) 
    + (s_2(V_{10}, E_{*0}) + s_2(V_{10}, E_{1*})) \\
    &+ (s_2(V_{01}, E_{*1}) + s_2(V_{01}, E_{0*})) 
    + (s_2(V_{11}, E_{*1}) + s_2(V_{11}, E_{1*})).
  \end{align*}
  Combine with \Cref{eq:approx1-2}, we have 
  \begin{equation} \label{eq:approx2-2}
    \norm{t_2(E_{*0}) + t_2(E_{*1}) + t_2(E_{0*}) + t_2(E_{1*})}_F
    \le \Delta^2 \norm{c_0}_V.
  \end{equation}

  \begin{figure}[H]
    \centering
    \begin{tikzpicture} [scale=1, every node/.style={font=\footnotesize}]
      \begin{scope}
        \draw (0, 0) rectangle node[rotate=90]{$c_1(e_{*0})$} ++(-1, -2);
        \draw (2.75, 0) rectangle node{$s_2(v_{00}, e_{*0})$} ++(-2, -2);

        \draw[->] (0.5, -1) -- (0.25, -1);
      \end{scope}
      \begin{scope} [shift={(0,-2.5)}]
        \draw[dashed] (0, 0) rectangle node[rotate=90]{$0$} ++(-1, -2);
        \draw[dashed] (2.75, 0) rectangle node{$t_2(e_{*0})$} ++(-2, -2);

        \draw[->] (0.5, -1) -- (0.25, -1);
      \end{scope}
      \begin{scope} [shift={(0,-5)}]
        \draw (0, 0) rectangle node[rotate=90]{$c_1(e_{*0})$} ++(-1, -2);
        \draw (2.75, 0) rectangle node{$s_2(v_{10}, e_{*0})$} ++(-2, -2);

        \draw[->] (0.5, -1) -- (0.25, -1);
      \end{scope}
    \end{tikzpicture}
    \caption{Construct $t_2$ using $s_1$ that satisfies \Cref{eq:map2,eq:approx2-1,eq:approx2-2}.}
    \label{fig:expansion-step2}
  \end{figure}
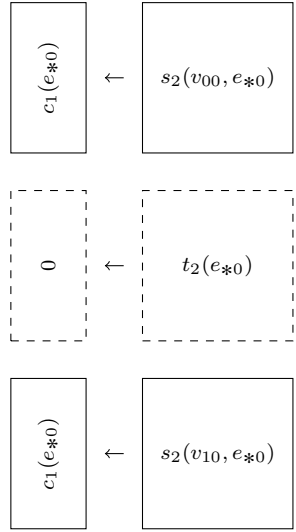

  \paragraph{Step 3: Apply co-expansion assumption.}
  This is the main step where we use the co-expansion assumption and find the ``corrections''.
  
  Let $H_A^\perp\colon \F2^{n_a} \to \F2^{k_a}$, $H_B^\perp\colon \F2^{n_b} \to \F2^{k_b}$ be the parity-check matrices of $C_A^\perp$ and $C_B^\perp$ where $k_a=n_a-m_a$ and $k_b=n_b-m_b$. 
  The following chain complex is the one we will apply the co-expansion assumption
  \begin{equation*}
    X(\cG_2, C_A^\perp, C_B^\perp)\colon \F2^F \xleftarrow{\delta_1} (\F2^{k_a})^{E^-} \oplus (\F2^{k_b})^{E^|} \xleftarrow{\delta_0} (\F2^{k_a \times k_b})^V.
  \end{equation*}
  The overall strategy is to first move to this new chain complex by constructing $c^1$. 
  Then, use co-expansion to write $c^1 = c^1{}' + \delta^0 c^0$.
  Finally, we move back to the original chain complex.

  We first construct $c^1$.
  \Cref{eq:map2} implies that one can find $c^1\in (\F2^{k_a})^{E^-} \times (\F2^{k_b})^{E^|}$ such that 
  \begin{equation*}
    t_2(e_{*0}) = (H_B^\perp)^T c^1(e_{*0}), t_2(e_{0*}) = (H_A^\perp)^T c^1(e_{0*}).
  \end{equation*}

  To apply the hypothesis of co-boundary expansion
    we need to check $c^1$ and $c^2 \coloneqq \delta^1 c^1$ have small weight.
  We need small weight of $c^1$ to satisfy the small set hypothesis
    and we need small weight of $c^2$ for the bound of $\norm{c^1{}'}_E$ to be meaningful.
  We first check $c^1$ has small weight.
  Because $(H_B^\perp)^T$ is injective, $\norm{c^1(e_{*0})}_E \le \norm{t_2(e_{*0})}_E$,
    together with \Cref{eq:approx2-1} implies 
  \begin{equation*}
    \norm{c^1}_E \le \norm{c_1}_E.
  \end{equation*}
  Now, we check $c^2$ has small weight.
  Because $c^2 = t_2(E_{*0}) + t_2(E_{*1}) + t_2(E_{0*}) + t_2(E_{1*})$
  by \Cref{eq:approx2-2} we have
  \begin{equation*}
    \norm{c^2}_F \le \Delta^2 \norm{c_0}_V.
  \end{equation*}

  Given that $c^1$ and $c^2$ have small weight, we can now use the hypothesis of co-boundary expansion and write $c^1 = c^1{}' + \delta^0 c^0$ where $c^1{}' \in (\F2^{k_a})^{E^-} \oplus (\F2^{k_b})^{E^|}$ and $c^0 \in (\F2^{k_a \times k_b})^V$ such that
  \begin{equation*}
    \norm{c^2}_F \ge \beta \norm{c^1{}'}_E
  \end{equation*}
  and
  \begin{equation*}
    \norm{c^0}_V \le \gamma \norm{c^1}_E.
  \end{equation*}

  To move back to the original chain complex, 
  we define $u_2 \in (C_A)^{E^-} \times (C_B)^{E^|}$
    and $w_2 \in (C_A \otimes C_B)^V$
  where
  \begin{equation*}
    u_2(e_{*0}) = (H_B^\perp)^T c^1{}'(e_{*0}), u_2(e_{0*}) = (H_A^\perp)^T c^1{}'(e_{0*})
  \end{equation*}
  and 
  \begin{equation*}
    w_2(v_{00}) = (H_A^\perp \otimes H_B^\perp)^T c^0(v_{00}).
  \end{equation*}

  By construction, $u_2$ and $w_2$ inherit various properties from $c^1{}'$ and $c^0$.
  First, $u_2$ and $w_2$ are codewords
  \begin{equation} \label{eq:map3-1}
    H_B^{\leftarrow} u_2(e_{*0}) = 0, H_A^{\uparrow} u_2(e_{0*}) = 0.
  \end{equation}
  and 
  \begin{equation} \label{eq:map3-2}
    H_B^{\leftarrow} w_2(v_{00}) = 0, H_A^{\uparrow} w_2(v_{00}) = 0.
  \end{equation}
  Second, $u_2$ and $w_2$ have small weights.
  Because $\norm{u_2}_E \le \norm{c^1{}'}_E$ and $\norm{w_2}_V \le \norm{c^0}_V$, we have 
  \begin{equation} \label{eq:approx3-1}
    \norm{c^2}_F \ge \beta \norm{u_2}_E
  \end{equation}
  and
  \begin{equation} \label{eq:approx3-2}
    \norm{w_2}_V \le \gamma \norm{c^1}_E.
  \end{equation}
  Finally, $u_2$ and $w_2$ form a decomposition of $t_2$.
  Because $c^1 = c^1{}' + \delta^0 c^0$,
  \begin{equation} \label{eq:equal3}
    t_2(E_{*0}) = w_2(V_{00}) + u_2(E_{*0}) + w_2(V_{10}).
  \end{equation}
  
  These $u_2$ and $w_2$ can be understood as the ``corrections'' that bridge the inconsistent local guesses, $s_2$.
  And the upper bound implies, one can use a few ``corrections'' to make the local guesses consistent.

  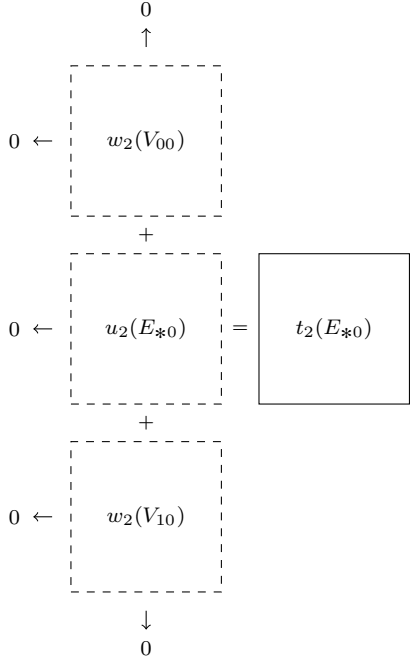
\begin{figure}[H]
    \centering
    \begin{tikzpicture} [scale=1, every node/.style={font=\footnotesize}]
      \begin{scope}
        \draw[dashed] (0, 0) rectangle node{$w_2(V_{00})$} ++(2, -2);
        \draw[->] (-0.25, -1) -- (-0.5, -1);
        \draw (-0.75, -1) node{$0$};
        \draw[->] (1, 0.25) -- (1, 0.5);
        \draw (1, 0.75) node{$0$};

        \draw (1, -2.25) node{$+$};

        \draw[dashed] (0, -2.5) rectangle node{$u_2(E_{*0})$} ++(2, -2);
        \draw[->] (-0.25, -3.5) -- (-0.5, -3.5);
        \draw (-0.75, -3.5) node{$0$};

        \draw (1, -4.75) node{$+$};

        \draw[dashed] (0, -5) rectangle node{$w_2(V_{10})$} ++(2, -2);
        \draw[->] (-0.25, -6) -- (-0.5, -6);
        \draw (-0.75, -6) node{$0$};
        \draw[->] (1, -7.25) -- (1, -7.5);
        \draw (1, -7.75) node{$0$};

        \draw (2.25, -3.5) node{$=$};

        \draw (2.5, -2.5) rectangle node{$t_2(E_{*0})$} ++(2, -2);
      \end{scope}
    \end{tikzpicture}
    \caption{Construct $u_2$ and $w_2$ using the co-expansion assumption. $u_2$ and $w_2$ satisfy \Cref{eq:map3-1,eq:map3-2,eq:approx3-1,eq:approx3-2,eq:equal3}.}
    \label{fig:expansion-step3}
  \end{figure}

  \paragraph{Step 4: Obtain $c_2$.}
  Finally, we set
  \begin{equation*}
    c_2 \coloneqq s_2(V_{00}, E_{*0}) + w_2(V_{00}).
  \end{equation*}
  Here, $s_2(V_{00}, E_{*0})$ can be interpreted as the initial guess and $w_2(V_{00})$ as the ``correction''.
  Because we write $c_1 = c_1' + \partial_2 c_2$, we have $c_1' \coloneqq c_1 + \partial_2 c_2$.
  This concludes the construction.

  To show expansion, we suffice to show $c_1'$ and $c_2$ have small weight. This is a straightforward computation using the fact that the number of ``corrections'' is small.
  Note that when $c_1 \in Z_1$, $c_1'$ is simply $0$.
  This is why this long computation does not appear in \cite{dinur2021locally}.

  \paragraph{Show $c_1'$ is small.}
  The weight of $c_1'$ has four components 
  $\norm{c_1'}_E = \norm{c_1'(E_{*0})}_E + \norm{c_1'(E_{*1})}_E + \norm{c_1'(E_{0*})}_E + \norm{c_1'(E_{1*})}_E$
  and we will bound them individually.
  The general strategy is to first use the identities
  \[c_1(E_{*0}) = H_B^\leftarrow s_2(V_{00}, E_{*0}),\]
  \[c_1(E_{0*}) = H_A^\uparrow s_2(V_{00}, E_{0*}),\]
  \[c_1(E_{*1}) = H_B^\rightarrow s_2(V_{01}, E_{*1}),\]
  \[c_1(E_{1*}) = H_A^\downarrow s_2(V_{10}, E_{1*}),\]
  then express the difference between the $s_2$ in the identities and the $s_2$ in the definition of $c_2 = s_2(V_{00}, E_{*0}) + w_2(V_{00})$ as something of small weight.
  Be aware that some of the arrow directions in $H_A$ and $H_B$ below may be different from what one might expect, for example $H_B^\rightarrow s_2(V_{00}, E_{*0})$.

  \paragraph{Bound $\norm{c_1'(E_{*0})}_E$.}
  \begin{align*}
    c_1'(E_{*0})
    &= c_1(E_{*0}) + H_B^\leftarrow s_2(V_{00}, E_{*0}) + H_B^\leftarrow w_2(V_{00}) \\
    &= H_B^\leftarrow s_2(V_{00}, E_{*0}) + H_B^\leftarrow s_2(V_{00}, E_{*0}) + 0 \\
    &= 0.
  \end{align*}
  So $\norm{c_1'(E_{*0})}_E = 0$.

  \paragraph{Bound $\norm{c_1'(E_{0*})}_E$.}
  \begin{align*}
    c_1'(E_{0*})
    &= c_1(E_{0*}) + H_A^\uparrow s_2(V_{00}, E_{*0}) + H_A^\uparrow w_2(V_{00}) \\
    &= H_A^\uparrow s_2(V_{00}, E_{0*}) + H_A^\uparrow s_2(V_{00}, E_{*0}) + 0 \\
    &= H_A^\uparrow (s_2(V_{00}, E_{0*}) + s_2(V_{00}, E_{*0})).
  \end{align*}
  By \Cref{eq:approx2-2}, $\norm{s_2(V_{00}, E_{0*}) + s_2(V_{00}, E_{*0})}_{E_{0*}} \le \norm{c_0(V_{00})}_V$, so
  \begin{equation*}
    \norm{c_1'(E_{0*})}_E \le \norm{s_2(V_{00}, E_{0*}) + s_2(V_{00}, E_{*0})}_{E_{0*}} \le \Delta \norm{c_0(V_{00})}_V.
  \end{equation*}

  \paragraph{Bound $\norm{c_1'(E_{*1})}_E$.}
  \begin{align*}
    c_1'(E_{*1})
    =& c_1(E_{*1}) + H_B^\rightarrow s_2(V_{00}, E_{*0}) + H_B^\rightarrow w_2(V_{00}) \\
    =& H_B^\rightarrow s_2(V_{01}, E_{*1}) + H_B^\rightarrow s_2(V_{00}, E_{*0}) + H_B^\rightarrow w_2(V_{00}) \\
    =& H_B^\rightarrow (s_2(V_{01}, E_{*1}) + s_2(V_{01}, E_{0*}))
    + H_B^\rightarrow (s_2(V_{01}, E_{0*}) + s_2(V_{00}, E_{0*})) \\
      &+ H_B^\rightarrow (s_2(V_{00}, E_{0*}) + s_2(V_{00}, E_{*0})) 
      + H_B^\rightarrow w_2(V_{00}) \\
    =& H_B^\rightarrow (s_2(V_{01}, E_{*1}) + s_2(V_{01}, E_{0*}))
    + H_B^\rightarrow t_2(E_{0*}) \\
      &+ H_B^\rightarrow (s_2(V_{00}, E_{0*}) + s_2(V_{00}, E_{*0})) 
      + H_B^\rightarrow w_2(V_{00}) \\
    =& H_B^\rightarrow (s_2(V_{01}, E_{*1}) + s_2(V_{01}, E_{0*}))
    + H_B^\rightarrow (w_2(V_{00}) + u_2(E_{0*}) + w_2(V_{01})) \\
      &+ H_B^\rightarrow (s_2(V_{00}, E_{0*}) + s_2(V_{00}, E_{*0})) 
      + H_B^\rightarrow w_2(V_{00}) \\
    =& H_B^\rightarrow (s_2(V_{01}, E_{*1}) + s_2(V_{01}, E_{0*}))
    + H_B^\rightarrow u_2(E_{0*})
    + H_B^\rightarrow (s_2(V_{00}, E_{0*}) + s_2(V_{00}, E_{*0})).
  \end{align*}
  By \Cref{eq:approx2-2} and \Cref{eq:approx2-1}, $\norm{s_2(V_{01}, E_{*1}) + s_2(V_{01}, E_{0*})}_{E_{*1}} \le \Delta \norm{c_0(V_{01})}_V$ and $\norm{s_2(V_{00}, E_{*0}) + s_2(V_{00}, E_{0*})}_F \le \Delta^2 \norm{c_0(V_{00})}_V$, so
  \begin{align*}
    \norm{c_1'(E_{*1})}_E 
    &\le \norm{s_2(V_{01}, E_{*1}) + s_2(V_{01}, E_{0*})}_{E_{*1}} + \norm{u_2(E_{0*})}_F + \norm{s_2(V_{00}, E_{*0}) + s_2(V_{00}, E_{0*})}_F \\
    &\le \Delta \norm{c_0(V_{01})}_V + \Delta \norm{u_2(E_{0*})}_E + \Delta^2 \norm{c_0(V_{00})}_V.
  \end{align*}
  Notice we do not get better bound by considering $\norm{s_2(V_{00}, E_{*0}) + s_2(V_{00}, E_{0*})}_{E_{*1}}$
    because each vertex in $V_{00}$ share a face with $\Delta^2$ distinct edges in $E_{*1}$.

  \paragraph{Bound $\norm{c_1'(E_{1*})}_E$.}
  \begin{align*}
    c_1'(E_{1*})
    &= c_1(E_{1*}) + H_A^\downarrow s_2(V_{00}, E_{*0}) + H_A^\downarrow w_2(V_{00}) \\
    &= H_A^\downarrow s_2(V_{10}, E_{1*}) + H_A^\downarrow s_2(V_{00}, E_{*0}) + H_A^\downarrow w_2(V_{00}) \\
    &= H_A^\downarrow (s_2(V_{10}, E_{1*}) + s_2(V_{10}, E_{*0})) + H_A^\downarrow (s_2(V_{10}, E_{*0}) + s_2(V_{00}, E_{*0})) + H_A^\downarrow w_2(V_{00}) \\
    &= H_A^\downarrow (s_2(V_{10}, E_{1*}) + s_2(V_{10}, E_{*0})) + H_A^\downarrow t_2(E_{*0}) + H_A^\downarrow w_2(V_{00}) \\
    &= H_A^\downarrow (s_2(V_{10}, E_{1*}) + s_2(V_{10}, E_{*0})) + H_A^\downarrow (w_2(V_{00}) + u_2(E_{*0}) + w_2(V_{10})) + H_A^\downarrow w_2(V_{00}) \\
    &= H_A^\downarrow (s_2(V_{10}, E_{1*}) + s_2(V_{10}, E_{*0})) + H_A^\downarrow u_2(E_{*0}).
  \end{align*}
  By \Cref{eq:approx2-2}, $\norm{s_2(V_{10}, E_{1*}) + s_2(V_{10}, E_{*0})}_{E_{1*}} \le \Delta \norm{c_0(V_{10})}_V$, so
  \begin{equation*}
    \norm{c_1'(E_{1*})}_E \le \norm{s_2(V_{10}, E_{1*}) + s_2(V_{10}, E_{*0})}_{E_{1*}} + \norm{u_2(E_{*0})}_F \le \Delta \norm{c_0(V_{10})}_V + \Delta \norm{u_2(E_{*0})}_E.
  \end{equation*}

  So overall,
  \begin{equation*}
    \norm{c_1'}_E \le (\Delta^2 + \Delta) \norm{c_0}_V + \Delta \norm{u}_E.
  \end{equation*}
  Combine with the inequality obtained from the co-expansion theorem \Cref{eq:approx3-1}, $\norm{u}_E \le \frac{1}{\beta} \norm{c^2}_F \le \frac{\Delta^2}{\beta} \norm{c_0}_V$, we obtain
  \begin{equation*}
    \norm{c_1'}_E \le (\Delta^2 + \Delta + \frac{\Delta^3}{\beta}) \norm{c_0}_V.
  \end{equation*}

  \paragraph{Show $c_2$ is small.}
  Recall $c_2 = s_2(V_{00}, E_{*0}) + w_2(V_{00})$, so
  \begin{equation*}
    \norm{c_2}_F \le \norm{s_2(V_{00}, E_{*0})}_F + \norm{w_2(V_{00})}_F
    \le \norm{s_2(V_{00}, E_{*0})}_F + \Delta^2 \norm{w_2(V_{00})}_V.
  \end{equation*}
  Because \Cref{eq:approx1-1},
    $\norm{s_2(V_{00}, E_{*0})}_F \le \Delta \norm{c_1(E_{*0})}_E$.
  Combine with the inequality from the co-expansion theorem \Cref{eq:approx3-2}, $\norm{w}_V \le \gamma \norm{c^1}_E \le \gamma \norm{c_1}_E$, we obtain
  \begin{equation*}
    \norm{c_2}_F \le (\Delta + \Delta^2 \gamma) \norm{c_1}_E.
  \end{equation*}

  This implies that the chain complex is a $(\alpha', \beta', \gamma')$-small-set boundary expander with
  \begin{equation*}
    \alpha' = \frac{\eta}{4\Delta(m_a + m_b)}, \beta' = \frac{1}{\Delta^2 + \Delta + \frac{\Delta^3}{\beta}}, \gamma' = \Delta + \Delta^2 \gamma.
  \end{equation*}
  ($\alpha'$ is chosen so that $\norm{c_1}_E \le \alpha' |X(\cG, C_A, C_B)(1)| \implies \norm{c^1}_E \le \alpha |X(\cG, C_A^\perp, C_B^\perp)(1)|$. 
  Note that \\ $|X(\cG, C_A, C_B)(1)| = 2\Delta(m_a + m_b)|G|$ and $|X(\cG, C_A^\perp, C_B^\perp)(1)| = 2\Delta(k_a + k_b)|G|$.)

  To show systolic distance, one follows the same argument with a few modifications.
  In Step 3, from the expansion assumption, we instead have $u_2 = 0$. 
  In the very end where we show $c'_1$ is small, we instead obtain $c'_1 = 0$, i.e. $c_1 = \partial_2 c_2$.
  Therefore, if $X(\cG_2, C_A^\perp, C_B^\perp)$ has co-systolic distance $\frac{\eta}{2\Delta(k_a + k_b)}$, then $X(\cG_2, C_A, C_B)$ has systolic distance $\frac{\eta}{2\Delta(m_a + m_b)}$.
\end{proof}

\subsubsection{Small-set (co)-expansion}

\begin{corollary}\label{cor:co-expansion}
Under the same assumptions as Theorem~\ref{thm:co-expansion}, suppose further that  $d_1 d_2 - \lambda d_2 - 8 \lambda \Delta > 0$.
Then the co-chain complex~\eqref{eq:complex-1} has the following properties. 
\begin{itemize}
    \item It is a $(\frac{\eta}{4\Delta (m_a + m_b)}, \frac{\eta}{2})$-small-set co-locally-minimal expander. 
    \item It is a $(\frac{\eta}{4\Delta (m_a + m_b)}, \frac{\eta}{2}, \frac{1}{d_1-\lambda-\frac{1}{2}})$-small-set co-boundary expander, where   $\eta = \frac{d_1 d_2 - \lambda d_2 - 8 \lambda \Delta}{\Delta d_2/2 + 2\Delta}$.
  \end{itemize}
More generally given any co-chain $c^1$ it is possible to write $c^1 = c^1{}' + \delta^0 c^0$ in such a way that
  \begin{equation} \label{eq:co-boundary-expansion}
    \begin{gathered}
      \norm{\delta^1 c^1}_F \ge \frac{d_1 d_2 - \lambda d_2 - 8 \lambda \Delta}{4 d_2 + 8 \Delta} \norm{c^1{}'}_E - \frac{\Delta d_2/2 + 2\Delta}{4 d_2 + 8 \Delta} \frac{\norm{c^1{}'}_E^2}{|G|}\;, \\
      \norm{c^1}_E \ge (d_1-\lambda) \norm{c^0}_V - \frac{\Delta}{4} \frac{\norm{c^0}_V^2}{|G|}\;.
    \end{gathered}
  \end{equation}
\end{corollary}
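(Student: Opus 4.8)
\emph{Strategy.} All three parts will follow from Theorem~\ref{thm:co-expansion}, Lemma~\ref{lem:co-expansion-1d}, and the co-chain analogue of the greedy routine in Algorithm~\ref{alg:local-flip-decoder}: starting from an arbitrary $c^1\in\F2^{X(1)}$, repeatedly replace $c^1$ by $c^1+\delta^0 e^0$ whenever there is an $e^0$ with $\norm{e^0}_V=1$ and $\norm{c^1+\delta^0 e^0}_E<\norm{c^1}_E$; let $c^1{}'$ be the output and $c^0$ the sum of all the $e^0$ used, so $c^1=c^1{}'+\delta^0 c^0$. Because $\norm{\cdot}_E$ is integer-valued and strictly drops at each step, the process terminates, $c^1{}'$ is co-locally minimal, $\norm{c^1{}'}_E\le\norm{c^1}_E$, the number of steps is at most $\norm{c^1}_E-\norm{c^1{}'}_E\le\norm{c^1}_E$, and hence $\norm{c^0}_V\le\norm{c^1}_E$. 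Finally $\delta^1 c^1{}'=\delta^1(c^1+\delta^0 c^0)=\delta^1 c^1$ since $\delta^1\delta^0=0$.

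\emph{The ``more generally'' statement.} Applying Theorem~\ref{thm:co-expansion} to the co-locally minimal word $c^1{}'$ and using $\delta^1 c^1{}'=\delta^1 c^1$ gives the first line of~\eqref{eq:co-boundary-expansion}. For the second line, $\delta^0 c^0=c^1+c^1{}'$ gives $\norm{\delta^0 c^0}_E\le\norm{c^1}_E+\norm{c^1{}'}_E\le 2\norm{c^1}_E$; since the right-hand side of Lemma~\ref{lem:co-expansion-1d} is exactly twice that of the claimed inequality, dividing by $2$ yields $\norm{c^1}_E\ge(d_1-\lambda)\norm{c^0}_V-\frac{\Delta}{4}\frac{\norm{c^0}_V^2}{|G|}$.

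\emph{Small-set co-locally-minimal expansion.} Let $c^1$ be co-locally minimal with $\norm{c^1}_E<\alpha|X(1)|$; since $|X(1)|=2\Delta(m_a+m_b)|G|$ and $\alpha=\frac{\eta}{4\Delta(m_a+m_b)}$ this reads $\norm{c^1}_E<\frac{\eta}{2}|G|$. Write the bound of Theorem~\ref{thm:co-expansion} as $\norm{\delta^1 c^1}_F\ge L\norm{c^1}_E-Q\frac{\norm{c^1}_E^2}{|G|}$ with $L=\frac{d_1d_2-\lambda d_2-8\lambda\Delta}{4d_2+8\Delta}$ and $Q=\frac{\Delta d_2/2+2\Delta}{4d_2+8\Delta}$; by definition of $\eta$ one has $L=\eta Q$, so $\norm{\delta^1 c^1}_F\ge Q\norm{c^1}_E\bigl(\eta-\frac{\norm{c^1}_E}{|G|}\bigr)>\frac{\eta}{2}Q\norm{c^1}_E\ge\frac{\eta}{2}\norm{c^1}_E$, the last step using $Q\ge 1$, i.e.\ $4d_2+8\Delta\le\Delta d_2/2+2\Delta$ (which holds once $\Delta$ is large, as in the construction). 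This is the first bullet.

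\emph{Small-set co-boundary expansion, and the main obstacle.} Given $c^1$ with $\norm{c^1}_E<\frac{\eta}{2}|G|$, run the flip routine to obtain $c^0,c^1{}'$. Since $\norm{c^1{}'}_E\le\norm{c^1}_E$ stays below the threshold and $c^1{}'$ is co-locally minimal, the previous bullet gives $\norm{\delta^1 c^1}_F=\norm{\delta^1 c^1{}'}_F\ge\frac{\eta}{2}\norm{c^1{}'}_E=\frac{\eta}{2}\norm{c^1+\delta^0 c^0}_E$, which is the $\beta$-clause. The $\gamma$-clause $\norm{c^0}_V\le\frac{1}{d_1-\lambda-1/2}\norm{c^1}_E$ is where the only real work lies: the generic reduction (the co-version of Lemma~\ref{lem:small-set-locally-minimal-implies-small-set-boundary}) only yields $\gamma=1$, so one must sharpen it by a short bootstrap on the quadratic error term. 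Start from the crude $\norm{c^0}_V\le\norm{c^1}_E<\frac{\eta}{2}|G|$ and note $\eta<\frac{2(d_1-\lambda)}{\Delta}$ (immediate, since the numerator of $\eta$ is below $(d_1-\lambda)d_2$ and the denominator above $\Delta d_2/2$), giving $\norm{c^0}_V<\frac{(d_1-\lambda)|G|}{\Delta}$; this bounds the quadratic term in the second line of~\eqref{eq:co-boundary-expansion} by $\frac{d_1-\lambda}{4}\norm{c^0}_V$, so $\norm{c^1}_E>\frac{3(d_1-\lambda)}{4}\norm{c^0}_V$. Feeding this back, $\norm{c^0}_V<\frac{4}{3(d_1-\lambda)}\norm{c^1}_E<\frac{4}{3\Delta}|G|$, so the quadratic term is now at most $\frac13\norm{c^0}_V$ and $\norm{c^1}_E\ge(d_1-\lambda-\frac13)\norm{c^0}_V\ge(d_1-\lambda-\frac12)\norm{c^0}_V$, which is the desired bound. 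This single bootstrap round is the crux; everything else is bookkeeping built on the two cited results.
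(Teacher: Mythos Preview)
Your proof is correct and follows essentially the same route as the paper: both derive~\eqref{eq:co-boundary-expansion} by applying Theorem~\ref{thm:co-expansion} to a co-locally minimal representative $c^1{}'$ and combining the triangle inequality with Lemma~\ref{lem:co-expansion-1d}, and both then specialize to the small-set regime $\norm{c^1}_E<\tfrac{\eta}{2}|G|$. The one substantive difference is how $c^1{}'$ and $c^0$ are produced: the paper takes $c^1{}'$ to be a global minimizer in the coset $c^1+B^1$, whereas you run the greedy flip. Your choice has the advantage of directly supplying the a priori bound $\norm{c^0}_V\le\norm{c^1}_E$, which is exactly what your two-step bootstrap for the $\gamma$-clause exploits. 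The paper instead asserts $\norm{c^0}_V\le 2|G|/\Delta$ in one line from the quadratic inequality $\norm{c^1}_E\ge(d_1-\lambda)\norm{c^0}_V-\tfrac{\Delta}{4}\norm{c^0}_V^2/|G|$, but that deduction tacitly requires some a priori bound on $\norm{c^0}_V$ to rule out the upper branch of the downward parabola---so your version is, if anything, more complete on this point. Your explicit verification that $Q\ge 1$ (needed to pass from $\tfrac{\eta Q}{2}$ to $\tfrac{\eta}{2}$ in the $\beta$-bound) is likewise left implicit in the paper.
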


\begin{proof}
We start with~\eqref{eq:co-boundary-expansion}. The first inequality follows directly from~\eqref{eq:co-locally-minimal-expansion}. To show the second, let $c^1$ be any co-chain and let let $c^1{}'$ be of minimal weight among elements $c^1 + B^1$. Then $\norm{c^1{}'}_E \le \norm{c^1}_E$, and so by the triangle inequality we have $\norm{\delta^0 c^0}_E = \norm{c^1{}' + c^1}_E \le \norm{c^1{}'}_E + \norm{c^1}_E \le 2 \norm{c^1}_E$.
  Combined with \Cref{lem:co-expansion-1d} we obtain the second inequality in \Cref{eq:co-boundary-expansion}. 

  Next, we plug in specific numbers in \Cref{eq:co-locally-minimal-expansion} and 
    \Cref{eq:co-boundary-expansion} to obtain 
    small-set co-locally-minimal expansion
    and small-set co-boundary expansion.

 First, small-set co-locally-minimal expansion. For general chain with $\norm{c^1}_E \le \frac{1}{2} \eta |G|$, we have the expansion
  \begin{equation*}
    \norm{c^2}_F \ge \frac{1}{2} \eta \norm{c^1}_E.
  \end{equation*}

  Next, small-set co-boundary expansion. For general chain with $\norm{c^1}_E \le \frac{1}{2} \eta |G|$, 
  we here simplify the inequality 
  $\norm{c^1}_E \ge (d_1-\lambda) \norm{c^0}_V - \frac{\Delta}{4} \frac{\norm{c^0}_V^2}{|G|}$.
  Because $\eta = \frac{d_1 d_2 - \lambda d_2 - 8 \lambda \Delta}{\Delta d_2/2 + 2\Delta} \le \frac{d_1 - \lambda}{\Delta/2}$,
  we have $(d_1-\lambda) \norm{c^0}_V - \frac{\Delta}{4} \frac{\norm{c^0}_V^2}{|G|} \le \norm{c^1}_E \le \frac{d_1-\lambda}{\Delta} |G|$.
  This implies $\norm{c^0}_V \le \frac{2|G|}{\Delta}$ for large enough $\Delta$.
  When plug back in $\norm{c^1}_E \ge (d_1-\lambda) \norm{c^0}_V - \frac{\Delta}{4} \frac{\norm{c^0}_V^2}{|G|}$, we obtain
  \begin{equation*}
    \norm{c^1}_E \ge (d_1-\lambda-\frac{1}{2}) \norm{c^0}_V.
  \end{equation*}
  So overall
  \begin{equation*}
    \norm{c^2}_F \ge \frac{1}{2} \eta \norm{c^1{}'}_E
  \end{equation*}
  and 
  \begin{equation*}
    \norm{c^0}_V \le \frac{1}{d_1-\lambda-\frac{1}{2}} \norm{c^1}_E.
  \end{equation*}
\end{proof}

\section{Linear time decoder} \label{sec:decoder}

In this section we construct a linear time decoder for the quantum code $\cC$ introduced in Section~\ref{sec:construction}.
As discussed in the introduction, one can separate the task of decoding into two.
We call one of them the decoder and the other the co-decoder: 
the decoder recovers $\tilde c_1$ given the syndrome $\partial_1 c_1$ such that $\tilde c_1 \in c_1 + B_1$;
the co-decoder recovers $\tilde c^1$ given the syndrome $\delta^1 c^1$ such that $\tilde c^1 \in c^1 + B^1$.

This section parallels the section on distance with similar proof techniques.
We first show the existence of a linear time co-decoder.
Then we use the linear time co-decoder to obtain a linear time decoder.

\begin{restatable} [Co-Decoder]{theorem}{CoDecoder} \label{thm:co-decoder}
  $X(\cG_2, C_A, C_B)$ has a linear time co-decoder up to distance 
  $\frac{\kappa}{2 \Delta (m_a+m_b)} |X(1)|$
  where $\kappa = \frac{\Delta d_2/2 + 2 \Delta}{8 \Delta d_2 + 16 \Delta^2} \eta' \eta$, $\eta = \frac{d_1 d_2 - \lambda d_2 - 8 \lambda \Delta}{\Delta d_2/2 + 2\Delta}$, $\eta' = \frac{d_1 d_2/4 - \lambda d_2/2 - 8 \lambda \Delta}{\Delta d_2/4 + 2 \Delta}$.
\end{restatable}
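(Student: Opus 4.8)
The plan is to run a small-set-flip (local flip) decoder on the co-chain direction of $X(\cG_2, C_A, C_B)$, mirroring the way \Cref{cor:co-distance} is extracted from \Cref{thm:co-expansion} and using the small-set co-boundary expansion of \Cref{cor:co-expansion} as the ``expansion'' that the Sipser--Spielman / Leverrier--Tillich--Zémor analysis (\cite{sipser1996expander,leverrier2015quantum}) normally draws from a Tanner code. Given the syndrome $\sigma = \delta^1 c^1 \in \F2^F$, I would maintain an estimate $\hat c^1 \in \F2^{X(1)}$, initialized to $0$, together with the residual syndrome $s \coloneqq \sigma + \delta^1 \hat c^1$. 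One iteration scans all vertices $v \in V$ and, for each, all $f^1$ supported on the $2\Delta$ edges $E(v)$ incident to $v$ (only $2^{O(\Delta)} = O(1)$ of them, each with $\delta^1 f^1$ supported on $F(v)$); if some such flip satisfies $\norm{s + \delta^1 f^1}_F \le \norm{s}_F - \theta\,\norm{f^1}_E$ for a fixed constant $\theta > 0$, we accept it ($\hat c^1 \leftarrow \hat c^1 + f^1$, $s \leftarrow s + \delta^1 f^1$) and continue, and when no vertex admits such a flip we halt and output $\hat c^1$. With a queue of vertices meeting $\supp(s)$ and the bounded degree of the complex, each iteration costs $O(1)$ and the whole run is $O(|X(1)|)$.

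The analysis combines two estimates. The first is a drift bound: since every accepted flip lowers $\norm{s}_F$ by at least $\theta\norm{f^1}_E$, and $\norm{s}_F$ starts at $\norm{\sigma}_F \le \Delta\norm{c^1}_E$ and never increases, $\sum_{\text{steps}} \norm{f^1}_E \le \Delta\norm{c^1}_E/\theta$, hence $\norm{\hat c^1}_E = O(\norm{c^1}_E)$ and the residual error $e \coloneqq c^1 + \hat c^1$ has $\norm{e}_E = O(\norm{c^1}_E)$ at every step. If the decoding radius is set so that $\norm{c^1}_E \le \tfrac{\kappa}{2\Delta(m_a+m_b)}|X(1)|$, this keeps $\norm{e}_E$ below both the small-set threshold $\tfrac{\eta}{4\Delta(m_a+m_b)}|X(1)|$ of \Cref{cor:co-expansion} and the co-systolic distance of \Cref{cor:co-distance} throughout. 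The second is an improvability claim: whenever $s \neq 0$, taking $e'$ to be a minimum-weight representative of $e + B^1$ (so $\delta^1 e' = \delta^1 e = s$ and $\norm{e'}_E \le \norm{e}_E$), some vertex $v$ admits a flip $f^1$ on $E(v)$ with $\norm{s + \delta^1 f^1}_F \le \norm{s}_F - \theta\norm{f^1}_E$. Granting the claim, the algorithm cannot halt with $s \neq 0$, so at termination $s = 0$; then $e \in Z^1$ with $\norm{e}_E$ below the co-systolic distance, so \Cref{cor:co-distance} gives $e \in B^1$, i.e.\ $\hat c^1 \in c^1 + B^1$, which is exactly correctness.

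The hard part is the improvability claim, a ``decoder-strength'' refinement of \Cref{thm:co-expansion}. I would re-run the neighbor-counting of that proof almost verbatim --- the matrix $M = d_2 M_1 + M_0$, the mixing bounds of \Cref{lem:M_1} and \Cref{lem:M_0} on one side, the distance of $C_B^\perp$ (as in \Cref{claim:64}) and the robustness of $(C_A^\perp,C_B^\perp)$ (as in \Cref{claim:65}) on the other --- but instead of merely deducing $\norm{\delta^1 e'}_F > 0$, I would estimate, for the reduced co-chain $e'$, how much of the syndrome on $F(v)$ is cancelled by the single admissible flip $f^1 = e'(E(v))$, and then average over $v$. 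A one-vertex flip can only neutralize the portion of a robust decomposition that is ``visible'' at $v$, and mass routed through the other three vertices of a shared face costs a further constant factor; carried through the counting, this replaces $d_1$ by roughly $d_1/2$ and $d_2$ by roughly $d_2/4$, which is exactly why the statement features $\eta' = \tfrac{d_1 d_2/4 - \lambda d_2/2 - 8\lambda\Delta}{\Delta d_2/4 + 2\Delta}$ alongside $\eta$, and why the extra scalar $\tfrac{\Delta d_2/2 + 2\Delta}{8\Delta d_2 + 16\Delta^2}$ (which balances the per-step syndrome decrease against the flip weight, i.e.\ fixes $\theta$) appears in $\kappa = \tfrac{\Delta d_2/2 + 2\Delta}{8\Delta d_2 + 16\Delta^2}\,\eta'\eta$. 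I expect the delicate point to be the bookkeeping inside this claim: identifying precisely which faces of $F(v)$ a minimum-weight $e'$ makes correctable from $v$, and lower-bounding the net decrease $\norm{s}_F - \norm{s + \delta^1 f^1}_F$ against $\norm{f^1}_E$ rather than just showing it is positive. Everything else --- the running-time bound, the drift bound, and the concluding appeal to \Cref{cor:co-distance} --- should be routine once the improvability claim is established.
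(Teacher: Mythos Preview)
Your algorithm (small-set flip at vertices) and your improvability ingredient are correct and match the paper: the paper's \Cref{lem:co-expansion2} is exactly the ``decoder-strength refinement of \Cref{thm:co-expansion}'' you describe, using that when the decoder is stuck the residual syndrome $c^2$ is extended co-locally minimal (no vertex-supported flip can decrease $\|c^2\|_F$), which halves both constants in \Cref{claim:64} and \Cref{claim:65} to give \eqref{eq:ineq2-from-distance-decoder} and \eqref{eq:ineq2-from-robust-decoder}, whence the counting with $M=(d_2/2)M_1+M_0$ and the threshold $\eta'$. (A small slip: the robustness constant halves to $d_2/2$, not $d_2/4$; the $d_1d_2/4$ in $\eta'$ is $(d_1/2)\cdot(d_2/2)$.)

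Where your plan diverges from the paper is the ``drift bound'' for keeping the error small, and this is also where it fails to deliver the stated $\kappa$. The paper's flip condition is merely $|c^2_i+\delta^1 e^1_i|<|c^2_i|$ --- no $\theta$-proportional decrease is proved or needed --- and the contrapositive of \Cref{lem:co-expansion2} only furnishes \emph{some} decreasing flip, i.e.\ a drop of at least $1$ with $\|f^1\|_E\le 2\Delta$; your drift argument would then be forced to take $\theta=\Theta(1/\Delta)$ and lose a factor of $\Delta^2$ in the decoding radius. Instead of bounding $\|\hat c^1\|_E$, the paper controls the minimum-weight representative $c^1_i$ of $c^1+\hat c^1+B^1$ directly: since $c^1_i$ is co-locally minimal by construction and $\|c^2_i\|_F\le\|c^2_0\|_F\le\Delta\|c^1\|_E$, one simply inverts the co-expansion inequality \eqref{eq:co-locally-minimal-expansion} to obtain $\|c^1_i\|_E\le\eta'|G|$ whenever $\|c^1\|_E\le\kappa|G|$ --- this is \Cref{lem:short-remains-short}, and the precise $\kappa$ in the statement is exactly what falls out of that inversion. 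At termination \Cref{lem:co-expansion2} then gives $c^1_T=0$ outright, which already says $\hat c^1\in c^1+B^1$; a separate appeal to co-systolic distance is not needed.
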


\begin{restatable} [Co-Decoder $\to$ Decoder]{theorem}{CoDecoderToDecoder} \label{thm:co-expansion-implies-expansion2}
  If $X(\cG_2, C_A^\perp, C_B^\perp)$ has a linear time co-decoder up to distance $\eta'' |G|$,
  then $X(\cG_2, C_A, C_B)$ has a linear time decoder up to distance $\frac{\eta''}{6 + 4 \Delta/d_2} |G|$.
\end{restatable}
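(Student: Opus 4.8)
The plan is to mimic the structure of the proof of Theorem~\ref{thm:co-expansion-implies-expansion}, turning the existential reconstruction argument there into an algorithmic one. Given a syndrome $c_0 = \partial_1 c_1$ with $\norm{c_1}_E$ small (below the claimed threshold $\frac{\eta''}{6+4\Delta/d_2}|G|$ in the appropriate normalization), we want to output in linear time a correction $\tilde c_2 \in \F2^F$ such that $c_1 + \partial_2 \tilde c_2 \in Z_1$, i.e.\ $\partial_1(c_1 + \partial_2\tilde c_2) = \partial_1 c_1$ — which is automatic — and moreover $c_1 + \partial_2 \tilde c_2$ has small enough weight that, combined with the co-systolic/co-decoding guarantee we are assuming, it actually lies in $B_1$, hence is a valid correction. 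Concretely I would:

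\emph{Step 1 (local guesses, algorithmically).} For each vertex $v_{00}$, solve the local exact chain complex $Y(H_A,H_B)$ restricted to the star of $v_{00}$ to produce $s_2(v_{00},\cdot)$ satisfying~\eqref{eq:map1}, with the normalizations~\eqref{eq:approx1-1}--\eqref{eq:approx1-3}; since each such local system has constant size ($\Delta$ fixed), this takes $O(1)$ time per vertex, hence linear total time. \emph{Step 2 (build $t_2$).} Form $t_2(e) = s_2(v,e) + s_2(v',e)$ over the two endpoints of $e$; again linear time, and the bounds~\eqref{eq:approx2-1},~\eqref{eq:approx2-2} hold as in the distance proof. \emph{Step 3 (invoke the co-decoder).} Convert $t_2$ into a co-chain $c^1 \in (\F2^{k_a})^{E^-}\oplus(\F2^{k_b})^{E^|}$ for $X(\cG_2, C_A^\perp, C_B^\perp)$ via $t_2(e) = (H^\perp)^T c^1(e)$ (linear time, since $(H^\perp)^T$ is injective and of constant size); its syndrome $\delta^1 c^1 = c^2 = t_2(E_{*0})+\cdots+t_2(E_{1*})$ has weight $\le \Delta^2\norm{c_0}_V$ by~\eqref{eq:approx2-2}. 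We then run the assumed linear time co-decoder for $X(\cG_2,C_A^\perp,C_B^\perp)$ on syndrome $c^2$; by the distance hypothesis $\eta''|G|$ this succeeds provided $\norm{c^2}_F$ is below that threshold, which translates into the constraint on $\norm{c_1}_E$ recorded in the theorem statement (this is exactly where the $6+4\Delta/d_2$ factor comes from — tracking how $\norm{c_1}_E$ blows up through $\norm{c^2}_F \le \Delta^2\norm{c_0}_V$ and $\norm{c_0}_V \le (\text{const})\,\Delta\norm{c_1}_E$ and then through the co-decoder's own slack). The co-decoder returns $\tilde c^1$ with $\tilde c^1 - c^1 \in B^1$, which we rewrite as $c^1 = \hat c^1{}' + \delta^0 c^0$ where $\hat c^1{}'$ (playing the role of $c^1{}'$ in Theorem~\ref{thm:co-expansion-implies-expansion}) is a \emph{co-boundary} — i.e.\ $\delta^1 \hat c^1{}' = 0$ — so $u_2$ built from it is in $(C_A)^{E^-}\times(C_B)^{E^|}$ with $\partial_2$-image tracked by the co-systolic structure; note the key simplification over the distance proof: because the co-decoder only guarantees we land in $B^1$ (not that $\hat c^1{}'$ vanishes), we must still carry the $u_2$ terms through the final weight estimates, which is why $u_2$ does not disappear here. \emph{Step 4 (reconstruct and output).} Set $\tilde c_2 := s_2(V_{00},E_{*0}) + w_2(V_{00})$ where $w_2(v_{00}) = (H_A^\perp\otimes H_B^\perp)^T c^0(v_{00})$, exactly as in Step~4 of Theorem~\ref{thm:co-expansion-implies-expansion}; output $\tilde c_2$.

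For correctness I would reuse verbatim the four ``Bound $\norm{c_1'(E_{\bullet})}_E$'' computations from the proof of Theorem~\ref{thm:co-expansion-implies-expansion} to conclude $\norm{c_1 + \partial_2 \tilde c_2}_E \le (\Delta^2 + \Delta)\norm{c_0}_V + \Delta\norm{u_2}_E$, and bound $\norm{u_2}_E$ using the co-decoder's output guarantee rather than the small-set co-boundary expander constant. The final point is that $c_1 + \partial_2 \tilde c_2$, having weight below the systolic distance of $X(\cG_2,C_A,C_B)$ (which holds by Theorem~\ref{thm:co-expansion-implies-expansion} once $\norm{c_1}_E$ is small enough — we need to check the two thresholds are compatible, shrinking constants if necessary), and being a cycle in $Z_1$ with the same syndrome as $c_1$, must in fact equal $\partial_2$ of something, i.e.\ $\tilde c_2$ is a legitimate correction: $c_1 + \partial_2\tilde c_2 \in B_1$. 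Total running time is linear: Steps 1, 2, 4 are $O(1)$ per vertex/edge/face, and Step 3 calls the assumed linear-time co-decoder once.

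The main obstacle I expect is \emph{bookkeeping of the thresholds}: one must verify that the small-weight hypotheses needed at each stage (the co-decoder's distance bound $\eta''|G|$ on $\norm{c^2}_F$; the systolic-distance bound of $X(\cG_2,C_A,C_B)$ on $\norm{c_1 + \partial_2\tilde c_2}_E$ to conclude it is a boundary) are all implied by the single hypothesis $\norm{c_1}_E \le \frac{\eta''}{6+4\Delta/d_2}|G|$, and that the constant $6 + 4\Delta/d_2$ is the honest one coming out of chaining $\norm{c^2}_F \le \Delta^2\norm{c_0}_V$, $\norm{c_0}_V \le (2\Delta + \text{l.o.t.})\norm{c_1}_E$ (each vertex touches $2\Delta$ edges, each edge two vertices), and the $u_2$-induced slack $\norm{u_2}_E \le \frac{1}{\beta}\norm{c^2}_F$ with $\beta$ replaced by the co-decoder's effective expansion — indeed the ratio $d_2/\Delta$ entering via the robustness parameter is exactly what produces the $4\Delta/d_2$ summand. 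Everything else is a transcription of the distance argument with ``exists'' upgraded to ``compute''.
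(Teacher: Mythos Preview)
There is a genuine gap. Your Step~1 asks for $s_2(v_{00},\cdot)$ satisfying~\eqref{eq:map1}, i.e.\ $H_B^\leftarrow s_2(v_{00},e_{*0})=c_1(e_{*0})$, but the decoder does not know $c_1$ --- it only sees the syndrome $c_0=\partial_1 c_1$. The entire construction in Theorem~\ref{thm:co-expansion-implies-expansion} takes $c_1$ as input (it is proving systolic distance, where $c_1$ is given), so transcribing it verbatim produces an algorithm that reads data it does not have. Relatedly, your output is a face vector $\tilde c_2\in\F2^F$, but the decoder must output an edge vector $\tilde c_1$ with $\tilde c_1-c_1\in B_1$; your proposed criterion ``$c_1+\partial_2\tilde c_2$ small hence in $B_1$'' cannot work since $\partial_1(c_1+\partial_2\tilde c_2)=c_0\neq 0$ in general, so this vector is not even a cycle.

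The paper's proof fixes this by inserting an extra level of indirection. From $c_0$ it first builds a local edge guess $s_1(v_{00})$ solving $\partial_1 s_1(v_{00})=c_0(v_{00})$ in the local complex, then tracks the (unknown to the algorithm) discrepancy $\hat s'_2$ with $\partial_2\hat s'_2(v_{00})=s_1(v_{00})-\hat c_1(E(v_{00}))$, bounding its size via Lemma~\ref{lem:robust-testability} (robustness of $(C_A^\perp,C_B^\perp)$). From the edge-sums $t_1$ it then makes a \emph{second} guess $u_2$ with $H_B^\leftarrow u_2(e_{*0})=t_1(e_{*0})$, and only now can it compute a genuine syndrome $c^2=\delta^1\hat c^1$ to feed the co-decoder. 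The constant $6+4\Delta/d_2$ is exactly $\norm{\hat c^1}_E\le\norm{\hat u'_2}_E\le\norm{\hat t'_2}_E+\norm{u_2}_E\le(4+4\Delta/d_2)\norm{\hat c_1}_E+2\norm{\hat c_1}_E$, where the $4\Delta/d_2$ term comes from Lemma~\ref{lem:robust-testability}, not from the $\Delta^2\norm{c_0}_V$ route you sketch. Correctness is then verified by a direct algebraic identity $\tilde c_1=\hat c_1+\partial_2(\hat s'_2(V_{00})+\hat w_2(V_{00}))$, not by a small-weight-implies-boundary argument.
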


Together 
  we obtain a linear time decoder for $\cC(\cG_2, C_A, C_B)$ up to distance $\frac{\kappa}{4 \Delta (m_a+m_b)(6 + 4 \Delta/d_2)} |X(1)|$. 

\subsection{Co-Decoder}

\Cref{thm:co-decoder} is the main theorem we will show in this subsection. 
We discuss the construction, the correctness, and the running time of the decoder which together proves the theorem.

\paragraph{Construction:}

The co-decoder in the direction of the co-chain complex $\F2^{X(2)} \gets \F2^{X(1)} \gets \F2^{X(0)}$
  is the small-set-flip decoder introduced in \cite{leverrier2015quantum}.
The small-set-flip decoder is a generalization of the local-flip decoder for the expander codes \cite{sipser1996expander}
  where the decoder observes a local region and make local changes that reduce the weight of the syndrome.

\begin{algorithm}[H]
  \SetAlgoLined
  \begin{enumerate}
    \item (Initialization) $c^2_0 \coloneqq c^2$.
    \item (Main loop) In the $i$-th iteration, if there is a vertex $v_i$ with $e^1_i$ supported on $E(v_i)$ such that $\abs{c^2_i + \delta^1 e^1_i} < \abs{c^2_i}$,
      set $c^2_{i+1} \coloneqq c^2_i + \delta^1 e^1_i$ and repeat.
    \item (End) Output $\tilde c^1 \coloneqq \sum e^1_i$.
  \end{enumerate}
  \caption{Simple small-set-flip decoder. (Input: $c^2 \in \F2^{X(2)}$)}
  \label{alg:simple-small-set-flip-decoder}
\end{algorithm}

Besides these variables, we define other variables not directly known by the decoder.
Let $c^1_0$ be the minimal chain in $c^1 + B^1$ and $c^1_{i+1}$ be the minimal chain in $c^1_i + e^1_i + B^1$.
One can interpret $c^1_i$ as the error at the $i$-th iteration and $c^2_i$ as corresponding syndrome.
Note that the decoder knows the syndrome $c^2_i$ but not the error $c^1_i$.

Recall that the final syndrome of a local-flip decoder is locally minimal.
Similarly, the final syndrome of a small-set-flip decoder satisfies a similar property which we call extended local minimality.

\begin{definition} [Extended Co-Locally Minimal]
  We say $c^2 \in \F2^{X(2)}$ is co-locally minimal from $X(0)$ if
  \begin{equation*}
    \forall v \in V, c^1 \in \F2^{X(1)}, \supp(c^1) \subset E(v) : \norm{c^2}_F \le \norm{c^2 + \delta^1 c^1}_F,
  \end{equation*}
  where $E(v) \subset E$ are the edges incident to $v$.
\end{definition}

\paragraph{Correctness:}

Following the standard proof strategy for local-flip decoder
  we show two lemmas. The first lemma, \Cref{lem:co-expansion2}, shows that whenever there is a non-zero error with small weight the decoder is able to continue running and reduce the weight of the syndrome. The second lemma, \Cref{lem:short-remains-short}, shows that as long as the initial error is sufficiently small the error at each iteration remains small.
Together the two lemma imply that there is no syndrome when the decoder stops, and that the decoder correctly corrects the error, i.e.\ $\tilde c^1 = \sum e^1_i \in c^1 + B^1$.

\begin{lemma} [Reducible if Short] \label{lem:co-expansion2}
  Given $\Delta$-regular $\lambda$-spectral expander graphs $\Cay(G, A)$, $\Cay(G, B)$
  and linear codes $C_A^\perp, C_B^\perp$ of length $\Delta$ with distance $d_1$ and $(C_A^\perp, C_B^\perp)$ with robustness $d_2$. Assume further that  
	\[ d_1 d_2/4 - \lambda d_2/2 - 8 \lambda \Delta > 0\;.\]
Consider the co-chain complex
  \begin{equation*}
    X(\cG_2, C_A, C_B)\colon \F2^F \xleftarrow{\delta^1} (\F2^{m_a})^{E^-} \oplus (\F2^{m_b})^{E^|} \xleftarrow{\delta^0} (\F2^{m_a \times m_b})^V.
  \end{equation*}
	Then for every co-locally minimal $c^1 \in \F2^{X(1)}$ such that $c^2 = \delta c^1$ is co-locally minimal from $X(0)$, if
	\[\norm{c^1}_E < \frac{d_1 d_2/4 - \lambda d_2/2 - 8 \lambda \Delta}{\Delta d_2/4 + 2 \Delta}  \frac{|X(1)| }{2 \Delta (m_a+m_b)} \]
	then $c^1 = 0$.
\end{lemma}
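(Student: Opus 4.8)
The plan is to mimic the proof of \Cref{thm:co-expansion}, but to exploit the extra hypothesis that $c^2 := \delta^1 c^1$ is co-locally minimal from $X(0)$ in order to \emph{strengthen} its two local claims, and then to run the same expander-counting argument with the modified neighborhood matrix $M := \tfrac{d_2}{2}M_1 + M_0 \in \RR^{E\times E}$ (with $M_0,M_1$ as in \Cref{lem:M_0} and \Cref{lem:M_1}). We argue the contrapositive: assuming $c^1 \ne 0$, we show $\norm{c^1}_E \ge \frac{d_1 d_2/4 - \lambda d_2/2 - 8\lambda\Delta}{\Delta d_2/4 + 2\Delta}\,|G|$, which, using $|X(1)| = 2\Delta(m_a+m_b)|G|$, is precisely the threshold in the statement. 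We keep the notation $s^2$ of the proof of \Cref{thm:co-expansion}, so $c^2 = s^2(E_{*0}) + s^2(E_{*1}) + s^2(E_{0*}) + s^2(E_{1*})$ and $s^2(e_{*0}) = H_B^T c^1(e_{*0})$ is a codeword of $C_B^\perp$ for each vertical edge $e_{*0}$; write $\cE = \set{e\in E : c^1(e)\ne 0}$. (Note that $c^1$ being co-locally minimal is still used, since we invoke \Cref{claim:65} unchanged.)

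The heart of the argument is that co-local minimality of $c^2$ from $X(0)$ lets us delete the $\norm{c^2(F(\cdot))}_F$ terms from \Cref{claim:64} and \Cref{claim:65}, at the price of halving $d_1$ and $d_2$. For an edge $e_{*0}=((g,ag),*0)$, a chain $e^1$ supported on the single edge $e_{*0}$ is an admissible local flip (since $e_{*0}\in E(v_{00})$), and $\delta^1 e^1$ is supported on the $\Delta$ faces $F(e_{*0})$, where it ranges over $C_B^\perp \cong \F2^{F(e_{*0})}$ as the value of $e^1$ ranges over $\F2^{m_b}$; choosing this value to be $c^1(e_{*0})$, so that $\delta^1 e^1$ restricted to $F(e_{*0})$ equals $s^2(e_{*0})$, co-local minimality from $X(0)$ gives
\begin{align*}
  \norm{c^2(F(e_{*0}))}_F &\le \norm{c^2(F(e_{*0})) + s^2(e_{*0})}_F = \norm{(s^2(E_{*1})+s^2(E_{0*})+s^2(E_{1*}))(F(e_{*0}))}_F \\
  &\le \norm{c^1(E_{*1}(e_{*0}))}_E + \norm{c^1(E_{0*}(e_{*0}))}_E + \norm{c^1(E_{1*}(e_{*0}))}_E\;,
\end{align*}
the last step because each of the $\Delta$ faces of $F(e_{*0})$ meets exactly one edge of each of $E_{*1}(e_{*0})$, $E_{0*}(e_{*0})$, $E_{1*}(e_{*0})$. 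Combining with \Cref{claim:64} yields $\norm{c^1(E_{*1}(e_{*0}))}_E + \norm{c^1(E_{0*}(e_{*0}))}_E + \norm{c^1(E_{1*}(e_{*0}))}_E \ge \tfrac{d_1}{2}\norm{c^1(e_{*0})}_E$. Likewise, for a vertex $v_{00}$ the admissible flip $e^1 := c^1|_{E(v_{00})}$ replaces $c^2(F(v_{00}))$ by $(s^2(E_{*1})+s^2(E_{1*}))(F(v_{00}))$, so co-local minimality from $X(0)$ gives $\norm{c^2(F(v_{00}))}_F \le \norm{c^1(E_{*1}(v_{00}))}_E + \norm{c^1(E_{1*}(v_{00}))}_E$; combining with \Cref{claim:65} gives $\norm{c^1(E_{*1}(v_{00}))}_E + \norm{c^1(E_{1*}(v_{00}))}_E \ge \tfrac{d_2}{2}\big(\norm{c^1(E_{*0}(v_{00}))}_E + \norm{c^1(E_{0*}(v_{00}))}_E\big)$, and symmetrically at $v_{10}$ and for the other three edge orientations.

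Finally, with $M=\tfrac{d_2}{2}M_1+M_0$ we compute $1_\cE^T M 1_{e_{*0}}$ exactly as in Step 3 of the proof of \Cref{thm:co-expansion}, now feeding in the strengthened bounds (which carry a factor $d_2/2$ at the robustness step, $d_1/2$ at the distance step, and no $\norm{c^2(F(\cdot))}_F$ terms); one obtains $1_\cE^T M 1_{e_{*0}} \ge \tfrac{d_1 d_2}{4}\norm{c^1(e_{*0})}_E$, and summing over all edges $e\in\cE$ of the four orientations, $1_\cE^T M 1_\cE \ge \tfrac{d_1 d_2}{4}\norm{c^1}_E$. On the other hand \Cref{lem:M_1} and \Cref{lem:M_0} give $1_\cE^T M 1_\cE \le \big(\tfrac{\lambda d_2}{2}+8\lambda\Delta\big)\norm{c^1}_E + \tfrac{\Delta}{|G|}\big(\tfrac{d_2}{4}+2\big)\norm{c^1}_E^2$. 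Dividing through by $\norm{c^1}_E>0$ and rearranging gives $\norm{c^1}_E \ge \frac{d_1 d_2/4 - \lambda d_2/2 - 8\lambda\Delta}{\Delta d_2/4 + 2\Delta}|G|$, contradicting the hypothesis on $\norm{c^1}_E$, so $c^1=0$. The main obstacle is the local bookkeeping behind the strengthened claims: one must verify that a single vertical (resp.\ horizontal) edge lies in $E(v)$ for an appropriate vertex $v$ so that single-edge flips are admissible, that $\delta^1$ applied to such an edge acts precisely on the $\Delta$ faces of $F(e_{*0})$ and produces a $C_B^\perp$-codeword there, and the analogous facts at a vertex; everything else follows the template of \Cref{thm:co-expansion} verbatim.
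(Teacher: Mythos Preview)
Your proposal is correct and follows essentially the same approach as the paper: define $M=\tfrac{d_2}{2}M_1+M_0$, use co-local minimality of $c^2$ from $X(0)$ (via single-edge and full-neighborhood flips) to upgrade \Cref{claim:64} and \Cref{claim:65} to their ``$c^2$-free'' versions with $d_1/2$ and $d_2/2$, then run the same upper/lower bound comparison. Your derivations of the two strengthened local inequalities and the final computation match the paper's proof line by line.
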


\begin{lemma} [Short Remains Short] \label{lem:short-remains-short}
Let $c^1\in \F2^{X(1)}$ be such that 
\[\norm{c^1}_E \le \frac{\kappa}{2 \Delta (m_a+m_b)} |X(1)|\;.\]
For $i\geq 1$ let $c^1_i$ be as defined below Algorithm 1. Then 
\[\norm{c^1_i}_E \le \frac{\eta'}{2 \Delta (m_a+m_b)} |X(1)|\;,\]
 where 
\[\kappa = \frac{\Delta d_2/2 + 2 \Delta}{8 \Delta d_2 + 16 \Delta^2} \eta' \eta\;,\quad \eta = \frac{d_1 d_2 - \lambda d_2 - 8 \lambda \Delta}{\Delta d_2/2 + 2\Delta}\;,\quad \eta' = \frac{d_1 d_2/4 - \lambda d_2/2 - 8 \lambda \Delta}{\Delta d_2/4 + 2 \Delta}\;.\]
\end{lemma}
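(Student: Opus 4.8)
The plan is a standard ``short stays short'' argument for the small-set-flip decoder, powered by the co-expansion estimate of \Cref{thm:co-expansion}. Two features drive it: the syndrome weight $\norm{c^2_i}_F$ is non-increasing along the run of \Cref{alg:simple-small-set-flip-decoder}, while by \Cref{thm:co-expansion} this syndrome weight gives a lower bound on $\norm{c^1_i}_E$ as soon as $\norm{c^1_i}_E$ is not too large. Since consecutive iterates $c^1_i$ can differ in weight by at most $2\Delta$, a ``first time $\norm{c^1_i}_E$ exceeds the threshold'' argument closes the loop. Throughout I will use $|X(1)| = 2\Delta(m_a+m_b)|G|$, so that the hypothesis reads $\norm{c^1}_E \le \kappa|G|$ and the goal reads $\norm{c^1_i}_E \le \eta'|G|$.

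First I would record the basic invariants. By induction on $i$, using $\delta^1\delta^0 = 0$ and $c^1_{i+1} \in c^1_i + e^1_i + B^1$, one gets $\delta^1 c^1_i = c^2_i$; moreover each $c^1_i$, being of minimal weight in its coset, is co-locally minimal, so \Cref{thm:co-expansion} applies to it. Since \Cref{alg:simple-small-set-flip-decoder} only performs a flip that strictly decreases $|c^2_i| = \norm{c^2_i}_F$, we have $\norm{c^2_i}_F \le \norm{c^2_0}_F = \norm{\delta^1 c^1_0}_F$, and since $c^1_0$ is of minimal weight in $c^1 + B^1$ and every edge lies on exactly $\Delta$ faces, $\norm{\delta^1 c^1_0}_F \le \Delta\norm{c^1_0}_E \le \Delta\norm{c^1}_E \le \Delta\kappa|G|$. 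Finally, $c^1_i$ is the minimum-weight element of $c^1_{i-1} + e^1_{i-1} + B^1$ while $e^1_{i-1}$ is supported on the $2\Delta$ edges of $E(v_{i-1})$, hence $\norm{c^1_i}_E \le \norm{c^1_{i-1} + e^1_{i-1}}_E \le \norm{c^1_{i-1}}_E + 2\Delta$.

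Now suppose, for contradiction, that $i \ge 1$ is least with $\norm{c^1_i}_E > \eta'|G|$. A short computation from the definitions shows $\kappa < \eta'$ (so $\norm{c^1_0}_E \le \kappa|G| < \eta'|G|$, making the base case vacuous) and, using the hypothesis $d_1 d_2/4 - \lambda d_2/2 - 8\lambda\Delta > 0$, that $\eta' < \eta/2$ with a fixed positive gap; hence $\norm{c^1_{i-1}}_E \le \eta'|G|$ and, for sufficiently large members of the family so that the additive $2\Delta$ is absorbed, $\eta'|G| < \norm{c^1_i}_E \le \tfrac{\eta}{2}|G|$. Writing the bound of \Cref{thm:co-expansion} as $\norm{c^2_i}_F \ge \Phi(\norm{c^1_i}_E)$ with
\[ \Phi(x) \,=\, \frac{\Delta d_2/2 + 2\Delta}{4 d_2 + 8\Delta}\cdot\frac{x(\eta|G| - x)}{|G|}\;, \]
and noting $\Phi$ is increasing on $[0, \tfrac{\eta}{2}|G|]$, we obtain
\[ \norm{c^2_i}_F \;>\; \Phi(\eta'|G|) \;=\; \frac{\Delta d_2/2 + 2\Delta}{4 d_2 + 8\Delta}\,\eta'(\eta - \eta')\,|G| \;\ge\; \frac{\Delta d_2/2 + 2\Delta}{8 d_2 + 16\Delta}\,\eta'\eta\,|G| \;=\; \Delta\kappa|G|\;, \]
where the penultimate step uses $\eta - \eta' \ge \eta/2$ and the last is the definition of $\kappa$. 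This contradicts $\norm{c^2_i}_F \le \Delta\kappa|G|$ from the previous paragraph, so no such $i$ exists.

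The only real content is arithmetic bookkeeping: one must check $\kappa < \eta'$ and $\eta' < \eta/2$ (both follow easily from $d_1 \le \Delta$ and the positivity hypothesis, using $\lambda = \Theta(\sqrt\Delta)$), and the fact that $\kappa$ is calibrated exactly so that $\Delta\kappa \le \tfrac{\Delta d_2/2 + 2\Delta}{4 d_2 + 8\Delta}\,\eta'(\eta - \eta')$. I do not expect a genuine obstacle here; the substantive estimate is \Cref{thm:co-expansion}, and the only mild technicality is that the $2\Delta$ jump per step of the decoder must be negligible against $\eta'|G|$, which costs nothing since we work with an infinite family of codes with $|G|\to\infty$.
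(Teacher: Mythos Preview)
Your proposal is correct and follows essentially the same approach as the paper: bound $\norm{c^2_i}_F \le \norm{c^2_0}_F \le \Delta\norm{c^1}_E$, apply the co-expansion inequality \Cref{eq:co-locally-minimal-expansion} to the co-locally minimal $c^1_i$, and compare with $\Phi(\eta'|G|)$ using $\eta' < \eta/2$. The one place you are more careful than the paper is the first-crossing argument with the $2\Delta$ step bound, which rules out the ``large'' root of the quadratic; the paper's proof leaves this implicit, so your version is in fact slightly more complete.
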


We first assume the two lemmas and show the correctness.

\begin{proof} [Proof of Correctness]
  Suppose the co-decoder stops at the $T$-th iteration.
  By \Cref{lem:short-remains-short}, because $\norm{c^1}_E \le \\ \frac{\kappa}{4 \Delta (m_a+m_b)} |X(1)|$, we have $\norm{c^1_i}_E \le \frac{\eta'}{4 \Delta (m_a+m_b)} |X(1)|$.
  By \Cref{lem:co-expansion2}, the co-decoder stops when $c^1_T = 0$.
  Notice $c^1_T \in c^1 + \sum_{i=0}^{T-1} e^1_i + B^1$,
    so $\tilde c^1 = \sum_{i=0}^{T-1} e^1_i \in c^1 + B^1$.
  Because the output differs from the error by a co-boundary,
    this implies the co-decoder decodes correctly.
\end{proof}

Now we prove the two lemmas. We first show the second lemma which is simpler.

\begin{proof} [Proof of \Cref{lem:short-remains-short}]
  Because the number of syndrome strictly decreases, we have $\norm{c^2_i}_F \le \norm{c^2}_F$.
  Because each edge is incident to $\Delta$ faces, we have $\norm{c^2}_F \le \Delta \norm{c^1}_E$.
  Combine the two equations with \Cref{eq:co-locally-minimal-expansion}
  we have
  \begin{equation*}
    \Delta \norm{c^1}_E
    \ge \frac{d_1 d_2 - \lambda d_2 - 8 \lambda \Delta}{4 d_2 + 8 \Delta} \norm{c^1_i}_E - \frac{\Delta d_2/2 + 2\Delta}{4 d_2 + 8 \Delta} \frac{\norm{c^1_i}_E^2}{|G|}
    = \frac{\Delta d_2/2 + 2 \Delta}{4 d_2 + 8 \Delta} \norm{c^1_i}_E (\eta-\frac{\norm{c^1_i}_E}{|G|}).
  \end{equation*}
  Therefore, to have $\norm{c^1_i}_E \le \eta' |G|$, we suffice to set
  \begin{equation*}
    \norm{c^1}_E \le \frac{\Delta d_2/2 + 2 \Delta}{4 \Delta d_2 + 8 \Delta^2} \eta' (\eta-\eta') |G|
    \le \frac{\Delta d_2/2 + 2 \Delta}{8 \Delta d_2 + 16 \Delta^2} \eta' \eta |G|
  \end{equation*}
  where the last inequality follows from $\eta' < \frac{1}{2} \eta$.
\end{proof}

We end with the proof of  \Cref{lem:co-expansion2}, which is very similar to the proof of \Cref{thm:co-expansion}.

\begin{proof} [Proof of \Cref{lem:co-expansion2}]
 Let $c^1 \in \F2^{X(1)}$ be co-locally minimal and furthermore such that $c^2 = \delta c^1$ is co-locally minimal from $X(0)$. Let $\cE \subset E$ be the support of $c^1$.
 
 \paragraph{Step 1: Define ``neighbors'' M.}
  We define 
  \begin{equation*}
    M = \frac{d_2}{2} M_1 + M_0.
  \end{equation*}

  \paragraph{Step 2: Upper bound from expansion.}
  Using Lemma~\ref{lem:M_0} and Lemma~\ref{lem:M_1}, 
		\begin{equation*}
    1_\cE^T M 1_\cE \,\le \,\lambda(\frac{d_2}{2} + 8\Delta) |\cE| + \frac{\Delta}{|G|}(\frac{d_2}{4} + 2)|\cE|^2\;.
  \end{equation*}

  \paragraph{Step 3: Lower bound from distance and robustness.}
  We use two corollaries to Claim~\ref{claim:64} and~\ref{claim:65} respectively, which make use of the additional assumption that $c^2 = \delta c^1$ is co-locally minimal from $X(0)$. First, we show that
  \begin{equation} \label{eq:ineq2-from-distance-decoder}
    \norm{c^1(E_{*1}(e_{*0}))}_E + \norm{c^1(E_{0*}(e_{*0}))}_E + \norm{c^1(E_{1*}(e_{*0}))}_E \ge \frac{d_1}{2} \norm{c^1(e_{*0})}_E\;.
  \end{equation}
	This inequality is shown as follows. Using~\eqref{eq:ineq2-from-distance},
	\[\norm{c^2(F(e_{*0}))}_F \ge d_1 \norm{c^1(e_{*0})}_E - (\norm{c^1(E_{*1}(e_{*0}))}_E + \norm{c^1(E_{0*}(e_{*0}))}_E + \norm{c^1(E_{1*}(e_{*0}))}_E)\;.\]
	On the other hand, by definition of $c^2 = \delta^1 c^1$,
	\begin{align*}
	\norm{(c^2 + \delta^1 c^1(e_{*0}))(F(e_{*0}))}_F &\leq \norm{c^1(E_{*1}(e_{*0}))}_E + \norm{c^1(E_{0*}(e_{*0}))}_E + \norm{c^1(E_{1*}(e_{*0}))}_E\;.
	\end{align*}
		Using that $c_2$ is co-locally minimal, 
		\[ \norm{(c^2 + \delta^1 c^1(e_{*0}))(F(e_{*0}))}_F \,\geq\, \norm{c^2(F(e_{*0}))}_F\;.\]
		Thus~\eqref{eq:ineq2-from-distance-decoder} follows. Similar reasoning shows the following as a consequence of~\eqref{eq:ineq2-from-robust}.
	  \begin{equation} \label{eq:ineq2-from-robust-decoder}
    \norm{c^1(E_{*1}(v_{00}))}_E + \norm{c^1(E_{1*}(v_{00}))}_E \ge \frac{d_2}{2} (\norm{c^1(E_{*0}(v_{00}))}_E + \norm{c^1(E_{0*}(v_{00}))}_E)\;.
  \end{equation}
Combining \Cref{eq:ineq2-from-distance-decoder} and \Cref{eq:ineq2-from-robust-decoder}, 
  \begin{align*}
    & 1_\cE^T |M|1_{e_{*0}}\\
    &= 1_\cE^T \Big(\frac{d_2}{2} M_1 + M_0\Big)1_{e_{*0}} \\
    &= \frac{d_2}{2} \norm{c^1(E_{*1}(e_{*0}))}_E + \norm{c^1(E_{*1}(v_{00}))}_E + \norm{c^1(E_{1*}(v_{00}))}_E + \norm{c^1(E_{*1}(v_{10}))}_E + \norm{c^1(E_{0*}(v_{10}))}_E \\
    &\ge \frac{d_2}{2} \norm{c^1(E_{*1}(e_{*0}))}_E + \frac{d_2}{2} \norm{c^1(E_{*0}(v_{00}))}_E + \frac{d_2}{2} \norm{c^1(E_{0*}(v_{00}))}_E + \frac{d_2}{2} \norm{c^1(E_{*0}(v_{10}))}_E + \frac{d_2}{2} \norm{c^1(E_{1*}(v_{10}))}_E \\
    &\ge \frac{d_2}{2} \norm{c^1(E_{*1}(e_{*0}))}_E + \frac{d_2}{2} \norm{c^1(E_{0*}(v_{00}))}_E + \frac{d_2}{2} \norm{c^1(E_{1*}(v_{10}))}_E \\
    &= \frac{d_2}{2} \norm{c^1(E_{*1}(e_{*0}))}_E + \frac{d_2}{2} \norm{c^1(E_{0*}(e_{*0}))}_E + \frac{d_2}{2} \norm{c^1(E_{1*}(e_{*0}))}_E \\
    &\ge \frac{d_1 d_2}{4} \norm{c^1(e_{*0})}_E\;.
  \end{align*}

  \paragraph{Step 4: Combine the upper bound and the lower bound.}
  Finally, we combine the upper bound and the lower bound.
  \begin{equation*}
    \frac{d_1 d_2}{4} \norm{c^1}_E \le 1_\cE^T M 1_\cE
    \le \lambda (\frac{d_2}{2} + 8 \Delta) \norm{c^1}_E + \frac{\Delta}{|G|}(\frac{d_2}{4}+2) \norm{c^1}_E^2\;.
  \end{equation*}
This implies that $c^1 = 0$ whenever
  \begin{equation*}
    \norm{c^1}_E < \frac{d_1 d_2/4 - \lambda d_2/2 - 8 \lambda \Delta}{\Delta d_2/4 + 2\Delta} |G|\;,
  \end{equation*}
	as desired.
\end{proof}

\begin{remark}
  A similar argument applies even when there are measurement errors in $c^2$.
  Suppose that the co-decoder receives input $c^2 + z^2$ instead of $c^2 = \delta^1 c^1$ where $z^2$ is the measurement error with small weight.
  Then by replacing \Cref{eq:ineq2-from-distance-decoder} with
  \begin{equation*}
    \norm{z^2(F(e_{*0}))}_F + \norm{c^1(E_{*1}(e_{*0}))}_E + \norm{c^1(E_{0*}(e_{*0}))}_E + \norm{c^1(E_{1*}(e_{*0}))}_E \ge \frac{d_1}{2} \norm{c^1(e_{*0})}_E
  \end{equation*}
  and replacing \Cref{eq:ineq2-from-robust-decoder} with
  \begin{equation*}
    \norm{z^2(F(v_{00}))}_F + \norm{c^1(E_{*1}(v_{00}))}_E + \norm{c^1(E_{1*}(v_{00}))}_E \ge \frac{d_2}{2} (\norm{c^1(E_{*0}(v_{00}))}_E + \norm{c^1(E_{0*}(v_{00}))}_E)
  \end{equation*}
  the rest of the argument still holds by replacing the corresponding parameters.
  The final result is that the remaining error after the co-decoder stop is less than the number of measurement errors times a constant.
  This is the so called error reduction property used in the construction of Spielman code (or cascade code) with linear time encoder and decoder \cite{spielman1996linear}.
\end{remark}

\paragraph{Linear Time:}

The decoder presented in \Cref{alg:simple-small-set-flip-decoder} runs in quadratic time.
To get linear time, we perform additional preprocessing.
A vertex $v$ is said to be flippable if there exist a small set flip $e^1$ supported on $E(v)$ which decrease the weight of the current syndrome $c^2_i$.

\begin{algorithm}[H]
  \SetAlgoLined
  \begin{enumerate}
    \item (Initialization) Set $c^2_0 \coloneqq c^2$. Create a list $Q$ which contains all the flippable vertices.
    \item (Main loop) In the $i$-th iteration, take a vertex $v_i$ and its flip $e^1_i$ from the list $Q$. 
    Set $c^2_{i+1}  \coloneqq c^2_i + \delta^1 e^1_i$. 
    Update the list $Q$ by checking the flippability of the vertices neighbor to $v_i$, i.e. $V_{00}(v_i) \cup V_{10}(v_i) \cup V_{01}(v_i) \cup V_{11}(v_i)$. 
    Repeat.
    \item (End) Output $\tilde c^1 \coloneqq \sum c^1_i$.
  \end{enumerate}
  \caption{Full small-set-flip decoder. (Input: $c^2 \in \F2^{X(2)}$)}
  \label{alg:full-small-set-flip-decoder}
\end{algorithm}

Note that this algorithm has a similar behavior from the simple small set flip decoder.
Because the syndrome only updates on $F(v_i)$, the updated list contains all the flippable vertices of the current syndrome.

We now compute the time complexity. Checking flippability of a vertex $v$ requires to try $2^{\Delta (m_a + m_b)}$ possible values of $e^1$ and each requires $O(\Delta^2 (m_a + m_b))$ to check if the weight decreases $\norm{c^2 + \delta^1 e^1}_F < \norm{c^2}_F$.
So the initialization takes time $O(\Delta^2 (m_a + m_b) 2^{\Delta (m_a + m_b)} |V|)$.
For the main loop, there are at most $\norm{c^2}_F$ iterations.
Each iteration checks the flippability of $(\Delta+1)^2$ many vertices.
So the main loop takes time $O(\Delta^4 (m_a + m_b) 2^{\Delta (m_a + m_b)} \norm{c^2}_F)$.
Overall the complexity is linear $\Theta(|X(1)|)$.

\subsection{Decoder}

Now, for decoder, we want to find $\tilde c_1 \in c_1 + B_1$ given $c_0 = \partial_1 c_1$.
Different from the co-decoder, small-set-flip decoder does not work directly.
This is roughly because there are too few information around a face to make decisions.
Nevertheless, similar to the section on expansion, one can translate the results from co-decoder to decoder.
Here is the main theorem for this subsection.

\CoDecoderToDecoder*

\paragraph{Construction:}

We now describes the decoding process. 
Similar to the proof of expansion, the main idea is to make local guesses around the vertex, then correct the inconsistencies between the local guesses.
It is crucial to keep track of the variables that exist but not known to the decoder.
We use $\hat c_1$ to denote such variables
and we use $\tilde c_1$ to denote a good approximation to $\hat c_1$.

\paragraph{Step 1: Construct $s_1$ and $\hat s'_2$ by guessing from the vertex.}

Using the local chain complex around $v_{00}$
\begin{equation*}
  (\F2)^{F(v_{00})} \xrightarrow{\partial_2} (\F2^{m_b})^{E_{*0}(v_{00})} \oplus (\F2^{m_a})^{E_{0*}(v_{00})} \xrightarrow{\partial_1} \F2^{m_a \times m_b}
\end{equation*}
we find $s_1 \in (\F2^{m_a})^{VE^-} \times (\F2^{m_b})^{VE^|}$
where $s_1(v_{00}) \in (\F2^{m_b})^{E_{*0}(v_{00})} \times (\F2^{m_a})^{E_{0*}(v_{00})}$ is the minimal chain
such that
\begin{equation*}
  \partial_1(s_1(v_{00})) = 
  H_A^\uparrow s_1(v_{00}, E_{*0}(v_{00})) + H_B^\leftarrow s_1(v_{00}, E_{0*}(v_{00})) = c_0(F(v_{00})).
\end{equation*}

Note that $s_1$ is a guess of $\hat c_1$ with the same local property
\begin{equation} \label{eq:dec-map1-1}
  \partial_1(\hat c_1(E(v_{00}))) = c_0(F(v_{00})).
\end{equation}
Because $\hat c_1(E(v_{00}))$ is a valid candidate for $s_1(v_{00})$ and $s_1(v_{00})$ has the minimal weight among them, we have $\norm{s_1(v_{00})}_E \le \norm{\hat c_1(E(v_{00}))}_E$, i.e.
\begin{equation} \label{eq:dec-approx1-1}
  \norm{s_1}_{VE} \le 2 \norm{\hat c_1}_E.
\end{equation}
The factor $2$ appears because each edge for $s_1$ has two values from the two endpoints.

Further, because they have the same local property and because the local chain complex is exact,
the difference between $s_1$ and $\hat c_1$ can be expressed using $\hat s'_2 \in (\F2^{n_a \times n_b})^{V}$
where
\begin{equation} \label{eq:dec-map1-2}
  \partial_2(\hat s'_2(v_{00})) + s_1(v_{00}) = \hat c_1(v_{00}).
\end{equation}
Additionally, using \Cref{lem:robust-testability} below
\footnote{In fact, it is sufficient to use the trivial bound $\norm{c_2}_{[n_a] \cup [n_b]} \le \Delta \norm{c_1}_{[n_a] \cup [n_b]}$. The bound in the lemma is to improve $\Delta$ to $\Theta(1)$.}
we can further require
  $\norm{\hat s'_2(v_{00})}_E \le (1 + \frac{\Delta}{d_2}) \norm{s_1(v_{00}) + \hat c_1(E(v_{00}))}_E \le (2 + \frac{2 \Delta}{d_2}) \norm{\hat c_1(E(v_{00}))}_E$, i.e.
\begin{equation} \label{eq:dec-approx1-2}
  \norm{\hat s'_2}_{VE} \le (4 + \frac{4 \Delta}{d_2}) \norm{\hat c_1}_E.
\end{equation}

\begin{restatable} {lemma}{RobustTestability} \label{lem:robust-testability}
  Given two linear codes $C_A, C_B$ with parity-check matrices $H_A, H_B$.
  Let
  \begin{equation*}
    Y(H_A, H_B)\colon \F2^{n_a \times n_b} \xrightarrow{\partial_2} \F2^{n_a \times m_b + m_a \times n_b} \xrightarrow{\partial_1} \F2^{m_a \times m_b}
  \end{equation*}
  be the exact chain complex in \Cref{lem:exact}.
  If $(C_A^\perp, C_B^\perp)$ is $d_2$-robust, then for all $c_1 \in \Ima(\partial_2)$ there exist $c_2 \in \F2^{n_a \times n_b}$ such that
  $\partial_2 c_2 = c_1$ and
  \begin{equation*}
    \norm{c_2}_{[n_a] \cup [n_b]} \le (1 + \frac{\Delta}{d_2}) \norm{c_1}_{[n_a] \cup [n_b]}.
  \end{equation*}
\end{restatable}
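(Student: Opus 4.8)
The plan is to obtain the preimage from the exactness of $Y(H_A,H_B)$ and then make it sparse using the robustness of the dual pair $(C_A^\perp,C_B^\perp)$. Exactness of $Y(H_A,H_B)$---which holds by the K\"unneth formula, as noted after~\eqref{lem:exact}---already provides \emph{some} $c_2$ with $\partial_2 c_2=c_1$, and building one out of fixed linear sections of the surjections $H_A,H_B$ gives the crude bound $\norm{c_2}_{[n_a]\cup[n_b]}\le\Delta\,\norm{c_1}_{[n_a]\cup[n_b]}$; the whole point is to improve the factor $\Delta$ to $1+\Delta/d_2$. Since any two preimages of $c_1$ differ by an element of $\Ker\partial_2=C_A\otimes C_B$, we are free to modify a crude preimage $c_2^{(0)}$ by such an element.

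Write $c_1=(c_a,c_b)$, so $c_a=(I\otimes H_B)c_2$ and $c_b=(H_A\otimes I)c_2$ for every preimage $c_2$. There is an intrinsic lower bound on the sparsity of a preimage: a row of $c_2$ that does not lie in $C_B=\Ker H_B$ is certainly nonzero, and whether a given row lies in $C_B$ is unchanged by adding an element of $C_A\otimes C_B$ (whose rows lie in $C_B$); as the rows of $c_2$ outside $C_B$ are exactly the nonzero rows of $c_a$, every preimage satisfies $\norm{c_2}_{[n_a]}\ge\norm{c_a}_{[n_a]}$, and symmetrically $\norm{c_2}_{[n_b]}\ge\norm{c_b}_{[n_b]}$. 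So it suffices to exhibit a preimage whose nonzero rows lying in $C_B$, together with its nonzero columns lying in $C_A$, number at most $(\Delta/d_2)\,\norm{c_1}_{[n_a]\cup[n_b]}$.

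To produce such a preimage, I would feed the robustness of $(C_A^\perp,C_B^\perp)$ the auxiliary matrix
\[
 e\;:=\;(I\otimes H_B^{T})\,c_a\;+\;(H_A^{T}\otimes I)\,c_b\;\in\;\tensorcode(C_A^\perp,C_B^\perp)\;,
\]
whose rows lie in $C_B^\perp=\Ima H_B^{T}$ and columns in $C_A^\perp=\Ima H_A^{T}$, and which has Hamming weight $\norm{e}_{[n_a]\times[n_b]}\le\Delta\bigl(\norm{c_a}_{[n_a]}+\norm{c_b}_{[n_b]}\bigr)=\Delta\,\norm{c_1}_{[n_a]\cup[n_b]}$ (the first summand has $\norm{c_a}_{[n_a]}$ nonzero rows, each of weight $\le\Delta$, and dually for the second). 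Robustness then yields $e=e_a+e_b$ with $e_a$ having columns in $C_A^\perp$, $e_b$ having rows in $C_B^\perp$, and $\norm{e_a}_{[n_b]}+\norm{e_b}_{[n_a]}\le\norm{e}_{[n_a]\times[n_b]}/d_2\le(\Delta/d_2)\,\norm{c_1}_{[n_a]\cup[n_b]}$. Because $(H_A^\perp)^{T}$ and $(H_B^\perp)^{T}$ are injective with images $C_A$ and $C_B$, this decomposition lifts to a correction $w\in C_A\otimes C_B$, and I would set $c_2:=c_2^{(0)}+w$; the claim is that, after this correction, the only avoidable nonzero rows and columns that remain are those supported on $e_a$ and $e_b$, so that together with the lower bound of the previous paragraph we get $\norm{c_2}_{[n_a]\cup[n_b]}\le(1+\Delta/d_2)\,\norm{c_1}_{[n_a]\cup[n_b]}$.

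The hard part will be this last step. Because $c_a$ and $c_b$ are tied together by the compatibility relation $(H_A\otimes I)c_a=(I\otimes H_B)c_b$ coming from $c_1\in\Ker\partial_1$, cancelling an avoidable row of a preimage can force avoidable columns to appear and vice versa, so the two directions cannot be corrected independently and the minimal-weight preimage genuinely need not have ratio $1$. The robust decomposition of $e$ must therefore be transported back through $(H_A^\perp)^{T}$ and $(H_B^\perp)^{T}$ in a way that produces a single consistent $w\in C_A\otimes C_B$---this is precisely where the exactness of $Y(H_A,H_B)$ is used---and the constants must be tracked with care to land on the sharp $1+\Delta/d_2$. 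I expect this reconciliation, rather than any single estimate, to be the crux of the proof.
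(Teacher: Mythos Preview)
Your overall plan—build a crude preimage and then correct it via robustness—is exactly right, and your intrinsic lower bound on the row/column count of any preimage is correct. The gap is in the choice of auxiliary matrix. Your $e=(I\otimes H_B^{T})c_a+(H_A^{T}\otimes I)c_b$ is precisely $\partial_2^{T}c_1$, the \emph{co}boundary of $c_1$. Robustness of $(C_A^\perp,C_B^\perp)$ applied to $e\in\tensorcode(C_A^\perp,C_B^\perp)=\Ima\partial_2^{T}$ produces a small $c_1'$ with $\partial_2^{T}c_1'=\partial_2^{T}c_1$, i.e.\ $c_1-c_1'\in\Ker\partial_2^{T}$; this is information about the cochain direction and says nothing about preimages under $\partial_2$. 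Concretely, your robust split $e=e_a+e_b$ differs from the tautological split $e=(H_A^{T}\otimes I)c_b+(I\otimes H_B^{T})c_a$ by an element of $C_A^\perp\otimes C_B^\perp$, not $C_A\otimes C_B$, so there is no lift to a correction $w\in C_A\otimes C_B$; the injectivity of $(H_A^\perp)^{T},(H_B^\perp)^{T}$ you cite is irrelevant because $e_a,e_b$ do not lie in their images.

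The paper's fix is to build \emph{two one-sided lifts}: pick $s_2^a$ with $(H_A\otimes I)s_2^a=c_b$ and nonzero columns only where $c_b$ has them, and $s_2^b$ with $(I\otimes H_B)s_2^b=c_a$ and nonzero rows only where $c_a$ has them. For any true preimage $c_2^{(0)}$, the differences $s_2^a-c_2^{(0)}$ and $s_2^b-c_2^{(0)}$ have columns in $C_A$ and rows in $C_B$ respectively, so $t_2:=s_2^a+s_2^b\in\tensorcode(C_A,C_B)$ with weight at most $\Delta\,\norm{c_1}_{[n_a]\cup[n_b]}$. Robustness of $(C_A,C_B)$ now gives $t_2=t_a+t_b$ with $t_a\in C_A\otimes\F2^{n_b}$, $t_b\in\F2^{n_a}\otimes C_B$ and $\norm{t_a}_{[n_b]}+\norm{t_b}_{[n_a]}\le(\Delta/d_2)\norm{c_1}_{[n_a]\cup[n_b]}$. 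Then $c_2:=s_2^a+t_a=s_2^b+t_b$ is a genuine preimage (both components of $\partial_2 c_2$ check), and reading off nonzero columns from the first expression and nonzero rows from the second gives $\norm{c_2}_{[n_a]\cup[n_b]}\le(1+\Delta/d_2)\norm{c_1}_{[n_a]\cup[n_b]}$ immediately—no reconciliation step is needed. (The hypothesis actually used in the paper's proof is robustness of $(C_A,C_B)$; the ``$(C_A^\perp,C_B^\perp)$'' in the lemma statement appears to be a slip, which likely steered you toward the wrong auxiliary object.)
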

\begin{proof}
  The idea is to construct $c_2$ using $c_1$.
  The construction has a similar flavor as for the proof of expansion where one first guess then correct the inconsistencies.

  Let $c_1 = (c_1^a, c_1^b) \in \F2^{m_a \times n_b} \oplus \F2^{n_a \times m_b}$.
  Construct $s_2^a, s_2^b \in \F2^{n_a \times n_b}$
    such that $H_A s_2^a = c_1^a$ and $H_B s_2^b = c_1^b$.
  We pick $s_2^a(i_a) = 0$ when $c_i^a(i_a) = 0$. Similarly for $s_2^b$.
  Therefore,
  \begin{equation*}
    \norm{s_2^a}_{[n_a]} = \norm{c_1^a}_{[n_a]}, \norm{s_2^b}_{[n_b]} = \norm{c_1^b}_{[n_b]}.
  \end{equation*}

  Now, we use robustness to correct the inconsistencies between $s_2^a$ and $s_2^b$.
  Let $t_2 = s_2^a + s_2^b \in \F2^{n_a \times n_b}$.
  We have 
  \begin{equation*}
    \norm{t_2}_{[n_a \times n_b]} \le \Delta (\norm{s_2^a}_{[n_a]} + \norm{s_2^b}_{[n_b]}).
  \end{equation*}
  Using the robustness of the co-chain complex
  $\F2^{n_a \times n_b} \xrightarrow{\delta^1} \F2^{n_a \times k_b + k_a \times n_b} \xrightarrow{\delta^0} \F2^{k_a \times k_b}$
    there exist $u^1_a \in \F2^{k_a \times n_b}, u^1_b \in \F2^{n_a \times k_b}$
    such that $t_2 = (H_A^\perp)^T u^1_a + (H_B^\perp)^T u^1_b$
    with 
  \begin{equation*}
    d_2 (\norm{u^1_a}_{[n_a]} + \norm{u^1_a}_{[n_b]}) \le \norm{t_2}_{[n_a \times n_b]}.
  \end{equation*}

  Finally, we set $c_2 = s_2^a + (H_A^\perp)^T u^1_a = s_2^b + (H_B^\perp)^T u^1_b$.
  It is easy to check $H_A c_2 = H_A s_2^a = c_1^a$ and $H_B c_2 = H_B s_2^b = c_1^b$.
  Because $\norm{c_2}_{[n_a]} \le \norm{s_2^a}_{[n_a]} + \norm{u^1_a}_{[n_a]} \le (1 + \frac{\Delta}{d_2}) \norm{c_1}_{[n_a]}$ and $\norm{c_2}_{[n_b]} \le \norm{s_2^b}_{[n_b]} + \norm{u^1_b}_{[n_b]}  \le (1 + \frac{\Delta}{d_2}) \norm{c_1}_{[n_b]}$,
    we have the desired result
  \begin{equation*}
    \norm{c_2}_{[n_a] \cup [n_b]} \le (1 + \frac{\Delta}{d_2}) \norm{c_1}_{[n_a] \cup [n_b]}.
  \end{equation*}
\end{proof}

\begin{figure}[H]
  \centering
  \begin{tikzpicture} [scale=1.2, every node/.style={font=\footnotesize}]
    \begin{scope}
      \draw (0, 0) rectangle node{$c_2(v_{00})$} ++(1, -1);
      \draw (3.5, 0) rectangle node{$\hat c_1(E_{0*}(v_{00}))$} ++(2, -1);
      \draw (0, -3.5) rectangle node[rotate=90]{$\hat c_1(E_{*0}(v_{00}))$} ++(1, -2);

      \draw[dashed] (3.5, -1.5) rectangle node{$s_1(v_{00}, E_{0*}(v_{00}))$} ++(2, -1);
      \draw[dashed] (1.5, -3.5) rectangle node[rotate=90]{$s_1(v_{00}, E_{*0}(v_{00}))$} ++(1, -2);
      \draw[dashed] (3.5, -3.5) rectangle node{$\hat s'_2(v_{00})$} ++(2, -2);

      \draw[->] (3, -0.5) -- (1.5, -0.5);
      \draw[->] (0.5, -3) -- (0.5, -1.5);

      \draw (1.25, -4.5) node{$=$};
      \draw (4.5, -1.25) node[rotate=90]{$=$};
      \draw (2.75, -4.5) node{$+$};
      \draw (4.5, -2.75) node[rotate=90]{$+$};

      \draw[->] (4.5, -3.25) -- (4.5, -3);
      \draw[->] (3.25, -4.5) -- (3, -4.5);
    \end{scope}
  \end{tikzpicture}
  \caption{Construct $s_1$ and $\hat s'_2$ using $c_2$ and $\hat c_1$. $s_1$ and $\hat s'_2$ satisfy \Cref{eq:dec-map1-1,eq:dec-approx1-1,eq:dec-map1-2,eq:dec-approx1-2}.}
  \label{fig:decoder-step1}
\end{figure}
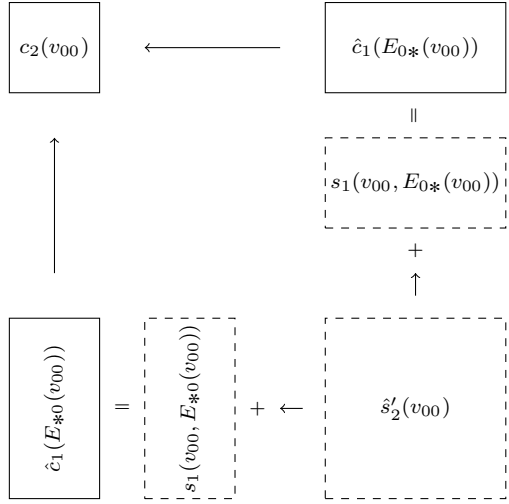

\paragraph{Step 2: Construct $t_1$ and $\hat t'_2$ from the difference of $s_1$ and $\hat s'_2$.}

$t_1 \in (\F2^{m_a})^{E^-} \times (\F2^{m_b})^{E^|}$ is defined as the ``sum'' of $s_1$ viewed from its two endpoints.
That is
\begin{equation*}
  t_1(e_{*0}) = s_1(v_{00}, e_{*0}) + s_1(v_{10}, e_{*0}).
\end{equation*}
We have 
\begin{equation} \label{eq:dec-approx2-1}
  \norm{t_1}_E \le \norm{s_1}_{VE}.
\end{equation}

Similarly, we define
$\hat t'_2 \in (\F2^{n_a})^{E^-} \times (\F2^{n_b})^{E^|}$
as
\begin{equation*}
  \hat t'_2(e_{*0}) = \hat s'_2(v_{00}, e_{*0}) + \hat s'_2(v_{10}, e_{*0}).
\end{equation*}
We have 
\begin{equation} \label{eq:dec-approx2-2}
  \norm{\hat t'_2}_E \le \norm{\hat s'_2}_{VE}.
\end{equation}

Notice that by construction, we have
\begin{equation} \label{eq:dec-map2-1}
  H_B^\leftarrow \hat t'_2(e_{*0}) = t_1(e_{*0}),
  H_A^\uparrow \hat t'_2(e_{0*}) = t_1(e_{0*}),
\end{equation}
and
\begin{equation} \label{eq:dec-map2-2}
  \begin{split}
    \hat t'_2(E_{*0}) + \hat t'_2(E_{0*}) + \hat t'_2(E_{*1}) + \hat t'_2(E_{1*})
    &= (\hat s'_2(V_{00}) + \hat s'_2(V_{10})) + (\hat s'_2(V_{00}) + \hat s'_2(V_{01})) \\
    &+ (\hat s'_2(V_{01}) + \hat s'_2(V_{11})) + (\hat s'_2(V_{10}) + \hat s'_2(V_{11})) = 0.
  \end{split}
\end{equation}

\begin{figure}[H]
  \centering
  \begin{tikzpicture} [scale=1.1, every node/.style={font=\footnotesize}]
    \begin{scope}
      \draw (0, 0) rectangle node[rotate=90]{$\hat c_1(e_{*0})$} ++(1, -2);
      \draw (1.5, 0) rectangle node[rotate=90]{$s_1(v_{00}, e_{*0})$} ++(1, -2);
      \draw (3.5, 0) rectangle node{$\hat s'_2(v_{00}, e_{*0})$} ++(2, -2);

      \draw (1.25, -1) node{$=$};
      \draw (2.75, -1) node{$+$};

      \draw[->] (3.25, -1) -- (3, -1);
    \end{scope}
    \begin{scope} [shift={(0,-2.5)}]
      \draw[dashed] (0, 0) rectangle node[rotate=90]{$0$} ++(1, -2);
      \draw[dashed] (1.5, 0) rectangle node[rotate=90]{$t_1(e_{*0})$} ++(1, -2);
      \draw[dashed] (3.5, 0) rectangle node{$\hat t'_2(e_{*0})$} ++(2, -2);

      \draw (1.25, -1) node{$=$};
      \draw (2.75, -1) node{$+$};

      \draw[->] (3.25, -1) -- (3, -1);
    \end{scope}
    \begin{scope} [shift={(0,-5)}]
      \draw (0, 0) rectangle node[rotate=90]{$\hat c_1(e_{*0})$} ++(1, -2);
      \draw (1.5, 0) rectangle node[rotate=90]{$s_1(v_{10}, e_{*0})$} ++(1, -2);
      \draw (3.5, 0) rectangle node{$\hat s'_2(v_{10}, e_{*0})$} ++(2, -2);

      \draw (1.25, -1) node{$=$};
      \draw (2.75, -1) node{$+$};

      \draw[->] (3.25, -1) -- (3, -1);
    \end{scope}
  \end{tikzpicture}
  \caption{Construct $t_1$ and $\hat t'_2$ using $s_1$ and $\hat s'_2$. $t_1$ and $\hat t'_2$ satisfy \Cref{eq:dec-map2-1,eq:dec-map2-2,eq:dec-approx2-1,eq:dec-approx2-2}.}
  \label{fig:decoder-step2}
\end{figure}
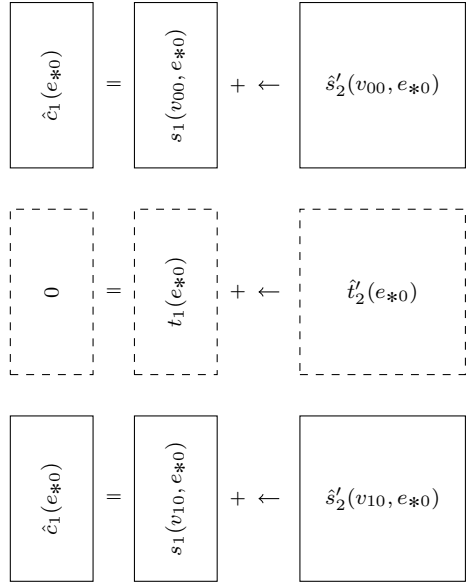

\paragraph{Step 3: Construct $u_2$ by guessing from the edge.}

We now make a second guess where we guess $\hat t'_2$.
We find $u_2 \in (\F2^{n_a})^{E^-} \times (\F2^{n_b})^{E^|}$
such that
\begin{equation*}
  H_B^\leftarrow u_2(e_{*0}) = t_1(e_{*0}), 
  H_A^\uparrow u_2(e_{0*}) = t_1(e_{0*}).
\end{equation*}
We pick $u_2(e_{*0})$ to be $0$ whenever $t_1(e_{*0})=0$, so we have $\norm{u_2}_E = \norm{t_1}_E$, which together with \Cref{eq:dec-approx1-1,eq:dec-approx2-1} implies
\begin{equation} \label{eq:dec-approx3-1}
  \norm{u_2}_E \le 2 \norm{\hat c_1}_E.
\end{equation}

We again track the difference
$\hat u'_2 \in (\F2^{n_a})^{E^-} \times (\F2^{n_b})^{E^|}$
where
\begin{equation} \label{eq:dec-equal3-1}
  \hat u'_2 = \hat t'_2 + u_2
\end{equation}
which satisfies
\begin{equation} \label{eq:dec-map3-1}
  H_B^\leftarrow \hat u'_2(e_{*0}) = 0,
  H_A^\uparrow \hat u'_2(e_{0*}) = 0,
\end{equation}
and
\begin{equation} \label{eq:dec-equal3-2}
  \hat u'_2(E_{*0}) + \hat u'_2(E_{0*}) + \hat u'_2(E_{*1}) + \hat u'_2(E_{1*}) = u_2(E_{*0}) + u_2(E_{0*}) + u_2(E_{*1}) + u_2(E_{1*}).
\end{equation}
Combining the bound of the weight using \Cref{eq:dec-approx1-2,eq:dec-approx2-2} and \Cref{eq:dec-approx3-1}, we have 
\begin{equation} \label{eq:dec-approx3-2}
  \norm{\hat u'_2}_E \le \norm{\hat t'_2}_E + \norm{u_2}_E \le (6 + \frac{4 \Delta}{d_2}) \norm{\hat c_1}_E.
\end{equation}

\paragraph{Step 4: Apply co-decoder assumption.}

This is the main step where we use the co-decoder assumption and obtain an approximation of $\hat u'_2$ with $\tilde u'_2$.

Let $H_A^\perp\colon \F2^{n_a} \to \F2^{k_a}$, $H_B^\perp\colon \F2^{n_b} \to \F2^{k_b}$ be the parity-check matrices of $C_A^\perp$ and $C_B^\perp$ where $k_a=n_a-m_a$ and $k_b=n_b-m_b$. 
The following chain complex is the one we will apply the co-decoder assumption
\begin{equation*}
  X(\cG_2, C_A^\perp, C_B^\perp)\colon \F2^F \xleftarrow{\delta_1} (\F2^{k_a})^{E^-} \oplus (\F2^{k_b})^{E^|} \xleftarrow{\delta_0} (\F2^{k_a \times k_b})^V.
\end{equation*}
The overall strategy is to first move to this new chain complex by constructing $\hat c^1$ and $c^2$. 
Then, use co-decoder to obtain $\tilde c^1 \in \hat c^1 + B^1$ from $c^2$.
Finally, we move back to the original chain complex.

We first construct $\hat c^1$.
\Cref{eq:dec-map3-1} implies one can find $\hat c^1\in (\F2^{k_a})^{E^-} \times (\F2^{k_b})^{E^|}$ such that 
\begin{equation*}
  \hat u_2'(e_{*0}) = (H_B^\perp)^T \hat c^1(e_{*0}),
  \hat u'_2(e_{0*}) = (H_A^\perp)^T \hat c^1(e_{0*}).
\end{equation*}
We now construct $c^2 \in (\F2)^F$ where
\begin{equation*}
  c^2 = u_2(E_{*0}) + u_2(E_{0*}) + u_2(E_{*1}) + u_2(E_{1*}).
\end{equation*}
By \Cref{eq:dec-equal3-2}, $u_2(E_{*0}) + u_2(E_{0*}) + u_2(E_{*1}) + u_2(E_{1*}) = \hat u'_2(E_{*0}) + \hat u'_2(E_{0*}) + \hat u'_2(E_{*1}) + \hat u'_2(E_{1*})$, and by construction $\hat u'_2(E_{*0}) + \hat u'_2(E_{0*}) + \hat u'_2(E_{*1}) + \hat u'_2(E_{1*}) = \delta^1 \hat c^1$, together we have
\begin{equation*}
  c^2 = \delta^1 \hat c^1.
\end{equation*}

We are almost ready to apply the co-decoder.
To apply the co-decoder
  we need to check $\hat c^1$ has small weight.
Because $(H_B^\perp)^T$ is injective, $\norm{\hat c^1(e_{*0})}_E \le \norm{\hat u'_2(e_{*0})}_E$,
  together with \Cref{eq:dec-approx3-1} implies 
\begin{equation*}
  \norm{\hat c^1}_E \le \norm{\hat u'_2}_E \le (6 + \frac{4 \Delta}{d_2}) \norm{\hat c_1}_E.
\end{equation*}

Given that $\hat c^1$ has small weight, we can now use the hypothesis of the co-decoder and obtain $\tilde c^1 \in \hat c^1 + B^1$, i.e.
  there exists $c^0 \in (\F2^{k_a \times k_b})^V$
  such that $\tilde c^1 = \hat c^1 + \delta^0 \hat c^0$.

To move back to the original chain complex, 
we define $\tilde u'_2 \in (\F2^{n_a})^{E^-} \times (\F2^{n_b})^{E^|}$ and $\hat w_2 \in (\F2^{n_a \times n_b})^V$ where
\begin{equation*}
  \tilde u'_2(e_{*0}) = (H_B^\perp)^T \tilde c^1(e_{*0}),
  \tilde u'_2(e_{0*}) = (H_A^\perp)^T \tilde c^1(e_{0*}),
\end{equation*}
\begin{equation*}
  \hat w_2(v_{00}) = (H_A^\perp \otimes H_B^\perp)^T \hat c^0(v_{00}).
\end{equation*}
Because $\tilde c^1 = \hat c^1 + \delta^0 \hat c^0$, we have
\begin{equation} \label{eq:dec-equal4-1}
  \tilde u'_2(E_{*0}) = \hat w_2(V_{00}) + \hat u'_2(E_{*0}) + \hat w_2(V_{10}).
\end{equation}


\paragraph{Step 5: Recover $\tilde t'_2$ from $\tilde u'_2$.}

We set $\tilde t'_2 = u_2 + \tilde u'_2$.
By \Cref{eq:dec-equal3-1} $\hat t'_2 = u_2 + \hat u'_2$ and \Cref{eq:dec-equal4-1}, we have
\begin{equation} \label{eq:dec-equal5-1}
  \tilde t'_2(E_{*0}) = \hat w_2(V_{00}) + \hat t'_2(E_{*0}) + \hat w_2(V_{10}).
\end{equation}

\paragraph{Step6: Obtain $\tilde c_1$.}

Finally, we set
\[ \tilde c_1(E_{*0}) = s_1(V_{00}, E_{*0}), \]
\[ \tilde c_1(E_{0*}) = s_1(V_{00}, E_{0*}), \]
\[ \tilde c_1(E_{*1}) = s_1(V_{01}, E_{*1}) + H_B^\rightarrow \tilde t'_2(E_{0*}), \]
\[ \tilde c_1(E_{1*}) = s_1(V_{10}, E_{1*}) + H_A^\downarrow \tilde t'_2(E_{*0}). \]

This concludes the construction of the decoder.

\paragraph{Correctness:}

To show correctness, we need to show $\tilde c_1 \in \hat c_1 + B_1$.
We now show 
\begin{equation*}
  \tilde c_1 = \hat c_1 + \partial_2 (\hat s'_2(V_{00}) + \hat w_2(V_{00}))
\end{equation*}
by checking the equality on $\tilde c_1(E_{*0})$, $\tilde c_1(E_{0*})$, $\tilde c_1(E_{*1})$, $\tilde c_1(E_{1*})$ individually.

\paragraph{Check $\tilde c_1(E_{*0})$.}

\begin{equation*}
  \hat c_1(E_{*0}) + H_B^\leftarrow \hat s'_2(V_{00}) + H_B^\leftarrow \hat w_2(V_{00})
  = \hat c_1(V_{00}, E_{*0}) + H_B^\leftarrow \hat s'_2(V_{00}) + 0
  = s_1(V_{00}, E_{*0})
  = \tilde c_1(E_{*0}).
\end{equation*}

\paragraph{Check $\tilde c_1(E_{0*})$.}

\begin{equation*}
  \hat c_1(E_{0*}) + H_A^\uparrow \hat s'_2(V_{00}) + H_A^\uparrow \hat w_2(V_{00})
  = \hat c_1(V_{00}, E_{0*}) + H_A^\uparrow \hat s'_2(V_{00}) + 0
  = s_1(V_{00}, E_{0*})
  = \tilde c_1(E_{0*}).
\end{equation*}

\paragraph{Check $\tilde c_1(E_{*1})$.}

\begin{align*}
  \hat c_1(E_{*1}) + H_B^\rightarrow \hat s'_2(V_{00}) + H_B^\rightarrow \hat w_2(V_{00})
  &= \hat c_1(V_{01}, E_{*1}) + H_B^\rightarrow \hat s'_2(V_{00}) + H_B^\rightarrow \hat w_2(V_{00}) \\
  &= s_1(V_{01}, E_{*1}) + H_B^\rightarrow \hat s'_2(V_{01}) + H_B^\rightarrow \hat s'_2(V_{00}) + H_B^\rightarrow \hat w_2(V_{00}) \\
  &= s_1(V_{01}, E_{*1}) + H_B^\rightarrow \hat t'_2(E_{0*}) + H_B^\rightarrow \hat w_2(V_{00}) \\
  &= s_1(V_{01}, E_{*1}) + H_B^\rightarrow \tilde t'_2(E_{0*}) + H_B^\rightarrow \hat w_2(V_{01}) \\
  &= s_1(V_{01}, E_{*1}) + H_B^\rightarrow \tilde t'_2(E_{0*}) + 0 \\
  &= \tilde c_1(E_{*1}).
\end{align*}

\paragraph{Check $\tilde c_1(E_{1*})$.}

\begin{align*}
  \hat c_1(E_{1*}) + H_A^\downarrow \hat s'_2(V_{00}) + H_A^\downarrow \hat w_2(V_{00})
  &= \hat c_1(V_{10}, E_{1*}) + H_A^\downarrow \hat s'_2(V_{00}) + H_A^\downarrow \hat w_2(V_{00}) \\
  &= s_1(V_{10}, E_{1*}) + H_A^\downarrow \hat s'_2(V_{10}) + H_A^\downarrow \hat s'_2(V_{00}) + H_A^\downarrow \hat w_2(V_{00}) \\
  &= s_1(V_{10}, E_{1*}) + H_A^\downarrow \hat t'_2(E_{*0}) + H_A^\downarrow \hat w_2(V_{00}) \\
  &= s_1(V_{10}, E_{1*}) + H_A^\downarrow \tilde t'_2(E_{*0}) + H_A^\downarrow \hat w_2(V_{10}) \\
  &= s_1(V_{10}, E_{1*}) + H_A^\downarrow \tilde t'_2(E_{*0}) + 0 \\
  &= \tilde c_1(E_{1*}).
\end{align*}

\paragraph{Linear Time:}

It is clear that the construction at each step takes linear time.
For the steps besides Step4 where we apply the co-decoder assumption,
  we only use simple operations such as linear map or inverse of a linear map.
And Step4 takes linear time by assumption.

\section{Optimal Robust Tensor Codes} \label{sec:optimal-robust-tensor-codes}

This section shows that random codes have linear robustness with high probability.
We improve on the result in \cite{panteleev2021asymptotically,leverrier2022quantum}
  by using the idea of puncturing and a new counting argument.

Recall that $C_A$ and $C_B$ are linear codes of length $n_a$ and $n_b$ with rate $\rho_a$ and $\rho_b$. For simplicity we assume that $n_a=n_b=\Delta$.
An $s$-punctured code of $C_A$ is obtained  by first choosing $s$ coordinates $I_a \subset [n_a]$ then consider the codewords of $C_A$ restricted to $[n_a]-I_a$.
Notice that $\norm{c}_{[n_a] \times [n_b]} = \abs{c}$ is identical to the Hamming weight, so we will mostly use $\abs{c}$ in this section to simplify the notation.

We now recall \Cref{thm:random-tensor-codes-are-robust}, which is the main theorem to prove in this section.

\Robust*


The proof follows a similar strategy in \cite{panteleev2021asymptotically,leverrier2022quantum}, where the key is to show that a codeword with a small weight $|c| < w = \Theta(\Delta^2)$ is ``structured'', i.e. $c$ is only supported on a few columns and a few rows, with high probability. 

To show codewords with small weights are ``structured'', we show all non-zero codewords in a random code have some column or row with large weight with high probability. Because the punctured code of a random code is still roughly a random code, the same property also applies to its punctured codes. 
Now, since we assumed the codeword $c\in \tensorcode(C_A, C_B)$ has small weight, we can remove a few columns and rows with large weights, such that the rest have small weight in all columns and rows.
We then apply the property of the punctured code above which implies the rest is $0$, so $c$ is only supported on those removed columns and rows, i.e. $c$ is ``structured''.

When $c$ is ``structured'', one can then find $c_a$ supported on the few columns and $c_b$ supported on the few rows. 
This means the cancellation in $c_a + c_b$ could only happen in the intersection of those columns and rows which is small. 
Since each column of $c_a$ is a codeword, when the distance is large, $\abs{c_a} \ge d_1 \norm{c_a}_{[n_b]} = \Theta(\Delta) \norm{c_a}_{[n_b]}$.
This implies codewords with small weight satisfy the inequality for robustness $\abs{c} = \abs{c_a} + \abs{c_b} - \textnormal{ small number of cancellations } \ge \Theta(\Delta) (\norm{c_a}_{[n_b]} + \norm{c_b}_{[n_a]})$.

When $c$ has large weight $|c| \ge w = \Theta(\Delta^2)$,
  because $\norm{c_a}_{[n_b]} + \norm{c_b}_{[n_a]} \le 2 \Delta$,
  the inequality for robustness $\abs{c} \ge d_2 (\norm{c_a}_{[n_b]} + \norm{c_b}_{[n_a]})$ is easily satisfied 
  by setting $d_2 = w/(2\Delta) = \Theta(\Delta)$.

We now go through the steps carefully below.
We first state two lemmas, show the theorem, then prove the lemmas.
The first lemma says each non-zero codeword has at least one row or column with large weight (which implies codewords with small weight are ``structured'').
The second lemma says ``structured'' codewords satisfy robustness.

\begin{lemma} \label{lem:large-distance-after-puncture}
  Fix $\rho_a, \rho_b \in (0,1)$, let $\sigma \in (0, 1), \tau \in (0, (1-\sigma)/2)$ satisfy
  \begin{equation*}
    2 h(\sigma) + 2 (1-\sigma)h(\frac{\tau}{1-\sigma}) < \frac{3}{4} \frac{(1 - \sigma - \rho_a)(1 - \sigma - \rho_b)}{1-\sigma}.
  \end{equation*}
  Let $C_A, C_B$ be random codes sampled from the uniform distribution with length $\Delta$ and dimensions $k_a=\rho_a \Delta, k_b=\rho_b \Delta$.
  Then as $\Delta$ goes to infinity, with probability tending to $1$, the following holds:
    for any $s = \sigma \Delta$-punctured code $C_A', C_B'$,
    all non-zero codewords in $\tensorcode(C_A', C_B')$ have at least one row or one column with weight $\ge t = \tau \Delta$.
  In other words, if a codeword in $\tensorcode(C_A', C_B')$ has all its rows and columns with weight $< t$, then the codeword is $0$.
\end{lemma}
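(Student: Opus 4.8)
The plan is to prove \Cref{lem:large-distance-after-puncture} by a first-moment argument: a union bound over all puncture sets and over all candidate ``bad'' codewords, with the candidates grouped by rank. Write $n = (1-\sigma)\Delta$ and $t = \tau\Delta$ (ignoring rounding). For fixed $s$-subsets $I_a, I_b \subseteq [\Delta]$, call a non-zero matrix $c \in \F2^{n\times n}$ \emph{bad} if every row and every column of $c$ has weight $< t$; one must show that, with probability tending to $1$, no bad matrix lies in $\tensorcode(C_A', C_B')$ for any choice of $I_a, I_b$. Since there are at most $\binom{\Delta}{s}^2 \le 2^{2h(\sigma)\Delta}$ pairs $(I_a, I_b)$, it suffices to bound, for each fixed pair, $\sum_{c\text{ bad}} \Pr_{C_A, C_B}[c \in \tensorcode(C_A', C_B')]$ and then multiply by $2^{2h(\sigma)\Delta}$; the $2h(\sigma)$ term of the hypothesis pays for this factor.

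The first ingredient is to count bad matrices of a given rank $r$. A rank-$r$ matrix $c$ is determined by a set $R$ of $r$ rows, a set $S$ of $r$ columns, and the submatrices $c|_{R\times[n]}$, $c|_{[n]\times S}$, via $c = c|_{[n]\times S}\,(c|_{R\times S})^{-1}\,c|_{R\times[n]}$ for a suitable choice of $R, S$ making $c|_{R\times S}$ invertible. Each of the $r$ rows of $c|_{R\times[n]}$ is light, hence lies in a set of size $\sum_{i<t}\binom{n}{i} \le 2^{n\,h(\tau/(1-\sigma))}$ (using $\tau/(1-\sigma) < 1/2$, which is exactly the assumption $\tau < (1-\sigma)/2$), and the same holds for the $r$ columns of $c|_{[n]\times S}$. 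Hence the number of bad matrices of rank $r$ is at most $\binom{n}{r}^2\,2^{2rn\,h(\tau/(1-\sigma))}$, the binomial factor being $2^{o(r\Delta)}$ and thus of lower order.

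The main step is to estimate $\Pr[c \in \tensorcode(C_A', C_B')]$ for a fixed matrix $c$ of rank $r$. By exactness of the complex $Y(H_{A'}, H_{B'})$ applied to the punctured codes (see \Cref{sec:robust-tensor-codes}), $c \in \tensorcode(C_A', C_B')$ iff $H_{A'}\,c\,H_{B'}^{\,T} = 0$, where $H_{A'}, H_{B'}$ are parity checks of $C_A', C_B'$. Since $C_A$ is uniformly random, $C_A^\perp$ is uniformly random of dimension $(1-\rho_a)\Delta$, and $(C_A')^\perp$ --- the shortening of $C_A^\perp$ at $I_a$ --- is, conditioned on the high-probability event that its dimension does not drop (handled by a separate and easier union bound that the hypothesis comfortably accommodates), uniformly random of dimension $m_{a'} := (1-\sigma-\rho_a)\Delta$; thus $H_{A'}$ is, up to negligible error, a uniformly random $m_{a'}\times n$ matrix, and likewise $H_{B'}$ a uniformly random $m_{b'}\times n$ matrix with $m_{b'} := (1-\sigma-\rho_b)\Delta$, independent of $H_{A'}$. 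Writing $c = LR$ with $L \in \F2^{n\times r}$, $R \in \F2^{r\times n}$ of full rank and putting $X := H_{A'}L \in \F2^{m_{a'}\times r}$, $Z := R\,H_{B'}^{\,T} \in \F2^{r\times m_{b'}}$, the matrices $X, Z$ are independent and uniformly random and $\Pr[c\in\tensorcode(C_A',C_B')] = \Pr[XZ=0]$. Conditioning on $\rk(X) = \varrho$, the event $XZ = 0$ forces all $m_{b'}$ columns of $Z$ into $\ker X$ (dimension $r - \varrho$), with probability $2^{-\varrho m_{b'}}$; together with the standard bound $\Pr[\rk(X) \le \varrho] \le 2^{-(m_{a'}-\varrho)(r-\varrho)}$ and its symmetric counterpart this gives $\Pr[XZ=0] \le \operatorname{poly}(\Delta)\cdot 2^{-\psi(r)}$, where $\psi(r) = \max\big(\min_{\varrho}[(m_{a'}-\varrho)(r-\varrho)+\varrho m_{b'}],\ \min_{\varrho}[(m_{b'}-\varrho)(r-\varrho)+\varrho m_{a'}]\big)$ is the value of a one-parameter optimization.

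Putting the pieces together, the contribution to the union bound from rank $r$ is at most $2^{2h(\sigma)\Delta}\cdot 2^{o(r\Delta)}\cdot 2^{2rn\,h(\tau/(1-\sigma))}\cdot 2^{-\psi(r)}$, and it remains to check that this is $2^{-\Omega(\Delta)}$ for every $1 \le r \le n$; summing over the $O(\Delta)$ values of $r$ then concludes. This last verification is the main obstacle: for $r$ a constant fraction of $\Delta$ all exponents are $\Theta(\Delta^2)$ and one needs $2(1-\sigma)h(\tau/(1-\sigma)) < \psi(r)/(r\Delta)$, whereas for small $r$ (down to $r = 1$) the $\Theta(\Delta)$ terms --- most importantly the $2h(\sigma)\Delta$ from the puncture sets --- are the binding ones; the precise right-hand side $\tfrac34\,(1-\sigma-\rho_a)(1-\sigma-\rho_b)/(1-\sigma)$ of the hypothesis is calibrated so that the worst $r$ across all these regimes still satisfies the inequality, with the factor $\tfrac34$ leaving room to absorb the lower-order $2^{o(r\Delta)}$ and $\operatorname{poly}(\Delta)$ terms.
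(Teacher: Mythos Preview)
Your approach is essentially the paper's: union bound over puncture pairs, stratify bad matrices by rank $r$, count rank-$r$ matrices with light rows and columns via the $r$-rows/$r$-columns determination trick (this is the paper's \Cref{claim:number-of-bad-matrix-is-small}), and bound $\Pr[c\in\tensorcode]$ by conditioning on the rank of $H_{A'}c$ (the paper's \Cref{claim:bad-matrix-is-likely-not-codeword} does the same computation via Gaussian binomial coefficients rather than your $c=LR$ factorization, but the two are equivalent). Two small remarks. First, the paper avoids your detour through ``dimension does not drop'' by monotonicity: since always $\dim C_A'\le k_a$ and a smaller dimension only tightens the constraint $c\in\tensorcode(C_A',C_B')$, one may simply bound by $\cP(n,r,k_a,k_b)$ with no conditioning. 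Second, you leave the $\varrho$-minimization as ``the main obstacle''; the paper dispatches it in one line by rewriting $(m_{a'}-\varrho)(r-\varrho)+\varrho\,m_{b'}=m_{a'}m_{b'}-(m_{a'}-\varrho)\bigl(m_{b'}-(r-\varrho)\bigr)$ and using $m_{a'},m_{b'}\le n$ together with $\varrho(r-\varrho)\le r^2/4\le rn/4$ to obtain the uniform lower bound $\tfrac34\,m_{a'}m_{b'}\,r/n$, which is precisely the right-hand side of the hypothesis and makes the overall exponent negative for every $r\ge 1$.
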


\begin{remark}
  Previously, we said the robustness parameter can be thought of as the higher dimensional generalization of linear distance.
  Here, \Cref{lem:large-distance-after-puncture} provides an alternative generalization that could also be useful when trying to study higher dimensional tensor codes.
  Note that in one-dimension, these two definitions coincide.
\end{remark}

\begin{restatable} [Modification of {\cite[Lemma 8]{panteleev2021asymptotically}} or {\cite[Lemma 30]{leverrier2022quantum}}] {lemma}{RobustLemma} \label{lem:small-distance-is-structured}
  Suppose $C_A$ and $C_B$ have distance $d_1$.
  If $c \in \tensorcode(C_A, C_B)$ is supported on $I_a \times [n_b] \cup [n_a] \times I_b$
  and $|I_a|, |I_b| < d_1$,
  then there exists $c_a \in C_A \otimes \F2^{n_b}$ supported on $[n_a] \times I_b$ and $c_b \in \F2^{n_a} \otimes C_B$ supported on $I_a \times [n_b]$
  such that $c = c_a + c_b$.

  Furthermore, if $|I_a|, |I_b| < d_1/2$, we have
  \begin{equation*}
    \abs{c} \ge \frac{d_1}{2}(\norm{c_a}_{[n_b]} + \norm{c_b}_{[n_a]}).
  \end{equation*}
\end{restatable}

\begin{proof} [Proof of \Cref{thm:random-tensor-codes-are-robust}]
  We first show linear distance. The \Cref{eq:GV-2d} in the theorem statement implies $h(\delta_1) < 1 - \rho_a$ and $h(\delta_1) < 1 - \rho_b$ because $h(\delta_1) < 2 h(\delta_1/2)$ for $\delta_1 < 1/2$.
  By Gilbert-Varshamov bound, we know $C_A, C_B$ have distance $d_1 = \delta_1 \Delta$ with probability tending to $1$.

  Next, we prove that $(C_A,C_B)$ is $\delta_2 \Delta$-robust when $C_A, C_B$ have distance $d_1$ and the condition in \Cref{lem:large-distance-after-puncture} holds.
  More precisely, because the inequality in \Cref{lem:large-distance-after-puncture} holds for $\sigma = \delta_1/2$ and $\tau = 4 \delta_2/\delta_1$,
  we have that for all $s = \sigma \Delta$-punctured codes $C_A', C_B'$, if a codeword in $\tensorcode(C_A', C_B')$ has all columns and rows with weight $< t = \tau \Delta$,
  then the codeword is $0$.
  Note that the same holds for all punctured codes $C_A', C_B'$ with puncture $\le s$.
  
  Given $c \in \tensorcode(C_A, C_B)$, we consider the following two cases. First, assume that $c$ has small weight $|c| < d_1 t / 2$. We can remove $d_1/2-1=s-1$ columns $I_a$ and $s-1$ rows $I_b$ such that $c_{([n_a]-I_a)\times([n_b]-I_b)}$ has weight less than $t$ for each column and each row. 
  By \Cref{lem:large-distance-after-puncture}, 
    $c_{([n_a]-I_a)\times([n_b]-I_b)}$ has to be $0$.
  So $c$ is supported on at most $s-1$ columns and $s-1$ rows. 
  Now, because $s - 1 < s = d_1/2$, by \Cref{lem:small-distance-is-structured}
    \begin{equation*}
      \abs{c} \ge \frac{d_1}{2} (\norm{c_a}_{[n_b]} + \norm{c_b}_{[n_a]}).
    \end{equation*}

  Second, suppose $c$ has large weight $|c| \ge d_1 t / 2$. 
  Because $c \in \tensorcode(C_A, C_B)$, we can always write $c = c_a + c_b$. 
  For any choice of $c_a, c_b$, we have $\norm{c_a}_{[n_b]} + \norm{c_b}_{[n_a]} \le 2 \Delta$.
  Together with the fact that $c$ has large weight, we have
    \begin{equation*}
      \abs{c} \ge \frac{d_1 t}{4 \Delta} (\norm{c_a}_{[n_b]} + \norm{c_b}_{[n_a]}).
    \end{equation*}

  Therefore, the code pair $(C_A, C_B)$ is $\min(\frac{d_1}{2}, \frac{d_1 t}{4 \Delta}) = \frac{d_1 t}{4 \Delta} = \delta_2 \Delta$ robust.
\end{proof}

We now prove the first lemma.
The proof is similar to the proof of the Gilbert-Varshamov bound where we count the number of bad events and take the union bound.
Here, we organize the bad events by the rank of the matrices.
We first count the number of bad matrices with rank $r$.
Then, we compute the probability for a rank $r$ matrix to be a codeword. Finally, we take the union bound.

We again first state the two claims, show the lemma, then prove the claims.

\begin{claim} [Number of bad matrices is small] \label{claim:number-of-bad-matrix-is-small}
  Let $\cM(\Delta, r, t)$ be the set of matrices in $\F2^{\Delta \times \Delta}$ with rank $r$ where each row and each column has weight $< t$. Assume $t \le \Delta/2$.
  Then 
  \begin{equation*}
    \abs{\cM(\Delta, r, t)} \le 2^{2r\Delta h(\frac{t}{\Delta}) + 2\Delta h(\frac{r}{\Delta})}.
  \end{equation*}
\end{claim}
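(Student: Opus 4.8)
The plan is to parametrize a rank-$r$ matrix $M \in \F2^{\Delta \times \Delta}$ by a choice of $r$ rows that span the row space together with the coefficients expressing every row as a linear combination of those $r$ rows. Concretely, I would first pick which $r$ of the $\Delta$ rows form a spanning set (at most $2^\Delta$ choices, crudely, or $\binom{\Delta}{r} \le 2^{\Delta h(r/\Delta)}$ choices — the latter is what we want). Each of these $r$ generating rows is a vector in $\F2^\Delta$ of Hamming weight $< t$, so there are at most $\paren{\sum_{i<t}\binom{\Delta}{i}}^r \le 2^{r \Delta h(t/\Delta)}$ ways to choose them, using the standard volume bound $\sum_{i<t}\binom{\Delta}{i} \le 2^{\Delta h(t/\Delta)}$ valid since $t \le \Delta/2$. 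Then each of the remaining $\Delta$ rows (including the generating ones, treated uniformly) is determined by a coefficient vector in $\F2^r$ saying which generators to sum; that is at most $(2^r)^\Delta = 2^{r\Delta}$ choices — but this is too lossy.

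To get the stated bound I would instead exploit the column constraint symmetrically. Note that $M$ has rank $r$ iff $M^T$ has rank $r$; so I can also describe $M$ by $r$ generating \emph{columns} (each of weight $< t$, contributing $2^{r\Delta h(t/\Delta)}$) together with, for each of the $\Delta$ columns, its coefficient vector in $\F2^r$. The key observation that tames the coefficient count is: the coefficient vectors, viewed as the rows of a $\Delta \times r$ matrix $P$ with $M = P Q$ for a $r \times \Delta$ matrix $Q$ of generating rows, are themselves constrained — each column of $P$ corresponds (after a change of basis on the row space) to the indicator pattern of where a fixed generating row is ``used'', and since each column of $M$ has weight $< t$, and $Q$ has full rank $r$, the matrix $P$ has the property that each of its $\Delta$ rows lies in a fixed $\F2^r$ but the columns of $P$ have weight $< t$ as well after an appropriate normalization. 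More cleanly: write $M = P Q$ where the rows of $Q$ are $r$ of the actual rows of $M$ (weight $< t$ each) and $P \in \F2^{\Delta \times r}$; then $P$ restricted to the $r$ chosen row-indices is the identity, so $P$ is determined by its other $\Delta - r$ rows, and each column of $P$ injects into a column of $M$ (on those $\Delta-r$ coordinates the generating row is scaled by the corresponding entry), hence each column of $P$ has weight $< t$. Thus $P$ is an element of $\F2^{\Delta \times r}$ with column weights $< t$: at most $2^{r \Delta h(t/\Delta)}$ of them. Multiplying: $\binom{\Delta}{r} \cdot 2^{r\Delta h(t/\Delta)} \cdot 2^{r\Delta h(t/\Delta)} \le 2^{\Delta h(r/\Delta)} \cdot 2^{2 r \Delta h(t/\Delta)}$, and bounding $\binom{\Delta}{r} \le 2^{2\Delta h(r/\Delta)}$ (wastefully, to match the stated $2 \Delta h(r/\Delta)$ exponent) gives exactly the claimed bound.

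Concretely the steps are: (1) fix the bound $\sum_{i < t}\binom{\Delta}{i} \le 2^{\Delta h(t/\Delta)}$ for $t \le \Delta/2$ and $\binom{\Delta}{r} \le 2^{\Delta h(r/\Delta)}$; (2) set up the factorization $M = PQ$ with $Q$ equal to $r$ spanning rows of $M$ and $P$ normalized so its restriction to those row indices is $I_r$; (3) argue each row of $Q$ has weight $< t$ (it is a row of $M$) and each column of $P$ has weight $< t$ (it injects into a column of $M$ off the pivot rows, where the generating row is merely rescaled); (4) count: $r$-subset of rows, then $Q$, then $P$, and multiply. The main obstacle I anticipate is step (3): one must be careful that after choosing the normalization $P|_{\text{pivots}} = I_r$, the claim ``each column of $P$ has weight bounded by the weight of some column of $M$'' really holds — this requires that the generating rows are genuinely $r$ of the rows of $M$ (not an arbitrary basis), so that the entry $P_{i,j}$ equals the $j$-th coordinate... — actually the cleanest route is: column $j$ of $M$ equals $\sum_k P_{i,k} Q_{k,j}$; fix $k$ and the pivot row index $p_k$; then $M_{p_k, j} = Q_{k,j}$, so the $k$-th generating row agrees with $M$ on its pivot, and for general $i$, $M_{i,j} = \sum_k P_{i,k}Q_{k,j}$ does not immediately bound $P$'s column weight. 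The honest fix is to instead count $P$ by its \emph{rows} being in $\F2^r$ — giving the lossy $2^{r\Delta}$ — OR to observe that $\mathrm{rank}(M)=r$ means $M$ is determined by any $r$ independent columns plus the $r\times\Delta$ coefficient matrix expressing all columns in that column basis, and symmetrically that coefficient matrix, being (a submatrix-complement of) something whose product with a weight-$<t$-columned matrix has weight-$<t$ columns, can be taken to have weight-$<t$ columns after row-reducing the column basis to contain pivots. I would resolve this by carrying out the row-reduction argument carefully and, if the clean weight bound on $P$ resists, falling back to bounding $\abs{\cM(\Delta,r,t)} \le \binom{\Delta}{r}^2 \cdot (\text{weight-}{<}t\text{ vectors})^{2r}$ via choosing $r$ pivot rows, $r$ pivot columns, the $r\times r$ invertible pivot block, and the $r$ generating rows/columns — which still yields $2^{2\Delta h(r/\Delta) + 2 r\Delta h(t/\Delta) + O(r^2)}$ and absorb the $O(r^2)$ into the $h(r/\Delta)$ term for $r = o(\Delta/\log\Delta)$, which is the only regime that matters for the union bound.
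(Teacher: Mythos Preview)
Your primary line --- the $M=PQ$ factorization with column-weight control on $P$ --- does not work, and you correctly diagnose why: the columns of $M$ are $\F2$-linear combinations of the columns of $P$, so a weight bound on columns of $M$ says nothing about individual columns of $P$.

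Your fallback is the paper's actual argument, but you have not carried it to the clean endpoint and end up inserting a spurious $O(r^2)$ term. The paper's version is this: since $\rk(c)=r$, there is a full-rank $r\times r$ submatrix $c_{I_a\times I_b}$. Once $I_a,I_b$ are fixed, the entire matrix is determined by its values on the ``cross'' $I_a\times[\Delta]\cup[\Delta]\times I_b$, via the identity
\[
c_{i,j} \,=\, c_{I_a\times j}\,(c_{I_a\times I_b})^{-1}\,c_{i\times I_b}\,.
\]
So the count is just: choose $I_a,I_b$ (at most $\binom{\Delta}{r}^2$ ways), then choose the $r$ full rows indexed by $I_a$ and the $r$ full columns indexed by $I_b$, each a weight-$<t$ vector in $\F2^\Delta$ (at most $\big(\sum_{i<t}\binom{\Delta}{i}\big)^{2r}$ ways). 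That is exactly $\binom{\Delta}{r}^2\cdot 2^{2r\Delta h(t/\Delta)}\le 2^{2\Delta h(r/\Delta)+2r\Delta h(t/\Delta)}$, the stated bound, with no extra term. You do \emph{not} choose the $r\times r$ pivot block separately --- it is the intersection of the chosen rows and columns and is already fixed once those are specified --- so there is no $O(r^2)$ to absorb and no restriction to $r=o(\Delta/\log\Delta)$.
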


\begin{claim} [A bad matrix is likely not a codeword] \label{claim:bad-matrix-is-likely-not-codeword}
  Let $c \in \F2^{\Delta \times \Delta}$ be a matrix of rank $r$.
  Then 
  \begin{equation*}
    \cP(\Delta, r, k_a, k_b) \coloneqq \Pr[c \in \Sigma(C_A, C_B)] \le 512 (r+1) 2^{-\frac{3}{4} \frac{(\Delta-k_a) (\Delta-k_b)}{\Delta} r},
  \end{equation*}
  where $C_A, C_B$ are uniformly sampled from linear codes of length $\Delta$ and dimension $k_a, k_b$.
\end{claim}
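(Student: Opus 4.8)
The plan to prove \Cref{claim:bad-matrix-is-likely-not-codeword} is to first translate the event $c\in\tensorcode(C_A,C_B)$ into a linear-algebraic condition on the parity-check matrices and then reduce the probability estimate to a rank computation for a uniformly random matrix. Write $m_a=\Delta-k_a$, $m_b=\Delta-k_b$, and let $H_A\in\F2^{m_a\times\Delta}$, $H_B\in\F2^{m_b\times\Delta}$ be parity-check matrices of $C_A,C_B$. I would begin by recording that $\tensorcode(C_A,C_B)=\{c\in\F2^{\Delta\times\Delta}:H_AcH_B^T=0\}$: the inclusion ``$\subseteq$'' is immediate, and both sides are $\F2$-subspaces of $\F2^{\Delta\times\Delta}$ of dimension $\Delta^2-m_am_b$, the left side by inclusion--exclusion and the right side by rank--nullity for the surjective map $c\mapsto H_AcH_B^T$. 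Since $H_AcH_B^T=0$ depends only on the row spaces of $H_A,H_B$, and a uniformly random code of dimension $k$ corresponds to a uniformly random row space of a full-rank parity-check matrix, I would pass to \emph{unconstrained} uniform matrices $G_A\in\F2^{m_a\times\Delta}$, $G_B\in\F2^{m_b\times\Delta}$ at the cost of a factor $1/c_0$, where $c_0:=\Pr[G_A,G_B\text{ both full rank}]\ge\big(\prod_{i\ge1}(1-2^{-i})\big)^2>1/13$; thus $\cP(\Delta,r,k_a,k_b)\le c_0^{-1}\Pr_{G_A,G_B}[G_AcG_B^T=0]$.

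Next I would evaluate $\Pr_{G_A,G_B}[G_AcG_B^T=0]$ by conditioning on $G_B$. For fixed $d\in\F2^{\Delta\times m_b}$ the $m_a$ rows of $G_A$ are i.i.d.\ uniform in $\F2^\Delta$ and $G_Ad=0$ asks each to lie in the orthogonal complement of the column space of $d$, so $\Pr_{G_A}[G_Ad=0]=2^{-m_a\rk(d)}$; hence $\Pr[G_AcG_B^T=0]=\EE_{G_B}[2^{-m_a\rk(cG_B^T)}]$. Writing $c=UV$ with $U\in\F2^{\Delta\times r}$, $V\in\F2^{r\times\Delta}$ both of rank $r$, we get $\rk(cG_B^T)=\rk(VG_B^T)=\rk(G_BV^T)$, and since $V^T$ has full column rank, $G_BV^T$ is uniform in $\F2^{m_b\times r}$. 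Thus $\rk(cG_B^T)$ is distributed as $\rk(W)$ for $W$ uniform in $\F2^{m_b\times r}$, and
\[
\Pr[G_AcG_B^T=0]=\sum_{j=0}^{\min(m_b,r)}\Pr[\rk(W)=j]\,2^{-m_aj}.
\]
Bounding the number of rank-$j$ matrices in $\F2^{m_b\times r}$ by $\binom{m_b}{j}_2\,2^{jr}\le C'\,2^{j(m_b-j)+jr}$, where $C'=\big(\prod_{i\ge1}(1-2^{-i})\big)^{-1}<4$, yields $\Pr[\rk(W)=j]\le C'\,2^{-f(j)}$ with $f(j):=(m_b-j)(r-j)+m_aj$.

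The technical core is the elementary inequality $f(j)\ge\tfrac34\,\tfrac{m_am_br}{\Delta}$ for every integer $j\in[0,\min(m_b,r)]$, using $m_a,m_b,r\le\Delta$. Since $f$ is a convex quadratic in $j$, it suffices to check the two endpoints and the unconstrained minimizer $j^*=(m_b+r-m_a)/2$: at $j=0$, $f=m_br\ge\tfrac34\tfrac{m_am_br}{\Delta}$ since $m_a\le\Delta$; at $j=\min(m_b,r)$, $f=m_a\min(m_b,r)\ge\tfrac34\tfrac{m_am_br}{\Delta}$ since $\max(m_b,r)\le\Delta$; and when $j^*$ lies in the interval, $f(j^*)=m_br-(m_b+r-m_a)^2/4$, so setting $m_a=m_b+r-2x$ (hence $j^*=x$) the inequality $\Delta f(j^*)\ge\tfrac34 m_am_br$ becomes $g(x):=\Delta(m_br-x^2)-\tfrac34(m_b+r-2x)m_br\ge0$ on $x\in[\max(0,(m_b+r-\Delta)/2),\min(m_b,r)]$, and since $g$ is concave in $x$ this reduces to its two endpoints, both nonnegative by another use of $m_a,m_b,r\le\Delta$. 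Plugging in, $\Pr[G_AcG_B^T=0]\le C'(\min(m_b,r)+1)\,2^{-\frac34\frac{m_am_br}{\Delta}}\le C'(r+1)\,2^{-\frac34\frac{(\Delta-k_a)(\Delta-k_b)}{\Delta}r}$, and dividing by $c_0$ gives the claimed bound, since $C'/c_0<52\le512$.

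I expect the main obstacle to be this last optimization: the constant $\tfrac34$ is essentially tight (for $k_a=k_b=0$ and $r=\Delta$ one has $f(j^*)=\tfrac34\Delta^2=\tfrac34\tfrac{m_am_br}{\Delta}$), so it must come out of the convexity-plus-endpoints analysis of $f$ carried out exactly rather than from a lossy estimate; the remaining ingredients --- the subspace count for $\Pr[\rk(W)=j]$ and the bookkeeping around the full-rank conditioning --- are routine.
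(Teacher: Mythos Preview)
Your argument is correct and follows the same skeleton as the paper: rewrite $c\in\tensorcode(C_A,C_B)$ as $H_AcH_B^T=0$, condition on the rank of an intermediate product, bound via Gaussian-binomial estimates, and show the resulting quadratic exponent is at least $\tfrac34\,m_am_br/\Delta$. Two execution differences are worth noting. First, you pass from uniform full-rank parity-check matrices to unconstrained uniform $G_A,G_B$ at the cost of the constant $1/c_0$; this buys you the elementary identity $\Pr[G_Ad=0]=2^{-m_a\rk(d)}$ and the fact that $G_BV^T$ is uniform in $\F2^{m_b\times r}$, whereas the paper works with uniform subspaces throughout and counts both $\Pr[\rk(H_Ac)=r']$ and $\Pr[H_B(H_Ac)=0\mid\rk=r']$ directly with Gaussian binomials. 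Your reduction is a tidy simplification and yields a slightly better constant. Second, for the key inequality $f(j)\ge\tfrac34\,m_am_br/\Delta$ the paper gives a shorter algebraic proof than your convexity-plus-endpoints analysis: rewrite $f(j)=m_am_b-(m_b-j)(m_a-(r-j))$; if the second factor is nonnegative, bound $(1-j/m_b)(1-(r-j)/m_a)\le(1-j/\Delta)(1-(r-j)/\Delta)\le 1-\tfrac{r}{\Delta}+\tfrac{r^2}{4\Delta^2}\le 1-\tfrac{3r}{4\Delta}$ using $m_a,m_b,r\le\Delta$ and $j(r-j)\le r^2/4$; if it is negative, $f(j)\ge m_am_b\ge\tfrac34\,m_am_br/\Delta$ directly. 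Both routes are valid.
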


\begin{proof} [Proof of \Cref{lem:large-distance-after-puncture}]
  
  To show the lemma, we apply the union bound over the bad events.
  First, there are $\binom{\Delta}{s}^2$ possible $s$-punctured code pairs $C_A', C_B'$ when puncturing both $C_A$ and $C_B$.
  Next, there are $\abs{\cM(\Delta-s, r, t)}$ many rank $r$ matrices where all rows and columns have weight $<t$.
  Finally, each of those rank $r$ matrix is a codeword with probability $\le \cP(\Delta-s, r, k_a, k_b)$ which we explain below.
  Recall that $C_A$ is a uniform distribution over dimension $k_a$,
    so $C_A'$ is a mixture of uniform distribution over dimensions $k'_a \le k_a$.
  Suppose $C_A'$($C_B'$) has dimension $k'_a$($k'_b$) with probability $p_{k'_a}$($q_{k'_b}$),
    then a rank $r$ matrix is a codeword with probability $\le \sum_{k'_a=0}^{k_a} \sum_{k'_b=0}^{k_b} p_{k'_a} q_{k'_b} \cP(\Delta-s, r, k'_a, k'_b)$.
  Because smaller dimension leads to smaller probability of becoming a codeword, $\cP(\Delta-s, r, k'_a, k'_b) \le \cP(\Delta-s, r, k_a, k_b)$,
    the probability for a rank $r$ matrix to be a codeword $\le \cP(\Delta-s, r, k_a, k_b)$ as claimed.
  
  Therefore, the probability where the main condition fails to hold (there are some $s$-punctured code pair $C'_A, C'_B$ with non-zero codeword in $\tensorcode(C_A', C_B')$
  where all its rows and columns has weight $< t$) is
  \begin{align*}
    &\le \binom{\Delta}{s}^2 \sum_{r=1}^{\Delta-s} \abs{\cM(\Delta-s, r, t)}\, \cP(\Delta-s, r, k_a, k_b) \\
    &\le 2^{2 \Delta h(\frac{s}{\Delta})} \cdot 
      \sum_{r=1}^{\Delta-s} 2^{2r(\Delta-s) h(\frac{t}{\Delta-s}) + 2(\Delta-s) h(\frac{r}{\Delta-s})} \cdot
      512 (r+1) 2^{-\frac{3}{4} \frac{(\Delta-s-k_a) (\Delta-s-k_b)}{\Delta-s} r}
  \end{align*}
  where the last inequality uses $\binom{\Delta}{s} \le 2^{\Delta h(\frac{s}{\Delta})}$ and the results from \Cref{claim:number-of-bad-matrix-is-small} and \Cref{claim:bad-matrix-is-likely-not-codeword}.

  Finally, we find the requirements for the value to approach $0$ as $\Delta$ goes to infinity.
  This boils down to two conditions.
  The first is 
    \begin{equation*}
      2 \frac{\Delta - s}{\Delta} h(\frac{t}{\Delta-s}) < \frac{3}{4} \frac{(\Delta - s - k_a)(\Delta - s - k_b)}{\Delta (\Delta - s)}
    \end{equation*}
    so that large $r$ has exponentially decaying contribution.
  The second is 
    \begin{equation*}
      2h(\frac{s}{\Delta}) + 2\frac{\Delta-s}{\Delta} h(\frac{t}{\Delta-s}) < \frac{3}{4} \frac{(\Delta - s - k_a)(\Delta - s - k_b)}{\Delta (\Delta - s)}
    \end{equation*}
    so that the contribution for $r=1$ approaches $0$.
  It is clear that we can reduce the two inequalities to just the second one.
  By rewriting the variables $s = \sigma \Delta, t = \tau \Delta, k_a = \rho_a \Delta, k_b = \rho_b \Delta$, we see that if
    \begin{equation*}
      2 h(\sigma) + 2 (1-\sigma)h(\frac{\tau}{1-\sigma}) < \frac{3}{4} \frac{(1 - \sigma - \rho_a)(1 - \sigma - \rho_b)}{1-\sigma}
    \end{equation*}
    as $\Delta$ goes to infinity, with probability tending to $1$,
    for any $s = \sigma \Delta$-punctured code $C_A', C_B'$,
    all non-zero codewords in $\tensorcode(C_A', C_B')$ have at least one row or one column with weight $\ge t = \tau \Delta$.
\end{proof}

\begin{proof} [Proof of \Cref{claim:number-of-bad-matrix-is-small}]
  The idea is that knowing the entries of specific $r$ rows and $r$ columns of a rank $r$ matrix $c$ is sufficient to determine rest of the entries.
  So to upper bound the number of rank $r$ matrices $c$ where each row and each column has weight $< t$,
    we suffice to upper bound the number of  configuration supported on $r$ rows and $r$ columns,
    $c_{I_a \times [n_b] \cup [n_a] \times I_b}$,
    such that each row and each column has weight $< t$,
    where $I_a$ and $I_b$ are the indices of some $r$ rows and $r$ columns.
  We discuss the details below.

  Given $c \in \cM(\Delta, r, t)$ of rank $r$ matrices where each row and each column has weight $< t$. 
  Because $c$ has rank $r$, there exists a $r \times r$ submatrix with rank $r$, i.e. full rank.
  Let the $r$ rows and $r$ columns be indexed by $I_a \subset [n_a]$ and $I_b \subset [n_b]$.
  We now claim that knowing the value of $c$ on $I_a \times [n_b] \cup [n_a] \times I_b$ is enough to uniquely determine $c$ through the following identity:
  \begin{equation*}
    c_{i_a, i_b} = c_{I_a \times i_b} c_{I_a \times I_b}^{-1} c_{i_a \times I_b} = \sum_{j_a \in I_a, j_b \in I_b} c_{j_a, i_b} (c_{I_a \times I_b})^{-1}_{j_a, j_b} c_{i_a, j_b}
  \end{equation*}
  where $c_{I_a \times I_b}^{-1}$ is the matrix inverse.

  Because $c$ has rank $r$, the vector $c_{(i_a \cup I_a) \times i_b}$ is a linear combination of the vectors $c_{(i_a \cup I_a) \times j_b}$ for $j_b \in I_b$.
    Say $c_{(i_a \cup I_a) \times i_b} = \sum_{j_b \in I_b} v_{j_b} c_{(i_a \cup I_a) \times j_b}$. 
    Using the coordinates of $I_a$, $c_{I_a \times i_b} = \sum_{j_b \in I_b} v_{j_b} c_{I_a \times j_b}$, one has $v = c_{I_a \times i_b} c_{I_a \times I_b}^{-1}$.
    When plug $v$ back to $c_{i_a, i_b} = \sum_{j_b \in I_b} v_{j_b} c_{i_a, j_b}$, we obtain the desired identity.

  \begin{figure}[H]
    \centering
    \begin{tikzpicture}
      \draw (0,0) rectangle (6,-6);
      \draw (2,0) rectangle (2,-6);
      \draw (3,0) rectangle (3,-6);
      \draw (0,-2) rectangle (6,-2);
      \draw (0,-3) rectangle (6,-3);

      \draw (2.5,-2.5)node{$c_{i_a, i_b}$};
      \draw (1,-2.5)node{$c_{i_a \times I_b}$};
      \draw (2.5,-1)node{$c_{I_a \times i_b}$};
      \draw (1,-1)node{$c_{I_a \times I_b}$};

      \path (0,0) --node[auto]{$I_b$} (2,0) --node[auto]{$i_b$} (3,0);
      \path (0,0) --node[auto,swap]{$I_a$} (0,-2) --node[auto,swap]{$i_a$} (0,-3);

      \draw (3,.7)node[rotate=90]{$\left.\rule{0pt}{\dimexpr 3cm}\right\}$};
      \draw (-.7,-3)node{$\left\{\rule{0pt}{\dimexpr 3cm}\right.$};
      \draw (3,1.3)node{$[n_b]$};
      \draw (-1.3,-3)node{$[n_a]$};

      \clip (0,0) -- (6,0) -- (6,-2) -- (2,-2) -- (2,-6) -- (0,-6) -- cycle;
      \foreach\x in{-6, -5.5, ..., 6}{
        \draw (\x, 0) -- +(6, -6);
      }
    \end{tikzpicture}
    \caption{Given $c$ of rank $r$. If $c_{I_a \times I_b}$ has rank $r$, then $c$ is uniquely determined by the value on the shaded region $c_{I_a \times [n_b] \cup [n_a] \times I_b}$.}
    \label{fig:rank-r-matrix}
  \end{figure}
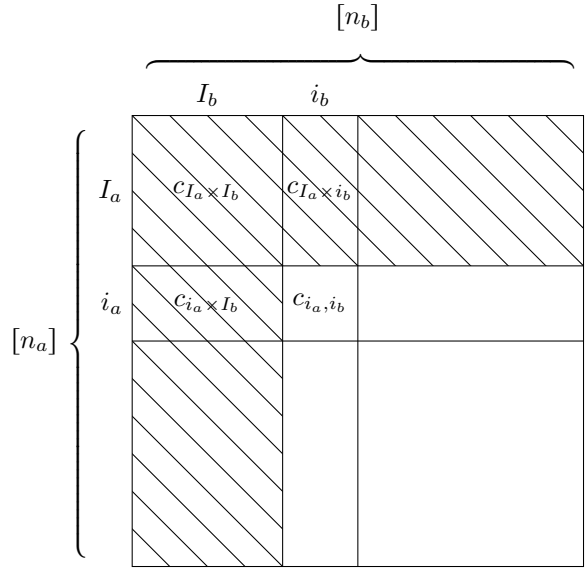

  So to bound the size of $\cM(\Delta, r, t)$, we suffice to bound the number of choices of $I_a$ and $I_b$ and the number of choices of $c_{I_a \times [n_b] \cup [n_a] \times I_b}$.

  We first bound the number of choices of $I_a$ and $I_b$. $I_a$ and $I_b$ each has size $r$ and is a subset of $[\Delta]$, so the number of choices 
  \begin{equation*}
    \le \binom{\Delta}{r}^2 \le 2^{2\Delta h(\frac{r}{\Delta})}
  \end{equation*}
  where we use the inequality $\binom{\Delta}{r} \le 2^{\Delta h(\frac{r}{\Delta})}$.

  Next, we bound the number of choices of $c_{I_a \times [n_b] \cup [n_a] \times I_b}$ given $I_a$ and $I_b$.
    It is enough to find $r$ columns and $r$ rows of weight $< t$,
    so the number of choices 
    \begin{equation*}
      \le \left(\sum_{i=0}^{t-1} \binom{\Delta}{i}\right)^{2r} \le 2^{2r\Delta h(\frac{t}{\Delta})}
    \end{equation*}
    where we use the inequality $\sum_{i=0}^{t-1} \binom{\Delta}{i} \le 2^{\Delta h(\frac{t}{\Delta})}$ when $t \le \Delta/2$. 
    
  Multiply the two, we have
  \begin{equation*}
    \abs{\cM(\Delta, r, t)} \le 2^{2r\Delta h(\frac{t}{\Delta}) + 2\Delta h(\frac{r}{\Delta})}.
  \end{equation*}
\end{proof}

\begin{proof} [Proof of \Cref{claim:bad-matrix-is-likely-not-codeword}]
  Let $H_A\colon \F2^\Delta \to \F2^{m_a}$ and $H_B\colon \F2^\Delta \to \F2^{m_b}$ be the parity check matrices of $C_A$ and $C_B$ where $m_a = \Delta - k_a$ and $m_b = \Delta - k_b$.
  Using the chain rule we have
  \begin{equation*}
    \Pr[c \in \tensorcode(C_A, C_B)] = \Pr[(H_A \otimes H_B) c = 0] = \sum_{r'=0}^{\min(r,m_a)} \Pr[\rk(H_A c) = r'] \cdot \Pr[H_B (H_A c) = 0 | \rk(H_A c) = r']
  \end{equation*}
  where $\rk$ is the rank and $H_B (H_A c)$ means we first apply $H_A$ to $c$ then apply $H_B$ to $H_A c$.
  So what remains is to bound $\Pr[\rk(H_A c) = r']$ and $\Pr[H_B (H_A c) = 0 | \rk(H_A c) = r']$.

  To do so, we introduce the Gaussian binomial coefficients.
  The Gaussian binomial coefficients is $\binom{x}{y}_2 \coloneqq \frac{[x]_2!}{[y]_2![x-y]_2!}$ where $[x]_2 \coloneqq 2^x - 1$ and $[x]_2! \coloneqq \prod_{i=1}^x [i]_2$.
  It is known that $\binom{x}{y}_2$ is the number of $y$-dimensional vector subspace of an $x$-dimensional vector space over $\F2$.
  This is the primary use of the Gaussian binomial coefficients.
  
  Using the knowledge of Gaussian binomial coefficient, we first bound $\Pr[\rk(H_A c) = r']$.
  The idea is to consider all the rank $r'$ subspaces of $\F2^{m_a}$ and compute the probability that $H_A c$ is contained in one of those subspaces.
  The number of the rank $r'$ subspace is simply $\binom{m_a}{r'}_2$.
  Let $V$ be the column space of $c$, i.e. the vector space spanned by the column vectors of $c$.
  Because $c$ has rank $r$, $\dim V = r$.
  Now, we consider a rank $r'$ subspace $V'$ of $\F2^{m_a}$ and ask what is the probability that $H_A V \subset V'$, i.e.
  $V \subset H_A^{-1} V' = \set{v \in \F2^\Delta: H_A v \in V'}$.
  Because $H_A$ is full rank, $\dim H_A^{-1} V' = r' + k_a$.
  So the probability is $\binom{r'+k_a}{r}_2 \big/ \binom{\Delta}{r}_2$, i.e. the number of rank $r$ subspace in $H_A^{-1} V'$ divided by the number of rank $r$ subspace in $\F2^\Delta$.
  Overall we have,
  \begin{equation*}
    \Pr[\rk(H_A c) = r'] \le \binom{m_a}{r'}_2 \binom{r'+k_a}{r}_2 \bigg/ \binom{\Delta}{r}_2.
  \end{equation*}
  (Note that this is an inequality because $H_A V$ could have rank less than $r'$.)
  
  Now we bound $\Pr[H_B (H_A c) = 0 | \rk(H_A c) = r']$.
  Let $W$ be the row space of $H_A c$ and $W'$ be the kernel $H_B^{-1}(0)$.
  We know $\dim W = r'$, because $\rk(H_A c) = r'$, and $H_B^{-1}(0) = k_b$, because $H_B$ has full rank.
  Therefore,
  \begin{equation*}
    \Pr[H_B (H_A c) = 0 | \rk(H_A c) = r'] = \binom{k_b}{r'}_2 \bigg/ \binom{\Delta}{r'}_2.
  \end{equation*}

  Combine the above results, we have
  \begin{equation*}
    \Pr[c \in \tensorcode(C_A, C_B)]
    \le \sum_{r'=0}^{\min(r, m_a)} \left( \binom{m_a}{r'}_2 \binom{r'+k_a}{r}_2 \middle/ \binom{\Delta}{r}_2 \right) \cdot \left( \binom{k_b}{r'}_2 \middle/ \binom{\Delta}{r'}_2 \right).
  \end{equation*}
  What is left to be done is to simplify the formula.

  We first use the following bound of the Gaussian binomial coefficient.
  \begin{claim} \label{claim:gaussian-binomial}
    \begin{equation*}
      2^{y(x-y)} \le \binom{x}{y}_2 \le 8 \cdot 2^{y(x-y)}
    \end{equation*}
  \end{claim}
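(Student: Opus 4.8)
The plan is to work directly from the product formula for the Gaussian binomial coefficient, namely
\[
\binom{x}{y}_2 \;=\; \prod_{j=0}^{y-1} \frac{2^{x-j}-1}{2^{y-j}-1}\;,
\]
which follows by cancelling common factors in $[x]_2!\big/\big([y]_2!\,[x-y]_2!\big)$. It then suffices to control each of the $y$ factors $\frac{2^{x-j}-1}{2^{y-j}-1}$ against the ``ideal'' value $2^{(x-j)-(y-j)}=2^{x-y}$, and then multiply the $y$ resulting bounds.

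For the lower bound, I would observe that $\frac{2^{x-j}-1}{2^{y-j}-1}\ge 2^{x-y}$ is equivalent to $2^{x-j}-1\ge 2^{x-j}-2^{x-y}$, i.e.\ to $2^{x-y}\ge 1$, which holds since $x\ge y$. Multiplying the $y$ factors gives $\binom{x}{y}_2\ge 2^{y(x-y)}$; the degenerate cases $y=0$ and $y=x$, where the coefficient equals $1=2^{0}$, are consistent with this and with the upper bound below.

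For the upper bound, I would use $\frac{2^{x-j}-1}{2^{y-j}-1}\le \frac{2^{x-j}}{2^{y-j}-1}=2^{x-y}\cdot\frac{1}{1-2^{-(y-j)}}$, and after reindexing $k=y-j$ obtain
\[
\binom{x}{y}_2\;\le\;2^{y(x-y)}\prod_{k=1}^{y}\frac{1}{1-2^{-k}}\;\le\;2^{y(x-y)}\exp\!\Big(\sum_{k\ge 1}\frac{2^{-k}}{1-2^{-k}}\Big)\;\le\;2^{y(x-y)}\exp\!\Big(\sum_{k\ge 1}2^{-k+1}\Big)\;=\;e^{2}\,2^{y(x-y)}\;,
\]
where the second inequality uses $-\log(1-t)\le t/(1-t)$ for $t\in[0,1)$ and the third uses $1-2^{-k}\ge \tfrac12$ for $k\ge 1$. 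Since $e^{2}<8$, this yields $\binom{x}{y}_2\le 8\cdot 2^{y(x-y)}$. There is essentially no obstacle here; the only point requiring (entirely routine) care is the numerical bound on the infinite product $\prod_{k\ge 1}(1-2^{-k})^{-1}$, and the crude estimate above already beats $8$ with room to spare.
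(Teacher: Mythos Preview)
Your proof is correct. Both your approach and the paper's ultimately reduce to bounding the product $\prod_{k\ge 1}(1-2^{-k})$, but the packaging differs. The paper bounds each $q$-factorial $[z]_2! = 2^{z(z+1)/2}\prod_{i=1}^{z}(1-2^{-i})$ separately, showing $\prod_{i=1}^{z}(1-2^{-i})\in[1/4,1/2]$ via the telescoping trick $(1-2^{-i})\le(1-2^{-i-1})^2$, and then combines three such estimates (for $[x]_2!$, $[y]_2!$, $[x-y]_2!$) to obtain the constant $8=(1/2)/(1/4)^2$. You instead work directly from the product formula $\binom{x}{y}_2=\prod_{j=0}^{y-1}\frac{2^{x-j}-1}{2^{y-j}-1}$: your lower bound (each factor $\ge 2^{x-y}$) is cleaner than going through the factorials, and your upper bound via $-\log(1-t)\le t/(1-t)$ yields the slightly sharper constant $e^{2}<8$. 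Both routes are short and routine; yours is arguably the more direct of the two.
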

  \begin{proof} [Proof of \Cref{claim:gaussian-binomial}]
    When $x=y$ or $y=0$, the inequality holds trivially.
    Otherwise, we will show when $z > 0$
    \begin{equation*}
      \frac{1}{4} \cdot 2^{z(z+1)/2} \le [z]_2! \le \frac{1}{2} \cdot 2^{z(z+1)/2}
    \end{equation*}
    which implies the desired result.
    By definition
    $[z]_2! = \prod_{i=1}^z (2^i-1) = 2^{z(z+1)/2} \prod_{i=1}^z (1-2^{-i})$.
    So we suffice to bound $\prod_{i=1}^z (1-2^{-i})$.
    For the upper bound, it is clear that $\prod_{i=1}^z (1-2^{-i}) \le 1/2$.
    For the lower bound, we have
    \begin{equation*}
      1/4 = (1-2^{-1})(1-2^{-1}) \le (1-2^{-1})(1-2^{-2})(1-2^{-2})
      \le ... \le (1-2^{-z}) \prod_{i=1}^z (1-2^{-i}) \le \prod_{i=1}^z (1-2^{-i})
    \end{equation*}
    where we apply $(1-2^{-i}) \le (1-2^{-i-1})^2$ iteratively.
  \end{proof}

  The bound on the Gaussian binomial coefficient implies
  \footnote{Note that $512$ can be improved to $8$ by bounding $\binom{r'+k_a}{r}_2 / \binom{\Delta}{r}_2 \le 2^{-r(\Delta - r' - k_a)}$ 
  and $\binom{k_b}{r'}_2 / \binom{\Delta}{r'}_2 \le 2^{-r'(\Delta - k_b)}$.}
  \begin{equation*}
    \Pr[c \in \tensorcode(C_A, C_B)]
    \le 512 \sum_{r'=0}^{\min(r, m_a)} 2^{-(m_a-r')(r-r')-m_b r'}.
  \end{equation*}

  Finally, we show $-(m_a-r')(r-r')-m_b r' \le -\frac{3}{4} \frac{m_a m_b}{\Delta} r$ for $0 \le r' \le m_a$.
  We first rewrite $-(m_a-r')(r-r')-m_b r' = (m_a-r')(m_b-(r-r')) - m_a m_b$.
  Then divide both sides by $m_a m_b$.
  If $m_b - (r-r') \ge 0$ we have
  \begin{align*}
    (1-\frac{r'}{m_a})(1-\frac{r-r'}{m_b}) - 1 
    &\le (1-\frac{r'}{\Delta})(1-\frac{r-r'}{\Delta}) - 1 \\
    &= -\frac{r}{\Delta} + \frac{r'(r-r')}{\Delta^2} \\
    &\le -\frac{r}{\Delta} + \frac{\frac{1}{4}r^2}{\Delta^2} \\
    &\le -\frac{3}{4} \frac{r}{\Delta}.
  \end{align*}
  Note the first inequality uses $m_a - r' \ge 0$ and $m_b - (r-r') \ge 0$.
  If $m_b - (r-r') < 0$ we have
  \begin{equation*}
    (1-\frac{r'}{m_a})(1-\frac{r-r'}{m_b}) - 1 
    \le - 1
    \le -\frac{3}{4} \frac{r}{\Delta}.
  \end{equation*}

  Therefore,
  \begin{equation*}
    \Pr[c \in \tensorcode(C_A, C_B)] \le 512 (r+1) 2^{-\frac{3}{4} \frac{m_a m_b}{\Delta} r}.
  \end{equation*}
\end{proof}

We now study the second lemma. We first recall the statement.

\RobustLemma*

\begin{proof} [Proof of \Cref{lem:small-distance-is-structured}]
  The intuition is that one can construct $c_a$ using $c_{([n_a] - I_a) \times I_b}$ by requiring each column to be a codeword in $C_A$.
  Because each column is only missing $|I_a| < d_1$ values,   
    which is less than the distance,
    one can uniquely recover the codeword.
  This intuition is correct, but requires more steps to make it rigorous as we will do below.


  Let $H_A' = H_A|_{I_a}\colon \F2^{I_a} \to \F2^{m_a}$ be the restriction of $H_A$. 
  Because $C_A$ has distance $d_1$ and $|I_a| < d_1$, $H_A'$ is injective.
  Since $H_A'$ is an injective linear map, there exists (not unique) a left inverse linear map $J_A'\colon \F2^{m_a} \to \F2^{I_a}$, such that
  \begin{equation*}
    J_A' H_A' = I_{I_a}
  \end{equation*}
  where $I_{I_a}$ is the identity for $\F2^{I_a}$.
  Similarly, we can define $H_B'$ and $J_B'$.

  Using $J_A'$ we can explicitly write down a recovery map for the erased codeword.
  \begin{claim} \label{claim:vector-recover}
    For any codeword $v \in C_A \subset \F2^{n_a}$,
    one can recover $v$ from $v_{[n_a]-I_a}$ through the following linear map
    \begin{equation*}
      v = (J_A' H_A|_{[n_a]-I_a} \Vert I_{[n_a] - I_a}) v_{[n_a]-I_a}
    \end{equation*}
    where $\Vert$ is concatenation.
  \end{claim}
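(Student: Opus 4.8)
\textbf{Proof proposal for Claim \ref{claim:vector-recover}.}
The plan is to exploit the parity-check equation $H_A v = 0$ together with the left inverse $J_A'$ of the restricted map $H_A' = H_A|_{I_a}$. First I would decompose the action of $H_A$ along the partition of coordinates $[n_a] = I_a \sqcup ([n_a]-I_a)$, writing
\begin{equation*}
  0 = H_A v = H_A|_{I_a}\, v_{I_a} + H_A|_{[n_a]-I_a}\, v_{[n_a]-I_a}\;,
\end{equation*}
which over $\F2$ rearranges to $H_A'\, v_{I_a} = H_A|_{[n_a]-I_a}\, v_{[n_a]-I_a}$. This is just the statement that the erased part of the codeword is determined, through $H_A$, by the surviving part.

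Next I would apply the left inverse $J_A'$ to both sides. Using the defining identity $J_A' H_A' = I_{I_a}$ from the construction of $J_A'$ (whose existence follows from $H_A'$ being injective, which in turn holds because $C_A$ has distance $d_1 > |I_a|$), the left-hand side collapses to $v_{I_a}$, giving
\begin{equation*}
  v_{I_a} = J_A'\, H_A|_{[n_a]-I_a}\, v_{[n_a]-I_a}\;.
\end{equation*}
Combined with the trivial identity $v_{[n_a]-I_a} = I_{[n_a]-I_a}\, v_{[n_a]-I_a}$ on the complementary block, concatenating the two blocks yields exactly the claimed formula $v = (J_A' H_A|_{[n_a]-I_a} \Vert I_{[n_a]-I_a})\, v_{[n_a]-I_a}$.

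The argument is essentially bookkeeping, so I do not expect a genuine obstacle; the one point requiring care is keeping the restriction/concatenation notation consistent — in particular making sure the block map acts on $\F2^{[n_a]-I_a}$ and outputs a vector in $\F2^{n_a}$ with the first block landing in the $I_a$-coordinates and the second in the $([n_a]-I_a)$-coordinates. I would also note explicitly that the identity holds for \emph{every} $v \in C_A$ with the \emph{same} fixed choice of $J_A'$, since that uniformity is what makes the map a bona fide linear recovery map (and is what the later construction of $c_a$ from $c_{([n_a]-I_a)\times I_b}$ relies on when applying this column-by-column).
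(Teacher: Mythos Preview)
Your proposal is correct and follows exactly the same approach as the paper: decompose the parity-check equation $H_A v = 0$ along the partition $I_a \sqcup ([n_a]-I_a)$, apply the left inverse $J_A'$ to solve for $v_{I_a}$, and concatenate with the identity on the surviving block. The paper's proof is slightly terser but the argument is identical.
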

  Notice that $J_A' H_A|_{[n_a]-I_a} \colon \F2^{[n_a] - I_a} \to \F2^{I_a}$ and $I_{[n_a] - I_a} \colon \F2^{[n_a] - I_a} \to \F2^{[n_a] - I_a}$, so \\ $J_A' H_A|_{[n_a]-I_a} \Vert I_{[n_a] - I_a} \colon \F2^{[n_a] - I_a} \to \F2^{[n_a]}$.
  \begin{proof} [Proof of \Cref{claim:vector-recover}]
    Because $v \in C_A$ is a codeword, we have $H_A v = 0$,
    so $H_A' v_{I_a} = H_A|_{[n_a]-I_a} v_{[n_a]-I_a}$.
    Multiply both side by $J_A'$, we obtain $J_A' H_A' v_{I_a} = v_{I_a} = J_A' H_A|_{[n_a]-I_a} v_{[n_a]-I_a}$ which is the desired result.
  \end{proof}

  This suggests one can recover $c_a$ from $c_{([n_a]-I_a) \times I_b}$
    by mapping each column supported on $[n_a]-I_a$ to the full codeword supported on $[n_a]$.
  However, we need to first show each $c_{([n_a]-I_a) \times i_b}$ is a truncation of a codeword in $C_A$ which we will do now.
  Because $H_A H_B c = 0$, each column of $H_B c$ is a codeword in $C_A$.
  Because $J_B' H_B' = I_{I_b}$, $c_{([n_a]-I_a) \times I_b} = J_B' H_B' c_{([n_a]-I_a) \times I_b} = (J_B' (H_B c))|_{([n_a]-I_a) \times [m_b]}$,
    indeed each column of $c_{([n_a]-I_a) \times I_b}$ is a truncation of codeword.

  This leads to the following choice of $c_a$ and $c_b$
  \begin{equation*}
    c_a = ((J_A' H_A|_{[n_a] - I_a} \Vert I_{[n_a] - I_a}) c_{([n_a]-I_a) \times I_b}) \big\Vert 0_{[n_a] \times ([n_b] - I_b)}
  \end{equation*}
  and
  \begin{equation*}
    c_b = ((J_B' H_B|_{[n_b] - I_b} \Vert I_{[n_b] - I_b}) c_{I_a \times ([n_b]-I_b)}) \big\Vert 0_{([n_a] - I_a) \times [n_b]}.
  \end{equation*}

  By construction $c_a$ is supported on $[n_a] \times I_b$ and $c_b$ is supported on $I_a \times [n_b]$.
  And from the discussion we know $c_a \in C_A \otimes \F2^{n_b}$ and $c_b \in \F2^{n_a} \otimes C_B$.
  What is left to be shown is $c = c_a + c_b$.

  By construction we know $c$ agrees with $c_a + c_b$ on everywhere outside of $I_a \otimes I_b$,
  that is $c' \coloneqq c + c_a + c_b$ is supported on $I_a \otimes I_b$.
  Now, $H_B c'$ is supported on $I_a \otimes [m_b]$ and each column is a codeword in $C_A$ (because $H_A (H_B c') = 0$).
  Because the distance of $C_A$ is $d_1$ and $|I_a|<d_1$, we have $H_B c' = 0$.
  Similarly, $c'$ is supported on $I_a \otimes I_b$ and each row is a codeword in $C_B$ (because $H_B c' = 0$).
  Because the distance of $C_B$ is $d_1$ and $|I_b|<d_1$, we have $c' = 0$.
  This concludes the first half of the lemma.

  The second half is simple. 
  Since each column of $c_a$ is a codeword, it is either $0$ or has weight $\ge d_1$.
  For those with weight $\ge d_1$, after removing the indices in $I_a$ it still have weight $> d_1/2$.
  That is 
  \begin{equation*}
    \abs{c_{([n_a] - I_a) \times I_b}} \ge \frac{d_1}{2} \norm{c_a}_{[n_b]}.
  \end{equation*}
  Similarly,
  \begin{equation*}
    \abs{c_{I_a \times ([n_b] - I_b)}} \ge \frac{d_1}{2} \norm{c_b}_{[n_a]}.
  \end{equation*}
  So 
  \begin{equation*}
    \abs{c} \ge \frac{d_1}{2} (\norm{c_a}_{[n_b]} + \norm{c_b}_{[n_a]}).
  \end{equation*}
\end{proof}



\bibliographystyle{unsrt}
\bibliography{references}


\end{document}